\documentclass[11pt,a4paper]{article}
\usepackage{fullpage}

\usepackage[english]{babel}
\usepackage[T1]{fontenc}
\usepackage[tt=false]{libertine}
\usepackage{microtype}

\usepackage{amsmath}
\usepackage{amssymb}
\usepackage{amsthm}
\usepackage{mathtools}
\usepackage{bm}  
\usepackage[libertine,vvarbb]{newtxmath}
\allowdisplaybreaks%

\newtheorem{theorem}{Theorem}[section]
\newtheorem{corollary}[theorem]{Corollary}
\newtheorem{lemma}[theorem]{Lemma}

\theoremstyle{definition}
\newtheorem{definition}[theorem]{Definition}

\newtheorem{remark}[theorem]{Remark}
\newtheorem{example}[theorem]{Example}

\usepackage{float}
\usepackage{color}
\usepackage[svgnames]{xcolor}
\usepackage{graphicx}

\usepackage{hyperref}
\hypersetup{colorlinks={true},urlcolor={DarkBlue},linkcolor={DarkBlue},citecolor=[named]{DarkGreen}}
\usepackage{footnotebackref}  

\usepackage{caption}
\usepackage{subcaption}
\usepackage{tikz}
\usetikzlibrary{arrows.meta,positioning,fit,backgrounds}
\usepackage{enumitem}

\usepackage{booktabs}
\usepackage{multirow, makecell, colortbl, hhline}
\usepackage{diagbox}
\usepackage{makecell}

\usepackage{algorithm2e}

\SetAlFnt{\small}
\SetAlCapFnt{\small}
\SetAlCapSty{normalfont}  
\SetAlCapNameFnt{\small}
\SetAlCapHSkip{0pt}
\IncMargin{-\parindent}

\newcommand*{\vecc}[1]{\bm{#1}}
\newcommand*{\N}{\mathbb{N}}
\newcommand*{\R}{\mathbb{R}}
\newcommand*{\card}[1]{\left|#1\right|}

\definecolor{TUMBlue}{HTML}{0065BD}
\definecolor{TUMGreen}{HTML}{64A12D}
\definecolor{TUMOrange}{HTML}{E37222}
\usepackage{adjustbox}

\usepackage{orcidlink}

\begin{document}

\title{Stability in Combinatorial Markets with Side Payments\footnote{An extended abstract of this work has been accepted at SAGT'26 and will be published in the proceedings of the conference (LNCS, volume 16911).}}
\author{Alexander Grosz\thanks{Chair of Operations Research, Technical University of Munich. \texttt{\{alexander.grosz,chiara.vanoli\}@tum.de}}\; \orcidlink{0000-0001-9744-0253}
\and Chiara Vanoli\footnotemark[2]\; \orcidlink{0009-0006-0094-5209}}

\maketitle
\thispagestyle{empty}

\begin{abstract}
  Combinatorial markets provide a general framework for trading bundles of indivisible goods. Building on the combinatorial market models of Bikhchandani \& Ostroy \cite{BikhchandaniOstroy:ThePackageAssignmentModel} and Bichler \& Waldherr~\cite{BichlerWaldherr:CorePricingCombExchange}, we introduce explicit side payments, thereby allowing restricted transfers of utility among subsets of agents and capturing different forms of financial collusion.

  This extension results in a large number of seemingly distinct market settings. However, we establish a systematic classification of their expressive power. Most notably, we show that higher-order market settings (i.e., those with more personalized prices and less clearinghouse power) are essentially equivalent in terms of market properties when side payments are permitted between buyers. An analogous collapse occurs for the third- and higher-order settings when side payments are permitted between sellers. In contrast, we identify a fundamental structural separation between the second- and third-order settings.

  To analyze stability in these environments, we generalize the classical concepts of stability with and without transferable utility (TU and NTU) to a partition-based notion of restricted transferability (the $\mathcal{T}$-core). We relate stable outcomes across different settings to appropriate stability notions, identify market instances in which partially transferable utility yields a better (or any) stable outcome, and show that under personalized pricing, all stability notions collapse to NTU-stability.

\end{abstract}

\newpage
\thispagestyle{empty}
\setcounter{tocdepth}{2}
\tableofcontents
\thispagestyle{empty}
\newpage

\setcounter{page}{1}

\section{Introduction}

Explaining the behavior of agents in markets where goods are exchanged for money has been an important field of research for the better part of the last century (starting with the seminal contribution by Arrow \& Debreu \cite{ArrowDebreu:MarketEquilibria}).
Such market models can be applied to plenty of real-life scenarios.
Our object of interest is a combinatorial market in which agents trade non-divisible items offered and requested as bundles.
Such markets have applications, e.g., in logistics for cargo transfer, airport time-slot allocation, spectrum auctions, and energy markets (see, e.g., \cite{BichlerWaldherr:CorePricinginFinancialConstraintsCombExchange} for references).

The theory of combinatorial markets generalizes the simpler case of assignment markets, in which each buyer and seller trades only a single indivisible item at a time \cite{ShapleyShubik:Core}.
In combinatorial markets, agents exchange bundles or packages of items, and both the valuation functions and prices can differ structurally from the simple linear case.
Computing a socially optimal trade in such markets is NP-hard, even for a single seller, which underscores the structural complexity of combinatorial markets. Additionally, the exhaustive elicitation of preferences poses an interesting problem and has seen various proposed solutions in terms of communication models for valuations \cite{Sandholm:WinnerDetermination}, including fully expressive OR language or other logically described languages \cite{Nisan:Bidding}, and compact, problem-specific languages \cite{EmadikhiavDay:CombinatorialMarketsTruckload,GBSD:CompactBidLanguages}.

Bikhchandani \& Ostroy \cite{BikhchandaniOstroy:ThePackageAssignmentModel} establish four settings for combinatorial markets, distinguished by the degree of control exercised by a central clearinghouse, which can modify the exchange by opening and rebundling packages and relabeling them. These settings also differ in their pricing restrictions, from linear item prices shared across the market to fully individualized package prices. These markets are formulated as integer programs (and the corresponding dual pricing problems) and, as such, allow the identification of pricing equilibria and integral solutions. In general, when pricing equilibria exist, they belong to the core, but not every element of the core may represent an equilibrium. However, in the special case of a single seller (i.e., the combinatorial auction setting), the set of pricing equilibria does coincide with the core, thereby establishing their stability against coalitional deviations.

Building on the work of \cite{BikhchandaniOstroy:ThePackageAssignmentModel}, Bichler \& Waldherr \cite{BichlerWaldherr:CorePricingCombExchange} introduce a model that combines two extremes from previous settings: the clearinghouse can rearrange packages freely, but bundle prices remain fully personalized. The underlying integer programming model is more verbose than in \cite{BikhchandaniOstroy:ThePackageAssignmentModel}, in that it defines binary variables for every feasible package combination in the full market, rather than modeling the occurrence of each package directly. In this paradigm, they establish that the core coincides with the set of pricing equilibria, whenever any exist. In more recent work, Bichler \& Waldherr \cite{BichlerWaldherr:CorePricinginFinancialConstraintsCombExchange} augment their setting with budget constraints and show that computing welfare-optimal stable outcomes is $\Sigma_2^p$-hard.

In related work,
\cite{BDO:EndowmentEffect,EFF:FrameworkforEndowmentEffects}
discuss the existence of equilibria for special valuation structures with the endowment effect, both under strong assumptions that admit equilibria and under weaker assumptions that do not. In \cite{FGL:CombinatorialWalrasianEquilibrium} (see also \cite{FeldmanLucier:ClearingMarketsViaBundles}), a relaxation of the standard equilibrium notion is introduced in a reduced market to make the problem tractable.
There is substantial literature on pricing methods for combinatorial and iterative (dynamic pricing) auctions; see \cite{Cramton2006} for a comprehensive overview.  A different type of combinatorial market is discussed in \cite{DGMVY:MarketswithCoveringConstraints}, where goods are divisible, but certain covering constraints must be satisfied. In \cite{FadaeiBichler:LimitsofApproximationinCombinatorialMarkets} a mechanism design perspective to address the lack of truthful, revenue-maximizing mechanisms is discussed by relaxing to approximately optimal revenue.

\subsubsection*{Our Contributions.}
Our work builds directly on the combinatorial market models of Bikhchandani \& Ostroy \cite{BikhchandaniOstroy:ThePackageAssignmentModel} and their extension by Bichler \& Waldherr~\cite{BichlerWaldherr:CorePricingCombExchange}.
Motivated by market environments in which regulators restrict cooperation on one side of the market (e.g., through competition or antitrust laws), or in which informational frictions prevent agents from engaging in strategic coordination, we extend existing market models by introducing explicit side payments. We consider different degrees of transferability, depending on which agents may exchange utility: only buyers, only sellers, buyers and sellers independently within their respective sides (but not across sides), and finally all agents.

To systematically compare the five original settings now extended with (restricted) side payments, we introduce hierarchical equivalence levels. These notions indicate which properties of a given reference setting can also be represented in another setting.
While in some cases we can only find a corresponding market outcome that trades the same cumulative bundle for every agent (\emph{item equivalence}), sufficiently verbose settings can realize not just the cumulative prices each agent pays or receives (\emph{price equivalence}), but even the individual prices paid for every traded bundle (\emph{market equivalence}). We consider \emph{payoff equivalence} to be the most relevant property, as it describes when market outcomes exist with the same set of traded items and the same utility for each agent and thereby captures their strategic incentive.

Our first set of results establishes a near-monotonicity principle: increasing the power of the pricing system -- either by moving to higher-order market settings, i.e., simultaneously increasing price personalization while reducing the power of the clearinghouse, or by allowing richer side payments -- generally expands the set of implementable assignments (Theorem~\ref{thm:setting-equivalences}, Remark~\ref{remark:marketEquivDiagonal}, Lemmas~\ref{lemma:1to234(M)} to \ref{lemma:1234to5(P)}).
The main exception arises in the second-order setting, where a buyer-seller pair can effectively trade multiple bundles, leading to a qualitative change once personalized prices and labels are introduced (Example~\ref{example:34noto2(M)}).

This structural peculiarity of the second-order setting breaks payoff equivalence when moving to higher-order settings under restrictive transfer regimes. However, payoff equivalence is recovered if sufficiently strong side payments are allowed, including separate transfers between buyers or sellers, or seller-side transfers under compatible restrictions (Lemmas~\ref{lemma:12345to1(Pi)FullSep} and \ref{lemma:245to3(Pi)Sel}). Moreover, every second-order assignment remains price equivalent to an assignment in the fourth-order setting (Lemma~\ref{lemma:2to4(P)}), although full market equivalence need not hold (Example~\ref{example:34noto2(M)}).

Beyond this case, tightening side-payment restrictions while increasing the order of the market typically makes it impossible to exceed item-equivalence between settings.
The only exception is going from full to separate side payments, which preserves payoff equivalence across all order settings (Lemma~\ref{lemma:12345to1(Pi)FullSep}).

When moving to a lower-order market setting, payoff equivalence is always preserved provided that full or separate side payments are allowed. Intuitively, in these cases, most or all monetary transfers can be implemented through side payments, effectively circumventing the pricing restrictions (Lemma~\ref{lemma:12345to1(Pi)FullSep}). Two further instances where payoff equivalence holds are the following: any assignment with sellers’ single or no side payments in the fourth- or fifth-order setting is payoff equivalent to an assignment in the third-order setting with sellers’ single side payments (Lemma~\ref{lemma:245to3(Pi)Sel}), and any assignment in the fifth-order setting with single or no side payments is payoff equivalent to an assignment in the fourth-order setting with the same or more restrictive side-payment structure (Lemma~\ref{lemma:5to4(Pi)Sing}).

We identify settings that are fully equivalent in a strong sense: for certain combinations of market rules and side-payment regimes, payoff equivalence holds in both directions. In these cases, not only the assignments but also the deviations coincide, implying that the core is identical. We identify the following equivalences, which follow directly from the Lemmas outlined above.

\begin{theorem}\label{thm:setting-equivalences}~
  \begin{itemize}
    \item All order settings with full and separate side payments are payoff equivalent.
    \item The fourth-order and the fifth-order settings with buyers' single side payments are payoff equivalent.
    \item The third-, fourth-, and fifth-order settings with sellers' single side payments are payoff equivalent.
  \end{itemize}
\end{theorem}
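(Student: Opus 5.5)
The plan is to assemble each item of the theorem from the pairwise lemmas announced in the introduction. For each pair of settings $A$ and $B$ we show two things: every assignment in $A$ is payoff equivalent to one in $B$, and conversely. The key observation is that payoff equivalence is preserved under composition, since it only fixes the traded bundles and the utility of each agent. So it suffices to exhibit a chain of lemma-given maps from each setting to a hub setting and back. We then note that a payoff-equivalence map in both directions also transports coalitional deviations, because a deviation is itself an assignment among the coalition. Hence the cores coincide, as claimed in the text preceding the theorem.

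\textbf{First item (full and separate side payments).} Take the first-order setting as the hub.

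Downward direction: Lemma~\ref{lemma:12345to1(Pi)FullSep} maps every order setting with full (respectively separate) side payments to the first-order setting with the same regime, preserving payoff. It also maps full to separate side payments across all orders.

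Upward direction: the near-monotonicity lemmas (Lemmas~\ref{lemma:1to234(M)} to \ref{lemma:1234to5(P)}) give at least price equivalence from the first order to every higher order. Under full or separate transfers, price equivalence upgrades to payoff equivalence. This is because each agent's utility equals its value of the traded bundle minus its net payment, and we keep the side payments unchanged.

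The second-order anomaly (Example~\ref{example:34noto2(M)}) has to be handled separately. Here I would invoke Lemma~\ref{lemma:2to4(P)}: price equivalence from order two to order four, combined with transferring any discrepancy in individual bundle prices through side payments, which the separate regime permits within each side. Finally, separate payments embed trivially into full payments, and Lemma~\ref{lemma:12345to1(Pi)FullSep} gives the reverse. Therefore all ten settings are mutually payoff equivalent.

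\textbf{Second item (orders four and five, buyers' single side payments).} From five to four, Lemma~\ref{lemma:5to4(Pi)Sing} applies directly with the same side-payment structure. From four to five, I would use the monotonicity lemma for fourth to fifth order. It should give market equivalence, because the fifth-order setting only adds clearinghouse flexibility on top of personalized bundle prices. Market equivalence implies payoff equivalence when the buyer side payments are copied verbatim.

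\textbf{Third item (orders three, four and five, sellers' single side payments).} Use the third-order setting as the hub. Lemma~\ref{lemma:245to3(Pi)Sel} maps orders four and five with sellers' single side payments to order three with sellers' single side payments. For the upward direction, Lemmas~\ref{lemma:1to234(M)} to \ref{lemma:1234to5(P)} give market equivalence from order three to orders four and five, and again we keep the seller side payments unchanged.

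The main obstacle I expect is the upward direction in the first item involving the second-order setting. There only price equivalence (not market equivalence) is available, so I must check that the separate regime really absorbs the mismatch in individual prices without crossing between the buyer and seller sides. I would first try to attribute each bundle's price difference to the buyer involved and offset it by an intra-buyer transfer, using the fact that cumulative prices per agent already agree.
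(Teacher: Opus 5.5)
Your proposal is correct and is essentially the paper's argument: the paper obtains the theorem immediately by chaining Lemma~\ref{lem:increasing_order} with Lemmas~\ref{lemma:12345to1(Pi)FullSep}, \ref{lemma:5to4(Pi)Sing} and \ref{lemma:245to3(Pi)Sel}, exactly as you do, and the composition is legitimate because each lemma outputs a feasible assignment. Two minor corrections: first, the lemma into the fifth order (Lemma~\ref{lemma:1234to5(P)}) gives price equivalence rather than market equivalence, since market equivalence is not defined between fifth-order and lower-order assignments; second, the second-order obstacle you anticipate is vacuous, because price equivalence includes payoff equivalence by definition, and in any case Lemma~\ref{lemma:12345to1(Pi)FullSep} maps every setting directly into every order with separate side payments (hence, by Remark~\ref{remark:marketEquivDiagonal}, also full ones), so neither an upward step out of order two nor a hub is needed.
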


Thus, under sufficiently permissive side-payment regimes, distinctions between several higher-order market rules become immaterial for both payoff implementability and stability.
Notice the asymmetry between the buyers' and sellers' side results: this is due to the restriction of the third-order setting to only one of the two possible sides where the labels are disregarded (see Section~\ref{sec:prelims} for more detail).

To investigate the solution properties of these new market settings, we turn to the standard notions of TU-stability (transferable utility) as well as NTU-stability (non-transferable utility).
While either notion describes solutions that are stable against coalitional deviations, they differ in how such improvements can be achieved: under TU, agents may freely transfer utility, whereas under NTU, each agent’s payoff is fixed by the market outcome.

We generalize these notions by the definition of $\mathcal{T}$-RTU-stability (restricted TU) in Section~\ref{sec:stability-notions}, where $\mathcal{T}$ denotes a partition of the agents, and transfers of money are allowed  only between agents in the same partition block. The new notion generalizes both TU-stability and NTU-stability in that it additionally allows us to model stable outcomes that include collusion on one side or two different sides of the market, while restricting all money between the two sides to flow through the market via payments.

These new stability notions are closely related to the side-payment restrictions outlined above.
In Section~\ref{sec:stability_relations}, we show that finding $\mathcal{T}$-RTU-stable outcomes is equivalent to finding an NTU-stable outcome in the market setting that admits a corresponding set of side payments within the market itself.
Thus, we can effectively simplify the analysis of different stability notions to NTU-stability with the respective side-payment restrictions.

\begin{theorem}\label{thm:allisNTU}
  For a given $h^\mathrm{th}$-order market $\mathcal{E}_h^r$ with side-payment restriction $r$ and a side-payment partition $\mathcal{T}$, its $\mathcal{T}$-core $\mathcal{C}^{\mathcal{T}}(\mathcal{E}_h^r)$ corresponds to the NTU-core $\mathcal{C}^{\mathcal{T}^\mathrm{no}}(\mathcal{E}_h^{s})$ of the same economy with side-payment restriction $s$, i.e.,
  $$\mathcal{C}^{\mathcal{T}}(\mathcal{E}_h^r) = \mathcal{C}^{\mathcal{T}^\mathrm{no}}(\mathcal{E}_h^{s}),$$ where $s$ is the side-payment restriction induced by the join $\mathcal{S}$ of the respective partitions, i.e., $\mathcal{S} = \mathcal{T} \vee \mathcal{T}^r$.
\end{theorem}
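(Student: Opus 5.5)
We prove the equality by double inclusion. Fix an $h^\mathrm{th}$-order market and write $\mathcal{S} = \mathcal{T} \vee \mathcal{T}^r$. The guiding idea is that a coalition's ability to improve comes from two sources. Under the $\mathcal{T}$-core, a deviating coalition may redistribute utility among its members within blocks of $\mathcal{T}$ outside the market. Inside $\mathcal{E}_h^r$, it may make side payments within blocks of $\mathcal{T}^r$. Chaining these transfers lets utility move along any path of overlapping blocks, that is, within blocks of the join $\mathcal{S}$. This is exactly what the in-market side payments of $\mathcal{E}_h^s$ provide, with no external transfers, which is the NTU setting $\mathcal{T}^\mathrm{no}$.

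\textbf{Step 1 (the outcome sets agree).} First we identify feasible outcomes. Every assignment of $\mathcal{E}_h^r$ is an assignment of $\mathcal{E}_h^s$, since $\mathcal{T}^r$ refines $\mathcal{S}$ and so every permitted side payment remains permitted. We compare outcomes by their payoff vectors together with their traded bundles, as in the equivalence notions introduced earlier. For the reverse direction, take an outcome of $\mathcal{E}_h^s$. We decompose each side payment within an $\mathcal{S}$-block into a $\mathcal{T}^r$-side-payment part plus an external $\mathcal{T}$-transfer part. This is possible because any net redistribution within a connected component of the "block-overlap" graph can be routed along a chain of blocks alternating between $\mathcal{T}$ and $\mathcal{T}^r$: a zero-sum vector on a connected component is a sum of edge flows along a spanning tree. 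This is a routine flow/linear algebra lemma, and I would state it separately.

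\textbf{Step 2 (the deviations agree).} A coalition $C$ blocks in the $\mathcal{T}$-core of $\mathcal{E}_h^r$ if it has a market outcome of the restricted economy on $C$, together with transfers within $\mathcal{T}$-blocks intersected with $C$, that makes everyone in $C$ strictly better off. By the same decomposition applied to $C$, these transfers combine with the $\mathcal{T}^r$ side payments into a single in-market side-payment vector that respects the blocks of $\mathcal{S}$. The result is an NTU-improving outcome in $\mathcal{E}_h^s$ on $C$. Conversely, an NTU-improving outcome in $\mathcal{E}_h^s$ is split back into market side payments and external transfers. Together with Step 1, this gives both inclusions.

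\textbf{Main obstacle.} The delicate point is the coalition restriction. The join must be taken relative to $C$, since blocks of $\mathcal{S}$ restricted to $C$ can split when the connecting agents are absent. I therefore need the side-payment restriction $s$ applied to a subeconomy to be defined as the join of the restricted partitions, not the restriction of the join. I would first check how the paper defines side payments in subeconomies; the argument above works block-by-block on $C$ under that definition. A second obstacle is the pricing constraints of order $h$. Rerouting money from external transfers into side payments must not alter the bundle prices. Side payments are separate variables from prices, so I expect this to be harmless, but it must be verified for each order setting.
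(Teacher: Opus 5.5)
Your decomposition lemma is correct: vectors that are zero-sum on every $\mathcal{T}$-block plus vectors that are zero-sum on every $\mathcal{T}^r$-block span exactly the vectors that are zero-sum on every block of $\mathcal{S}$, because the orthogonal complements meet in the vectors constant on $\mathcal{S}$-blocks. It handles Step~1. It also handles the direction in which a $\mathcal{T}$-deviation in $\mathcal{E}_h^r$ is turned into an NTU-deviation in $\mathcal{E}_h^s$, since each $C\cap T$ and each $C\cap T^r$ lies inside some $C\cap S$.

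\textbf{The gap} is the converse deviation step, which you need for $\mathcal{C}^{\mathcal{T}}(\mathcal{E}_h^r)\subseteq\mathcal{C}^{\mathcal{T}^{\mathrm{no}}}(\mathcal{E}_h^s)$. The side payments of an NTU-deviation in $\mathcal{E}_h^s[C]$ are only zero-sum on the sets $C\cap S$. Splitting them into $\mathcal{T}^r$-side payments plus $\mathcal{T}$-transfers inside $C$ requires every $C\cap S$ to be a block of $\mathcal{T}|_C\vee\mathcal{T}^r|_C$. You spot this but leave it open, and your remedy is not available. In the paper, $\mathcal{E}_h^s[C]$ allows side payments between any two members of $C$ in a common block of $\mathcal{T}^s$, i.e.\ it uses the restriction of the join. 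For general partitions that restriction can be strictly coarser (an $\mathcal{S}$-block may be connected only through agents outside $C$), and then $\mathcal{E}_h^s[C]$ has more deviations.

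\textbf{The fix} is the hypothesis you never use: $\mathcal{T}$ is a side-payment partition. All side-payment partitions have only singletons, $I$, $J$ or $N$ as blocks. Their joins are either the coarser of a comparable pair or $\mathcal{T}^{\mathrm{buy}}\vee\mathcal{T}^{\mathrm{sell}}=\mathcal{T}^{\mathrm{sep}}$. So one checks case by case that $(\mathcal{T}\vee\mathcal{T}^r)|_C=\mathcal{T}|_C\vee\mathcal{T}^r|_C$ for every coalition $C$; in the buy/sell case both equal $\{C_I,C_J\}$. With that observation, your decomposition applied on $C$ finishes the argument. Your second worry is a non-issue: admissibility constrains allocation, prices and side payments independently, and the core only involves admissible assignments, so rerouting transfers never touches the order-$h$ constraints.

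Once repaired, your route differs from the paper's. The paper does not absorb external transfers into the market in one step. It passes through the intermediate core $\mathcal{C}^{\mathcal{S}}(\mathcal{E}_h^{\mathrm{no}})$:
\begin{itemize}
\item it proves $\mathcal{C}^{\mathcal{T}}(\mathcal{E}_h^r)=\mathcal{C}^{\mathcal{S}}(\mathcal{E}_h^{\mathrm{no}})$ in Lemma~\ref{lem:TRTUwithr_is_SRTU} by a case split ($\mathcal{T}^r\preceq\mathcal{T}$, $\mathcal{T}\preceq\mathcal{T}^r$, and the buy/sell pair);
\item it then proves $\mathcal{C}^{\mathcal{T}^{\mathrm{no}}}(\mathcal{E}_h^s)=\mathcal{C}^{\mathcal{T}^s}(\mathcal{E}_h^{\mathrm{no}})$ in Corollary~\ref{cor:RTUisNTUwithSP}.
\end{itemize}
In other words, it strips all side payments out of the market rather than pushing transfers into it. Your single linear-algebra lemma replaces the case analysis and isolates the one structural property actually needed: the join commutes with restriction to coalitions. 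That also explains why the paper's lemma extends to comparable partitions. The paper's case split gets this property implicitly, because it only ever manipulates blocks that are singletons or whole sides of the market.
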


We analyze several properties of stable outcomes under the different stability notions introduced above. By establishing separations between these notions -- or, equivalently, between different side-payment regimes -- we show that they can lead to stable outcomes with strictly different levels of social welfare (Examples~\ref{example:NTUDifferentPayoffs} to \ref{example:RTUUniquePayoff}). This demonstrates that allowing restricted forms of collusion on one side of the market can fundamentally alter the set of stable outcomes. Thus, our results emphasize the importance of accurately modeling the extent of collusion in combinatorial markets with correlated pricing systems.

In market settings with personalized prices, on the other hand, we identify strong structural properties of stable solutions: We provide a constructive method to eliminate all side payments from a stable outcome (Lemmas~\ref{lemma:NTUStableAss5} and \ref{lemma:fourthSettingStableAssignments}), showing that in such settings all stability notions provide equivalent outcomes, regardless of which side payments are permitted in the market (Theorem~\ref{thm:45stability_all_NTU}).

\begin{theorem}\label{thm:45stability_all_NTU}
  Let $\mathcal{E}$ be an economy and $\mathcal{T}$ an arbitrary partition of its agents. Then, all of the following $\mathcal{T}$-cores $\mathcal{C}^{\mathcal{T}}(\mathcal{E}_h^r)$ for an $h^\mathrm{th}$-order market with side-payment restriction $r$ coincide:
  \begin{itemize}
    \item for $h \in \{1, \dots, 5\}$, $r \in \{\mathrm{full}, \, \mathrm{sep}, \, \mathrm{buy}, \, \mathrm{sell}, \, \mathrm{no}\}$ and $\mathcal{T}$, such that the join of the respective partitions is at least as coarse as $\mathcal{T}^\mathrm{sep}$, i.e., $\mathcal{T} \vee \mathcal{T}^r \succeq \mathcal{T}^{\mathrm{sep}}$,
    \item for $h \in \{4, 5\}$ and $r \in \{\mathrm{full}, \, \mathrm{sep}, \, \mathrm{buy}, \, \mathrm{sell}, \, \mathrm{no}\}$, and
    \item for $h = 3$ and $\mathcal{T}$, such that the join of the respective partitions has the sellers as a group of transferable utility, i.e., $\mathcal{T} \vee \mathcal{T}^r \succeq \mathcal{T}^{\mathrm{sell}}$.
  \end{itemize}
\end{theorem}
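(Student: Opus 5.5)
Via Theorem~\ref{thm:allisNTU}, every $\mathcal{T}$-core $\mathcal{C}^{\mathcal{T}}(\mathcal{E}_h^r)$ equals the NTU-core $\mathcal{C}^{\mathcal{T}^\mathrm{no}}(\mathcal{E}_h^{s})$, where $s$ is induced by $\mathcal{S}=\mathcal{T}\vee\mathcal{T}^r$. This reduces the statement to comparing NTU-cores of a single economy across orders $h$ and induced restrictions $s$. The plan is to show that each NTU-core in the three families equals one fixed reference core, namely the NTU-core of the fifth-order setting without side payments, $\mathcal{C}^{\mathcal{T}^\mathrm{no}}(\mathcal{E}_5^{\mathrm{no}})$. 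Cores are compared through the paper's convention for payoff equivalence: if two settings are payoff equivalent in both directions, then both the attainable assignments (item bundles and agent utilities) and the coalitional deviations correspond, so the NTU-cores coincide.

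The first step handles the family with $h\in\{4,5\}$ and arbitrary $r$, which is the core of the proof. Lemma~\ref{lemma:NTUStableAss5} (for $h=5$) and Lemma~\ref{lemma:fourthSettingStableAssignments} (for $h=4$) will be used to show that any NTU-stable outcome with side payments has a payoff-identical NTU-stable outcome without side payments. Personalized prices let each net transfer between a buyer and a seller be absorbed into that pair's bundle price. For the converse inclusion, I will show that an outcome stable without side payments remains stable when side payments are allowed. The point is that any deviation a coalition could build using side payments can be rebuilt, within that coalition, with personalized prices, using the same absorption argument. Doing this separately for every $s$ gives $\mathcal{C}^{\mathcal{T}^\mathrm{no}}(\mathcal{E}_h^{s})=\mathcal{C}^{\mathcal{T}^\mathrm{no}}(\mathcal{E}_5^{\mathrm{no}})$ for $h\in\{4,5\}$ and every $s$. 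Lemma~\ref{lemma:5to4(Pi)Sing} links $h=4$ and $h=5$ directly in the single and no side-payment cases.

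The second step covers the first family, where $\mathcal{S}\succeq\mathcal{T}^\mathrm{sep}$, so $s\in\{\mathrm{sep},\mathrm{full}\}$. By Theorem~\ref{thm:setting-equivalences}, all orders with full or separate side payments are payoff equivalent, and Lemma~\ref{lemma:12345to1(Pi)FullSep} gives that full and separate side payments are payoff equivalent to each other. Hence all of these cores equal $\mathcal{C}^{\mathcal{T}^\mathrm{no}}(\mathcal{E}_5^{\mathrm{sep}})$, which the first step already identified with the reference core. The third step covers $h=3$ with $\mathcal{S}\succeq\mathcal{T}^\mathrm{sell}$, so $s\in\{\mathrm{sell},\mathrm{sep},\mathrm{full}\}$. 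The cases $\mathrm{sep}$ and $\mathrm{full}$ are handled by the second step. For $s=\mathrm{sell}$, the third bullet of Theorem~\ref{thm:setting-equivalences} (via Lemma~\ref{lemma:245to3(Pi)Sel}) gives that the third-order setting with sellers' single side payments is payoff equivalent to the fifth-order one, so its core equals $\mathcal{C}^{\mathcal{T}^\mathrm{no}}(\mathcal{E}_5^{\mathrm{sell}})$, again the reference core.

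The main obstacle is the first step, specifically removing side payments from a stable outcome when personalized prices are available. A buyer may pay side payments to several sellers it also trades with, or to agents it does not trade with at all; in the latter case the payment cannot simply be folded into a bundle price. I will first try to show, using stability, that side payments between non-trading pairs can be rerouted along trading pairs without changing any agent's utility. If that fails, I will argue that such payments cannot occur in a stable outcome, because the coalition obtained by dropping the payee would be blocking. The delicate point is proving that this rerouting preserves stability and not merely payoffs, which requires checking every deviation in every setting. This is why I will rely on the cited lemmas at exactly this step.
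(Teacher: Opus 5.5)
Your architecture matches the paper's. You reduce to NTU-cores via Theorem~\ref{thm:allisNTU}, collapse all side-payment regimes inside the fourth- and fifth-order settings, link the two orders by a two-way payoff equivalence, and attach the first and third families via Theorem~\ref{thm:setting-equivalences}. You also correctly delegate the inclusion $\mathcal{C}^{\mathcal{T}^{\mathrm{no}}}(\mathcal{E}_h^{s}) \subseteq \mathcal{C}^{\mathcal{T}^{\mathrm{no}}}(\mathcal{E}_h^{\mathrm{no}})$ to Lemmas~\ref{lemma:NTUStableAss5} and~\ref{lemma:fourthSettingStableAssignments}.

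The gap is the reverse inclusion, which equality also needs. You claim that any deviation built with side payments can be rebuilt \emph{within the same coalition} with personalized prices, by ``the same absorption argument''. That claim is false, and the argument does not transfer. In both cited lemmas, the failure cases are ruled out only by the \emph{stability} of the assignment being transformed. These cases are: a buyer who is a net receiver would need a negative price; a seller with an empty bundle who is a net receiver would need a positive price on the empty bundle; and in the fourth order, the maximum $s$--$t$ flow may fall short of $q_{st}$. A deviation is not assumed stable, so none of this is available. Concretely, in Example~\ref{example:1BuySepFullnotto...} the buyer $i_3$ holds an empty bundle and is paid a side payment, yet without side payments $i_3$ can never get a positive payoff. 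In Example~\ref{example:FullnottoSep(P)} the payoff vector $(1,1,1,1)$ is not attainable in $\mathcal{E}_4^{\mathrm{no}}$ at all. Your fallback (``such payments cannot occur in a stable outcome'') only concerns the forward direction, and the lemmas you cite say nothing about deviations.

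The missing idea is to \emph{shrink} the deviating coalition instead of reproducing its payoffs. In the fourth order (Lemma~\ref{lem:NTUstable4no_is_stable_r}), when $q_{st}$ cannot be fully rerouted, keep only the agents that $s$ reaches with positive residual capacity. All other agents are attached only as buyers paying price zero to sellers in that set, so cutting those trades costs no remaining agent anything. The agent $t$ necessarily falls outside this set, and dropping it also removes the residual $q_{st}$, which only helps $s$. Iterating yields a side-payment-free deviation in which every surviving agent is still strictly better off than under $\vecc\rho_N$. In the fifth order, the paper instead passes to a minimal deviating coalition, treats its deviation as stable within $\mathcal{E}_5^r[C]$, and applies Lemma~\ref{lemma:NTUStableAss5} to it (Lemma~\ref{lem:NTU_stable_noSP5}).

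A smaller slip: Lemma~\ref{lemma:5to4(Pi)Sing} does not link $\mathcal{E}_5^{\mathrm{no}}$ to $\mathcal{E}_4^{\mathrm{no}}$, and Example~\ref{example:5noto4(IPi)No} shows that such a link fails for assignments. This is harmless, because linking through $\mathrm{sell}$ (or $\mathrm{buy}$, or $\mathrm{sep}$) suffices: there Lemmas~\ref{lemma:1234to5(P)} and~\ref{lemma:5to4(Pi)Sing} give payoff equivalence in both directions, and the within-order collapse then carries the result to $\mathrm{no}$, as in the paper.
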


\section{Preliminaries}\label{sec:prelims}

The following setting for combinatorial markets based on a package assignment model was introduced in \cite{BikhchandaniOstroy:ThePackageAssignmentModel}.
Appendix~\ref{secapp:formal_defs} contains a detailed mathematical description of the market model.
The crucial feature of this model of package exchanges is that trades are not directly executed between seller-buyer pairs, but instead routed through a \emph{clearinghouse} that receives bundles from sellers and, under certain restrictions, potentially repackages them into new bundles before shipping them to buyers. It also collects all payments from buyers and distributes them to sellers in accordance with an adequate market pricing system.

The market settings differ in the possible alterations the clearinghouse can make to the packages it receives and in the restrictions on the pricing system.
The first four settings were first defined in \cite{BikhchandaniOstroy:ThePackageAssignmentModel} and the fifth setting was introduced in \cite{BichlerWaldherr:CorePricingCombExchange}.
We provide a quick overview of the different restrictions in the following table.
The formal definition of each setting is deferred to Appendix~\ref{secapp:formal_market-settings}.

\begin{center}{\small
    \begin{tabular}{c|c|c|c}
      Setting & Clearinghouse & Bundle Compatibility & Pricing \\ \hline
      1st order  & Repackage & Item counts & Linear (item based)\\
      2nd order & Relabel & Bundle counts & Per bundle \\
      3rd order  & Relabel sellers & Bundle counts for each buyer & Buyer personalized \\
      4th order  & -- & Direct trade & Fully personalized \\
      5th order  & Repackage & Item counts & Independent
    \end{tabular}
  }
\end{center}

In the first-order setting, the clearinghouse can both repackage and reassign the bundles, while any package labels are completely disregarded. Therefore, only the total item counts on both sides of the market submissions have to coincide. A first-order pricing system is based on \emph{item} prices, independently of the bundle structure. The bundle prices are then derived linearly from the item prices.

In the second-order setting, the clearinghouse is not allowed to open any packages. However, both the sellers' and buyers' labels on the packages can be freely adjusted. The corresponding pricing system is based on common \emph{bundle} prices across all agents.

In the third-order setting, the clearinghouse can only change the labels of one side of the market (and we restrict our discussion to the sellers' labels similarly to \cite{BikhchandaniOstroy:ThePackageAssignmentModel}).
The pricing system becomes \emph{personalized for the buyers}, i.e., the prices for the same bundle sent to any given buyer have to coincide, but a seller may send the same bundle to different buyers at different prices.

For the fourth-order setting, the restrictions on the clearinghouse represent essentially the \emph{direct trade} situation: no labels can be adjusted and the prices are personalized between any buyer-seller pair.

Finally, the fifth-order setting is based on \emph{cumulative bundles}: every agent only offers/requests a cumulative bundle and the prices are personalized to each agent for the entire bundle. This corresponds to the first-order setting in terms of the set of offered items, where only the total count on each market side must coincide. This setting relaxes the interaction between the two sides of the market in the pricing system, so budget balance needs to be explicitly assumed (as compared to the other settings, where it is derived from the interactions given by the pricing system).

We extend the model of \cite{BikhchandaniOstroy:ThePackageAssignmentModel} by introducing \emph{side payments} so as to make it applicable to economies with coordination issues, externalities, or other frictions that prices alone cannot resolve. Depending on the setting, we allow (a subset of) agents to exchange money outside the combinatorial market and its pricing restrictions. In particular, these payments are a priori available to any pair of agents, not restricted to seller-buyer pairs that trade a bundle.

For an economy $\mathcal{E}$ and $h \in \{1, \dots, 5\}$, we indicate by $\mathcal{E}_h$ the market with resale restrictions imposed by the $h^\mathrm{th}$-order setting.
The bundles that are being traded in the market by the agents $N$ are notated as an \emph{allocation} $(\vecc{S}_I, \vecc{Z}_J)$, where each agent specifies the bundle of items they want to offer to or request from each agent on the other side of the market.
A \emph{pricing system} is then given by functions $p_{ij}$, defining the price of bundle $\vecc\omega$ when sold by seller $j \in J$ and bought by buyer $i \in I$ as $p_{ij}(\vecc \omega)$.

We denote by $q_{kl} \in \R$ the amount of money that agent $k \in N$ side-pays to agent $l \in N$.
The case of $q_{kl}<0$ corresponds to $k$ receiving instead of paying the respective side payment.
We require consistency between side payments of two agents $k\neq l$, i.e., $q_{lk} = -q_{kl}$, so that each outgoing payment from $k$ is equivalently received by $l$.
For notational ease, we allow side payments $q_{kk}$ of an agent to themselves; $q_{kk} = 0$ follows from consistency.

For every $k \in N$, the individual side payments are denoted by $\vecc{Q}_k = (q_{kl})_{l \in N}$ and the corresponding cumulative side payments by $q_k = \sum_{l \in N} q_{kl}$.
The vector of all side payments is denoted by $\vecc{Q}_N = (\vecc{Q}_k)_{k\in N}$.

An \emph{assignment} $((\vecc{S}_I, \vecc{Z}_J), \vecc p, \vecc{Q}_N)$ in a market $\mathcal{E}_m$ is the tuple of the allocation, the pricing system, and the side payments. We may omit all side payments with value zero in the notation and use only $\vecc Q_C$ in place of $\vecc Q_N$ for some $C \subseteq N$ or fully omit $\vecc Q_N$, when no side payments are used or allowed.
For a given market setting $\mathcal{E}_m$, the set of \emph{admissible} assignments $\mathcal{A}(\mathcal{E}_m)$ is structurally restricted in one of the various ways outlined above. These restrictions will apply to the set of assignments that are valid in the market, as well as to the structure of the pricing system and the side payments.

We naturally extend the definition of quasi-linear utility to include the side payments and denote the agents' \emph{payoff} associated with an assignment as $\pi_i = v_i(\vecc{s}_i) - p_i - q_i$ for every buyer $i \in I$ and $\pi_j = p_j - v_j(\vecc{z}_j) - q_j$ for every seller $j \in J$.
Any agent's payoff thus consists of the value they gain for receiving, or lose from selling their cumulative bundle, and the total amount of money they send to or receive from other agents.
We will denote the vector of all payoffs as $\vecc{\pi}_N = (\pi_k)_{k\in N}$.

We will specify the different stability notions we use further below (see Section~\ref{sec:stability-notions}), but a common property and fundamental assumption is the individual rationality of every agent in the outcome, i.e., that they would drop out of the market if they received a negative payoff from the trade. For some of our results, it is therefore reasonable to restrict the set of assignments accordingly.

\begin{definition}[Feasible assignment]
  The set $\mathcal{F}(\mathcal{E}_m)$ of \emph{feasible assignments} comprises all assignments $((\vecc{S}_I, \vecc{Z}_J), \vecc p, \vecc{Q}_N)$ that are admissible in the market $\mathcal{E}_m$ and give non-negative payoff to all agents, $\vecc \pi_N \geq \vecc 0$.
\end{definition}

Because the market settings restrict the structure of the pricing system in specific ways, it might not be possible for every admissible allocation $(\vecc{S}_I, \vecc{Z}_J)$ to achieve a market outcome that is also individually rational for every agent.
While the allocations themselves might be used in different settings (under certain modifications), it might not be possible to do the same with the pricing system.

\begin{definition}[Priceability of an allocation]
  An allocation $(\vecc{S}_I, \vecc{Z}_J)$ is \emph{priceable} in the market $\mathcal{E}_m$ if it is admissible and if there exist an admissible pricing system and admissible side payments in $\mathcal{E}_m$ such that $((\vecc{S}_I, \vecc{Z}_J), \vecc p, \vecc{Q}_N) \in \mathcal{F}(\mathcal{E}_m)$.
\end{definition}

As already observed by \cite{BikhchandaniOstroy:ThePackageAssignmentModel}, we can easily deduce the following from the definitions of the first four settings:
\begin{remark}\label{rem:bikh_monotone}
  Because the repackaging restrictions increase in $h \in \{1,2,3,4\}$, it holds that if an allocation $(\vecc{S}_I, \vecc{Z}_J)$ is admissible in the $h^\mathrm{th}$-order setting, then it is also admissible in the $(h - 1)^\mathrm{th}$-order setting for every $h \in \{2, 3, 4\}$.

  Conversely, the pricing systems become more individualized in $h$, so if $\vecc p$ is an $h^\mathrm{th}$-order pricing system, it naturally induces a $(h + 1)^\mathrm{th}$-order pricing system for every $h \in \{1, 2, 3\}$ by applying the same price to all corresponding individual prices.
\end{remark}

\subsection{Side-Payment Schemes}\label{sec:sidePayments}
Independent of how the pricing system is restricted in the five settings, we also differentiate levels of side-payment restrictions in the market.
While the definition covers the extent of side payments among any pair of agents, we introduce the following notions and restrictions:

We denote by \emph{full side payments} the unrestricted setting, in which all agents can side-pay each other.

\emph{Separate side payments} refers to the restriction where the two sides of the market can send and receive side payments among each other, but all side payments between any seller and buyer are zero, i.e., $q_{ij} = 0$ for every $i\in I$ and $j\in J$.
This restriction establishes that all payments between the two sides go through the market pricing system and will allow us to differentiate between market settings where side payments can be equivalently realized by adjusting the pricing system accordingly.

\emph{Buyers'/sellers' single side payments} are an even more severely restricted setting, where we do allow one side of the market to send and receive money among each other, but not the other side.
This setting reflects, e.g., the situation of markets in which this type of collusion is prohibited for sellers, but this may not be necessary or enforceable for buyers.

Finally, we consider the setting with \emph{no side payments}, which represents the original market setting in the previous literature.

We use the notation $\mathcal{E}_h^r$ with $h \in \{1, \dots, 5\}$ and $r \in \{\mathrm{full}, \, \mathrm{sep}, \, \mathrm{buy}, \, \mathrm{sell}, \, \mathrm{no}\}$ to indicate the market $\mathcal{E}_h$ with the respective restriction $r$ on the side payments as outlined above.

The restrictions on the side payments imply a partial ordering of the sets of admissible assignments based on the sets of agents that can exchange side payments:
$ \mathcal{A}(\mathcal{E}_h^{\textrm{no}}) \subseteq \mathcal{A}(\mathcal{E}_h^{\textrm{buy}}) \subseteq \mathcal{A}(\mathcal{E}_h^{\textrm{sep}}) \subseteq \mathcal{A}(\mathcal{E}_h^{\textrm{full}}),$
and similarly,
$ \mathcal{A}(\mathcal{E}_h^{\textrm{no}}) \subseteq \mathcal{A}(\mathcal{E}_h^{\textrm{sell}}) \subseteq \mathcal{A}(\mathcal{E}_h^{\textrm{sep}}) \subseteq \mathcal{A}(\mathcal{E}_h^{\textrm{full}}).$
We use the notation $r \preceq \tilde r$ to refer to the corresponding partial order on the side-payment restrictions themselves when comparing them, where $\tilde r$ admits all side payments that $r$ does.

\subsection{The Role of the Clearinghouse}\label{sec:RoleClearinghouse}
At first glance, the clearinghouse appears to play a crucial role within the model. Indeed, it is responsible for collecting, manipulating, and reallocating the packages, as well as receiving the price payments and redistributing them according to the pricing system. However, a more detailed analysis shows that outside of establishing the structure of the common pricing system (which, of course, is a defining property in the expressiveness of each setting), its effect is typically somewhat limited.

In the fourth-order setting, the clearinghouse's only task is to act as a centralized control instance for the market, and it is given the least power across all settings: it only forwards each bundle without modification to the designated recipient, and also the payments $p_{ij}$ are already directly established between the corresponding agents. Effectively, the agents trade directly with each other.

On the other hand, both the first- and the fifth-order settings crucially rely on the clearinghouse to repackage the bundles, as the only restriction for market clearance is the total occurrence number of each item. Thus, all the trades and price payments are necessarily centralized.

The remaining second-order and third-order settings are based on forwarding bundles, which the clearinghouse can adjust in terms of the labels applied to the bundles.
In fact, the transition from the second-order to the third-order setting establishes a critical transition in the relation between the order settings, where the restrictive pricing system can be circumvented in some circumstances by mislabeling identical bundles, see Example~\ref{example:34noto2(M)}.
In other transitions between order settings, we will conversely show that the power of the clearinghouse is effectively only an application of some permutation of the bundles in the offer/request vector, and its effect is again very limited (Lemmas~\ref{lemma:1to234(M)} and \ref{lemma:3to4(M)}).

\subsection{Stability}\label{sec:stability-notions}
Combinatorial markets can be interpreted as a specific type of \emph{cooperative games}, where agents make binding agreements on which strategy they use and how the payoff gets distributed among them (\cite{PelegSudhoelter:IntroductionCooperativeGames}). For these games, standard solution concepts are stability in the sense of \emph{transferable utility (TU)}, where agents can freely distribute their total utility among each other, and \emph{non-transferable utility (NTU)}, where no payments outside the market are allowed. Both concepts share the solution concept of payoff vectors that admit coalitional stability, i.e., no coalition of agents can strictly increase their individual payoffs by deviating from the agreed-upon global solution. The set of these vectors is commonly referred to as the \emph{core} of the game.

We consider the concepts of TU and NTU as special cases of the following generalization: Let $\mathcal{T} = \{T_1, \dots, T_{|\mathcal{T}|}\}$ be a partition of the set of agents $N$. The agents in a given block $T$ of the partition can transfer money freely among themselves, but not to agents in a different block. We call this concept \emph{$\mathcal{T}$-restricted transferable utility ($\mathcal{T}$-RTU)}. The classical definitions of TU and NTU are then included in this representation by setting, respectively, $\mathcal{T} = \{N\}$ or $\mathcal{T} = \{\{k\} : k \in N\}$.
The notion of stability corresponding to $\mathcal{T}$-RTU is established by the following definition.

\begin{definition}[$\mathcal{T}$-core]\label{def:T-core}
  A vector $\vecc{\rho}_N \in \R^N$ is in the \emph{$\mathcal{T}$-core} $\mathcal{C}^{\mathcal{T}}(\mathcal{E}_h^r)$ of the market $\mathcal{E}_h^r$ if
  \begin{enumerate}
    \item there exists an assignment $((\vecc{S}_I, \vecc{Z}_J), \vecc{p}, \vecc{Q}_N) \in \mathcal{A}(\mathcal{E}_{h}^r)$ with $\sum_{k \in T} \rho_k = \sum_{k \in T} \pi_k$ for every $T \in \mathcal{T}$, and
    \item for every coalition $C \subseteq N$ and for every assignment $((\vecc{\tilde S}_{C_I}, \vecc{\tilde Z}_{C_J}), \vecc{\tilde p}, \vecc{\tilde Q}_C) \in \mathcal{A}(\mathcal{E}_{h}^r[C])$,
      $$\exists T \in \mathcal{T}:  \sum_{k \in C \cap T} \tilde\pi_k \leq \sum_{k \in C \cap T} \rho_k.$$
  \end{enumerate}
  We also call such a vector \emph{$\mathcal{T}$-stable} in the market $\mathcal{E}_h^r$.
\end{definition}

A vector in the core is typically also called a \emph{payoff vector} (due to the interpretation of the stability concept as its own game), a notion we have already used for the utility of each player in the assignment of the market setting. If we require disambiguation between the utility from the market itself and the vector we consider in the context of the core, we will refer to the latter as the \emph{payoff after utility transfers}.
One assumption necessary to avoid non-optimal stable outcomes is that the agents whose cumulative bundle is empty in an assignment that realizes a core vector and allows for free price readjustment have zero payoff.

The first condition in the definition establishes the transfer of utility within each block of the partition $\mathcal{T}$ among its agents.
The second condition can be interpreted in the following way: a vector is $\mathcal{T}$-stable if no coalition $C \subseteq N$ can find an assignment that improves the cumulative payoff for every agent group $C \cap T,\,T \in \mathcal{T}$.
Compare this condition to the standard notion of TU-stability, where this holds for the group of all agents, and to NTU-stability, where it applies to every agent individually.
A property that immediately follows from the definition is that every vector in the core is non-negative: for the singleton coalition $\{k\}$, that agent will receive a payoff of $\tilde\pi_k = 0$ by not trading anything.

For NTU-stability, every vector in $\mathcal{C}^{\mathcal{T}^\mathrm{no}}(\mathcal{E}_h^{r})$ needs to be exactly realized as the payoff vector of some assignment in the market $\mathcal{E}_h^{r}$.
We can therefore equivalently define the notion of stability directly for the respective assignment.

\begin{definition}[NTU-stable assignment]
  An assignment $((\vecc{S}_I, \vecc{Z}_J), \vecc p, \vecc{Q}_N) \in \mathcal{A}(\mathcal{E}_h^r)$ is \emph{NTU-stable} in the market $\mathcal{E}_h^r$ if for every coalition $C \subseteq N$ and for every assignment $((\vecc{\tilde S}_{C_I}, \vecc{\tilde Z}_{C_J}), \vecc{\tilde p}, \vecc{\tilde Q}_C) \in \mathcal{A}(\mathcal{E}_{h}^r[C])$, there exists an agent $k \in C$ such that $\tilde\pi_k \leq \pi_k$.
\end{definition}

\subsection{Relation between Stability Notions and Side-Payment Settings}\label{sec:stability_relations}
Intuitively, the transfer of utility we allow in different settings of $\mathcal{T}$-RTU stability corresponds to the side payments of some specific setting we have established above on a different level of the problem; the former applies the transfer after the market has established its trade, while the latter incorporates it into the strategic considerations within the market.
Indeed, we provide a formal discussion of this relation in this section.
In particular, the following partitions correspond to the side-payment restrictions we have already defined.
Whenever we refer to any of the special partitions defined here, we call the partition $\mathcal{T}$ a \emph{side-payment partition} and conversely emphasize by writing \emph{arbitrary} partition otherwise.
\begin{enumerate}
  \item \emph{TU-core} when $\mathcal{T}^{\mathrm{full}} = \{N\}$,
  \item \emph{sepRTU-core} when $\mathcal{T}^{\mathrm{sep}} = \{I, J\}$,
  \item \emph{bRTU-core} when $\mathcal{T}^{\mathrm{buy}} = \{I\} \cup \{\{j\} : j \in J\}$,
  \item \emph{sRTU-core} when $\mathcal{T}^{\mathrm{sell}} = \{\{i\} : i \in I\} \cup \{J\}$,
  \item \emph{NTU-core} when $\mathcal{T}^{\mathrm{no}} = \{\{k\} : k \in N\}$.
\end{enumerate}
We can consider the natural partial order on the previous partitions based on the refinement, i.e., $\mathcal{T}^{\mathrm{no}} \preceq \mathcal{T}^{\mathrm{buy}} \preceq \mathcal{T}^{\mathrm{sep}} \preceq \mathcal{T}^{\mathrm{full}}$, and similarly $\mathcal{T}^{\mathrm{no}} \preceq \mathcal{T}^{\mathrm{sell}} \preceq \mathcal{T}^{\mathrm{sep}} \preceq \mathcal{T}^{\mathrm{full}}$.
Clearly, this is the same partial order as for the side-payment settings, if we identify the partition with the respective setting it represents.

Following the intuition outlined above, we expect a close relation between an $r$ side-payment setting and the corresponding notion of $\mathcal{T}^r$-RTU stability in the sense that we can replace one by the other, i.e., the $\mathcal{T}^r$-core should give the same stable outcomes as the NTU-core for $r$ side payments.
We formalize this intuition in the following.
In fact, it turns out that an even stronger statement can be proved: When considering $\mathcal{T}$-RTU stability and the partition $\mathcal{T}^r$ that corresponds to the $r$ side-payment setting in the market, we can alternatively consider joining the partitions in the partition lattice to their common least upper bound $\mathcal{S} = \mathcal{T} \vee \mathcal{T}^r$.\footnote{We omit a formal discussion of this operator. For comparable elements, the greater one is the least upper bound. The only pair of incomparable partitions we consider is $\mathcal{T}^{\mathrm{buy}} \lor \mathcal{T}^{\mathrm{sell}} = \mathcal{T}^{\mathrm{sep}}$.}
Then, the original setting corresponds to both $\mathcal{S}$-RTU stability for the same economy without side payments, as well as NTU-stability on the market that allows side payments according to the partition $\mathcal{S}$.

\begin{lemma}\label{lem:TRTUwithr_is_SRTU}
  It holds that $\mathcal{C}^{\mathcal{T}}(\mathcal{E}_h^r) = \mathcal{C}^{\mathcal{S}}(\mathcal{E}_h^{\mathrm{no}}) $, where $\mathcal{S} = \mathcal{T} \vee \mathcal{T}^r$, when
  \begin{itemize}
    \item $\mathcal{T}$ is a side-payment partition, or
    \item $\mathcal{T}$ and $\mathcal{T}^r$ are comparable, i.e., $\mathcal{T} \preceq \mathcal{T}^r$ or $\mathcal{T}^r \preceq \mathcal{T}$.
  \end{itemize}
\end{lemma}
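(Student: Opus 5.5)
The plan is to compare the two cores condition by condition, using a single ``side-payment elimination / redistribution'' principle in both directions. Throughout, note that $\mathcal{T} \preceq \mathcal{S}$ and $\mathcal{T}^r \preceq \mathcal{S}$. Hence every block of $\mathcal{S}$ is a union of blocks of $\mathcal{T}$, and also a union of blocks of $\mathcal{T}^r$.

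\textbf{Elimination.} Take any assignment $((\vecc{S}_I, \vecc{Z}_J), \vecc p, \vecc{Q}_N) \in \mathcal{A}(\mathcal{E}_h^r)$, or one in the restricted market $\mathcal{E}_h^r[C]$. Its nonzero side payments only occur between agents in a common block of $\mathcal{T}^r$. By consistency $q_{lk}=-q_{kl}$, these payments cancel in $\sum_{k \in C\cap U}\pi_k$ for every block $U \in \mathcal{S}$. Setting $\vecc Q = \vecc 0$ therefore gives an assignment in $\mathcal{A}(\mathcal{E}_h^{\mathrm{no}})$, resp.\ $\mathcal{A}(\mathcal{E}_h^{\mathrm{no}}[C])$, with the same $\mathcal{S}$-block sums. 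This works because the allocation and pricing restrictions do not depend on $r$.

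\textbf{Redistribution.} This is the converse step, and the main obstacle. Start from an assignment without side payments. Fix a block $U\in\mathcal{S}$ and arbitrary targets $(t_T)_{T\in\mathcal{T},\,T\subseteq U}$ with $\sum_T t_T = \sum_{k\in U}\pi_k$. I want to add side payments admissible under $r$ so that each $\mathcal{T}$-block sum becomes $t_T$, and I want this to work also inside a coalition $C$. This is exactly where the hypotheses are needed, and I would treat the cases separately.
\begin{itemize}
\item If $\mathcal{T}^r \preceq \mathcal{T}$, then $\mathcal{S}=\mathcal{T}$. Each $U$ is a single $\mathcal{T}$-block, so nothing needs to be done.
\item If $\mathcal{T}\preceq\mathcal{T}^r$, then $\mathcal{S}=\mathcal{T}^r$. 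Within the $\mathcal{T}^r$-block $U$ (or $U\cap C$), side payments are unrestricted. So I can route all money through one agent of $U\cap C$ and realize any individual payoffs with the right total, in particular the targets $t_T$.
\item The only remaining incomparable case among side-payment partitions is $\{\mathcal{T},\mathcal{T}^r\}=\{\mathcal{T}^{\mathrm{buy}},\mathcal{T}^{\mathrm{sell}}\}$, where $\mathcal{S}=\mathcal{T}^{\mathrm{sep}}$. Say $\mathcal{T}=\mathcal{T}^{\mathrm{buy}}$ and $r=\mathrm{sell}$. For the block $I$, the only relevant target is its total, which is already correct. For the block $J$, the $\mathcal{T}$-blocks are singletons, and their individual targets are realized via seller side payments, which $r$ allows. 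The symmetric case is analogous.
\end{itemize}
All other side-payment partitions are comparable to $\mathcal{T}^r$, so they fall under the first two cases.

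\textbf{Condition 1 of Definition~\ref{def:T-core}.} Suppose $\vecc\rho\in\mathcal{C}^{\mathcal{T}}(\mathcal{E}_h^r)$ is realized by an assignment in $\mathcal{A}(\mathcal{E}_h^r)$ with the correct $\mathcal{T}$-sums. Summing over the $\mathcal{T}$-blocks inside each $\mathcal{S}$-block and applying elimination gives a side-payment-free assignment with the correct $\mathcal{S}$-sums. Conversely, suppose a side-payment-free assignment realizes the $\mathcal{S}$-sums of $\vecc\rho$. Redistribution, with targets $t_T=\sum_{k\in T}\rho_k$, gives an $r$-assignment with the correct $\mathcal{T}$-sums.

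\textbf{Condition 2 (no blocking), by contraposition.} First, suppose coalition $C$ blocks in $\mathcal{E}_h^r$, improving every nonempty $C\cap T$ strictly. Each $C\cap U$ with $U \in \mathcal{S}$ is a disjoint union of such sets, so its sum also improves strictly. Elimination preserves these sums and gives a blocking assignment in $\mathcal{E}_h^{\mathrm{no}}[C]$ for $\mathcal{S}$. Second, suppose $C$ blocks in $\mathcal{E}_h^{\mathrm{no}}$ with strict surplus $\delta_U>0$ on each nonempty $C\cap U$. Set the targets to $\sum_{k\in C\cap T}\rho_k$ plus an equal share of $\delta_U$ for each $\mathcal{T}$-block meeting $C\cap U$. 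Redistribution inside $C$ then yields an $r$-assignment of $\mathcal{E}_h^r[C]$ that strictly improves every $C\cap T$. Here I read the quantifier in Definition~\ref{def:T-core} as ranging over blocks meeting $C$; otherwise the condition is vacuous for both partitions alike.

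The delicate point is making sure that redistribution uses only side payments among members of $C$, and only within $\mathcal{T}^r$-blocks. That is the reason for the case split above, and it is the step I would check most carefully in the $\mathcal{T}^{\mathrm{buy}}$ versus $\mathcal{T}^{\mathrm{sell}}$ case.
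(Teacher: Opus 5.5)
Your proof is correct and follows essentially the paper's argument. The paper removes side payments using that every $\mathcal{S}$-block is a union of $\mathcal{T}^r$-blocks, and it redistributes payoffs with the same three-case split (including the $\mathcal{T}^{\mathrm{buy}}$/$\mathcal{T}^{\mathrm{sell}}$ case) for both the realizability and the no-blocking conditions. Your reading of the quantifier as ranging over blocks meeting $C$ is the intended one, though your aside that the literal reading is ``vacuous for both partitions alike'' is not quite right, since a coalition can meet every $\mathcal{S}$-block without meeting every $\mathcal{T}$-block.
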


\begin{proof}
  This proof is a generalization of the one provided for Corollary~\ref{cor:RTUisNTUwithSP}. The distribution of the total payoff in each group is more intricate in this case, however, as we might need to use both side-payment edges as well as the groups induced by $\mathcal{T}$ to achieve the desired outcome. To simplify the notation, denote by $\mathcal{T}_S = \{T\in \mathcal{T}: T\subseteq S\}$.

  First, let $\vecc{\rho}_N\in \mathcal{C}^{\mathcal{T}}(\mathcal{E}_h^{r})$ be a vector of payoff after utility transfers.
  By definition, there exists an assignment $((\vecc{S}_I, \vecc{Z}_J), \vecc{p}, \vecc{Q}_N) \in \mathcal{A}(\mathcal{E}_{h}^{r})$ such that $\sum_{k \in T} \rho_k = \sum_{k \in T} \pi_k$ for any $T \in \mathcal{T}$. Because every block $S\in \mathcal{S}$ is either a superset of or disjoint to a given $r$ side-payment block, the total payoff remains the same when we remove the side payments from the assignment. Thus, for $((\vecc{S}_I, \vecc{Z}_J), \vecc{p}) \in \mathcal{A}(\mathcal{E}_{h}^{\mathrm{no}})$ and its payoff vector $\vecc\pi'_N$ it holds that
  \begin{equation}
    \sum_{k\in S}\rho_k = \sum_{T\in \mathcal{T}_S} \sum_{k\in T}\rho_k = \sum_{T\in \mathcal{T}_S} \sum_{k\in T}\pi_k = \sum_{k\in S}\pi_k = \sum_{T^r\in \mathcal{T}^r_S} \sum_{k\in T^r}\pi_k = \sum_{T^r\in \mathcal{T}^r_S} \sum_{k\in T^r} \pi_k', \label{eq:partition_sidepayments}
  \end{equation}
  which establishes the first condition in the definition of $\vecc{\rho}_N\in \mathcal{C}^{\mathcal{S}}(\mathcal{E}_h^{\mathrm{no}})$. For the stability condition, suppose conversely that $ \vecc\rho_N $ is \emph{not} stable in $\mathcal{C}^{\mathcal{S}}(\mathcal{E}_h^\mathrm{no})$, i.e., there exists a coalition $C$ and an assignment $((\vecc{\tilde S}_{C_I}, \vecc{\tilde Z}_{C_J}), \vecc{\tilde p}) \in \mathcal{A}(\mathcal{E}_{h}^{\mathrm{no}}[C])$ with $\sum_{k\in C\cap S}\tilde\pi_k > \sum_{k\in C\cap S}\rho_k$ for all $S\in \mathcal{S}$. Using $r$ side payments, we now construct a deviating coalition that contradicts the stability of $\vecc{\rho}_N\in \mathcal{C}^{\mathcal{T}}(\mathcal{E}_h^{r})$.
  \begin{itemize}
    \item If $\mathcal{T}^r \preceq \mathcal{T} = \mathcal{S}$, the given deviation already violates the stability.
    \item If $\mathcal{T} \preceq \mathcal{T}^r = \mathcal{S}$, the agents can freely distribute the side payments within the full blocks of $\mathcal{S}$. As $\sum_{k\in C\cap S}\tilde\pi_k = \sum_{T\in \mathcal{T}_S}\sum_{k\in C\cap T}\tilde\pi_k$, we can easily distribute the excess payoff in every $S$ such that $\sum_{k\in C\cap T}\rho_k$ is exceeded for every block $T\in \mathcal{T}_S$.
    \item The remaining case is $\mathcal{S} = \{I, J\}$ and $r = \mathrm{buy}, \mathcal{T} = \mathcal{T}^{\mathrm{sell}}$ or $r = \mathrm{sell}, \mathcal{T} = \mathcal{T}^{\mathrm{buy}}$. In this case, we can establish an assignment with $ \pi'_k > \rho_k $ for all agents $k$ on the side-payment side by distributing sufficient excess among the agents through side payments. Those $k$ correspond to the singleton blocks in $\mathcal{T}$. The other side, given by a single block of agents, already fulfills the inequality.
  \end{itemize}

  Conversely, we take $\vecc{\rho}_N\in \mathcal{C}^{\mathcal{S}}(\mathcal{E}_h^{\mathrm{no}})$. To verify the first condition in $\mathcal{C}^{\mathcal{T}}(\mathcal{E}_h^{r})$, we go through the same case distinctions as before for the assignment $((\vecc{S}_I, \vecc{Z}_J), \vecc{p}) \in \mathcal{A}(\mathcal{E}_{h}^{\mathrm{no}})$ that provides $\vecc\rho_N$:
  \begin{itemize}
    \item If $\mathcal{T}^r \preceq \mathcal{T} = \mathcal{S}$, the condition for the partition blocks is trivially fulfilled and $((\vecc{S}_I, \vecc{Z}_J), \vecc{p}) \in \mathcal{A}(\mathcal{E}_{h}^{r})$  holds.
    \item If $\mathcal{T} \preceq \mathcal{T}^r = \mathcal{S}$, the agents can freely distribute the side payments within the full blocks of $\mathcal{S}$. Thus, they also manage to distribute the value to the blocks of $\mathcal{T}$ accordingly, as $\mathcal{T}_S$ is a partition of $S$ and therefore the cumulative payoff in $S$ corresponds to the sum of blocks in $\mathcal{T}_S$ for every $S\in \mathcal{S}$.
    \item The remaining case is $\mathcal{S} = \{I, J\}$ and $r = \mathrm{buy}, \mathcal{T} = \mathcal{T}^{\mathrm{sell}}$ or $r = \mathrm{sell}, \mathcal{T} = \mathcal{T}^{\mathrm{buy}}$. On the side-payment side, we can adjust the payoffs through side-payments to match the corresponding $\rho_k$. On the other side, the cumulative payoff is the only requirement and is unchanged.
  \end{itemize}

  A supposed deviation of coalition $C$ with assignment $((\vecc{\tilde S}_{C_I}, \vecc{\tilde Z}_{C_J}), \vecc{\tilde p}, \vecc{\tilde Q}_N) \in \mathcal{A}(\mathcal{E}_{h}^{r}[C])$ satisfies $\sum_{k\in C\cap T}\tilde\pi_k > \sum_{k\in C \cap T}\rho_k$ for all $T\in \mathcal{T}$.
  Because $\mathcal{T} \preceq \mathcal{S}$ and all side-payment blocks are fully contained in or disjoint to a given $S$ (compare Equation~(\ref{eq:partition_sidepayments})), we simply sum up the inequalities of $\mathcal{T}_S$ for all $S\in\mathcal{S}$ to see that the assignment $((\vecc{\tilde S}_{C_I}, \vecc{\tilde Z}_{C_J}), \vecc{\tilde p}) \in \mathcal{A}(\mathcal{E}_{h}^{\mathrm{no}}[C])$ provides a deviation in contradiction to $\vecc\rho_N \in \mathcal{C}^{\mathcal{S}}(\mathcal{E}_h^{\mathrm{no}})$.
\end{proof}

Using $\mathcal{T} = \mathcal{T}^\mathrm{no} \preceq \mathcal{T}^r$ and $\mathcal{S} = \mathcal{T}^\mathrm{no} \vee \mathcal{T}^r = \mathcal{T}^r$ for some side-payment restriction $r$ gives the following corollary.

\begin{corollary}\label{cor:RTUisNTUwithSP}
  $\mathcal{C}^{\mathcal{T}^{\mathrm{no}}}(\mathcal{E}_h^r) = \mathcal{C}^{\mathcal{T}^r}(\mathcal{E}_h^{\mathrm{no}}) $.
\end{corollary}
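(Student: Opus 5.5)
This follows from Lemma~\ref{lem:TRTUwithr_is_SRTU} by specializing $\mathcal{T} = \mathcal{T}^{\mathrm{no}}$. The plan is to check the lemma's hypotheses, compute the join, and read off the identity.

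First, $\mathcal{T}^{\mathrm{no}}$ is one of the side-payment partitions listed in Section~\ref{sec:stability_relations}, so the first bullet of the lemma applies. Alternatively, $\mathcal{T}^{\mathrm{no}} \preceq \mathcal{T}^r$ holds for every $r \in \{\mathrm{full}, \mathrm{sep}, \mathrm{buy}, \mathrm{sell}, \mathrm{no}\}$, since the singleton partition is the finest partition of $N$; this satisfies the second bullet as well.

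Second, the join is $\mathcal{S} = \mathcal{T}^{\mathrm{no}} \vee \mathcal{T}^r = \mathcal{T}^r$, because for comparable elements the least upper bound is the coarser one. Substituting into $\mathcal{C}^{\mathcal{T}}(\mathcal{E}_h^r) = \mathcal{C}^{\mathcal{S}}(\mathcal{E}_h^{\mathrm{no}})$ gives exactly $\mathcal{C}^{\mathcal{T}^{\mathrm{no}}}(\mathcal{E}_h^r) = \mathcal{C}^{\mathcal{T}^r}(\mathcal{E}_h^{\mathrm{no}})$.

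There is no real obstacle here; the only subtlety is circularity. The lemma's proof refers to the proof of this corollary, so the paper presumably proves the corollary directly first. A self-contained direct proof would be the case $\mathcal{T} \preceq \mathcal{T}^r$ of the lemma's argument, in three steps:

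\begin{itemize}
  \item \emph{From $\mathcal{C}^{\mathcal{T}^{\mathrm{no}}}(\mathcal{E}_h^r)$ to $\mathcal{C}^{\mathcal{T}^r}(\mathcal{E}_h^{\mathrm{no}})$, realizability.} Stripping the side payments from an assignment in $\mathcal{E}_h^r$ preserves the total payoff of each $\mathcal{T}^r$-block, since side payments cancel within a block by consistency $q_{lk} = -q_{kl}$.
  \item \emph{From $\mathcal{C}^{\mathcal{T}^{\mathrm{no}}}(\mathcal{E}_h^r)$ to $\mathcal{C}^{\mathcal{T}^r}(\mathcal{E}_h^{\mathrm{no}})$, stability.} Suppose a coalition's deviation without side payments strictly improves every block sum $\sum_{k \in C \cap T} \tilde\pi_k$. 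Redistributing each block's excess among its members through $r$-admissible side payments makes every individual strictly better off, which contradicts NTU-stability in $\mathcal{E}_h^r$.
  \item \emph{The converse direction.} Add side payments within each $\mathcal{T}^r$-block so that the assignment realizes $\vecc{\rho}_N$ exactly. For stability, sum the individual strict improvements of a deviation in $\mathcal{E}_h^r$ over each block; the side payments cancel, which yields a deviation in $\mathcal{E}_h^{\mathrm{no}}$ that improves every block sum.
\end{itemize}
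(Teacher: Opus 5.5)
Your proposal is correct and follows the paper: the paper obtains the corollary by specializing Lemma~\ref{lem:TRTUwithr_is_SRTU} with $\mathcal{T} = \mathcal{T}^{\mathrm{no}} \preceq \mathcal{T}^r$ and $\mathcal{S} = \mathcal{T}^r$. Its written proof is the same two-inclusion argument you sketch: strip or add side payments within $\mathcal{T}^r$-blocks for realizability, and cancel or redistribute them within blocks to transfer deviations. Your circularity worry does not arise, because the lemma's proof cites the corollary only as an expository template and is logically self-contained.
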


\begin{proof}
  We provide the following proof as a simplified version of the proof of Lemma~\ref{lem:TRTUwithr_is_SRTU}.

  Both inclusions are proved following the same structure: First, we show that a given vector of payoffs after utility transfers $\vecc\rho_N$ admits some assignment to provide its existence in the other core by adjusting the side payments.
  Then, a violation of the stability condition would contradict the stability of $\vecc\rho_N$ in the other core as well.

  Consider a vector $\vecc{\rho}_N\in \mathcal{C}^{\mathcal{T}^r}(\mathcal{E}_h^{\mathrm{no}})$. By definition, there exists an assignment $((\vecc{S}_I, \vecc{Z}_J), \vecc{p}) \in \mathcal{A}(\mathcal{E}_{h}^{\mathrm{no}})$ such that $\sum_{k \in T} \rho_k = \sum_{k \in T} \pi_k$ for any $T \in \mathcal{T}^r$. Because for any fixed $T$, the total utility in each block, $\sum_{k \in T} \pi_k$, can be freely distributed using side payments in $\mathcal{E}_{h}^{r}$, we can find side payments $\vecc{Q}_N$ such that the payoff $\vecc\pi'_N$ of assignment $((\vecc{S}_I, \vecc{Z}_J), \vecc{p}, \vecc Q_N)\in \mathcal{A}(\mathcal{E}_{h}^{r})$ satisfies $\pi'_k = \rho_k$ for any $k\in N$. Thus, the first condition in the definition of $\mathcal{C}^{\mathcal{T}^{\mathrm{no}}}(\mathcal{E}_h^r)$ is established for $\vecc{\rho}_N$.
  For the second condition, suppose conversely that $ \vecc\rho_N $ is \emph{not} stable in $\mathcal{C}^{\mathcal{T}^{\mathrm{no}}}(\mathcal{E}_h^r)$, i.e., there exists a coalition $C$ and an assignment $((\vecc{\tilde S}_{C_I}, \vecc{\tilde Z}_{C_J}), \vecc{\tilde p}, \vecc{\tilde Q}_C) \in \mathcal{A}(\mathcal{E}_{h}^{r}[C])$ with $\tilde\pi_k > \rho_k$ for every $k\in C$. Consider the same allocation and prices without the side payments $((\vecc{\tilde S}_{C_I}, \vecc{\tilde Z}_{C_J}), \vecc{\tilde p})\in \mathcal{A}(\mathcal{E}_{h}^{\mathrm{no}}[C])$. Its payoff vector $ \vecc{\tilde\pi}'_C $ satisfies $\tilde\pi'_k = \tilde\pi_k + q_k$ for all $k\in C$. Because $\sum_{k\in C\cap T} q_k = 0$ for any $T \in \mathcal{T}^r$ due to the anti-symmetry of the side payments, it holds that
  \[ \sum_{k\in C\cap T}\tilde\pi'_k = \sum_{k\in C\cap T}\tilde\pi_k + q_k = \sum_{k\in C\cap T}\tilde\pi_k > \sum_{k\in C\cap T}\rho_k, \quad \forall T\in \mathcal{T}^r, \]
  which contradicts the stability of $\vecc\rho_N\in \mathcal{C}^{\mathcal{T}^r}(\mathcal{E}_h^{\mathrm{no}})$.

  For the opposite inclusion, consider $\vecc{\rho}_N\in \mathcal{C}^{\mathcal{T}^{\mathrm{no}}}(\mathcal{E}_h^{r})$ which is provided as payoff \emph{without} utility transfers by some assignment $((\vecc{S}_I, \vecc{Z}_J), \vecc{p}, \vecc{Q}_N) \in \mathcal{A}(\mathcal{E}_{h}^{r})$. By removing the side payments from this assignment, we get an assignment $((\vecc{S}_I, \vecc{Z}_J), \vecc{p}) \in \mathcal{A}(\mathcal{E}_{h}^{\mathrm{no}})$, whose payoff vector satisfies $\sum_{k \in T} \rho_k = \sum_{k \in T} \pi_k$, as the side payments distribute the block's total payoff only among its agents. Thus, the first condition in the definition of $\mathcal{C}^{\mathcal{T}^r}(\mathcal{E}_h^{\mathrm{no}})$ is fulfilled. If this vector were not stable, we can find a deviating assignment $((\vecc{\tilde S}_{C_I}, \vecc{\tilde Z}_{C_J}), \vecc{\tilde p}) \in \mathcal{A}(\mathcal{E}_{h}^{\mathrm{no}}[C])$ such that $\sum_{k\in T\cap C}\tilde\pi_k > \sum_{k\in T\cap C}\rho_k $ for every $T\in \mathcal{T}^r$. Because this total payoff in $T\cap C$ can be redistributed within this set freely using $r$ side payments, we could find an assignment $((\vecc{\tilde S}_{C_I}, \vecc{\tilde Z}_{C_J}), \vecc{\tilde p}, \vecc{Q}'_C) \in \mathcal{A}(\mathcal{E}_{h}^{r}[C])$ with payoff vector $\vecc{\tilde\pi'}$ with $\tilde\pi'_k > \rho_k$ for every $k \in C$, contradicting the stability of $\vecc{\rho}_N$ in $\mathcal{C}^{\mathcal{T}^{\mathrm{no}}}(\mathcal{E}_h^{r})$.
\end{proof}

Combining Lemma~\ref{lem:TRTUwithr_is_SRTU} and Corollary~\ref{cor:RTUisNTUwithSP} establishes Theorem~\ref{thm:allisNTU}.
The previous results let us conclude the discussion on stability settings in the following way: By using side payments, every $\mathcal{T}$-RTU stability notion can be equivalently represented by allowing a certain type of side payments in the market and discussing NTU-stability only, or by discussing the appropriate RTU-stability in the market without any side payments.
The stability setting itself is then identified by the specific restrictions on side payments.
Most importantly, TU-stability immediately corresponds to NTU-stability in the market with the full side payment setting.

\section{Equivalences}\label{sec:equivalences}

With the introduction of side payments, the relation between the different order settings becomes more involved.
Originally, the pricing systems provided defining restrictions on the structure of the payoff outcomes of the problem.
In this section, we investigate how the market settings actually differ from each other after the introduction of side payments, establishing various one-directional results by identifying situations where a market setting can represent all outcomes of another setting, and conversely provide examples when a certain setting's restrictions do not allow another setting's outcomes.

The combination of the results from Lemmas~\ref{lem:increasing_order} and \ref{lemma:12345to1(Pi)FullSep} to \ref{lemma:245to3(Pi)Sel} below immediately establishes Theorem~\ref{thm:setting-equivalences}, which states the full equivalence between certain settings in the sense that we can always find assignments using the same items and each agent achieves the exact same utility from the total trade.
See Appendix~\ref{secapp:tables} for a structured overview of the equivalences we establish.

\subsection{Types of Equivalences}\label{sec:formal_equivs}

A fundamental requirement for comparing two market settings is whether we can even find an allocation in which each agent offers or requests the same total amount of items.

\begin{definition}[Item-equivalent allocations]
  The allocations $(\vecc{S}_I, \vecc{Z}_J)$ and $(\vecc{\tilde{S}}_I, \vecc{\tilde{Z}}_J)$ are \emph{item equivalent} if $\vecc{s}_i = \vecc{\tilde{s}}_i$ for every buyer $i \in I$ and $\vecc{z}_j = \vecc{\tilde{z}}_j$ for every seller $j \in J$, i.e., each agent offers or requests the same cumulative bundle in both allocations.
\end{definition}

Already \cite[Lemma 2.1]{BikhchandaniOstroy:ThePackageAssignmentModel} have established that when an allocation is admissible in one of the first four order settings, we can always find an item-equivalent allocation in any of the other settings by greedily reassigning and repackaging the items so as to obtain a fourth-order admissible allocation (which in turn is trivially admissible in all lower order settings). This result can be easily generalized to the fifth-order setting by applying the same reasoning.

\begin{lemma}[extending \cite{BikhchandaniOstroy:ThePackageAssignmentModel}]\label{lemma:12345to4(I)Allocation}
  Let $(\vecc{S}_I, \vecc{Z}_J)$ be an $h^{\mathrm{th}}$-order admissible allocation, then there exists an item-equivalent $\tilde h^{\mathrm{th}}$-order admissible allocation $(\vecc{\tilde{S}}_I, \vecc{\tilde{Z}}_J)$, for every $h, \tilde h \in \{1, \dots, 5\}$.
\end{lemma}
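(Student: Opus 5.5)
The plan is to route every setting through the fourth-order setting, which is the most restrictive one on the allocation side. By Remark~\ref{rem:bikh_monotone}, an allocation admissible in the fourth-order setting is also admissible in the third, second and first-order settings. So it suffices to prove two things. First, from an $h^{\mathrm{th}}$-order admissible allocation for any $h\in\{1,\dots,5\}$ we can build an item-equivalent fourth-order admissible allocation. Second, every fourth-order admissible allocation induces an item-equivalent fifth-order admissible allocation. Composing these two steps with Remark~\ref{rem:bikh_monotone} covers every pair $(h,\tilde h)$.

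\textbf{Step 1: reduction to a common condition.} For every $h\in\{1,\dots,5\}$, admissibility implies the aggregate market-clearing condition $\sum_{i\in I}\vecc s_i=\sum_{j\in J}\vecc z_j$ on cumulative bundles. For $h\in\{1,5\}$ this is exactly the admissibility condition, since only total item counts must coincide. For $h\in\{2,3,4\}$ it follows from the respective bundle-count matching conditions by summing over bundles. Hence, independently of $h$, we only need the following statement: given nonnegative integer vectors $\vecc s_i$ and $\vecc z_j$ with equal totals, there are integer vectors $\vecc\omega_{ij}\ge \vecc 0$ with $\sum_j\vecc\omega_{ij}=\vecc s_i$ and $\sum_i\vecc\omega_{ij}=\vecc z_j$.

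\textbf{Step 2: the greedy split.} I construct the $\vecc\omega_{ij}$ item type by item type, as in the proof of \cite[Lemma 2.1]{BikhchandaniOstroy:ThePackageAssignmentModel}. For a fixed item $\ell$, list the buyers' units and the sellers' units in some order and match them with a northwest-corner sweep. Repeatedly take the first buyer $i$ and first seller $j$ that still have residual units of $\ell$, assign $\min$ of their two residuals to $\omega_{ij\ell}$, reduce both residuals, and move on. Because the totals agree, the sweep terminates with all residuals zero. Integrality and nonnegativity hold by construction, and so do the row and column sums. Setting seller $j$'s offer to buyer $i$ and buyer $i$'s request from seller $j$ both equal to $\vecc\omega_{ij}$ gives a direct-trade allocation. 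This allocation is fourth-order admissible, and its cumulative bundles are $\vecc s_i$ and $\vecc z_j$, so it is item equivalent.

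\textbf{Step 3: the fifth-order setting.} Here I take the agents' cumulative bundles $\vecc s_i$ and $\vecc z_j$ directly as the fifth-order offers and requests. The fifth-order clearing condition is exactly the aggregate identity from Step 1, so this allocation is admissible and trivially item equivalent. Conversely, a fifth-order allocation satisfies the aggregate identity and therefore feeds into Step 2.

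\textbf{Main obstacle.} I expect the main difficulty to be purely formal: checking against the definitions in Appendix~\ref{secapp:formal_market-settings} that each setting's admissibility really implies, and for $h\in\{1,5\}$ is equivalent to, the aggregate identity. For $h=3$ in particular, I need to confirm that the per-buyer bundle-count condition sums to it correctly. Once that is verified, the greedy transportation argument above carries the rest of the proof.
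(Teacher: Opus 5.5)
Your proposal is correct and follows essentially the paper's route: build an item-equivalent fourth-order (direct-trade) allocation by greedily reassigning and repackaging items as in \cite[Lemma 2.1]{BikhchandaniOstroy:ThePackageAssignmentModel}, then use Remark~\ref{rem:bikh_monotone} for the lower orders and handle the fifth-order setting through the same aggregate clearing condition. Your reduction to $\sum_{i\in I}\vecc s_i=\sum_{j\in J}\vecc z_j$ and the per-item northwest-corner split only spell out details that the paper leaves to the cited lemma.
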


In particular, due to budget balance, this implies that the maximum social welfare of assignments in a market depends only on the economy, not on the resale restrictions we impose on it.

We now define the following equivalence properties between two assignments. Note that we will always assume that the assignments are defined in the same economy, but they can (and typically will) be admissible in different market settings.

\begin{definition}[Item equivalence]
  The assignments $((\vecc{S}_I, \vecc{Z}_J), \vecc p, \vecc{Q}_N)$ and $((\vecc{\tilde S}_I, \vecc{\tilde Z}_J), \vecc{\tilde p}, \vecc{\tilde Q}_N)$ are \emph{item equivalent} if their respective allocations are item equivalent.
\end{definition}

While this property is fundamental and required for any reasonable comparison between two market settings, it is far from sufficient to discuss any meaningful properties about the strategic outcome of the market, as it does not relate the prices and payoffs of the assignments in any way.

However, even when the pricing systems differ, it might still be possible for each agent to achieve the same utility from the same cumulative bundle in another setting:

\begin{definition}[Payoff equivalence]
  The assignments $((\vecc{S}_I, \vecc{Z}_J), \vecc p, \vecc{Q}_N)$ and $((\vecc{\tilde S}_I, \vecc{\tilde Z}_J), \vecc{\tilde p}, \vecc{\tilde Q}_N)$ are \emph{payoff equivalent} if they are item equivalent and $\pi_k = \tilde{\pi}_k$ for every $k \in N$, i.e., every agent individually has the same payoff in both assignments.
\end{definition}

A slightly stronger notion of equivalence is the situation where each agent also pays the same total amount of money to the clearinghouse.
Note that this does not necessarily mean that the prices of the traded bundles coincide.

\begin{definition}[Price equivalence]
  The assignments $((\vecc{S}_I, \vecc{Z}_J), \vecc p, \vecc{Q}_N)$ and $((\vecc{\tilde S}_I, \vecc{\tilde Z}_J), \vecc{\tilde p}, \vecc{\tilde Q}_N)$ are \emph{price equivalent} if they are payoff equivalent and $p_k = \tilde{p}_k$ for every $k \in N$, i.e., every agent individually pays or receives the same cumulative price in both assignments.
\end{definition}

And finally, we introduce the strongest notion, in which the assignments realize essentially the same trade, all the way down to the individual prices paid for every bundle.
Because the representation of the pricing system for the fifth-order setting differs from the other settings in that it considers cumulative prices directly, the following notion applies to it only when both assignments are fifth-order admissible.

\begin{definition}[Market equivalence]
  The assignments $((\vecc{S}_I, \vecc{Z}_J), \vecc p, \vecc{Q}_N)$ and $((\vecc{\tilde S}_I, \vecc{\tilde Z}_J), \vecc{\tilde p}, \vecc{\tilde Q}_N)$ are \emph{market equivalent} if they are price equivalent and $p_{ij}(\vecc{\omega}) = \tilde{p}_{ij}(\vecc{\omega})$ for every $\vecc{\omega} \leq \vecc{\Omega}$, $i \in I$ and $j \in J$, i.e., the prices are the same for each bundle, buyer and seller.\footnote{For a pair of fifth-order admissible assignments, the condition on the pricing systems becomes $p_k(\vecc{\omega}) = \tilde{p}_{k}(\vecc{\omega})$ for every $\vecc{\omega} \leq \vecc{\Omega}$ and $k \in N$.}
\end{definition}

\subsection{Increasing Order Setting}\label{sec:increasing_order}
The first results we present consider going from a given order setting to a higher one.
Thereby, the pricing system becomes more individualized, but the clearinghouse gets more restricted in terms of how it can reassemble or reassign packages.
While the former is clearly relaxing the restrictions for higher settings, it turns out that the latter is in most cases only a formal issue, which does not impede equivalence properties between the settings.

We start by considering the cases in which, going from one setting to the other, the side-payment restrictions remain the same or are relaxed. If the order setting does not change, market equivalence can be directly established.

\begin{remark}\label{remark:marketEquivDiagonal}
  As we have already observed in Section~\ref{sec:sidePayments}, any assignment $((\vecc{S}_I, \vecc{Z}_J), \vecc p, \vecc{Q}_N) \in \mathcal{F}(\mathcal{E}_h^r)$ is also feasible in a market with the same order setting but with fewer side-payment restrictions, i.e., $((\vecc{S}_I, \vecc{Z}_J), \vecc p, \vecc{Q}_N) \in \mathcal{F}(\mathcal{E}_h^{\tilde r})$ for every $\tilde r \succeq r$. Therefore, we can immediately deduce that a market-equivalent assignment always exists when the only change to the setting is a relaxation of the side-payment restrictions.
\end{remark}

We summarize our results for increasing the order setting, when the side-payment restrictions are not refined, as follows:

\begin{lemma}\label{lem:increasing_order}
  When the side-payments of the original setting remain directly applicable, i.e., $\tilde r \succeq r$, the following equivalences hold:
  \begin{itemize}
    \item For ${\vecc{A}} \in \mathcal{F}(\mathcal{E}_1^r)$, there exists a market-equivalent assignment ${\vecc{\tilde{A}}} \in \mathcal{F}(\mathcal{E}_{\tilde h}^{\tilde r})$ with $\tilde h \in \{2, 3, 4\}$.
    \item For ${\vecc{A}} \in \mathcal{F}(\mathcal{E}_3^r)$, there exists a market-equivalent assignment ${\vecc{\tilde{A}}} \in \mathcal{F}(\mathcal{E}_{4}^{\tilde r})$.

    \item For ${\vecc{A}} \in \mathcal{F}(\mathcal{E}_h^r)$ with $h \in \{1, \dots, 4\}$, there exists a price-equivalent assignment ${\vecc{\tilde{A}}} \in \mathcal{F}(\mathcal{E}_5^{\tilde r})$.

    \item For ${\vecc{A}} \in \mathcal{F}(\mathcal{E}_2^r)$, there exists a price-equivalent assignment ${\vecc{\tilde{A}}} \in \mathcal{F}(\mathcal{E}_4^{\tilde r})$.\footnote{Note that for the fifth-order setting, the notion of market-equivalence doesn't apply due to the difference in representation of this model.}
  \end{itemize}
\end{lemma}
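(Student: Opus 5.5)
First I reduce the side-payment part of the statement to the case $\tilde r = r$. By Remark~\ref{remark:marketEquivDiagonal}, any assignment feasible in $\mathcal{E}_{\tilde h}^{r}$ stays feasible, with identical allocation, prices and side payments, in $\mathcal{E}_{\tilde h}^{\tilde r}$ for every $\tilde r \succeq r$. In every construction below the side payments $\vecc Q_N$ are copied unchanged. Since each payoff $\pi_k$ depends only on the cumulative bundle, the cumulative price $p_k$ and the cumulative side payment $q_k$, payoff equivalence (and hence feasibility, $\vecc\pi_N \geq \vecc 0$) will follow as soon as the cumulative bundles and cumulative prices agree. So for each item the task is to transport the allocation and the pricing system from order $h$ to order $\tilde h$ while keeping them.

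For the first bullet, start from a first-order assignment with item prices. By Lemma~\ref{lemma:12345to4(I)Allocation}, an item-equivalent fourth-order admissible allocation exists. By Remark~\ref{rem:bikh_monotone}, it is also admissible in the second and third order. Its prices are defined via the same linear item-price system, extended through Remark~\ref{rem:bikh_monotone} to a (trivially personalized) $\tilde h$-order pricing system $\tilde p_{ij}(\vecc\omega)=\vecc p\cdot\vecc\omega$. Then the price functions coincide bundle-wise, so market equivalence requires only that cumulative prices match. Linearity gives $\tilde p_i = \vecc p\cdot \vecc s_i = p_i$, and likewise $\tilde p_j = p_j$ for sellers, because cumulative bundles are preserved.

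For the second bullet, the third-order assignment relabels only the sellers' side. I would reroute the bundles, as a permutation matching the sellers' offered copies of each bundle to buyers' requested copies, to get a direct-trade fourth-order allocation. Each buyer then still receives exactly the same multiset of bundles, so their cumulative price, which is buyer-personalized, is unchanged. I expect the main obstacle here: the seller's cumulative receipt must also be preserved. Since a seller may send the same bundle to different buyers at different prices, the rerouting has to match each seller's offered copies to the buyer prescribed by the third-order assignment's formal definition (Appendix~\ref{secapp:formal_market-settings}). I would argue that the third-order feasibility constraint, which counts bundles per buyer, already encodes such a seller-to-buyer assignment. With the induced pricing system $\tilde p_{ij}=p_{ij}$, this gives market equivalence.

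For the fourth bullet, from $h=2$ to order 4, the same rerouting is used, but multiple copies of a bundle may be relabeled arbitrarily. Common bundle prices make all matchings give the same cumulative prices, yet a buyer--seller pair may effectively trade several bundles (the phenomenon of Example~\ref{example:34noto2(M)}). So I would only aim for price equivalence. I would use Lemma~\ref{lemma:12345to4(I)Allocation} to obtain a fourth-order allocation and then define personalized prices $\tilde p_{ij}$ so that each agent's cumulative price matches $p_k$. This amounts to a transportation-style construction: distribute the buyers' totals over their trades so that the sellers' totals are hit. Budget balance guarantees the totals agree, so such a flow exists. For the third bullet, any $h\le 4$ to order 5, the fifth-order setting needs only item counts and personalized cumulative prices. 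I take the same allocation (cumulative bundles) and set $\tilde p_k := p_k$ for each agent. Budget balance holds because the original prices summed to zero across the two sides, which gives price equivalence.
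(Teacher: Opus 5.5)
\textbf{The first three items are fine; the fourth has a gap.} Your treatment of the first three items follows the paper. For first order, you carry the linear item prices over to an item-equivalent allocation. For third order, you route each $\vecc z_{ij}$ along its unchanged buyer label, i.e.\ $\vecc{\tilde s}_{ij}=\vecc{\tilde z}_{ij}=\vecc z_{ij}$, which the per-buyer counting constraint permits. For fifth order, you copy the cumulative prices.

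In the fourth item, after invoking Lemma~\ref{lemma:12345to4(I)Allocation} you work with an \emph{arbitrary} item-equivalent fourth-order allocation. You then claim that budget balance alone makes the transportation problem solvable. That problem asks for nonnegative prices on the trade edges with buyer sums $p_i$ and seller sums $p_j$. Budget balance is not enough: a flow supported on the trade graph needs a Hall-type condition, and in particular every connected component of the trade graph must be balanced by itself.

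Here is a concrete failure. Let $j_1$ offer $\{a,b\}$ to $i_1$, and let $j_2$ offer $\{a\}$ and $\{b\}$ separately. Let $i_1$ request $\{a,b\}$ and $i_2$ request $\{a\}$ and $\{b\}$. Take $p(\{a,b\})=10$ and $p(\{a\})=p(\{b\})=1$, with valuations chosen so the assignment is feasible. The cumulative prices are $p_{i_1}=p_{j_1}=10$ and $p_{i_2}=p_{j_2}=2$. Now take the allocation where $i_1$ trades $\{a,b\}$ with $j_2$ and $i_2$ trades $\{a,b\}$ with $j_1$. It is item equivalent and fourth-order admissible, but it forces $10=\tilde p_{i_1}=\tilde p_{j_2}=2$.

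Your argument also never uses anything specific to the second order. It would therefore equally ``prove'' price equivalence from the fifth to the fourth order, which Example~\ref{example:5noto4(IPi)No} refutes for exactly this component-balance reason.

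\textbf{The fix.} You need to build the fourth-order allocation from the second-order relabeling itself. You already had this in hand when you noted that every matching of offered copies to requested copies yields the same cumulative prices. Fix such a matching. For each pair $(i,j)$, let $\vecc\omega^1_{ij},\dots,\vecc\omega^{m_{ij}}_{ij}$ be the bundles $i$ effectively receives from $j$. Merge them into one bundle $\vecc{\tilde s}_{ij}=\vecc{\tilde z}_{ij}=\sum_l \vecc\omega^l_{ij}$, and price it at $\tilde p_{ij}(\vecc{\tilde s}_{ij})=\sum_l p(\vecc\omega^l_{ij})$. Each copy's price now travels along its own matched edge, so $\tilde p_i=\sum_j p(\vecc s_{ij})=p_i$ and $\tilde p_j=p_j$ hold directly, with no existence argument needed. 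This is the paper's proof of Lemma~\ref{lemma:2to4(P)}. The merging step is also why only price equivalence, not market equivalence, is obtained.
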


The proofs of these results are all based on identifying an appropriate pricing system in the higher-order setting to achieve the same market outcome as before, while identifying the required structure to compile admissible packages with restricted clearinghouse power.
In the following, we present the individual statements and their proofs.

In the first-order setting, the linearity of the pricing system implies that all item-equivalent assignments with the same market prices also have the same cumulative price for each agent. Hence, since by Lemma~\ref{lemma:12345to4(I)Allocation} we can always find an item-equivalent allocation in any setting, and first-order pricing systems are admissible in any of the first four order settings, we can derive the following result.

\begin{lemma}\label{lemma:1to234(M)}
  Let $((\vecc{S}_I, \vecc{Z}_J), \vecc p, \vecc{Q}_N) \in \mathcal{F}(\mathcal{E}_1^r)$. For $\tilde h \in \{2, 3, 4\}$ and $\tilde r \succeq r$, there exists a market-equivalent assignment $((\vecc{\tilde S}_I, \vecc{\tilde Z}_J), \vecc{\tilde p}, \vecc{\tilde Q}_N) \in \mathcal{F}(\mathcal{E}_{\tilde h}^{\tilde r})$.
\end{lemma}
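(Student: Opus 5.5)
The plan is to keep the item prices and the side payments, and change only the packaging. Let $((\vecc{S}_I, \vecc{Z}_J), \vecc p, \vecc{Q}_N) \in \mathcal{F}(\mathcal{E}_1^r)$, where $\vecc p$ is a first-order pricing system. Such a system is given by an item price vector $\vecc{y}$ with $p_{ij}(\vecc\omega) = \vecc{y}\cdot\vecc\omega$ for all $i,j,\vecc\omega$.

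First, I invoke Lemma~\ref{lemma:12345to4(I)Allocation} with $\tilde h = 4$. This yields an item-equivalent fourth-order admissible allocation $(\vecc{\tilde S}_I, \vecc{\tilde Z}_J)$. By Remark~\ref{rem:bikh_monotone}, this allocation is also admissible in the second- and third-order settings, so one allocation serves every $\tilde h \in \{2,3,4\}$.

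Next, I set $\vecc{\tilde p} := \vecc p$, viewed as an $\tilde h^{\mathrm{th}}$-order pricing system. By Remark~\ref{rem:bikh_monotone}, a first-order system induces admissible second-, third- and fourth-order systems by copying the same price to every bundle or buyer-seller pair. Hence $\tilde p_{ij}(\vecc\omega) = p_{ij}(\vecc\omega)$ for all $\vecc\omega, i, j$, which is exactly the bundle-wise condition in market equivalence. I also keep $\vecc{\tilde Q}_N := \vecc Q_N$. These side payments are admissible under $\tilde r$ because $\tilde r \succeq r$ (Remark~\ref{remark:marketEquivDiagonal}); the allowed side payments depend only on $r$ and not on the order $h$.

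It remains to check price and payoff equivalence, and this is where linearity is used. Buyer $i$'s cumulative price is $\sum_j p_{ij}(\vecc s_{ij}) = \vecc y\cdot\sum_j \vecc s_{ij} = \vecc y\cdot \vecc s_i$. The same computation in the new assignment gives $\vecc y\cdot\vecc{\tilde s}_i = \vecc y \cdot \vecc s_i$ by item equivalence, and the analogous computation works for sellers with $\vecc z_j$. So $\tilde p_k = p_k$ for all $k$. The values $v_k$ depend only on cumulative bundles and the side payments are unchanged, so $\tilde\pi_k = \pi_k \ge 0$. Feasibility therefore carries over, and the two assignments are market equivalent.

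The only step I expect to need real care is the formal link between the general bundle-indexed pricing functions and the item-price representation. One must check that the first-order admissibility conditions in Appendix~\ref{secapp:formal_market-settings} indeed force $p_{ij}(\vecc\omega)=\vecc y\cdot\vecc\omega$ for all bundles, including bundles that appear only in the repackaged allocation. Given that, the cumulative-price identity and the rest of the argument are routine.
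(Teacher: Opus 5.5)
Your proof is correct and follows the paper's argument. You keep the linear item prices and the side payments unchanged, take an item-equivalent higher-order allocation from Lemma~\ref{lemma:12345to4(I)Allocation}, and use linearity to show that every agent's cumulative price, and hence payoff, is preserved. Two small remarks: going through $\tilde h = 4$ and then Remark~\ref{rem:bikh_monotone}, instead of applying the lemma to each $\tilde h$ directly, changes nothing. The concern in your last paragraph is moot, since first-order pricing systems are defined by $p_{ij}(\vecc\omega)=\vecc p\cdot\vecc\omega$ for every bundle $\vecc\omega$.
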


\begin{proof}
  Starting with a feasible first-order admissible assignment $((\vecc{S}_I, \vecc{Z}_J), \vecc p, \vecc{Q}_N)$, we only need to take care of the allocation, as the more verbose pricing system in the $\tilde h^{\mathrm{th}}$-order can accommodate our requirements by setting each price to $\tilde{p}_{ij}(\vecc{\omega}) = p_{ij}(\vecc{\omega}) = \vecc{p} \cdot \vecc{\omega}$ for every $\vecc{\omega} \leq \vecc{\Omega}$, $i \in I$ and $j \in J$.
  From Lemma~\ref{lemma:12345to4(I)Allocation}, we can consider an $\tilde h^{\mathrm{th}}$-order admissible allocation $(\vecc{\tilde S}_I, \vecc{\tilde Z}_J)$ that is item equivalent to $(\vecc{S}_I, \vecc{Z}_J)$, but whose individual bundles might differ. However, since $\vecc{\tilde p}$ is linear, the cumulative prices remain the same, i.e., for every buyer $i \in I$,
  $$\tilde{p}_i = \sum_{j \in J} \tilde{p}_{ij}(\vecc{\tilde{s}}_{ij}) = \sum_{j \in J} \vecc{p} \cdot \vecc{\tilde{s}}_{ij} = \vecc{p} \cdot \vecc{\tilde{s}}_i = \vecc{p} \cdot \vecc{s}_i = \sum_{j \in J} \vecc{p} \cdot \vecc{s}_{ij} = p_i,$$
  and similarly, $\tilde p_j = p_j$ for every seller $j \in J$. By maintaining the same individual side payments $\vecc{\tilde Q}_N = \vecc{Q}_N$, we finally establish $\tilde \pi_k = \pi_k$ for every agent $k \in N$.
  Therefore, we conclude that $((\vecc{\tilde S}_I, \vecc{\tilde Z}_J), \vecc{\tilde p}, \vecc{\tilde Q}_N)$ is market equivalent to $((\vecc{S}_I, \vecc{Z}_J), \vecc p, \vecc{Q}_N)$.
\end{proof}

A similar result holds for the third-order setting. In this case, the effect of the clearinghouse is to permute the order of the buyers' individual bundles while leaving the sellers' unaffected. Hence, by rearranging the buyers' requests so that they correspond to the sellers' offers, and maintaining the same pricing system, we can deduce the following.

\begin{lemma}\label{lemma:3to4(M)}
  Let $((\vecc{S}_I, \vecc{Z}_J), \vecc p, \vecc{Q}_N) \in \mathcal{F}(\mathcal{E}_3^r)$. For $\tilde r \succeq r$, there exists a market-equivalent assignment $((\vecc{\tilde S}_I, \vecc{\tilde Z}_J), \vecc{\tilde p}, \vecc{\tilde Q}_N) \in \mathcal{F}(\mathcal{E}_{4}^{\tilde r})$.
\end{lemma}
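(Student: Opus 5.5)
The plan is to keep the pricing system and the side payments unchanged and only rearrange the buyers' side of the allocation. We need a formal picture of the third-order setting (Appendix~\ref{secapp:formal_market-settings}). The allocation there consists of sellers' offer vectors $\vecc z_{ji}$, indexed by the buyer the bundle is labelled for, and buyers' request vectors $\vecc s_{ij}$. Admissibility only requires that, for every buyer $i$, the multiset of bundles $\{\vecc z_{ji}\}_{j\in J}$ sent towards $i$ coincides with the multiset $\{\vecc s_{ij}\}_{j\in J}$ that $i$ requests. Because the sellers' labels may be relabelled, for each buyer $i$ there is a permutation $\sigma_i$ of $J$ (fixing matched empty bundles) with $\vecc s_{i\sigma_i(j)} = \vecc z_{ji}$.

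Define the fourth-order allocation by $\vecc{\tilde Z}_J = \vecc Z_J$ and $\vecc{\tilde s}_{ij} := \vecc z_{ji}$. This is exactly the direct-trade condition, so it is fourth-order admissible. It is item equivalent to the original allocation, since sellers are untouched and $\vecc{\tilde s}_i = \sum_j \vecc z_{ji} = \sum_j \vecc s_{i\sigma_i(j)} = \vecc s_i$, the sum merely being permuted. Next take $\vecc{\tilde p} = \vecc p$, which is admissible by Remark~\ref{rem:bikh_monotone}, and $\vecc{\tilde Q}_N = \vecc Q_N$, which is admissible for every $\tilde r \succeq r$ (Remark~\ref{remark:marketEquivDiagonal}). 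With these choices the individual prices $\tilde p_{ij}(\vecc\omega)=p_{ij}(\vecc\omega)$ coincide by construction.

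It remains to check cumulative prices and payoffs. Sellers pay and receive the same prices bundle by bundle, so $\tilde p_j = p_j$. For buyers we need $\sum_j p_{ij}(\vecc z_{ji}) = \sum_j p_{ij}(\vecc s_{ij})$. Here the buyer-personalization of third-order prices is used: third-order prices depend on the buyer and the bundle but not on the seller, so $p_{ij}(\vecc\omega) = p_i(\vecc\omega)$. The sum is then a function of the multiset of bundles, and the permutation $\sigma_i$ leaves it invariant. Budget balance and values are unchanged, so $\tilde\pi_k=\pi_k$ for all $k$ and feasibility carries over.

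The main obstacle is the bookkeeping in the middle step. We must pin down exactly which of the seller's or buyer's price the third-order definition charges, and confirm that it is really seller-independent for a fixed buyer. Only then do the buyer's cumulative payments survive the relabelling. If the formal definition instead indexes the third-order price by the seller's label, the same argument applies with the roles adjusted: we would rearrange $\vecc s_{ij}$ according to $\sigma_i$ and check that each traded bundle keeps its assigned price.
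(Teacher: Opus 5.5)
Your proposal is correct and follows the paper's proof essentially step for step. Like the paper, you keep $\vecc p$ and $\vecc Q_N$ and leave the sellers' offers untouched. You then reorder each buyer's requests via the permutation that third-order admissibility guarantees, so that $\vecc{\tilde s}_{ij}$ is exactly the bundle the seller sends, and use the buyer-personalization $p_{ij}(\vecc\omega)=p_i(\vecc\omega)$ to see that each buyer's cumulative price is invariant under this permutation. Your closing hedge is unnecessary, since the paper's third-order definition fixes exactly this buyer-indexed, seller-independent form of the prices.
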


\begin{proof}
  As in Lemma~\ref{lemma:1to234(M)}, we only need to adjust the allocation, since we can define the new pricing system simply as $\tilde{p}_{ij}(\vecc{\omega}) = p_{ij}(\vecc{\omega}) = p_{i}(\vecc{\omega})$ for every $\vecc{\omega} \leq \vecc{\Omega}$, $i \in I$ and $j \in J$, and keep the same individual side payments $\vecc{\tilde Q}_N = \vecc{Q}_N$.

  We remind that in the third-order setting, the clearinghouse can change the seller's label but not the buyer's label on each bundle it receives. Therefore, every seller $j \in J$ must offer each individual bundle $\vecc{z}_{ij}$ to a buyer $i \in I$ who actually requests it, whereas for the buyers we might have that some bundles $\vecc{s}_{ij} = \vecc{z}_{ij'}$ are offered to them by a seller $j' \in J$ that is different from the seller $j$ indicated on the label. Since each buyer $i$ receives exactly one bundle from every seller, the buyer's allocation is effectively a permutation of the bundles offered to them, i.e.,
  $$\vecc{S}_i = (\vecc{s}_{ij_1}, \dots, \vecc{s}_{ij_{|J|}}) = (\vecc{z}_{i\sigma(j_1)}, \dots, \vecc{z}_{i\sigma(j_{|J|})})$$
  for some permutation $\sigma \in S_{|J|}$. By rearranging the requests of the buyers so that every $i \in I$ asks each individual bundle from the seller who offers it to buyer $i$ (i.e., $\vecc{\tilde{s}}_{ij} = \vecc z_{ij} = \vecc s_{ij'}$ for every $j \in J$), and leaving the allocation of each seller $j \in J$ unaffected (i.e., $\vecc{\tilde{z}}_{ij} = \vecc z_{ij}$ for every $i \in I$), we get a new allocation $(\vecc{\tilde S}_I, \vecc{\tilde Z}_J)$ that is fourth-order admissible and item equivalent to $(\vecc{S}_I, \vecc{Z}_J)$.

  Since the set of individual bundles that each agent receives is the same in both allocations and $\tilde{p}_{ij}(\vecc{\tilde s}_{ij}) = \tilde{p}_{ij}(\vecc{\tilde z}_{ij}) = p_{ij}(\vecc{z}_{ij})$ for every $i \in I$ and $j \in J$, the cumulative price is also the same for each agent.
  We can then conclude that $((\vecc{\tilde S}_I, \vecc{\tilde Z}_J), \vecc{\tilde p}, \vecc{\tilde Q}_N)$ is market equivalent to the initial assignment.
\end{proof}

In the fifth-order setting, allocations and prices are defined cumulatively, and the only conditions they must satisfy are, respectively, that only the requested items are offered and that budget balance holds. But these conditions are true for all admissible assignments in any of the settings, so by considering the cumulative equivalent of any assignment, we can always obtain a fifth-order admissible assignment in the same or less restricted side-payment setting. Analogously to above, we can then deduce the following result. Note that market equivalence is not well-defined between assignments in the fifth-order setting and assignments in the other four order settings.

\begin{lemma}\label{lemma:1234to5(P)}
  Let $((\vecc{S}_I, \vecc{Z}_J), \vecc p, \vecc{Q}_N) \in \mathcal{F}(\mathcal{E}_h^r)$ with $h \in \{1, \dots, 4\}$, then there exists a price-equivalent assignment $((\vecc{\tilde s}_I, \vecc{\tilde z}_J), \vecc{\tilde p}, \vecc{\tilde Q}_N) \in \mathcal{F}(\mathcal{E}_5^{\tilde r})$ with $\tilde r \succeq r$.
\end{lemma}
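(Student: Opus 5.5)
The plan is to build the fifth-order assignment directly by aggregating everything cumulatively. For the allocation, I set $\vecc{\tilde s}_i = \vecc s_i = \sum_{j\in J}\vecc s_{ij}$ for every buyer $i\in I$ and $\vecc{\tilde z}_j = \vecc z_j = \sum_{i\in I}\vecc z_{ij}$ for every seller $j\in J$. The only allocation condition in the fifth-order setting is that the total item counts on both sides coincide. This holds because the $h^{\mathrm{th}}$-order admissibility of $(\vecc S_I, \vecc Z_J)$ for $h\in\{1,\dots,4\}$ implies first-order admissibility by Remark~\ref{rem:bikh_monotone}, that is, $\sum_{i}\vecc s_i = \sum_j \vecc z_j$. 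Item equivalence is then immediate, since the cumulative bundles are literally unchanged.

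For the pricing system, I will define the personalized cumulative prices so that they reproduce each agent's original total price. Set $\tilde p_i(\vecc\omega) := p_i = \sum_{j\in J} p_{ij}(\vecc s_{ij})$ for $\vecc\omega = \vecc s_i$, and analogously $\tilde p_j(\vecc\omega) := p_j = \sum_{i\in I} p_{ij}(\vecc z_{ij})$ for $\vecc\omega = \vecc z_j$. Off the traded bundles the values can be chosen arbitrarily, for instance $0$, or in whatever form Appendix~\ref{secapp:formal_market-settings} requires. I keep the side payments unchanged, $\vecc{\tilde Q}_N = \vecc Q_N$, which is admissible in $\mathcal{E}_5^{\tilde r}$ for any $\tilde r\succeq r$. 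With these choices $\tilde p_k = p_k$ and $\tilde q_k = q_k$ for every $k\in N$, and the valuations act on identical cumulative bundles. Hence $\tilde\pi_k = \pi_k \ge 0$, which gives both feasibility and price equivalence.

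The main obstacle is the explicit budget-balance requirement of the fifth-order setting, $\sum_{i\in I}\tilde p_i = \sum_{j\in J}\tilde p_j$. In the lower-order settings it is implicit, so I have to derive it from the market structure of the original assignment. The argument depends on the setting. For $h=4$, bundles are traded directly, $\vecc s_{ij}=\vecc z_{ij}$, and both sides sum the same terms $p_{ij}(\vecc z_{ij})$. For $h=3$, the clearinghouse only permutes seller labels within each buyer, and prices depend only on $i$; so for each $i$ the multiset $\{\vecc s_{ij}\}_j$ equals $\{\vecc z_{ij}\}_j$ and the prices match. For $h=2$, bundle prices are common, and the bundle-count condition says every bundle type occurs equally often on both sides, so the sums agree. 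For $h=1$, linearity gives $\sum_i \vecc p\cdot\vecc s_i = \vecc p\cdot\sum_j\vecc z_j$.

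I would state this as a short auxiliary observation, essentially that clearing in each setting implies budget balance, citing the formal definitions in the appendix. The rest of the proof is then bookkeeping.
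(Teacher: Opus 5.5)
Your proposal is correct and is essentially the paper's proof: keep the cumulative bundles, set each agent's personalized fifth-order price on their traded bundle equal to their original cumulative price $p_k$, and keep $\vecc{\tilde Q}_N = \vecc Q_N$, so that payoffs and cumulative prices are unchanged. The only difference is that you verify budget balance case by case for $h \in \{1,\dots,4\}$, which the paper simply asserts holds for every admissible assignment. Your case-by-case check is correct and fills that gap.
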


\begin{proof}
  Starting from assignment $((\vecc{S}_I, \vecc{Z}_J), \vecc p, \vecc{Q}_N) \in \mathcal{F}(\mathcal{E}_h^r)$ in one of the first four order settings, define the assignment $((\vecc{\tilde s}_I, \vecc{\tilde z}_J), \vecc{\tilde p}, \vecc{\tilde Q}_N)$ so that:
  \begin{enumerate}
    \item the allocation is $(\vecc{\tilde{s}}_I, \vecc{\tilde{z}}_J) = (\vecc{s}_I, \vecc{z}_J)$ (the cumulative bundles),
    \item the pricing system $\vecc{\tilde{p}}$ is such that $\tilde{p}_i(\vecc{\tilde{s}}_i) = \tilde p_i = p_i$ for every $i \in I$ and $\tilde{p}_j(\vecc{\tilde{z}}_j) = \tilde p_j = p_j$ for every $j \in J$ (the prices of the non-traded bundles can be set arbitrarily), and
    \item the side payments are $\vecc{\tilde Q}_N = \vecc{Q}_N$.
  \end{enumerate}
  Then, $((\vecc{\tilde s}_I, \vecc{\tilde z}_J), \vecc{\tilde p}, \vecc{\tilde Q}_N) \in \mathcal{F}(\mathcal{E}_5^{\tilde r})$ for every $\tilde r \succeq r$, and it is price equivalent to the initial assignment.
\end{proof}

For the second-order setting, the situation is different. In particular, no market equivalence result holds because the relabeling of the clearinghouse on both sides of the market allows some buyer-seller pairs to effectively trade more than one individual bundle and, thereby, circumvent the price restrictions that would usually apply to the cumulative bundles (see Example~\ref{example:34noto2(M)} for the full discussion of this behavior, including further negative results on payoff and price equivalence to higher-order settings).

However, in the last positive result of this section, we prove that while market equivalence cannot be established, price-equivalent assignments for second-order assignments can still be found in the fourth-order setting.
In this case, we can obtain an item-equivalent allocation by having each buyer-seller pair trade all the individual bundles they effectively exchange after relabeling as a single bundle. If we define the new pricing system by setting the price of each traded bundle to the sum of the initial prices of the corresponding individual bundles, we obtain an assignment that is price equivalent to the first one.

\begin{lemma}\label{lemma:2to4(P)}
  Let $((\vecc{S}_I, \vecc{Z}_J), \vecc p, \vecc{Q}_N) \in \mathcal{F}(\mathcal{E}_2^r)$. For $\tilde r \succeq r$, there exists a price-equivalent assignment $((\vecc{\tilde S}_I, \vecc{\tilde Z}_J), \vecc{\tilde p}, \vecc{\tilde Q}_N) \in \mathcal{F}(\mathcal{E}_4^{\tilde r})$.
\end{lemma}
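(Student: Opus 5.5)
The plan is to make the construction sketched before the statement explicit. The clearinghouse's relabeling on both sides in the second-order setting is what causes trouble, and we will resolve it with an explicit matching between offered and requested bundles. Fix a feasible second-order assignment $((\vecc{S}_I, \vecc{Z}_J), \vecc p, \vecc{Q}_N)$ with common bundle prices $p(\vecc\omega)$. Second-order admissibility says that for every bundle type $\vecc\omega \neq \vecc 0$, the number of individual bundles $\vecc z_{ij}$ equal to $\vecc\omega$ offered by sellers equals the number of individual bundles $\vecc s_{ij}$ equal to $\vecc\omega$ requested by buyers. For each bundle type we therefore choose an arbitrary bijection between the offered copies and the requested copies. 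Each offered copy, offered by some seller $j$, is then matched with a requested copy, requested by some buyer $i$. We record the resulting multiset $M_{ij}$ of bundles that seller $j$ effectively ships to buyer $i$.

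Next we define the fourth-order allocation by letting each pair trade its whole multiset as one bundle: $\vecc{\tilde s}_{ij} = \vecc{\tilde z}_{ij} = \sum_{\vecc\omega \in M_{ij}} \vecc\omega$. This is directly fourth-order admissible, since each pair's offer and request coincide. It is also item equivalent. Summing over $j$ recovers every requested copy of buyer $i$ exactly once, because the bijection is onto the requested copies, so $\vecc{\tilde s}_i = \vecc s_i$. Symmetrically, summing over $i$ recovers each offered copy of seller $j$, so $\vecc{\tilde z}_j = \vecc z_j$.

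For the pricing system, we set $\tilde p_{ij}(\vecc{\tilde s}_{ij}) = \sum_{\vecc\omega\in M_{ij}} p(\vecc\omega)$ and assign arbitrary values, say $0$ or the original common prices, to all other bundles. This is legitimate because fourth-order prices are fully personalized per pair. The one point needing care, and the main subtlety, is that the pricing function is evaluated on a bundle, not a multiset. We must therefore check that a single pair $(i,j)$ trades only one bundle in the new allocation, so that $\tilde p_{ij}$ needs only one prescribed value; this holds by construction. With that in place, $\tilde p_i = \sum_j \sum_{\vecc\omega\in M_{ij}} p(\vecc\omega)$ equals the sum of $p$ over all of $i$'s requested copies, which is $p_i$. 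Likewise, $\tilde p_j = p_j$.

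Finally, we keep $\vecc{\tilde Q}_N = \vecc Q_N$, which is admissible for every $\tilde r \succeq r$. Payoffs depend only on the cumulative bundles, the cumulative prices and the side payments, so they coincide with the original payoffs. Hence feasibility ($\vecc\pi_N \ge \vecc 0$) transfers, and the new assignment is price equivalent to the original. No market equivalence is claimed, consistent with Example~\ref{example:34noto2(M)}.
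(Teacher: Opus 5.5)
Your proposal is correct and follows the paper's own proof: for each buyer--seller pair you merge the bundles they effectively exchange after relabeling into one fourth-order bundle, price it at the sum of the original common prices, and keep the side payments unchanged. Your explicit bijection for each nonzero bundle type only makes precise the matching that the paper uses implicitly when it refers to the bundles a pair ``exchange after relabeling.''
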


\begin{proof}
  In the second-order setting, the clearinghouse can alter both the buyer's and the seller's labels of each bundle. Therefore, we can have a situation where the bundle $\vecc{z}_{ij}$ offered by seller $j \in J$ to buyer $i\in I$ is requested by some other buyer $i' \in I$ from some different seller $j' \in J$, i.e., $\vecc z_{ij} = \vecc s_{i'j'}$. This can, in general, lead to some buyer-seller pairs effectively exchanging more than a single bundle, as we have observed in Example~\ref{example:34noto2(M)}.

  Let then $((\vecc{S}_I, \vecc{Z}_J), \vecc p, \vecc{Q}_N) \in \mathcal{F}(\mathcal{E}_2^r)$. For every buyer-seller pair $(i, j)$, consider the (possibly empty) individual bundles $\vecc\omega^1_{ij}, \dots, \vecc\omega^{m_{ij}}_{ij}$ that they exchange after relabeling (so that for every $l \in \{1, \dots, m_{ij}\}$, $\vecc\omega^l_{ij} = \vecc z_{i'j} = \vecc{s}_{ij'}$ for some unique $i' \in I$ and $j' \in J$), and define their union as
  \begin{equation*}
    \vecc{\tilde s}_{ij} = \vecc{\tilde z}_{ij} = \sum_{l = 1}^{m_{ij}} \vecc\omega^l_{ij}.
  \end{equation*}
  The allocation $(\vecc{\tilde S}_I, \vecc{\tilde Z}_J)$, where each buyer-seller pair $(i, j)$ trades the bundle defined above, is fourth-order feasible and item equivalent to the initial one.

  We can define the related fourth-order feasible pricing system $\vecc{\tilde p}$ as
  $$\tilde{p}_{ij}(\vecc{\tilde s}_{ij}) = \tilde{p}_{ij}(\vecc{\tilde z}_{ij}) = \sum_{l = 1}^{m_{ij}} p(\vecc\omega^l_{ij}),$$
  for every $i \in I$ and $j \in J$.\footnote{In general, $\sum_{l = 1}^{m_{ij}} p(\vecc\omega^l_{ij}) \neq p\left(\sum_{l = 1}^{m_{ij}} \vecc\omega^l_{ij}\right)$ since second-order pricing systems are not necessarily linear. Hence, we can have that for some traded bundle $\vecc{\tilde s}_{ij}$, $\tilde{p}_{ij}(\vecc{\tilde s}_{ij}) \neq p(\vecc{\tilde s}_{ij}) = p_{ij}(\vecc{\tilde s}_{ij})$ and market equivalence is not guaranteed.} For the bundles that are not traded, we can set the prices arbitrarily, for example, $\tilde p_{ij}(\vecc\omega) = p(\vecc\omega)$ for every $\vecc\omega \neq \vecc{\tilde s}_{ij}$, $i \in I$ and $j \in J$.
  Then, the cumulative price of every buyer $i \in I$ is such that
  $$\tilde{p}_i = \sum_{j \in J} \tilde{p}_{ij}(\vecc{\tilde s}_{ij}) = \sum_{j \in J} \sum_{l = 1}^{m_{ij}} p(\vecc\omega^l_{ij}) = \sum_{j \in J} p(\vecc{s}_{ij}) = p_i,$$
  and analogously for the sellers.
  If we maintain the same individual side payments $\vecc{\tilde Q}_N = \vecc{Q}_N$, the payoff of each agent $k \in N$ is $\tilde\pi_k = \pi_k$. We can then conclude that $((\vecc{\tilde S}_I, \vecc{\tilde Z}_J), \vecc{\tilde p}, \vecc{\tilde Q}_N)$ is price equivalent to the initial assignment.
\end{proof}

If we restrict the side payments or change the side of the market where they are allowed, we can't establish the same rich set of equivalence implications. In fact, in most cases, not even payoff equivalence is guaranteed. See Examples~\ref{example:34noto2(M)} and \ref{example:FullnottoSep(P)} to \ref{example:1SellSepFullnotto...} for thorough discussions.

\subsection{Full and Separate Side Payments}\label{sec:equiv_fullside}
The full side-payment setting is the least restrictive one, and every possible distribution of social welfare in a trade is realizable.
As a consequence, for any assignment, we can determine a payoff equivalent outcome in any order setting with full side payments by having all money exchanges go through side payments and setting the prices to zero.

This fact can also be generalized to the separate side-payment setting. In this case, we can transfer the money from the buyers to the sellers through the price of a single traded item (while setting the prices of all the other items to zero), and then redistribute the money separately among the two groups so that each agent still attains the same payoff.
The following statement summarizes these observations.

\begin{lemma}\label{lemma:12345to1(Pi)FullSep}
  Let $((\vecc{S}_I, \vecc{Z}_J), \vecc p, \vecc{Q}_N) \in \mathcal{F}(\mathcal{E}_h^r)$, then for every $\tilde h \in \{1, \dots, 5\}$ there exists a payoff-equivalent assignment $((\vecc{\tilde S}_I, \vecc{\tilde Z}_J), \vecc{\tilde p}, \vecc{\tilde Q}_N) \in \mathcal{F}(\mathcal{E}_{\tilde{h}}^{\mathrm{sep}})$.
\end{lemma}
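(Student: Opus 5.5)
The plan is to keep the cumulative bundles fixed, push all money between the two sides through a single item price, and absorb everything else into within-side side payments. Start from a feasible assignment $((\vecc{S}_I, \vecc{Z}_J), \vecc p, \vecc{Q}_N) \in \mathcal{F}(\mathcal{E}_h^r)$ with payoffs $\vecc\pi_N \geq \vecc 0$. By Lemma~\ref{lemma:12345to4(I)Allocation}, pick an item-equivalent allocation $(\vecc{\tilde S}_I, \vecc{\tilde Z}_J)$ that is admissible in the target setting $\tilde h$. Each agent then has the same cumulative bundle, so the value terms $v_i(\vecc{\tilde s}_i) = v_i(\vecc s_i)$ and $v_j(\vecc{\tilde z}_j) = v_j(\vecc z_j)$ are unchanged. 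It remains to build an admissible $\tilde h^{\mathrm{th}}$-order pricing system and separate side payments that reproduce each $\pi_k$.

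First compute the net money flowing from buyers to sellers in the original assignment,
\[ M = \sum_{i\in I} (p_i + q_i) = \sum_{i \in I} v_i(\vecc s_i) - \sum_{i\in I}\pi_i . \]
Budget balance of the pricing system and antisymmetry of $\vecc Q_N$ give $\sum_{j\in J}(p_j - q_j) = M$ as well. This identity should be checked in every setting, including the fifth-order one, where budget balance is assumed explicitly. If $M = 0$, set all prices to zero. Otherwise some item is traded, since if nothing is traded the payoffs reduce to side payments and $M=0$ anyway. Fix one traded item type $\ell$ with total traded amount $n_\ell > 0$. Define the first-order (linear) price vector with $\tilde p_\ell = M/n_\ell$ and all other item prices zero. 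By Remark~\ref{rem:bikh_monotone}, this induces an admissible pricing system in settings $2$--$4$. For $\tilde h = 5$, take the cumulative prices $\tilde p_k = \tilde p_\ell \cdot (\text{amount of } \ell \text{ in } k\text{'s bundle})$, which are budget balanced because item counts match on both sides. The buyers then pay exactly $M$ in total and the sellers receive exactly $M$.

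Then define separate side payments. Inside $I$, choose $\tilde q_i$ with $\sum_{i \in I}\tilde q_i = 0$ and $\tilde q_i = v_i(\vecc s_i) - \tilde p_i - \pi_i$. The sum vanishes precisely by the definition of $M$. Realize these as pairwise payments, for example all routed through one fixed buyer. Do the same inside $J$ with $\tilde q_j = \tilde p_j - v_j(\vecc z_j) - \pi_j$, whose sum is $M - M + 0 = 0$ by the identity above. Set $q_{ij}=0$ across sides, so the side payments are admissible under $\mathrm{sep}$. Every agent now gets $\tilde\pi_k = \pi_k \geq 0$, so the new assignment is feasible and payoff equivalent.

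The main obstacle is the accounting step. It must show that the net cross-side flow $M$ is the same when computed from the buyers' side and from the sellers' side, uniformly across all five original settings and all $r$, including $r=\mathrm{full}$ where the original $q_{ij}\neq 0$. This reduces to budget balance together with antisymmetry, $\sum_{k\in N} q_k = 0$. A second, minor point is checking that the single-item linear price is admissible in each $\tilde h$, and the fifth-order cumulative pricing has to be handled separately as above.
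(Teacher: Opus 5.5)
Your proof is correct and follows essentially the paper's argument: route the entire cross-side transfer $M=\sum_{i\in I}p_i+\sum_{i\in I}\sum_{j\in J}q_{ij}$ through the price of one traded item, then restore every payoff with separate within-side side payments. The only check you skip, which the paper states explicitly, is that admissibility requires $\tilde p_\ell = M/n_\ell \geq 0$; this follows from the sellers' feasibility, since $M=\sum_{j\in J}(p_j-q_j)\geq\sum_{j\in J}v_j(\vecc z_j)\geq 0$.
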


\begin{proof}
  Consider a generic feasible assignment $((\vecc{S}_I, \vecc{Z}_J), \vecc p, \vecc{Q}_N) \in \mathcal{F}(\mathcal{E}_h^r)$, and suppose that $(\vecc{S}_I, \vecc{Z}_J) \neq \vecc 0$ (otherwise, the statement is trivial). Then, there exists an item $\hat{a} \in A$ that is traded in total a positive $m \geq 1$ number of times. Consider an item-equivalent fourth-order admissible allocation $(\vecc{\tilde S}_I, \vecc{\tilde Z}_J)$ (from Lemma~\ref{lemma:12345to4(I)Allocation}), and define $m_i = (\vecc{s}_i)_{\hat{a}} = (\vecc{\tilde s}_i)_{\hat{a}} \geq 0$ the number of units of item $\hat{a}$ in $i$'s cumulative bundle $\vecc{s}_i = \vecc{\tilde s}_i$ and $m_j = (\vecc{z}_j)_{\hat{a}} = (\vecc{\tilde z}_j)_{\hat{a}} \geq 0$ the number of units of item $\hat{a}$ in $j$'s cumulative bundle $\vecc{z}_j = \vecc{\tilde z}_j$. Since we always assume that no offered item gets discarded,
  $$\sum_{i \in I} m_i = \sum_{j \in J} m_j = m.$$
  The first-order pricing system $\tilde{\vecc p}$ is defined so that all the money
  $$\sum_{i \in I} p_i + \sum_{i \in I}\sum_{j \in J} q_{ij} \geq 0$$
  that is sent from the buyers to the sellers in the initial assignment (which is non-negative due to feasibility) is exchanged through the payment of this item. Therefore, the new price of item $\hat{a}$ is the total exchange of money divided by the number of times the item is traded, i.e.,
  $$\tilde{p}_{\hat{a}} = \frac{\sum_{i \in I} p_i + \sum_{i \in I}\sum_{j \in J} q_{ij}}{m} \geq 0,$$
  whereas the price of all the other items $a \neq \hat{a}$ is $\tilde p_a = 0$. The new cumulative prices depend on how many units of $\hat{a}$ each agent has in their own cumulative bundle, i.e., $\tilde{p}_i = \vecc{\tilde p} \cdot \vecc{\tilde s}_i = \tilde{p}_{\hat{a}} m_i$ for every $i \in I$ and $\tilde{p}_j = \vecc{\tilde p} \cdot \vecc{\tilde z}_j = \tilde{p}_{\hat{a}} m_j$ for every $j \in J$.

  To preserve payoff equivalence, we must ensure that each agent pays or receives the payoff difference through side payments. In particular, each buyer's new cumulative side payment has to be $\tilde q_i = q_i + p_i - \tilde{p}_i$ and each seller's new cumulative side payment has to be $\tilde q_j = q_j - p_j + \tilde{p}_j$. These are well-defined since
  $$\sum_{i \in I} \tilde q_i = \sum_{i \in I} (q_i + p_i - \tilde{p}_i) = \sum_{i \in I} \left(\sum_{j \in J} q_{ij} + p_i - \tilde{p}_i\right) + \sum_{i,i' \in I} q_{ii'} = 0$$
  (analogously for the sellers' total side payments).
  To determine the individual side payments $\vecc{\tilde Q}_N$, we can then redistribute them separately to the buyers and the sellers, since each side-payment graph is complete.

  We can then conclude that $((\vecc{\tilde S}_I, \vecc{\tilde Z}_J), \vecc{\tilde p}, \vecc{\tilde Q}_N) \in \mathcal{F}(\mathcal{E}_{\tilde{h}}^{\mathrm{sep}})$ for every $\tilde h \in \{1, \dots, 5\}$ and it is payoff equivalent to the initial assignment.
\end{proof}

Price equivalence does not hold in general between the full and separate side-payment settings, unless we go to the same or higher order setting with the same or less restricted side payments, as outlined in Section~\ref{sec:increasing_order}. See Example~\ref{example:FullnottoSep(P)} for a discussion.

\subsection{Single Side Payments}\label{sec:equiv_single}
When only single side payments are allowed, some equivalence properties hold only for sufficient levels of price personalization. We first observe that, if the prices depend only on the items in each bundle or on the bundle itself, then no equivalence relations can be found starting from a setting in which prices are personalized for at least one side of the market.

\begin{remark}
  When going from a higher order setting to the first- or second-order settings with at most single side payments, the rigidity of the pricing system and the restrictions on the side payments do not allow any equivalence result. In fact, we can always consider an economy where there exists a feasible assignment whose pricing system is admissible only in the higher order settings, and any corresponding item-equivalent assignment is not priceable in the first- and second-order settings. In particular, the following modifications of settings do not admit payoff equivalence:
  \begin{itemize}
    \item Going from the $h^{\mathrm{th}}$-order setting with $h \in \{2, \dots, 5\}$ and any type of side payments to the first-order setting with at most single side payments (see Example~\ref{example:2345noto1(I)SingNo}).
    \item Going from the $h^{\mathrm{th}}$-order setting with $h \in \{3, 4, 5\}$ and any type of side payments to the second-order setting with at most single side payments (see Example~\ref{example:345noto2(IPPiM)SingNo}).
  \end{itemize}
\end{remark}

However, we provide positive results for the more personalized settings in the following.
The first result establishes that going from the fifth-order to the fourth-order setting with single side payments preserves payoff equivalence. This is because the money each agent receives or pays in the initial assignment can be distributed through side-payments on the side where this is allowed, and on the other side is redistributed using individualized prices.

\begin{lemma}\label{lemma:5to4(Pi)Sing}
  Let $((\vecc{s}_I, \vecc{z}_J), \vecc p, \vecc{Q}_N) \in \mathcal{F}(\mathcal{E}_5^r)$ with $r \in \{\mathrm{buy}, \, \mathrm{sell}\}$, then there exists a payoff-equivalent assignment $((\vecc{\tilde S}_I, \vecc{\tilde Z}_J), \vecc{\tilde p}, \vecc{\tilde Q}_N) \in \mathcal{F}(\mathcal{E}_4^r)$.\footnote{It follows that every assignment $((\vecc{s}_I, \vecc{z}_J), \vecc p, \vecc{Q}_N) \in \mathcal{F}(\mathcal{E}_5^{\mathrm{no}})$ is payoff equivalent to some $((\vecc{\tilde S}_I, \vecc{\tilde Z}_J), \vecc{\tilde p}, \vecc{\tilde Q}_N) \in \mathcal{F}(\mathcal{E}_4^r)$ with $r \in \{\mathrm{buy}, \, \mathrm{sell}\}$.}
\end{lemma}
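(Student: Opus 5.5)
By symmetry, it suffices to treat $r = \mathrm{buy}$; the case $r = \mathrm{sell}$ is identical with the roles of buyers and sellers swapped. Start from $((\vecc{s}_I, \vecc{z}_J), \vecc p, \vecc{Q}_N) \in \mathcal{F}(\mathcal{E}_5^{\mathrm{buy}})$ with $(\vecc s_I,\vecc z_J)\neq \vecc 0$; the empty case is trivial. By Lemma~\ref{lemma:12345to4(I)Allocation}, fix an item-equivalent fourth-order admissible allocation $(\vecc{\tilde S}_I, \vecc{\tilde Z}_J)$ with $\vecc{\tilde s}_{ij} = \vecc{\tilde z}_{ij}$ for every pair. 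The plan is to keep the total amount of money each seller receives from the buyer side, and route all of it through personalized fourth-order prices. The buyers' side payments then absorb whatever redistribution is needed on the buyer side.

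\textbf{Seller side.} Since $r=\mathrm{buy}$, every seller $j$ has $q_j = 0$, so $\pi_j = p_j - v_j(\vecc z_j)$. I would choose nonnegative amounts $x_{ij}$ with $\sum_{i\in I} x_{ij} = p_j$ for every $j$, and set $\tilde p_{ij}(\vecc{\tilde s}_{ij}) = x_{ij}$. One subtlety is that a seller $j$ with $p_j \neq 0$ may trade nothing in $(\vecc{\tilde S}_I, \vecc{\tilde Z}_J)$, and then no price can carry $p_j$ to him. I would handle this as follows.
\begin{enumerate}
  \item If $\vecc z_j = \vecc 0$, then feasibility gives $\pi_j = p_j \ge 0$. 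Since buyers cannot side-pay seller $j$, such a seller can receive nothing in the fourth-order setting. My first attempt is to argue that in the fifth-order model, budget balance together with the convention that agents with empty bundles have zero prices (see Appendix~\ref{secapp:formal_market-settings}) forces $p_j = 0$. Failing that, I would attach $p_j$ to an empty-bundle price $\tilde p_{ij}(\vecc 0)$, provided the model allows empty-bundle trades to carry prices.
  \item If $\vecc z_j \neq \vecc 0$, then $j$ trades a nonempty bundle $\vecc{\tilde z}_{ij}$ with at least one buyer $i$, and I put all of $p_j$ on one such pair, or spread it over several.
\end{enumerate}
Prices of non-traded bundles are set arbitrarily. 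This gives $\tilde p_j = p_j$, and with $\tilde q_j = 0$ we get $\tilde\pi_j = \pi_j$ for every seller.

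\textbf{Buyer side.} Each buyer now pays $\tilde p_i = \sum_j x_{ij}$, which in general differs from $p_i$. Set the cumulative side payment $\tilde q_i = q_i + p_i - \tilde p_i$, so that $\tilde\pi_i = v_i(\vecc s_i) - \tilde p_i - \tilde q_i = \pi_i$. This is consistent with buyers-only side payments because the $\tilde q_i$ sum to zero:
\[
\sum_{i\in I}\tilde q_i = \sum_{i\in I} q_i + \sum_{i\in I} p_i - \sum_{j\in J} p_j = 0 .
\]
Here $\sum_i q_i = 0$ because side payments only flow among buyers, and $\sum_i p_i = \sum_j p_j$ by the budget balance assumed in the fifth-order setting. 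Since the buyers' side-payment graph is complete, these cumulative amounts can be realized by individual transfers $\tilde q_{ii'}$, for example by routing everything through a single buyer.

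Payoffs are unchanged and were nonnegative, so the new assignment is feasible in $\mathcal{E}_4^{\mathrm{buy}}$ and payoff equivalent to the original. The footnote follows because $\mathcal{F}(\mathcal{E}_5^{\mathrm{no}}) \subseteq \mathcal{F}(\mathcal{E}_5^{r})$. The main obstacle is the degenerate seller with an empty cumulative bundle but a nonzero price. I would resolve it from the formal fifth-order definition (feasibility plus the zero-payoff convention for inactive agents). If that is not available, I would restrict to assignments where inactive agents receive zero price.
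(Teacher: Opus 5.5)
Your proposal is correct and follows essentially the same route as the paper: route each seller's fifth-order revenue $p_j$ through personalized fourth-order prices on the nonempty bundles that seller trades, and let the buyers' side payments absorb the difference using budget balance (and symmetrically for $r=\mathrm{sell}$). The degenerate seller you single out is handled directly by the empty-bundle normalization of the fifth-order pricing system, which gives $p_j = p_j(\vecc 0) = 0$ whenever $\vecc z_j = \vecc 0$ without any appeal to budget balance; your fallbacks are therefore unnecessary, and attaching money to $\tilde p_{ij}(\vecc 0)$ would in any case be inadmissible, since fourth-order empty-bundle prices are also normalized to zero.
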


\begin{proof}
  Let $((\vecc{s}_I, \vecc{z}_J), \vecc p, \vecc{Q}_N)$ be a fifth-order feasible assignment with single side payments. Consider a fourth-order admissible allocation $(\vecc{\tilde S}_I, \vecc{\tilde Z}_J)$ that is item equivalent to $(\vecc{s}_I, \vecc{z}_J)$ (from Lemma~\ref{lemma:12345to4(I)Allocation}). We want to find a corresponding pricing system $\vecc{\tilde p}$ and side payment scheme $\vecc{\tilde Q}_N$ such that $\tilde{p}_i + \tilde{q}_i = p_i + q_i \in \R$ for every $i \in I$ and $\tilde{p}_j -\tilde{q}_j = p_j - q_j \geq 0$ for every $j \in J$. These conditions ensure that every agent receives or pays the same amount of money as in the initial assignment, so that $((\vecc{\tilde S}_I, \vecc{\tilde Z}_J), \vecc{\tilde p}, \vecc{\tilde Q}_N)$ is payoff equivalent to $((\vecc{s}_I, \vecc{z}_J), \vecc p, \vecc{Q}_N)$.

  To determine the individual prices and side payments, we can apply the following reasoning. Suppose, for example, that side payments are only allowed among buyers. In this setting, every buyer can side-pay each other, so if any $i \in I$ needs to receive money (i.e., $\tilde{p}_i + \tilde{q}_i < 0$), there are always agents that can side-pay them.
  On the other side, the sellers $j \in J$ that need to receive money (i.e., $\tilde{q}_j - \tilde{p}_j < 0$) are only the ones that in the starting assignment $((\vecc{s}_I, \vecc{z}_J), \vecc p, \vecc{Q}_I)$ had a non-empty cumulative bundle $\vecc z_j$ (since sellers can't receive any side payments in this setting). Hence, there exists at least one buyer who trades a non-empty bundle with such $j$ in the new assignment as well. By setting the prices $\tilde p_{ij}(\vecc{\tilde z}_{ij})$ so that $\sum_{i : \vecc{\tilde z}_{ij} \neq \vecc{0}} \tilde p_{ij}(\vecc{\tilde z}_{ij}) = p_j - q_j$, $j$ can also maintain their payoff (the prices of the non-traded bundles can be set arbitrarily, for example, $\tilde p_{ij}(\vecc{\omega}) = \alpha \geq 0$ for every $\vecc{\omega} \neq \vecc{\tilde s}_{ij} = \vecc{\tilde z}_{ij}$, $i \in I$ and $j \in J$).
  Observing that
  $$\sum_{i \in I} (\tilde{p}_i + \tilde{q}_i) = \sum_{i \in I} (p_i + q_i) = \sum_{j \in J} p_j = \sum_{j \in J} \tilde p_j$$
  and each buyer can receive a side payment from any other buyer, we can conclude that all the sellers are able to receive the right amount of money.
\end{proof}

Price equivalence does not hold in this case due to the fully centralized nature of the fifth-order setting, which may allow part of a buyer's cumulative price to be effectively directed to a seller who doesn't offer any item the buyer requests. This indirect exchange of money is not possible in the fourth-order setting, since the price of the empty bundle, which those agents trade with each other, is always normalized to zero (see Example~\ref{example:5noto4(IPi)No}).

If instead of maintaining the same single side-payment restrictions we restrict or change the side of the market in which side payments are allowed, i.e., going from the fifth-order setting with at least buyers' or sellers' single side payments to the fourth-order setting with sellers' or buyers' single side payments respectively, then payoff equivalence does not hold for the same reasons illustrated in Examples~\ref{example:1BuySepFullnotto...} and \ref{example:1SellSepFullnotto...}. Similarly, going from the fifth-order setting with any type of side payments to the fourth-order setting with no side payments does not admit payoff equivalence (see Example~\ref{example:5noto4(IPi)No}).

The second result highlights an interesting property of the third-order setting with sellers' single side payments: Any feasible assignment with at most sellers' single side payments is payoff equivalent to a third-order admissible assignment with sellers' single side payments. This is due to the fact that price personalization only applies to buyers, but not for sellers in the third-order setting. Hence, it is sufficient to adjust the prices so that buyers can pay the exact amount they have to maintain payoff equivalence, and then readjust the sellers' payoffs through side payments.

\begin{lemma}\label{lemma:245to3(Pi)Sel}
  Let $((\vecc{S}_I, \vecc{Z}_J), \vecc p, \vecc{Q}_J) \in \mathcal{F}(\mathcal{E}_h^r)$ with $h \in \{2, 4, 5\}$ and $r \in \{\mathrm{sell}, \, \mathrm{no}\}$, then there exists a payoff-equivalent assignment $((\vecc{\tilde S}_I, \vecc{\tilde Z}_J), \vecc{\tilde p}, \vecc{\tilde Q}_J) \in \mathcal{F}(\mathcal{E}_3^{\mathrm{sell}})$.
\end{lemma}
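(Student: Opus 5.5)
The plan follows the proof of Lemma~\ref{lemma:5to4(Pi)Sing}, but on a third-order allocation where only the buyer side is priced individually. Take a feasible assignment $((\vecc{S}_I, \vecc{Z}_J), \vecc p, \vecc{Q}_J)$ in $\mathcal{E}_h^r$ with $h \in \{2,4,5\}$ and $r \in \{\mathrm{sell}, \mathrm{no}\}$. By Lemma~\ref{lemma:12345to4(I)Allocation}, fix an item-equivalent fourth-order admissible allocation $(\vecc{\tilde S}_I, \vecc{\tilde Z}_J)$. By Remark~\ref{rem:bikh_monotone}, it is also third-order admissible. The aim is then to choose a buyer-personalized pricing system $\vecc{\tilde p}$ and seller-only side payments $\vecc{\tilde Q}_J$ so that every agent's net monetary flow is unchanged:
\begin{itemize}
\item $\tilde p_i = p_i$ for each buyer $i$, because buyers have no side payments in either setting, so $q_i = 0 = \tilde q_i$;
\item $\tilde p_j - \tilde q_j = p_j - q_j$ for each seller $j$.
\end{itemize}

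\textbf{Buyer prices.} For each buyer $i$ with $\vecc{\tilde s}_i \neq \vecc 0$, pick one seller $j_i$ with $\vecc{\tilde s}_{ij_i} \neq \vecc 0$. Set $\tilde p_i(\vecc{\tilde s}_{ij_i}) = p_i$. If the same bundle appears in buyer $i$'s request from several sellers, split $p_i$ equally among those occurrences, since a third-order price $\tilde p_i(\vecc\omega)$ must be the same for every seller. Set $\tilde p_i(\vecc\omega) = 0$ for all other bundles; this includes the empty bundle, as normalisation requires. Then $\tilde p_i = p_i$.

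If instead $\vecc s_i = \vecc 0$, feasibility gives $p_i \le 0$, and $p_i \ge 0$ holds for non-negative prices. So $p_i = 0$ and nothing is needed. In the fifth-order setting, negative $p_i$ could be an issue; this has to be checked against the normalisation of prices for agents with empty bundles, and I would handle it by noting that such a buyer has $\pi_i = -p_i \ge 0$. If negative prices are inadmissible, such a buyer must have $p_i = 0$.

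This fixes each seller's market income $\tilde p_j = \sum_i \tilde p_i(\vecc{\tilde z}_{ij})$, which may differ from $p_j$.

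\textbf{Seller side payments.} Define $\tilde q_j = \tilde p_j - p_j + q_j$. Budget balance gives $\sum_j \tilde p_j = \sum_i \tilde p_i = \sum_i p_i = \sum_j p_j$, and the original side payments among sellers satisfy $\sum_j q_j = 0$. Hence $\sum_j \tilde q_j = 0$. Since the sellers' side-payment graph is complete in $\mathcal{E}_3^{\mathrm{sell}}$, these cumulative amounts can be realised by individual payments $\tilde q_{jj'}$, for example by routing everything through one seller. Every payoff is then preserved, so non-negativity carries over and the new assignment lies in $\mathcal{F}(\mathcal{E}_3^{\mathrm{sell}})$.

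\textbf{Main obstacle.} The delicate step is making the buyer-personalized prices consistent. In the third-order setting one buyer pays the same price for identical bundles, and the seller's side of the price $p_{ij}$ must equal $p_i(\vecc{\tilde z}_{ij})$ after relabeling. I would handle this by defining prices only through the buyer's label, as in the proof of Lemma~\ref{lemma:3to4(M)}, and by the equal-split rule above for repeated bundles. It must also be confirmed that the sellers' cumulative prices are read off the same fourth-order allocation, so budget balance holds.

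For $h=5$ the original pricing is cumulative, and for $h=2$ it may involve relabeled multi-bundle trades. In both cases only the cumulative quantities $p_i$, $p_j$, $q_j$ enter the construction, so the argument is uniform in $h$.
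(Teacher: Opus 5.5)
Your argument is correct, but it is organized differently from the paper's.

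The paper first reduces the cases $h=2$ and $h=5$ to $h=4$ via Lemmas~\ref{lemma:2to4(P)} and~\ref{lemma:5to4(Pi)Sing}. For a fourth-order assignment it then keeps the allocation unchanged. Whenever a buyer requests the same bundle from several sellers, it replaces that buyer's prices for the bundle by their mean. The seller-side transfers $\tilde q_j = \tilde p_j - p_j + q_j$ absorb the resulting change in sellers' incomes.

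You skip the reduction. You take an item-equivalent fourth-order allocation from Lemma~\ref{lemma:12345to4(I)Allocation} and load each buyer's entire cumulative price $p_i$ onto one (possibly repeated) bundle. You then close with the same seller-side redistribution. Only $p_i$, $p_j$, $q_j$ and the original budget balance enter.

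What your route buys:
\begin{itemize}
\item The proof is self-contained and uniform in $h$.
\item It does not depend on Lemma~\ref{lemma:5to4(Pi)Sing}, whose argument about which sellers can be paid through prices is the more delicate one.
\item It makes transparent that the only structural fact used is that seller income is fully fungible within $J$ in $\mathcal{E}_3^{\mathrm{sell}}$.
\end{itemize}

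What the paper's averaging buys is minimal disturbance of the original assignment. For $h=4$, when no buyer requests the same bundle from two sellers, it leaves every traded price and every side payment unchanged. Your concentration step, by contrast, generally shifts sellers' cumulative prices and creates new side payments even in that case.

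Your hedging about negative fifth-order prices is unnecessary. All pricing systems take values in $\R_{\geq 0}$ and vanish on the empty bundle, so $p_i = 0$ follows directly for any buyer with $\vecc s_i = \vecc 0$.
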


\begin{proof}
  We only need to prove the statement for $h = 4$, since we have already shown in Lemmas~\ref{lemma:2to4(P)} and \ref{lemma:5to4(Pi)Sing} that every feasible second-order and fifth-order admissible assignments with at most sellers' single side payments is payoff equivalent to some assignment in the fourth-order setting with sellers' single side payments.

  Let then $((\vecc{S}_I, \vecc{Z}_J), \vecc p, \vecc{Q}_J) \in \mathcal{F}(\mathcal{E}_4^{\mathrm{sell}})$. To find a payoff-equivalent third-order admissible assignment $((\vecc{\tilde S}_I, \vecc{\tilde Z}_J), \vecc{\tilde p}, \vecc{\tilde Q}_J)$, we can keep the same allocation $(\vecc{\tilde S}_I, \vecc{\tilde Z}_J) = (\vecc{S}_I, \vecc{Z}_J)$ and adjust the prices so that if a buyer trades the same bundle with multiple sellers, the price they pay to each of them is the same.

  More formally, let $i \in I$ be a buyer who requests bundle $\vecc{\omega} \leq \vecc{\Omega}$ from $m \geq 1$ sellers $j_1, \dots, j_m \in J$ at the price of $p_{ij_1}(\vecc{\omega}), \dots, p_{ij_m}(\vecc{\omega})$ respectively. Define the price of $\vecc{\omega}$ for $i$ in the new allocation as the mean of these prices:
  $$\tilde p_i(\vecc{\omega}) = \frac{\sum_{l = 1}^m p_{ij_l}(\vecc{\omega})}{m}$$
  (as usual, the prices of the non-traded bundles can be set arbitrarily). If we repeat this procedure for all buyers and their corresponding bundles, the total price that each buyer $i \in I$ pays remains the same, as does their payoff. It follows that also the total amount of money that the buyers pay to the sellers stays the same, i.e.,
  $$\sum_{j \in J} \tilde p_j = \sum_{i \in I} \tilde p_i = \sum_{i \in I} p_i = \sum_{j \in J} p_j.$$

  For each seller $j \in J$, the new cumulative price is $\tilde p_j = \sum_{i \in I} \tilde p_i(\vecc{\tilde z}_{ij})$. Therefore, to maintain the same payoff as in the initial assignment, they should pay or receive a cumulative side payment of $\tilde q_j = \tilde p_j - p_j + q_j \in \R$. Since all the sellers can side-pay each other and
  $$\sum_{j \in J} \tilde q_j = \sum_{j \in J} (\tilde p_j - p_j + q_j) = \sum_{j \in J} (\tilde p_j - p_j) + \sum_{j \in J} q_j = 0,$$
  the individual side payments $\vecc{\tilde Q}_J$ can be easily determined.
\end{proof}

Further negative results are discussed in this context are discussed in Examples~\ref{example:45noto3(IPPiM)BuyNo} and \ref{example:1BuySepFullnotto...}.

\section{Stable Solutions}\label{sec:stable_solutions_properties}
Now that the previous section has established useful relations between the different settings, we continue to investigate their stability properties.
Most importantly, the maximum social welfare achievable by an admissible assignment can differ across different settings or stability notions.
However, if two markets are payoff equivalent, their $\mathcal{T}$-core coincides for any partition $\mathcal{T}$ of the set of agents, as both the assignments that realize a given payoff vector for the core, just as the possible deviating coalitions, correspond to each other.
Therefore, from Theorem~\ref{thm:setting-equivalences}, we can already deduce the following.

\begin{corollary}
  Let $\mathcal{E}$ be an economy and $\mathcal{T}$ an arbitrary partition of its agents.
  \begin{itemize}
    \item The $\mathcal{T}$-cores of all the order settings with full or separate side payments coincide, i.e., $\mathcal{C}^{\mathcal{T}}(\mathcal{E}_h^{r}) = \mathcal{C}^{\mathcal{T}}(\mathcal{E}_{\tilde h}^{\tilde r})$ for every $h, \tilde{h} \in \{1, \dots, 5\}$ and $r, \tilde r \in \{\mathrm{full}, \, \mathrm{sep}\}$.
    \item $\mathcal{C}^{\mathcal{T}}(\mathcal{E}_4^{\mathrm{buy}}) = \mathcal{C}^{\mathcal{T}}(\mathcal{E}_{5}^{\mathrm{buy}})$.
    \item $\mathcal{C}^{\mathcal{T}}(\mathcal{E}_3^{\mathrm{sell}}) = \mathcal{C}^{\mathcal{T}}(\mathcal{E}_4^{\mathrm{sell}}) = \mathcal{C}^{\mathcal{T}}(\mathcal{E}_{5}^{\mathrm{sell}})$.
  \end{itemize}
\end{corollary}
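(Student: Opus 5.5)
The plan is to derive the corollary from Theorem~\ref{thm:setting-equivalences} via one general transfer principle. If two markets $\mathcal{E}_h^r$ and $\mathcal{E}_{\tilde h}^{\tilde r}$ are mutually payoff equivalent, their $\mathcal{T}$-cores coincide. Mutual payoff equivalence means that every feasible assignment in one has a payoff-equivalent feasible assignment in the other, and vice versa, also on every submarket $\mathcal{E}[C]$. By symmetry it suffices to show $\mathcal{C}^{\mathcal{T}}(\mathcal{E}_h^r) \subseteq \mathcal{C}^{\mathcal{T}}(\mathcal{E}_{\tilde h}^{\tilde r})$ for an arbitrary partition $\mathcal{T}$. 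The three bullets then follow by instantiating the principle with the pairs listed in Theorem~\ref{thm:setting-equivalences}.

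\textbf{Condition 1.} Take $\vecc\rho_N \in \mathcal{C}^{\mathcal{T}}(\mathcal{E}_h^r)$, realized by an admissible assignment whose payoff $\vecc\pi_N$ has the correct block sums. The lemmas only apply to feasible assignments, so I first need $\vecc\pi_N \ge \vecc 0$. Core vectors are nonnegative, but only $\vecc\rho_N$ is known to be nonnegative, not $\vecc\pi_N$. I would handle this as follows.
\begin{itemize}
  \item Where the realizing market already allows side payments within each block, redistribute so that $\vecc\pi_N$ matches $\vecc\rho_N$ blockwise.
  \item Otherwise, argue directly that the constructions in Lemmas~\ref{lemma:12345to1(Pi)FullSep}, \ref{lemma:5to4(Pi)Sing} and \ref{lemma:245to3(Pi)Sel} never use nonnegativity of individual payoffs. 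They only use budget balance and the existence of a traded item or trading partner. Hence they extend verbatim to admissible, not just feasible, assignments.
\end{itemize}
Applying the relevant lemma then gives an admissible assignment in $\mathcal{E}_{\tilde h}^{\tilde r}$ with identical individual payoffs, hence identical block sums.

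\textbf{Condition 2.} Suppose a coalition $C$ has a deviating assignment in $\mathcal{E}_{\tilde h}^{\tilde r}[C]$ with $\sum_{k\in C\cap T}\tilde\pi_k > \sum_{k \in C\cap T}\rho_k$ for all $T$. The lemmas are stated for an arbitrary economy, and the restricted economy on $C$ is itself an economy of the same type. Applying the converse-direction lemma inside $\mathcal{E}[C]$ therefore yields a payoff-equivalent deviation in $\mathcal{E}_h^r[C]$, which contradicts $\vecc\rho_N$ being in the original core.

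\textbf{Instantiation.}
\begin{itemize}
  \item \emph{First bullet.} One direction is Lemma~\ref{lemma:12345to1(Pi)FullSep}: any assignment maps to $\mathcal{E}_{\tilde h}^{\mathrm{sep}}$. The other direction is Remark~\ref{remark:marketEquivDiagonal}, since $\mathrm{sep}\preceq\mathrm{full}$. This connects every $(h,r)$ with $r\in\{\mathrm{full},\mathrm{sep}\}$.
  \item \emph{Second bullet.} Use Lemma~\ref{lemma:5to4(Pi)Sing} together with Lemma~\ref{lemma:1234to5(P)}.
  \item \emph{Third bullet.} Use Lemma~\ref{lemma:245to3(Pi)Sel}, Lemma~\ref{lemma:3to4(M)} and Lemma~\ref{lemma:1234to5(P)}.
\end{itemize}

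\textbf{Main obstacle.} I expect the hardest part to be the feasible-versus-admissible mismatch: the core is defined over admissible assignments, while the equivalence lemmas are stated for feasible ones. This matters in condition 1, and in condition 2 whenever the deviating assignment gives some agent a negative payoff. My first attempt is to reinspect each lemma's proof and verify that feasibility enters only through nonnegativity of the total buyer-to-seller transfer. That property is needed in Lemma~\ref{lemma:12345to1(Pi)FullSep} for $\tilde p_{\hat a}\ge 0$, and there it follows from seller-side nonnegativity. A profitable deviation can always be taken individually rational: agents with negative payoff drop out, which only loosens the required inequalities, since a block with a negative member loses nothing if that member's items are withdrawn.
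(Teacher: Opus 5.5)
Your reduction is the paper's own argument. The paper justifies the corollary only by remarking that payoff-equivalent markets have the same $\mathcal{T}$-core, since realizing assignments and deviations correspond, and then citing Theorem~\ref{thm:setting-equivalences}; your choice of lemmas for each bullet is correct. You go further than the paper by confronting the feasible-versus-admissible mismatch, which the paper passes over silently, but one step of your patch is false as stated.

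\textbf{The drop-out step.} Dropping agents with negative payoff from a deviation does not ``only loosen'' the block inequalities. The dropped agent's bundles and money also vanish for his trading partners and side-payment counterparties, so other blocks of $C$ can stop improving. For example, a seller who sells below his valuation has negative payoff, and removing him strips his buyer of that surplus.

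\textbf{The verbatim-extension claim.} Your blanket claim that the lemma constructions carry over verbatim to admissible assignments fails for Lemma~\ref{lemma:12345to1(Pi)FullSep} when the source has full side payments. There $\sum_i p_i+\sum_{i,j}q_{ij}$ can be negative, which gives $\tilde p_{\hat a}<0$. Moreover, an empty allocation with buyer--seller side payments has no separate-side-payment counterpart at all.

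\textbf{The repair.} This is the only place where feasibility is actually used. Lemmas~\ref{lemma:5to4(Pi)Sing}, \ref{lemma:245to3(Pi)Sel}, \ref{lemma:3to4(M)}, \ref{lemma:1234to5(P)} and Remark~\ref{remark:marketEquivDiagonal} need only budget balance, nonnegative prices and the normalization of the empty bundle. So does Lemma~\ref{lemma:12345to1(Pi)FullSep} from a separate source, where the net transfer is $\sum_i p_i\ge 0$. In the problematic case the deviating assignment lives in $\mathcal{E}^{\mathrm{full}}_{\tilde h}[C]$, so you can proceed as follows:
\begin{itemize}
  \item Summing the block inequalities gives $\sum_{k\in C}\tilde\pi_k>\sum_{k\in C}\rho_k$.
  \item Full side payments inside $C$ then let you redistribute to $\tilde\pi_k>\rho_k\ge 0$ for every $k\in C$, so the deviation becomes feasible.
  \item Applying Lemma~\ref{lemma:12345to1(Pi)FullSep} in $\mathcal{E}[C]$ yields a separate-side-payment deviation that still improves every block.
\end{itemize}
Condition~1 with a full source is handled the same way, by redistributing to $\vecc\pi_N=\vecc\rho_N$, as your first sub-bullet already does.
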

From Lemma~\ref{lem:TRTUwithr_is_SRTU}, we know that $\mathcal{C}^{\mathcal{T}}(\mathcal{E}_h^{\mathrm{full}}) = \mathcal{C}^{\mathcal{T}^{\mathrm{no}}}(\mathcal{E}_h^{\mathrm{full}})$.
Therefore, we can establish an even stronger result: For the full and separate side-payment settings, regardless of the order setting, all notions of stability coincide.
The same structure of solutions is also achieved when the stability notion sufficiently complements the side-payment setting.

\begin{corollary}\label{cor:FullSepCoreCoincide}
  $\mathcal{C}^{\mathcal{T}}(\mathcal{E}_h^{r}) $ is the same for every $h\in \{1, \dots, 5\}$, $r \in \{\mathrm{full}, \, \mathrm{sep}, \, \mathrm{buy}, \, \mathrm{sell}, \, \mathrm{no}\}$ and arbitrary partition $\mathcal{T}$ such that $\mathcal{T} \vee \mathcal{T}^r \succeq \mathcal{T}^{\mathrm{sep}}$.
\end{corollary}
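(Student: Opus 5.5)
The plan is to reduce every such core to a single reference core, namely the NTU-core of a market with separate (or full) side payments. First, Theorem~\ref{thm:allisNTU} (equivalently Lemma~\ref{lem:TRTUwithr_is_SRTU} combined with Corollary~\ref{cor:RTUisNTUwithSP}) will be used to rewrite $\mathcal{C}^{\mathcal{T}}(\mathcal{E}_h^{r})$ as $\mathcal{C}^{\mathcal{T}^{\mathrm{no}}}(\mathcal{E}_h^{s})$. Here $s$ is the side-payment restriction induced by $\mathcal{S} = \mathcal{T} \vee \mathcal{T}^r$. By assumption $\mathcal{S} \succeq \mathcal{T}^{\mathrm{sep}}$, so $s \in \{\mathrm{sep}, \mathrm{full}\}$.

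There is a caveat for arbitrary partitions $\mathcal{T}$, where Lemma~\ref{lem:TRTUwithr_is_SRTU} is stated only for side-payment partitions or partitions comparable to $\mathcal{T}^r$. In that case I would argue directly with the same pattern as in its proof.

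\emph{Existence.} Removing the $r$ side payments does not change the block sums of $\mathcal{S}$, and $\mathcal{T}^{\mathrm{sep}}$ side payments inside $I$ and inside $J$ can realize any split of those sums. This gives an assignment in $\mathcal{E}_h^{\mathrm{sep}}$ or $\mathcal{E}_h^{\mathrm{full}}$ reproducing $\vecc\rho_N$ exactly.

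\emph{Stability, one direction.} A deviation improving every $C\cap T$ can be summed over the blocks contained in each $S\in\mathcal{S}$. After this, the payoffs are redistributed within $C\cap S$ via $s$ side payments so that every agent in $C$ strictly improves.

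\emph{Stability, other direction.} An NTU-improving deviation in $\mathcal{E}_h^{s}$ is turned into a $\mathcal{T}$-improvement in $\mathcal{E}_h^r$. For this, the side payments crossing blocks of $\mathcal{T}^r$ are replaced by transfers that $\mathcal{T}$ is allowed to make, which uses that each $S$ is a union of blocks of both partitions.

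This step is where I expect the main obstacle. A general join can have blocks that are not unions of single blocks in the required way, so the regrouping argument has to be checked carefully. If needed, I would restrict attention to the joins that can actually occur, namely $\mathcal{T}^{\mathrm{sep}}$ and $\mathcal{T}^{\mathrm{full}}$.

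Second, with the reduction in hand, the core depends only on $h$ and $s\in\{\mathrm{sep},\mathrm{full}\}$. I would then invoke Corollary~\ref{cor:RTUisNTUwithSP} to get $\mathcal{C}^{\mathcal{T}^{\mathrm{no}}}(\mathcal{E}_h^{s}) = \mathcal{C}^{\mathcal{T}^{s}}(\mathcal{E}_h^{\mathrm{no}})$, and if convenient pass back to a fixed representation. Finally, the preceding corollary (derived from Theorem~\ref{thm:setting-equivalences}) states that $\mathcal{C}^{\mathcal{T}'}(\mathcal{E}_h^{s})$ coincides for all $h$ and all $s \in \{\mathrm{full},\mathrm{sep}\}$ for any fixed partition $\mathcal{T}'$. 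Taking $\mathcal{T}' = \mathcal{T}^{\mathrm{no}}$, all these NTU-cores are equal, say to $\mathcal{C}^{\mathcal{T}^{\mathrm{no}}}(\mathcal{E}_1^{\mathrm{sep}})$, independently of $h$ and of the original $r$ and $\mathcal{T}$.

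It remains to justify the sep/full collapse within a fixed $h$. The preceding corollary already covers it, since Lemma~\ref{lemma:12345to1(Pi)FullSep} gives payoff equivalence from full to separate and Remark~\ref{remark:marketEquivDiagonal} gives the converse. Payoff equivalence transfers both the realizing assignments and the deviations of every coalition $C$: Lemma~\ref{lemma:12345to1(Pi)FullSep} is applied to the submarket $\mathcal{E}[C]$, which is possible because its proof only rebuilds prices and side payments within the agents trading. Chaining the equalities then yields the claim.
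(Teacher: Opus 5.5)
Your chain is the paper's own. Theorem~\ref{thm:allisNTU} (i.e.\ Lemma~\ref{lem:TRTUwithr_is_SRTU} together with Corollary~\ref{cor:RTUisNTUwithSP}) turns $\mathcal{C}^{\mathcal{T}}(\mathcal{E}_h^r)$ into the NTU-core of $\mathcal{E}_h^s$ with $s\in\{\mathrm{sep},\mathrm{full}\}$. The preceding corollary, which rests on Theorem~\ref{thm:setting-equivalences}, then removes the dependence on $h$ and $s$. Your justification of that last step is sound. It only needs $\mathcal{T}'=\mathcal{T}^{\mathrm{no}}$, where deviations strictly improve on $\vecc\rho_N\geq\vecc 0$ and are therefore feasible, so Lemma~\ref{lemma:12345to1(Pi)FullSep} applies on $\mathcal{E}[C]$. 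For $\mathcal{T}$ a side-payment partition, or comparable with $\mathcal{T}^r$, this proves the corollary.

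The gap is the step you flag yourself, and it cannot be closed. Your fallback does nothing: under the hypothesis the join is automatically $\mathcal{T}^{\mathrm{sep}}$ or $\mathcal{T}^{\mathrm{full}}$, so restricting to those joins excludes no case. What is uncovered is $\mathcal{T}$ itself.

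The regrouping breaks because blocks of $\mathcal{T}$ and $\mathcal{T}^r$ are chained together inside a block of $\mathcal{S}$ through agents of $N$, not necessarily through agents of the deviating coalition $C$. For $r=\mathrm{buy}$, a block of $\mathcal{T}$ that meets $C$ only in sellers can be compensated only through prices, and the common prices of orders $1$--$3$ can forbid this.

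Here is a concrete counterexample. Take buyers $i_0,i_1,i_2$ and sellers $j_1,j_2,j_3$ with $\vecc\Omega_{j_1}=\vecc\Omega_{j_2}\equiv\{a\}$ and $\vecc\Omega_{j_3}\equiv\{b,c\}$. Valuations are: $v_{i_0}=22$ on bundles with two units of $a$; $v_{i_1}=8$ on bundles containing $b$; $v_{i_2}=8$ on bundles containing $c$ (all $0$ otherwise). For sellers, $v_{j_1}(\{a\})=0$, $v_{j_2}(\{a\})=10$, and $v_{j_3}=10$ on every nonempty bundle.

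Let $r=\mathrm{buy}$ and $\mathcal{T}=\{\{i_0,j_3\},\{i_1,j_1\},\{i_2,j_2\}\}$. Every block contains a buyer, so $\mathcal{T}\vee\mathcal{T}^{\mathrm{buy}}=\mathcal{T}^{\mathrm{full}}$, yet $\mathcal{T}$ is incomparable with $\mathcal{T}^{\mathrm{buy}}$. Set $\rho_{i_0}=0$, $\rho_{i_1}=\rho_{i_2}=3$, $\rho_{j_1}=11$, $\rho_{j_2}=0$, $\rho_{j_3}=1$.

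\textbf{Realization.} The required block sums are $1,14,3$. They are realized by the efficient trade: $i_0$ gets $\{a\}$ from both $j_1$ and $j_2$, and $j_3$ sells $\{b\}$ to $i_1$ and $\{c\}$ to $i_2$. Use item prices $p_a=10$, $p_b=p_c=5$, which are admissible in orders $1$--$3$, and the buyer side payment $q_{i_0i_1}=1$.

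\textbf{Stability in $\mathcal{C}^{\mathcal{T}}$.} A coalition's surplus is $12$ if it contains $C_0=\{i_0,j_1,j_2\}$, plus $6$ if it contains $\{i_1,i_2,j_3\}$. Since $\rho_{i_1},\rho_{i_2},\rho_{j_3}\geq 1$ and $\rho(\{i_1,i_2,j_3\})=7$, the only coalition whose surplus exceeds its $\rho$-total is $C_0$ ($12>11$). In $C_0$ each block meets $C_0$ in a single agent, and $i_0$ is the only buyer. A $\mathcal{T}$-deviation therefore needs $\tilde\pi_{i_0}>0$, $\tilde\pi_{j_1}>11$ and $\tilde\pi_{j_2}>0$. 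With one common price $p$ for $\{a\}$ (orders $1$--$3$) this means $p>11$ and $22-2p>0$, which is impossible.

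\textbf{Conclusion of the example.} Thus $\vecc\rho_N\in\mathcal{C}^{\mathcal{T}}(\mathcal{E}_h^{\mathrm{buy}})$ for $h\in\{1,2,3\}$. But $\vecc\rho_N\notin\mathcal{C}^{\mathcal{T}^{\mathrm{full}}}(\mathcal{E}_h^{\mathrm{buy}})$: at $p=10.5$ the coalition $C_0$ reaches a total of $12>11$.

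So the statement is false for arbitrary partitions. The paper's one-line derivation also just invokes Lemma~\ref{lem:TRTUwithr_is_SRTU} and has the same limitation. What your argument, like the paper's, actually proves is the corollary for $\mathcal{T}$ a side-payment partition or comparable with $\mathcal{T}^r$. A direct argument, using Lemma~\ref{lemma:12345to1(Pi)FullSep} on $\mathcal{E}[C]$ or simple within-side regrouping, also handles $r=\mathrm{sep}$ and the case where the join is $\mathcal{T}^{\mathrm{sep}}$. The problematic case is $r\in\{\mathrm{buy},\mathrm{sell}\}$ with join $\mathcal{T}^{\mathrm{full}}$ and $\mathcal{T}$ incomparable with $\mathcal{T}^r$.
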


\subsection{Separations}
The existence of stable vectors is not guaranteed in general market settings (see Example~\ref{example:emptyCores}). However, if the TU-core is not empty (and consequently, also the $\mathcal{T}$-core for every partition $\mathcal{T} \succeq \mathcal{T}^{\mathrm{sep}}$), all of its vectors achieve the maximum social welfare, independently of the market order setting and side-payment restrictions. However, this is generally not the case for finer partitions and lower order settings, where not only bRTU-core, sRTU-core, and NTU-core stable vectors might not achieve the maximum social welfare (see Example~\ref{example:RTUUniquePayoff}), but there also might be stable vectors whose total utility differs from each other (see Examples~\ref{example:NTUDifferentPayoffs} and \ref{example:sRTUCorePriceability}).

On the other hand, in some economies in which the TU-core is empty, limiting side payments to one side of the market or completely forbidding them can yield stable outcomes. Moreover, bRTU- and sRTU-stable vectors might achieve a social welfare that is arbitrarily larger than the one of any NTU-stable outcome (see Examples~\ref{example:sidePaymentsAreCool} and \ref{example:sidePaymentsAreCool2}).
Thus, the assumptions on how agents transfer utility outside the market have a critical influence on the achievable total welfare in what can be considered a stable assignment, and the new stability notions meaningfully introduce new solution concepts to these settings.

\subsection{Stability-Equivalent Settings}\label{sec:StabilityEquivalent}
The differences between the stability notions disappear when the resale restrictions in the market allow for more personalized prices.
By using the structural properties of stable outcomes, we are able to establish further equivalence results, that do not hold for general assignments.
Specifically, in this section, we discuss how this is the case for the fourth- and fifth-order settings, where side-payments can be effectively entirely disregarded.
Note that, by Theorem~\ref{thm:allisNTU}, $\mathcal{T}$-stability in a market $\mathcal{E}_h^r$ only depends on the join of the partitions $\mathcal{S} = \mathcal{T} \vee \mathcal{T}^r$, or in other words, $\mathcal{C}^{\mathcal{T}}(\mathcal{E}_h^r) = \mathcal{C}^{\mathcal{T}^\mathrm{no}}(\mathcal{E}_h^{s})$ for the side-payment restrictions corresponding to $\mathcal{S}$. Therefore, we will discuss the results by assuming that all the transfers of utility are made inside the market and $\mathcal{T} = \mathcal{T}^{\mathrm{no}}$.

Let's start with the fifth-order setting. We first show that for every stable assignment in the fifth-order setting, we can always find a payoff-equivalent one in $\mathcal{E}_5^{\mathrm{no}}$ by simply including the side payments in the cumulative prices that every agent sends or receives from the clearinghouse.

\begin{lemma}\label{lemma:NTUStableAss5}
  Let $((\vecc{s}_I, \vecc{z}_J), \vecc p, \vecc{Q}_N) \in \mathcal{F}(\mathcal{E}_5^r)$ be an NTU-stable assignment in $\mathcal{E}_5^r$, then there exists a payoff-equivalent NTU-stable assignment $((\vecc{\tilde s}_I, \vecc{\tilde z}_J), \vecc{\tilde p}) \in \mathcal{F}(\mathcal{E}_5^{\mathrm{no}})$.
\end{lemma}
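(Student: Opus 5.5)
We construct the new assignment directly and then argue stability by transferring deviations. Keep the allocation, $(\vecc{\tilde s}_I, \vecc{\tilde z}_J) = (\vecc{s}_I, \vecc{z}_J)$. Absorb every agent's cumulative side payment into their individualized fifth-order price for their own cumulative bundle:
$$\tilde p_i(\vecc{s}_i) = p_i + q_i \quad (i \in I), \qquad \tilde p_j(\vecc{z}_j) = p_j - q_j \quad (j \in J).$$
Prices of non-traded bundles stay as in $\vecc p$.

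\textbf{Step 1: the new assignment is admissible and payoff equivalent.} The fifth-order setting only requires item-count clearance, which is unchanged, and budget balance. For budget balance, antisymmetry of the side payments gives $\sum_{k\in N} q_k = 0$, hence
$$\sum_{i}\tilde p_i - \sum_j \tilde p_j = \sum_i p_i - \sum_j p_j + \sum_{k} q_k = 0.$$
Payoffs coincide by construction, so feasibility ($\vecc\pi_N\ge \vecc 0$) carries over. There is one subtlety to check against the appendix definition: whether fifth-order prices must be non-negative, or whether the empty bundle must have price zero. An agent with an empty bundle may have $q_k \neq 0$. In a stable assignment, though, the paper's standing assumption that agents with empty cumulative bundles have zero payoff, together with individual rationality, should let me reroute such transfers. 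Concretely, I would set those agents' $q_k$ to zero and shift the difference onto some trading agent on the same side. If instead negativity of a price is an issue, budget balance still permits it, since $p_j - q_j = \pi_j + v_j(\vecc z_j) \ge 0$ for sellers.

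\textbf{Step 2: stability.} Suppose a coalition $C$ has a deviation $((\vecc{\hat s}_{C_I},\vecc{\hat z}_{C_J}),\vecc{\hat p}) \in \mathcal{A}(\mathcal{E}_5^{\mathrm{no}}[C])$ that strictly improves every member relative to $\tilde\pi_k = \pi_k$. Any assignment without side payments is also admissible in $\mathcal{E}_5^r[C]$, by the inclusion $\mathcal{A}(\mathcal{E}_h^{\mathrm{no}}) \subseteq \mathcal{A}(\mathcal{E}_h^{r})$ from Section~\ref{sec:sidePayments}. So the same deviation blocks the original assignment in $\mathcal{E}_5^r$, contradicting its NTU-stability. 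This direction is immediate because restricting side payments only shrinks the set of deviations.

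\textbf{Main obstacle.} The real difficulty is Step 1's admissibility: fifth-order prices and the treatment of agents with empty bundles. Such an agent can carry a nonzero side payment, but it cannot be given a nonzero price for the empty bundle. I would first try to show, using stability, that every non-trading agent has $\pi_k = 0$ and can therefore be given $q_k = 0$ with no effect on the argument. The leftover amount is then redistributed through the individualized price of any trading agent, which exists whenever any trade happens. If nothing is traded, all payoffs are zero by individual rationality and the statement is trivial.
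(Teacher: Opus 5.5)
Your construction ($\tilde p_i = p_i + q_i$, $\tilde p_j = p_j - q_j$, same allocation) and your Step~2 are exactly the paper's. Step~1, however, has a genuine gap: you have not shown that the new assignment is admissible in $\mathcal{E}_5^{\mathrm{no}}$.

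\textbf{Negative buyer prices.} Prices in this model take values in $\R_{\geq 0}$, but you only verify non-negativity for sellers, via individual rationality. For a buyer, $\tilde p_i = p_i + q_i$ is negative whenever $i$ receives more in side payments than she pays in price. This can happen for trading buyers too, so no assumption about empty bundles covers it. Budget balance does not exclude it. Individual rationality does not either, since $\pi_i \geq 0$ only gives the upper bound $p_i + q_i \leq v_i(\vecc s_i)$.

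The missing idea is a blocking argument that uses the NTU-stability hypothesis. Suppose $p_i + q_i < 0$. Then $i$ is a net recipient of money from everyone else, so the coalition $N \setminus \{i\}$ can do the following:
\begin{itemize}
\item drop $i$ from the trade;
\item let the sellers withhold the items of $\vecc s_i$, which is allowed because only total item counts must clear in the fifth-order setting, and which by monotonicity weakly lowers the sellers' valuations of what they give up;
\item share the freed amount $-(p_i+q_i) > 0$ among themselves through the individualized prices.
\end{itemize}
Every member then strictly gains, which contradicts the stability of the original assignment in $\mathcal{E}_5^r$.

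\textbf{Empty bundles.} Your rerouting does not resolve this case. For a non-trading agent $k$ one has $\pi_k = -q_k$. Setting $q_k$ to zero and shifting the amount onto another agent therefore changes both agents' payoffs and destroys payoff equivalence. And if $\pi_k = 0$ already, then $q_k = 0$ and there is nothing to reroute.

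What you actually need is that $q_k = 0$ for every such $k$. This comes from individual rationality plus stability, not from the paper's informal standing assumption, which the lemma's hypothesis does not supply:
\begin{itemize}
\item a non-trading buyer with $q_i > 0$, or a non-trading seller with $q_j > 0$, has negative payoff, contradicting individual rationality;
\item a non-trading buyer with $q_i < 0$ is the negative-price case above;
\item a non-trading seller with $q_j < 0$ is being paid for nothing, so $N \setminus \{j\}$ can keep all trades and split $-q_j$ among themselves.
\end{itemize}
Once these cases are handled, your Step~2 finishes the proof exactly as in the paper.
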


\begin{proof}
  In the fifth-order setting, all the price payments are received or sent by the clearinghouse as a cumulative price.
  Since the only restrictions on prices are non-negativity and the normalization for the empty bundle, we show that agents can instead exchange these payments through the clearinghouse by adjusting their cumulative prices.
  For an NTU-stable assignment $((\vecc{s}_I, \vecc{z}_J), \vecc p, \vecc{Q}_N) \in \mathcal{F}(\mathcal{E}_5^r)$, we can define a payoff-equivalent assignment $((\vecc{s}_I, \vecc{z}_J), \vecc{\tilde p})$ with the same allocation and by setting cumulative prices $\tilde p_i = p_i + q_i$ for every $i \in I$ and $\tilde p_j = p_j - q_j$ for every $j \in J$.

  To check the admissibility of this assignment in $\mathcal{E}_5^\mathrm{no}$, we need to establish that all the prices are non-negative and that no positive price is paid for an empty bundle:

  \begin{itemize}
    \item If $\tilde p_i < 0$ for some $i \in I$, this means that in addition to the bundle, buyer $i$ has a net gain in the money they exchange.
      The agents in the coalition $C = N \setminus \{i\}$ can improve their cumulative payoff in $\mathcal{E}_5^r$ by not offering the items in $\vecc{s}_i$ to $i$. The remaining buyers can slightly decrease their prices, while the sellers slightly increase them until budget balance is restored, thereby increasing each agent's payoff in contradiction to the stability of the original assignment.
    \item If $\tilde p_j < 0$ for some $j \in J$, then seller $j$ has a negative payoff $\pi_j = \tilde \pi_j = \tilde p_j - v_j(\vecc{z}_j) < 0$, contradicting their individual rationality.
    \item If some agent $k \in N$ trades the empty bundle and $\tilde p_k > 0$ (due to side payments), then:
      \begin{itemize}
        \item if $k = i$ is a buyer, their payoff is negative since $\pi_i = \tilde\pi_i = v_i(\vecc{0}) - \tilde p_i < 0$, whereas
        \item if $k = j$ is a seller, excluding them from the trade wouldn't affect the allocation of the other agents in $C = N \setminus \{j\}$, and in fact they could increase their payoff in $\mathcal{E}_5^r$ by redistributing $\tilde p_j$ among themselves.
      \end{itemize}
  \end{itemize}

  All the previous conditions contradict the feasibility and stability of $((\vecc{s}_I, \vecc{z}_J), \vecc p, \vecc{Q}_N)$, so we can deduce that $((\vecc{s}_I, \vecc{z}_J), \vecc{\tilde p}) \in \mathcal{F}(\mathcal{E}_5^{\mathrm{no}})$.
  Finally, its NTU-stability follows directly from payoff equivalence and the stability of the initial assignment.%
  \footnote{Note that restricting the side payments also restricts the number of possible deviating assignments, so if a payoff vector is stable in $\mathcal{C}^{\mathcal{T}^{\mathrm{no}}}(\mathcal{E}_h^r)$ and can be realized by an admissible assignment in $\mathcal{E}_h^{\tilde r}$ with $\tilde r \preceq r$, then it is also stable in $\mathcal{C}^{\mathcal{T}^{\mathrm{no}}}(\mathcal{E}_h^{\tilde r})$.}
\end{proof}

On the other hand, every assignment that is NTU-stable in the market in $\mathcal{E}_5^{\mathrm{no}}$ is also stable if we allow more side payments. If this were not the case, we could still find a deviating assignment with no side payments in the more general market.

\begin{lemma}\label{lem:NTU_stable_noSP5}
  Let $((\vecc{s}_I, \vecc{z}_J), \vecc p) \in \mathcal{F}(\mathcal{E}_5^\mathrm{no})$ be an NTU-stable assignment in $\mathcal{E}_5^\mathrm{no}$, then it is also NTU-stable in $\mathcal{E}_5^r$.
\end{lemma}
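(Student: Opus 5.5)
The plan is to argue by contradiction, showing that any deviation in $\mathcal{E}_5^r$ can be turned into a deviation in $\mathcal{E}_5^{\mathrm{no}}$. First, the given assignment $((\vecc{s}_I, \vecc{z}_J), \vecc p)$ is admissible in $\mathcal{E}_5^{\mathrm{no}}$, so it is also admissible in $\mathcal{E}_5^r$ with all side payments set to zero, by the inclusion $\mathcal{A}(\mathcal{E}_h^{\mathrm{no}}) \subseteq \mathcal{A}(\mathcal{E}_h^{r})$ of Section~\ref{sec:sidePayments}. Its payoff vector $\vecc\pi_N$ is unchanged, so only the stability condition needs checking.

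Suppose some coalition $C \subseteq N$ has an assignment $((\vecc{\tilde s}_{C_I}, \vecc{\tilde z}_{C_J}), \vecc{\tilde p}, \vecc{\tilde Q}_C) \in \mathcal{A}(\mathcal{E}_5^r[C])$ with $\tilde\pi_k > \pi_k$ for every $k \in C$. We absorb the side payments into the cumulative prices, as in the proof of Lemma~\ref{lemma:NTUStableAss5}: set $\hat p_i = \tilde p_i + \tilde q_i$ for $i \in C_I$ and $\hat p_j = \tilde p_j - \tilde q_j$ for $j \in C_J$. Since the side payments are antisymmetric, $\sum_{k\in C}\tilde q_k = 0$. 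Together with budget balance of $\vecc{\tilde p}$ this gives $\sum_{i\in C_I}\hat p_i = \sum_{j\in C_J}\hat p_j$, and every agent in $C$ keeps exactly the payoff $\tilde\pi_k$. If $(\hat{\vecc p})$ is admissible in $\mathcal{E}_5^{\mathrm{no}}[C]$, this is a deviation contradicting NTU-stability in $\mathcal{E}_5^{\mathrm{no}}$.

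The main obstacle is admissibility of $\hat{\vecc p}$, namely non-negativity and the zero price on empty bundles. Unlike in Lemma~\ref{lemma:NTUStableAss5}, we cannot invoke stability of the deviating assignment itself, so we repair it while preserving strict improvement for everyone.

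\begin{itemize}
  \item \emph{Sellers.} Any $j \in C_J$ satisfies $\hat p_j = \tilde\pi_j + v_j(\vecc{\tilde z}_j) > \pi_j + v_j(\vecc{\tilde z}_j) \geq 0$, because $\pi_j \geq 0$ by feasibility. So sellers' prices are automatically positive.
  \item \emph{Sellers with an empty bundle.} Such a seller would have $\hat p_j = \tilde\pi_j > 0$, which is inadmissible. We drop these sellers from $C$. Their positive money comes from the other agents, so after removing them the remaining agents have a strict money surplus. We redistribute it by lowering buyers' prices (or raising sellers' prices) slightly, keeping budget balance.
  \item \emph{Buyers.} A buyer with $\hat p_i < 0$ is handled by removing that buyer, together with the items they receive, from the deviation. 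The strict surplus this frees up is spread over the remaining agents in the same way.
  \item \emph{Buyers with an empty bundle and $\hat p_i \neq 0$.} If $\hat p_i > 0$, then $\tilde\pi_i < 0 \leq \pi_i$, which is impossible. If $\hat p_i < 0$, the buyer is removed as in the previous item.
\end{itemize}

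Removing a buyer together with their bundle leaves some sellers with unsold items. I expect the delicate point to be here: we must shrink those sellers' offered bundles $\vecc{\tilde z}_j$ consistently and compensate them. I would first try to avoid this altogether. The idea is to pick a minimal counterexample coalition, or to argue that a negative-price buyer can instead be made to pay $0$ while the surplus shrinks but stays positive. Failing that, I would use the fact that fifth-order cumulative prices are otherwise unrestricted, so any strictly positive total surplus in a smaller subcoalition can be reallocated to make everyone strictly better off.
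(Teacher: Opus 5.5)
There is a genuine gap. Your first, second and fourth items are correct, but the argument stops at the step that actually carries it: removing a buyer with $\hat p_i<0$. You flag this as delicate and list three possible ways around it without completing any.

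The two fallbacks do not work as stated.
\begin{itemize}
\item Making such a buyer pay $0$ can destroy his strict improvement. For example, a buyer with an empty bundle who gained only through side payments ends at payoff $0\le\pi_i$.
\item In $\mathcal{E}_5^{\mathrm{no}}$ a positive aggregate surplus cannot be reallocated freely. Prices must be non-negative, and any agent with an empty bundle is pinned to price $0$.
\end{itemize}
You also never show that the coalition left after pruning is non-empty, which is needed for it to be a deviation at all.

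The missing ingredient is free disposal, after which the repair needs only two passes.
\begin{itemize}
\item \emph{Remove buyers.} The repair never changes buyers' bundles or prices, so the set of buyers with $\hat p_i<0$ is fixed; remove them all at once. This includes every buyer with an empty bundle, since such a buyer needs $-\hat p_i=\tilde\pi_i>\pi_i\ge 0$.
\item \emph{Shrink sellers' bundles.} The fifth-order setting only requires $\sum_i \vecc s_i=\sum_j \vecc z_j$, so you can shrink the cumulative bundles to some $\vecc z'_j\le\vecc{\tilde z}_j$ that match the remaining demand.
\item \emph{No compensation is needed.} Each seller keeps $\hat p_j$, and because $v_j$ is monotone, $\hat p_j-v_j(\vecc z'_j)\ge\tilde\pi_j>\pi_j$.
\item \emph{Remove empty sellers.} Then drop every seller whose bundle became empty.
\item \emph{Rebalance.} Both kinds of removal only add to the surplus $\sum_i\hat p_i-\sum_j\hat p_j\ge 0$ of the survivors. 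You can put that surplus entirely on one surviving seller's price.
\end{itemize}

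Non-emptiness follows in three steps. First, $C$ contains a seller; otherwise nothing is traded and $\sum_{k\in C}\tilde\pi_k=0$, contradicting $\tilde\pi_k>\pi_k\ge 0$. Second, every seller has $\hat p_j>0$, so budget balance forces some buyer to have $\hat p_i>0$. Third, that buyer survives with a non-empty bundle, so some seller supplying it survives too.

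The paper avoids the explicit repair. It takes $C$ to be a minimal deviating coalition and applies Lemma~\ref{lemma:NTUStableAss5} to the deviation. That lemma's case analysis uses only individual rationality and blocking by proper subcoalitions $C\setminus\{k\}$. Minimality rules such blocking out, because a proper subcoalition improving on $\vecc{\tilde\pi}$ would also deviate from $\vecc\pi$. Your construction, once completed as above, is a self-contained unrolling of that argument and makes explicit where free disposal is used.
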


\begin{proof}
  We only need to prove the result for $r = \mathrm{full}$, since for the other side-payment settings it then follows immediately. Suppose by contradiction that the assignment $((\vecc{s}_I, \vecc{z}_J), \vecc p) \in \mathcal{F}(\mathcal{E}_5^{\mathrm{no}})$ is NTU-stable in $\mathcal{E}_5^{\mathrm{no}}$ but not in $\mathcal{E}_5^{\mathrm{full}}$. This means that there exists a coalition $C \subseteq N$ and an assignment $((\vecc{\tilde s}_I, \vecc{\tilde z}_J), \vecc{\tilde p}, \vecc{\tilde Q}_N) \in \mathcal{F}(\mathcal{E}_5^{\mathrm{full}}[C])$ with non-zero side payments (otherwise the initial assignment wouldn't be stable in $\mathcal{E}_5^{\mathrm{no}}$) such that $\tilde \pi_k > \pi_k$ for every $k \in C$. Suppose that $C$ is a minimal deviating coalition, so that there is no coalition $C' \subseteq C$ where each agent can strictly improve their payoff over $\vecc{\tilde\pi}_{C'}$. Then, by Lemma~\ref{lemma:NTUStableAss5}, there exists an assignment $((\vecc{\tilde s}'_I, \vecc{\tilde z}'_J), \vecc{\tilde p'})\in \mathcal{F}(\mathcal{E}_5^{\mathrm{no}}[C])$ that is payoff equivalent to $((\vecc{\tilde s}_I, \vecc{\tilde z}_J), \vecc{\tilde p}, \vecc{\tilde Q}_N)$ and hence also a deviating assignment in $\mathcal{E}_5^{\mathrm{no}}$, which contradicts the initial stability assumption.
\end{proof}

Then, by Lemma~\ref{lem:TRTUwithr_is_SRTU}, we deduce that $\mathcal{C}^{\mathcal{T}}(\mathcal{E}_5^{r}) = \mathcal{C}^{\mathcal{S}}(\mathcal{E}_{5}^{\tilde r})$ for every $r, \tilde r \in \{\mathrm{full}, \, \mathrm{sep}, \, \mathrm{buy}, \, \mathrm{sell}, \, \mathrm{no}\}$ and partitions $\mathcal{T}, \mathcal{S}$.
Note that the previous two lemmas can be applied to any type of side-payment restrictions, not only the ones we originally defined.

We now discuss the analogous results for the fourth-order setting. In this case, the proofs are more involved since the role of the clearinghouse is not as powerful as before, and therefore, we need to find appropriate paths in the pricing graph, through which we are able to send the side payments instead.

\begin{lemma}\label{lemma:fourthSettingStableAssignments}
  Let $((\vecc{S}_I, \vecc{Z}_J), \vecc p, \vecc{Q}_N) \in \mathcal{F}(\mathcal{E}_4^r)$ be an NTU-stable assignment in $\mathcal{E}_4^r$, then there exists a payoff-equivalent NTU-stable assignment $((\vecc{\tilde S}_I, \vecc{\tilde Z}_J), \vecc{\tilde p}) \in \mathcal{F}(\mathcal{E}_4^{\mathrm{no}})$.
\end{lemma}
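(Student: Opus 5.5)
We keep the allocation fixed, $(\vecc{\tilde S}_I, \vecc{\tilde Z}_J) = (\vecc S_I, \vecc Z_J)$, and absorb every side payment into the personalized prices $\tilde p_{ij}(\vecc s_{ij})$. Payoff equivalence then amounts to two families of linear constraints:
\[
\sum_{j} \tilde p_{ij}(\vecc s_{ij}) = p_i + q_i \ (i \in I), \qquad
\sum_{i} \tilde p_{ij}(\vecc z_{ij}) = p_j - q_j \ (j \in J).
\]
The admissibility conditions in $\mathcal{E}_4^{\mathrm{no}}$ are that $\tilde p_{ij} \ge 0$ and that $\tilde p_{ij} = 0$ whenever $\vecc s_{ij} = \vecc 0$.

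This is a transportation or flow problem on the bipartite \emph{trade graph} $G$. Its vertices are the agents, and its edges are the pairs $(i,j)$ with $\vecc s_{ij} \neq \vecc 0$. Budget balance gives $\sum_i (p_i+q_i) = \sum_j (p_j - q_j)$. Moreover, the original prices $p_{ij}$ already form a nonnegative flow on $G$ with supplies $p_i$ and demands $p_j$. We therefore want to reroute the net side-payment imbalance $q_k$ along paths of $G$.

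By the supply--demand (Gale/Hoffman) theorem, a nonnegative flow on $G$ with the required supplies and demands exists if and only if two conditions hold.
\begin{enumerate}
  \item Every agent with nonzero target is incident to an edge. Equivalently, agents trading the empty bundle must have target $0$.
  \item For every connected component $K$ of $G$, the targets balance: $\sum_{i \in K\cap I}(p_i+q_i) = \sum_{j \in K\cap J}(p_j-q_j)$. Since the flow on each edge is nonnegative, we also need the buyer targets $p_i + q_i \ge 0$ and the seller targets $p_j - q_j \ge 0$; once these hold, the Hall-type cut conditions within a component are automatic for a complete bipartite flow on a connected graph.
\end{enumerate}
Actually, arbitrary connected bipartite graphs need more than balance. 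However, on edges we may route flow in only one direction (buyer to seller), so the cut condition reads: for every set $X$ of sellers, the demand of $X$ is at most the supply of the buyers adjacent to $X$. I plan to verify this cut condition and the previous ones via stability, mirroring the proof of Lemma~\ref{lemma:NTUStableAss5}.

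Each condition will be checked by exhibiting a deviating coalition.
\begin{itemize}
  \item \textbf{Nonnegative targets.} If some $p_i+q_i<0$, buyer $i$ extracts net money, and the coalition $N\setminus\{i\}$ deviates: it drops $i$ and redistributes the money that flowed to $i$. If some $p_j - q_j<0$, seller $j$ has negative payoff, contradicting individual rationality.
  \item \textbf{Zero targets for non-traders.} An agent with empty bundle and nonzero target is handled exactly as in Lemma~\ref{lemma:NTUStableAss5}.
  \item \textbf{Component and cut conditions.} Suppose these fail. Then there is a set $X$ of sellers whose demand exceeds the supply of the adjacent buyers $\Gamma(X)$, or a component with a net outflow. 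In that case the coalition $C = X\cup\Gamma(X)$, or the component $K$, is subsidizing or being subsidized by the rest of the economy purely through side payments. The side that pays (net) forms a coalition that keeps its own trades, which are self-contained in the trade graph, and is strictly better off dropping those payments. To make every member strictly better off, distribute the saved surplus by slightly adjusting its internal fourth-order prices along edges of $G$ inside the coalition.
\end{itemize}

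The main obstacle is this last step. The coalition is allowed only the side payments permitted by $r$, and each member must gain strictly. Making this work requires that the restricted trade graph on $C$ be connected enough to spread the surplus by price changes, and the deviation must be kept admissible in $\mathcal{E}_4^r[C]$. Once the flow $\tilde p$ exists, we set the prices of non-traded bundles arbitrarily, for example to $0$. The resulting assignment is feasible and payoff equivalent. Its NTU-stability in $\mathcal{E}_4^{\mathrm{no}}$ follows because any deviation in $\mathcal{E}_4^{\mathrm{no}}$ is also a deviation in $\mathcal{E}_4^r$, as in the footnote to Lemma~\ref{lemma:NTUStableAss5}.
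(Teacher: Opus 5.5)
Your reduction is correct. With the allocation fixed, payoff equivalence in $\mathcal{E}_4^{\mathrm{no}}$ is exactly a nonnegative transportation problem on the trade graph, with supplies $a_i=p_i+q_i$ and demands $b_j=p_j-q_j$. By Gale's theorem it is infeasible only if some cut condition fails; negative targets and non-traders with nonzero target are special cases. The retracted claim in your item~2 should simply be deleted.

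\textbf{The gap.} The missing step is the one you flag yourself, and it is the substance of the lemma: turning a violated cut into a coalition in which \emph{every} member strictly improves. Your sketch does not achieve this.

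If $b(X)>a(\Gamma(X))$, a short computation shows that the net payer is $D=N\setminus(X\cup\Gamma(X))$. This $D$ is not self-contained, and the transfer is not ``purely through side payments'': sellers in $D$ may sell at positive prices to buyers in $\Gamma(X)$, so deviating costs $D$ that revenue. The aggregate gain of $D$ is still positive, but it cannot in general be spread by ``slightly adjusting internal prices.'' Inside a coalition, money moves from a buyer to a seller freely by raising $p_{ij}$. It moves from a seller to a buyer only up to the current $p_{ij}$, and between agents without a common trade only through $r$-permitted side payments. An arbitrary violating set may contain members who cannot be reached this way, and discarding them changes the boundary and hence the gain.

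The same objection applies to your treatment of negative buyer targets and of non-traders who receive side payments. There, redistribution among $N\setminus\{i\}$ is borrowed from the fifth-order setting, where the clearinghouse centralizes all payments; in the fourth order it is not automatic.

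\textbf{What is missing} is a choice of coalition whose boundary costs nothing and whose members are all reachable. The paper gets this by rerouting each side payment $q_{st}$ as a flow in a residual graph. In that graph, price arcs $i\to j$ are unbounded, reversed price arcs $j\to i$ have capacity $p_{ij}$, and side-payment arcs are unbounded in both directions. When routing stalls, the paper takes the set $R$ reachable from $s$ with positive capacity. By construction, the only interactions between $R$ and its complement are zero-priced sales by sellers in $R$ and the leftover $q_{st}>0$. The coalition $R$ drops both at no cost (valuations are monotone), and $s$ spreads the saving in small amounts $\epsilon$ along residual paths to every member of $R$.

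If you want to keep your global cut formulation, there is an equivalent repair. Let $q(U)=\sum_{k\in U}q_k$. Let $c(U,N\setminus U)$ be $+\infty$ if some buyer in $U$ trades with a seller outside $U$, and otherwise the total price paid by buyers outside $U$ to sellers in $U$. Then $g(U)=q(U)-c(U,N\setminus U)$ is supermodular, and $\max g>0$ exactly when your transportation problem is infeasible. Take an inclusion-minimal maximizer $U$. Suppose redistribution inside $U$ failed for some nonempty $W\subsetneq U$. Either a permitted side-payment pair links $U\setminus W$ and $W$, which gives unbounded internal capacity and a contradiction. Or none does; then $\sum_{k\in U\setminus W,\,l\in W}q_{kl}=0$, and a direct computation gives $g(U\setminus W)\ge g(U)$, contradicting minimality.

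Without one of these arguments, the proposal does not establish the lemma. Your final step, NTU-stability in $\mathcal{E}_4^{\mathrm{no}}$ as in the footnote to Lemma~\ref{lemma:NTUStableAss5}, is fine.
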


\begin{proof}
  Similar to the previous proof, we are going to transfer all the side payments into the payments of the market. However, in this case, the structure of the payments is more intricate than before.
  Starting from any NTU-stable assignment $((\vecc{S}_I, \vecc{Z}_J), \vecc p, \vecc{Q}_N) \in \mathcal{F}(\mathcal{E}_4^r)$, we want to construct a payoff-equivalent fourth-order admissible assignment $((\vecc{S}_I, \vecc{Z}_J), \vecc{\tilde p})$ with no side payments.
  For the proof, we use the payment digraph $G = (N, E) = (N, E_p \cup E_q)$, where $E_p$ is the set of price arcs and $E_q$ is the set of side-payment arcs. On this graph, we can sequentially eliminate each arc in $E_q$, until no more side payments are left (see Algorithm \ref{alg:4Stability}). The remainder of the proof establishes the correctness of the algorithm.

  Note that, in the fourth-order setting, the role of the clearinghouse solely consists of checking the assignment admissibility, and both the package exchanges as well as the price transactions correspond to direct trades. To simplify the notation, we can then assume that buyers and sellers trade directly with each other. Hence,
  \begin{itemize}
    \item a price arc $(i,j)$ exists in $E_p$ if and only if the two agents trade a non-empty bundle, i.e., $\vecc{s}_{ij} = \vecc{z}_{ij} \neq \vecc{0}$, and
    \item a side-payment arc $(k,l)$ exists in $E_q$ if and only if $q_{kl}>0$.
  \end{itemize}
  For the first reduction step of this graph, we want to make sure that no parallel edges exist, so that the graph is simple. This could be the case when a side payment arc exists in parallel to a payment arc.
  To avoid ambiguity, if a buyer $i$ both pays $p_{ij}$ and side-pays $q_{ij}>0$ to a seller $j$, we eliminate the side-payment arc and keep the price arc by increasing the corresponding price to $p_{ij} + q_{ij} > 0$ (this doesn't affect admissibility since in the fourth-order setting the prices are fully personalized).

  In order to find a way to reroute the side payments through the payment graph, we define the residual graph $\bar{G} = (N, E \cup E^{-1})$, where $E^{-1} = \{e^{-1}=(l,k) : e = (k, l) \in E\}$ is the set of reversed arcs. For a side-payment edge $(s, t)$ that we want to eliminate, we can look for a path in the residual graph to send the money through instead. The edges on this path can be forwards (side-)payment edges, which we can arbitrarily increase, or a backward edge which can be used up to the current amount of money.
  The capacity of each arc is thus given by:

  \begin{equation}\label{eq:capacity}
    c_e =
    \begin{cases}
      p_{ij} ,  &\text{$e \in E^{-1}_{p}.$}\\
      +\infty, &\text{otherwise.}
    \end{cases}
  \end{equation}

  Consider a pair of agents $(s, t)$ with positive side payment $q_{st} > 0$, and let $\bar{G}_{st} $ be the graph where we remove both arcs corresponding to the side payment between the two agents.
  We can now compute the maximum $s-t$ flow $\vecc f$ in $\bar{G}_{st}$. If the maximum flow value is at least $q_{st}$, we use a flow of value $q_{st}$ to route the side payment through the payment graph in some way and set $q_{st} = 0$. Thus, the payments remain admissible, all the payoffs remain the same, but we have eliminated the side-payment edge $(s, t)$. Apply this approach iteratively to all side-payment edges. Because in every iteration, at least one side payment arc is removed from the graph, the algorithm terminates in at most $N(N-1)$ steps and achieves a payoff-equivalent assignment that does not use any side-payment edges.

  If, however, the maximum flow value is less than $q_{st}$, we do not succeed in routing the full side-payment amount through the graph. In this case, apply the flow to the payment graph and reduce the side payment accordingly. In the new residual graph, no $s-t$-path with positive capacity exists. Consider the strongly connected component $C$ of $s$ in the residual graph spanned by all edges with positive capacity (i.e., we might remove the backward payment edges that effectively transfer no money). Observe that by definition, we can transfer at least some positive amount $\epsilon$ of money from $s$ to all agents in the strongly connected component in the payment graph. By doing so, we strictly increase the payoff of each agent in $C$. In order to establish this as a deviating coalition, we now only need to make sure that the allocation is admissible. Because all package exchanges correspond to payment arcs, the only exchanges that need to be taken care of are trades with agents that have no path of positive capacity from $C$. Therefore, the only edges connecting them to $C$ are backwards payment edges with value 0. Such an agent is a buyer who doesn't pay for any of the bundles they trade with $C$, and we simply set $z_{ij} = 0$ for all such edges, as the corresponding sellers cannot lose payoff by not offering this bundle. This coalition and constructed assignment contradict that the assignment $((\vecc{S}_I, \vecc{Z}_J), \vecc p, \vecc{Q}_N) \in \mathcal{F}(\mathcal{E}_4^r)$ is stable.

  The NTU-stability of the resulting assignment $((\vecc{S}_I, \vecc{Z}_J), \vecc{\tilde p})$ follows directly from payoff equivalence and the stability of the initial assignment.
\end{proof}

\begin{algorithm}
  \KwIn{Stable assignment $((\vecc{S}_I, \vecc{Z}_J), \vecc p, \vecc{Q}_N) \in \mathcal{F}(\mathcal{E}_4^r)$ with corresponding digraph $G = (N, E = E_p \cup E_q)$.}
  \KwOut{Stable assignment $((\vecc{S}_I, \vecc{Z}_J), \vecc{\tilde p}) \in \mathcal{F}(\mathcal{E}_4^{\mathrm{no}})$.}

  $\vecc{\tilde p} \gets \vecc{p}$\;
  \For{$(k_s, k_t) \in N \times N$ such that $q_{k_sk_t} > 0$}{
    $E_q \gets E_q \setminus \{(k_s,k_t)\}$\;
    $\bar{G} \gets (N, E \cup E^{-1})$\;
    $\vecc{c}$ as in \eqref{eq:capacity}\;
    Compute the maximum $k_s -k_t$ flow $\vecc{f}$ in $\bar G$ with net flow $q_{k_sk_t}$\;
    \For{$(i,j) \in E_p$}{
      $\tilde p_{ij} \gets \tilde p_{ij} + f_{ij} - f_{ji}$\;
    }
    \For{$(k,l) \in E_q$}{
      $q_{kl} \gets q_{kl} + f_{kl} - f_{lk}$\;
    }
  }
  \caption{Eliminating side payments for the fourth-order setting.}\label{alg:4Stability}
\end{algorithm}

As before, any NTU-stable assignment in $\mathcal{E}_4^{\mathrm{no}}$ is also stable with any type of side payments. With a similar construction as in the proof of the previous lemma, we can show that we can otherwise find a deviating coalition $C$ and an assignment with no side payments, where every agent strictly improves their payoff.

\begin{lemma}\label{lem:NTUstable4no_is_stable_r}
  Let $((\vecc{\tilde S}_I, \vecc{\tilde Z}_J), \vecc{\tilde p}) \in \mathcal{F}(\mathcal{E}_4^{\mathrm{no}})$ be an NTU-stable assignment in $\mathcal{E}_4^{\mathrm{no}}$, then it is also NTU-stable in $\mathcal{E}_4^r$.
\end{lemma}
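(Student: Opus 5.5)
I would mirror the proof of Lemma~\ref{lem:NTU_stable_noSP5}. It suffices to treat $r = \mathrm{full}$: every assignment admissible in $\mathcal{E}_4^r[C]$ is also admissible in $\mathcal{E}_4^{\mathrm{full}}[C]$, so a deviation against the assignment in $\mathcal{E}_4^r$ is also a deviation in $\mathcal{E}_4^{\mathrm{full}}$.

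Assume, for contradiction, that $((\vecc{\tilde S}_I, \vecc{\tilde Z}_J), \vecc{\tilde p})$ is NTU-stable in $\mathcal{E}_4^{\mathrm{no}}$ but not in $\mathcal{E}_4^{\mathrm{full}}$. Then there are a coalition $C$ and an assignment $\vecc{A}' = ((\vecc{S}'_{C_I}, \vecc{Z}'_{C_J}), \vecc p', \vecc Q'_C) \in \mathcal{A}(\mathcal{E}_4^{\mathrm{full}}[C])$ with $\pi'_k > \tilde\pi_k \geq 0$ for all $k\in C$. Choose $C$ minimal by inclusion among all coalitions admitting such a deviation. The plan is to convert $\vecc{A}'$ into an assignment without side payments that still gives every member of $C$ strictly more than $\tilde\pi_k$. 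This would contradict NTU-stability in $\mathcal{E}_4^{\mathrm{no}}$.

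The conversion reuses Algorithm~\ref{alg:4Stability} on the payment digraph of $\vecc A'$ restricted to $C$. First, merge any side payment parallel to a price arc into that price. Then, for each remaining side-payment arc $(s,t)$, route $q'_{st}$ as a max flow in the residual graph with capacities \eqref{eq:capacity}. Successful routings keep all payoffs unchanged and remove the arc. Lemma~\ref{lemma:fourthSettingStableAssignments} itself cannot be invoked directly, because $\vecc A'$ need not be stable in $\mathcal{E}_4^{\mathrm{full}}[C]$. Instead, the stability used in that proof is replaced by the minimality of $C$ together with the strict surplus $\pi'_k - \tilde\pi_k > 0$.

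The main obstacle is the failure case, in which some flow is strictly smaller than the remaining $q'_{st}$. In that case, consider the set $C'$ of agents reachable from $t$ through positive-capacity residual arcs, together with $t$ itself. There is then no positive-capacity path from $s$ into $C'$, and $s \notin C'$. The arcs from $C \setminus C'$ into $C'$ are therefore only backward price arcs of value zero, apart from the side payment $(s,t)$ being eliminated. The agents of $C'$ thus receive from outside exactly this side payment and zero-priced bundles. They pay nothing out, since any outgoing payment from $C'$ would yield a forward arc of infinite capacity leaving $C'$.

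I would then split into two cases. If $C' \subsetneq C$, restrict the trade to $C'$: drop the zero-priced bundles exchanged with outsiders, as in the proof of Lemma~\ref{lemma:fourthSettingStableAssignments}, and recompute payoffs. Sellers in $C'$ lose only zero prices and gain back their items. Buyers in $C'$ lose only items for which they paid nothing. This step needs care with valuations, and may require choosing the orientation (from $s$ or to $t$) for which dropped bundles are harmless. The sub-coalition obtained this way should still strictly improve on $\tilde\pi$, or fall into a symmetric case where $s$'s side works. Either way this contradicts the minimality of $C$. If instead $C' = C$, then $s$ is reachable, which contradicts the failure of the flow. Hence every side payment can be rerouted, yielding a deviation in $\mathcal{E}_4^{\mathrm{no}}[C]$ and the desired contradiction. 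I expect the bookkeeping of the dropped zero-price bundles under the minimality argument to be the delicate point.
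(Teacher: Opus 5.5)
Your architecture matches the paper's: reduce to $r=\mathrm{full}$, reroute the deviation's side payments as in Algorithm~\ref{alg:4Stability}, and shrink the coalition when routing fails. Using inclusion-minimality of $C$ instead of the paper's iteration is fine. The gap is that you shrink on the wrong side.

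Failure of the $s$--$t$ flow rules out positive-capacity paths from $s$ to $t$, but not from $t$ to $s$, because the arc of a zero-priced bundle is usable only from buyer to seller. For instance, suppose $t$ buys a bundle at price $0$ from a seller $j'$ who exchanges a side payment with $s$. Then $t\to j'\to s$ has positive capacity while $j'\to t$ has capacity $0$. If $s$ has no other route to $t$, the flow fails although $s$ is reachable from $t$. So for your $C'$ (agents reachable from $t$), neither $s\notin C'$ nor $C'\neq C$ follows, and your closing step ``if $C'=C$ then $s$ is reachable, contradicting the failure'' is invalid. Your description of the boundary of $C'$, with zero-valued arcs \emph{entering} $C'$ and buyers in $C'$ \emph{receiving} zero-priced bundles, in fact describes the set of agents that can \emph{reach} $t$.

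More fundamentally, every set on $t$'s side contains $t$. Cutting it off from $s$ deletes the residual payment $q_{st}>0$ that $t$ \emph{receives}. For the backward-reachable set, its buyers also lose bundles they value but obtained at price $0$. Nothing guarantees that their surplus over $\tilde\pi$ absorbs these losses, and your hedge (``should still strictly improve, or fall into a symmetric case'') leaves exactly this step open.

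The fix is to cut on $s$'s side. After applying the maximum flow, let $C'$ be the set of agents reachable from $s$ by positive-capacity residual arcs. Then $t\notin C'$, so $C'\subsetneq C$, and no positive-capacity arc leaves $C'$. Side-payment arcs other than the removed pair for $(s,t)$ have infinite capacity in both directions. Forward price arcs (buyer to seller) have infinite capacity, and a backward price arc has capacity equal to its current price.

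Hence the only links between $C'$ and $C\setminus C'$ are bundles that sellers in $C'$ hand to outside buyers at price $0$, and the residual $q_{st}$ paid out by $s$. No buyer in $C'$ trades with an outside seller, and no member of $C'$ receives money from outside. Dropping these links keeps every price received in $C'$, weakly lowers each seller's valuation loss by monotonicity, and strictly raises the payoff of $s$. So the restricted assignment in $\mathcal{E}_4^{\mathrm{full}}[C']$ still gives every member of $C'$ strictly more than $\tilde\pi$, contradicting the minimality of $C$.

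Therefore, for minimal $C$ every side payment can be rerouted, and you obtain a deviation in $\mathcal{E}_4^{\mathrm{no}}[C]$. This is exactly the paper's step: it cuts at the component of $s$, discards the zero-priced bundles to outsiders (possibly including $t$), and discards the remaining $q_{st}$, which can only benefit $s$.
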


\begin{proof}
  Consider an NTU-stable assignment $((\vecc{S}_I, \vecc{Z}_J), \vecc{p}) \in \mathcal{F}(\mathcal{E}_4^{\mathrm{no}})$. Suppose by contradiction that it is not NTU-stable in $\mathcal{E}_4^r$, and in particular that there exists a deviating coalition $C$ with an assignment $((\vecc{\tilde S}_I, \vecc{\tilde Z}_J), \vecc{\tilde p}) \in \mathcal{F}(\mathcal{E}_4^{r})$ s.t.\ every agent $k$ improves their payoff. Analogously to the proof of Lemma~\ref{lemma:fourthSettingStableAssignments}, we can try to eliminate all side-payment edges by routing the payments through the graph. If all side-payment edges are deleted, we have found an assignment contradicting the stability of $((\vecc{S}_I, \vecc{Z}_J), \vecc{p})$.

  However, in this case, the deviation need not be a stable assignment, and we therefore cannot immediately conclude the construction when we do not find a path to eliminate the edge.
  Nevertheless, a similar argument still holds: Consider the same strongly connected component in the residual graph induced by some agent $s$ with $q_{st} > 0$. In case the side-payment edge cannot be eliminated by routing its value through the remaining graph, the same conclusion holds: the only connections to agents outside of the coalition are by using only payment edges with value zero.
  We can now remove the traded bundles to all such agents without affecting the payoff of the agents in the coalition. This may include the agent $t$, for whom we additionally also remove the remaining side-payment value $q_{st}$, which also can only increase the payoff of agent $s$.

  After iterating this argument over the remaining coalition $C$, thereby maintaining non-decreasing payoff for each agent in $C$ while removing side-payment edges one-by-one, we end up with a coalition that is feasible in $\mathcal{E}_4^{\mathrm{no}}$ and still improves the payoff to provide a deviation in contradiction with the stability of $((\vecc{S}_I, \vecc{Z}_J), \vecc{p})$.
\end{proof}

Since the previous two lemmas are still valid for any type of side payments, by Lemma~\ref{lem:TRTUwithr_is_SRTU}, we can then deduce that $\mathcal{C}^{\mathcal{T}}(\mathcal{E}_4^{r}) = \mathcal{C}^{\mathcal{S}}(\mathcal{E}_{4}^{\tilde r})$ for every $r, \tilde r \in \{\mathrm{full}, \, \mathrm{sep}, \, \mathrm{buy}, \, \mathrm{sell}, \, \mathrm{no}\}$ and partitions $\mathcal{T}, \mathcal{S}$.

Summarizing the results in this section, we can deduce that all the $\mathcal{T}$-cores for the fourth- and fifth-order settings coincide.

\begin{corollary}
  $\mathcal{C}^{\mathcal{T}}(\mathcal{E}_h^r)$ is the same for every $h \in \{4, 5\}$, $r \in \{\mathrm{full}, \, \mathrm{sep}, \, \mathrm{buy}, \, \mathrm{sell}, \, \mathrm{no}\}$ and partition $\mathcal{T}$.
\end{corollary}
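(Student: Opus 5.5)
The corollary combines the two pairs of lemmas above: Lemmas~\ref{lemma:NTUStableAss5} and \ref{lem:NTU_stable_noSP5} for $h=5$, and Lemmas~\ref{lemma:fourthSettingStableAssignments} and \ref{lem:NTUstable4no_is_stable_r} for $h=4$. These are linked across the two orders using the payoff equivalences of Section~\ref{sec:equivalences} and the reduction of Theorem~\ref{thm:allisNTU}.

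\textbf{Step 1: reduce to NTU-cores.} By Theorem~\ref{thm:allisNTU} (via Lemma~\ref{lem:TRTUwithr_is_SRTU} and Corollary~\ref{cor:RTUisNTUwithSP}), $\mathcal{C}^{\mathcal{T}}(\mathcal{E}_h^r)=\mathcal{C}^{\mathcal{T}^{\mathrm{no}}}(\mathcal{E}_h^{s})$, where $s$ is the restriction induced by $\mathcal{T}\vee\mathcal{T}^r$. That lemma is stated only for side-payment partitions or comparable pairs. For an arbitrary $\mathcal{T}$, the remark after Lemma~\ref{lem:NTU_stable_noSP5} notes that both stability lemmas hold for any side-payment structure. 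So I would take $s$ to be the side-payment structure permitting transfers exactly within the blocks of $\mathcal{T}\vee\mathcal{T}^r$. I would then rerun the proof of Lemma~\ref{lem:TRTUwithr_is_SRTU}, whose only ingredient is that side payments redistribute payoff freely within blocks. I expect this generalisation to be the main bookkeeping obstacle, but it is routine.

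\textbf{Step 2: independence of $s$ within a fixed order.} Fix $h\in\{4,5\}$. Lemmas~\ref{lemma:NTUStableAss5} and \ref{lemma:fourthSettingStableAssignments} show that every NTU-stable assignment in $\mathcal{E}_h^{s}$ is payoff equivalent to an NTU-stable assignment in $\mathcal{E}_h^{\mathrm{no}}$, which gives $\mathcal{C}^{\mathcal{T}^{\mathrm{no}}}(\mathcal{E}_h^{s})\subseteq\mathcal{C}^{\mathcal{T}^{\mathrm{no}}}(\mathcal{E}_h^{\mathrm{no}})$. Conversely, Lemmas~\ref{lem:NTU_stable_noSP5} and \ref{lem:NTUstable4no_is_stable_r} show that an NTU-stable assignment of $\mathcal{E}_h^{\mathrm{no}}$ is admissible in $\mathcal{E}_h^{s}$ and stays stable there. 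Hence $\mathcal{C}^{\mathcal{T}^{\mathrm{no}}}(\mathcal{E}_h^{s})=\mathcal{C}^{\mathcal{T}^{\mathrm{no}}}(\mathcal{E}_h^{\mathrm{no}})$ for every $s$, and with Step~1, $\mathcal{C}^{\mathcal{T}}(\mathcal{E}_h^r)=\mathcal{C}^{\mathcal{T}^{\mathrm{no}}}(\mathcal{E}_h^{\mathrm{no}})$ for all $r$ and $\mathcal{T}$.

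\textbf{Step 3: link $h=4$ and $h=5$.} It remains to show $\mathcal{C}^{\mathcal{T}^{\mathrm{no}}}(\mathcal{E}_4^{\mathrm{no}})=\mathcal{C}^{\mathcal{T}^{\mathrm{no}}}(\mathcal{E}_5^{\mathrm{no}})$. By Step~2 it suffices to compare the two orders under $\mathrm{buy}$ side payments. The corollary following Theorem~\ref{thm:setting-equivalences}, which rests on Lemma~\ref{lemma:5to4(Pi)Sing} and Lemma~\ref{lemma:1234to5(P)}, gives $\mathcal{C}^{\mathcal{T}}(\mathcal{E}_4^{\mathrm{buy}})=\mathcal{C}^{\mathcal{T}}(\mathcal{E}_5^{\mathrm{buy}})$. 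Chaining the equalities then yields a single common core for all $h\in\{4,5\}$, all $r$, and all $\mathcal{T}$.
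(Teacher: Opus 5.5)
Your overall route is the paper's. Within each order you collapse all side-payment regimes with Lemmas~\ref{lemma:NTUStableAss5}, \ref{lem:NTU_stable_noSP5} (for $h=5$) and \ref{lemma:fourthSettingStableAssignments}, \ref{lem:NTUstable4no_is_stable_r} (for $h=4$), and you link the two orders through the payoff-equivalence corollary. Steps~2 and~3 are fine. The step that does not work as described is Step~1.

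To show $\mathcal{C}^{\mathcal{T}}(\mathcal{E}_h^r)\subseteq\mathcal{C}^{\mathcal{T}^{\mathrm{no}}}(\mathcal{E}_h^{s})$, you must turn a deviation of a coalition $C$ in $\mathcal{E}_h^{s}[C]$ into a $\mathcal{T}$-deviation in $\mathcal{E}_h^{r}[C]$. The proof of Lemma~\ref{lem:TRTUwithr_is_SRTU} does this by stripping the side payments and re-spreading the surplus of $C\cap S$ over the sets $C\cap T$ with $r$-side payments. That needs $C\cap S$ to stay connected through the blocks of $\mathcal{T}$ and $\mathcal{T}^r$ \emph{restricted to $C$}. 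This holds in the cases the lemma covers (comparable partitions, or $\mathrm{buy}$ versus $\mathrm{sell}$), but not for arbitrary $\mathcal{T}$.

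For example, take $r=\mathrm{buy}$ and $\mathcal{T}$ consisting of $\{i_1,j_1\}$ plus singletons. The join block is $I\cup\{j_1\}$, yet in any coalition containing $j_1$ but not $i_1$, the seller $j_1$ is cut off from $C\cap I$. Under your $s$ the buyers in $C$ may side-pay $j_1$ directly, while in $\mathcal{E}_h^{\mathrm{buy}}[C]$ with $\mathcal{T}$-transfers they cannot. So free redistribution within join blocks is not the only ingredient, and the ``rerun'' does not establish your Step~1 equality. The paper skates over the same point when it invokes Lemma~\ref{lem:TRTUwithr_is_SRTU} for arbitrary partitions.

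The fix is to use a smaller structure. Let $u$ allow a side payment between $k$ and $l$ exactly when they share a block of $\mathcal{T}$ or a block of $\mathcal{T}^r$. Then $\mathcal{C}^{\mathcal{T}}(\mathcal{E}_h^r)=\mathcal{C}^{\mathcal{T}^{\mathrm{no}}}(\mathcal{E}_h^{u})$ holds for every $h$ and can be checked coalition by coalition. Side payments inside a $\mathcal{T}^r$-block are kept as $r$-payments. Those inside a $\mathcal{T}$-block are exchanged for (or generated from) $\mathcal{T}$-transfers, which leaves every sum over $C\cap T$ unchanged.

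The paper states that the elimination lemmas hold for arbitrary side-payment structures. Granting that, your Step~2 applies to $u$ verbatim and gives $\mathcal{C}^{\mathcal{T}}(\mathcal{E}_h^r)=\mathcal{C}^{\mathcal{T}^{\mathrm{no}}}(\mathcal{E}_h^{\mathrm{no}})$ for $h\in\{4,5\}$. A posteriori this also shows your join-based equality is true for these two orders. Step~3 then finishes exactly as you wrote.
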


Combining the observations from above provides the summary result of Theorem~\ref{thm:45stability_all_NTU}, stating that if the join partition is coarse enough, or the prices are personalized enough, all the stability notions in an economy $\mathcal{E}$ coincide.

\section{Concluding Remarks}
We expand well-established combinatorial market models by adding explicit side payments, allowing for different restrictions on utility transfers among agents. To compare the relative strengths of the resulting markets, we introduce a hierarchy of equivalence notions, ranging from item equivalence to market equivalence. This comparison yields near-monotonicity: the set of assignments that can be realized generally increases with more personalized prices or with the allowance of side payments across more groups. However, in the second-order setting, payoff equivalence fails when compared to higher-order settings under restricted utility transfer. This irregularity disappears once sufficiently strong side payments are permitted. We also generalize the classical notions of TU- and NTU-stability to model limits on financial transfers between different groups of agents. The new concept aligns naturally with our models of side payments and also sheds new light on the original notions, especially NTU-stability. In particular, allowing restricted forms of collusion expands the set of stable outcomes, leading to more nuanced solution concepts that reflect different assumptions about agents’ ability to coordinate.

\subsection*{Acknowledgments}
The authors thank their advisor Andreas S.\ Schulz for his valuable input. They also thank the anonymous reviewers for their constructive feedback.

\bibliographystyle{splncs04}
\bibliography{bib}

\newpage
\appendix

\section{Formal Definition of the Market Model}
\label{secapp:formal_defs}

We denote by $\N$ the set of positive integers, by $\R$ the real numbers, and by $\N_{\geq 0}$ and $\R_{\geq 0}$ the sets of non-negative integer and real numbers, respectively.
We set vectors like $\vecc x$ in bold font to distinguish them from scalars and denote their $i$-th component by $x_i$.
The symbol $\vecc 0$ indicates the zero vector of the appropriate dimension, depending on the context.
Finally, if $\vecc{\alpha}, \vecc{\beta} \in \R^n$, we will write $\vecc{\alpha} < \vecc{\beta}$ to indicate that $\alpha_i < \beta_i$ for every $i \in \{1, \dots, n\}$.

\subsection{Combinatorial Market Economies}
The following model for combinatorial markets based on a package assignment model was introduced in \cite{BikhchandaniOstroy:ThePackageAssignmentModel}.
The set of agents $N$ in this \emph{combinatorial market} is split into two disjoint groups: sellers and buyers.
The sellers offer (bundles of) indivisible items to the buyers, who can trade an arbitrarily divisible amount of money for the bundles they request.
We typically denote the sellers by $J = \{j_1,\dots, j_{\card{J}}\}$ and the buyers by $I = \{i_1,\dots, i_{\card{I}}\}$.
Following the usual convention, we refer to a subset of agents $C \subseteq N$ as a \emph{coalition}. We will indicate with $C_I = C \cap I$ the subset of buyers in $C$ and with $C_J = C \cap J$ the subset of sellers.

For the items $A  = \{a_1, \dots, a_{\card{A}}\}$ that are traded in the market, multiple identical copies may exist.
Individually, every seller $j \in J$ has an endowment of items $\vecc{\Omega}_j = (\Omega_j^1, \dots, \Omega_j^{\card{A}})$, i.e., $\Omega_j^\ell \in \N_{\geq 0}$ is the number of units of item $a_\ell \in A$ that $j$ initially owns.
The \emph{total endowment} of the market is denoted by $\vecc \Omega \in \N^{\card{A}}$ and corresponds to the sum of all the individual endowments.
We do not consider budget restrictions for the buyers (or, equivalently, assume sufficient money to be available to them to afford any bundles they request).

While money holds an equivalent unit value for every agent, agents individually assign value to bundles, measured in terms of monetary units.
We define the \emph{valuation function} $v_i : 2^{\vecc{\Omega}} \rightarrow \R_{\geq 0}$ of buyer $i \in I$ as the function that specifies the agent's valuation of any bundle $\vecc{\omega} \leq \vecc{\Omega}$.\footnote{With an abuse of notation, we indicate with $2^{\vecc{\Omega}}$ the power set of the set of items corresponding to $\vecc\Omega$.}
Analogously, we define the valuation function $v_j : 2^{\vecc{\Omega}_j} \rightarrow \R_{\geq 0}$ of seller $j \in J$.
We assume that the valuation of each agent $k \in N$ is normalized for the empty bundle and that free disposal of items applies, i.e., $v_k(\vecc 0) = 0$ and $v_k(\vecc{\omega}_1) \leq v_k(\vecc{\omega}_2)$ for every $\vecc{\omega}_1 \leq \vecc{\omega}_2 \leq \vecc{\Omega}$.
While we simply restrict the domain of the total items offered by each seller $j$ to their respective endowment $\vecc{\Omega}_j$, this can alternatively be accomplished (as, e.g., in \ \cite{BikhchandaniOstroy:ThePackageAssignmentModel}) by setting the valuation to infinity for all bundles that exceed their individual endowment or (in order to remain within $\R$) to exceed any buyer's budget.

The crucial feature of this model of combinatorial markets is that trades are not directly executed between seller-buyer pairs, but instead through a \emph{clearinghouse} that receives bundles from sellers and, under certain restrictions (discussed later), potentially repackages them into new bundles before shipping them to buyers. It also collects all payments from buyers and distributes them to sellers in accordance with the market pricing system (as defined below).

An \emph{economy} in our model is then uniquely determined by the set of agents, the initial endowments of the sellers, and the agents' valuation functions, and we will denote it by $\mathcal{E} = (N, (\vecc{\Omega}_j)_{j \in J}, (v_k)_{k \in N})$. For a coalition $C \subseteq N$, we use the notation $\mathcal{E}[C] = (C, (\vecc{\Omega}_j)_{j \in C_J}, (v_k)_{k \in C})$ to indicate the economy where the agents in $C$ only trade among each other, without considering the endowments and the valuations of the agents outside of the coalition. By a \emph{market} $\mathcal{E}_m$, we denote the underlying economy $\mathcal{E}$ together with a specific setting of market restrictions as outlined in Appendix~\ref{secapp:formal_market-settings}.

\subsubsection*{Allocations}
Next, we introduce the notation to describe the exchange of bundles between buyers and sellers.
Every seller $j \in J$ can offer one bundle of items to each potential buyer, which we denote $\vecc{z}_{ij} = (z_{ij}^1, \dots, z_{ij}^{\card{A}}) \leq \vecc{\Omega}_j$ for each buyer $i \in I$.
The indices $i$ and $j$ in $\vecc{z}_{ij}$ correspond, respectively, to the buyer's and the seller's \emph{labels} on the bundle, i.e., the designated recipient and the sender of the bundle.
If a seller does not offer any bundle to a specific buyer, we use the empty bundle $z_{ij} = \vecc 0$.
Then, the clearinghouse receives all these bundles and can potentially repackage or relabel them, depending on the setting.
Similarly, every buyer $i \in I$ requests one bundle $\vecc{s}_{ij} = (s_{ij}^1, \dots, s_{ij}^{\card{A}}) \leq \vecc{\Omega}$ from each seller $j \in J$.
It is important to highlight that, due to the repackaging, $\vecc s_{ij}$ and $\vecc{z}_{ij}$ do not necessarily correspond to each other, but the market must remain feasible in total.
We establish this further below with different repackaging restrictions.

We denote by $\vecc{S}_i = (\vecc{s}_{ij})_{j \in J}$ and $\vecc{Z}_j = (\vecc{z}_{ij})_{i \in I}$ the vectors of individual bundles that buyer $i$ requests and seller $j$ offers, and the respective cumulative bundles of each agent by $\vecc{s}_i = \sum_{j \in J} \vecc{s}_{ij} = (s_i^1, \dots, s_i^{\card{A}}) \leq \vecc{\Omega}$ and $\vecc{z}_j = \sum_{i \in I} \vecc{z}_{ij} = (z_j^1, \dots, z_j^{\card{A}}) \leq \vecc{\Omega}_j$.
Note that the last inequality establishes that the bundles are feasible with respect to the individual endowment of each seller.
We refer to the tuple $(\vecc{S}_I, \vecc{Z}_J)$ of all the individual bundles that the agents trade as an \emph{allocation} in the economy.
The \emph{social welfare} of an allocation is then defined as the net gain in valuations from the trade across all agents, i.e.,
$$\sum_{i \in I} v_i(\vecc{s}_i) - \sum_{j \in J} v_j(\vecc{z}_j).$$

\subsubsection*{Prices}
We define the \emph{pricing system} $\vecc p$ in the economy $\mathcal{E}$ as functions $p_{ij} : 2^{\vecc{\Omega}_j} \rightarrow \R_{\geq 0}$ that represent the price offers for every bundle $\vecc{\omega} \leq \vecc{\Omega}_j$ being offered from seller $j \in J$ to buyer $i \in I$.
Unlike the allocations, this pricing system defines prices for \emph{both} buyers and sellers simultaneously.
We normalize the prices of the empty bundle to zero, i.e., $p_{ij}(\vecc{0}) = 0$ for every $i \in I$ and $j \in J$.

Given an allocation $(\vecc{S}_I, \vecc{Z}_J)$ and a pricing system $\vecc p$, each buyer $i \in I$ sends the appropriate amount of money $p_{ij}(\vecc{s}_{ij})$ to the clearinghouse for each bundle $\vecc{s}_{ij}$ they request from sellers $j \in J$.
Then, the clearinghouse sends payments of $p_{ij}(\vecc{z}_{ij})$ to the sellers $j\in J$ for every bundle $z_{ij}$ they offer to buyers $i\in I$.
With a slight abuse of notation, we define the cumulative prices associated with the allocation $(\vecc{S}_I, \vecc{Z}_J)$ as $p_i = p_i(\vecc{S}_i) = \sum_{j \in J} p_{ij}(\vecc{s}_{ij})$ and $p_j = p_j(\vecc{Z}_j) = \sum_{i \in I} p_{ij}(\vecc{z}_{ij})$.
An essential assumption of our model is that the \emph{budget balance} condition holds, which asserts that the clearinghouse is financially neutral and does not take or inject any money into the market, i.e.,
$ \sum_{i \in I} p_i = \sum_{j \in J} p_j$.

\subsection{Market Settings}\label{secapp:formal_market-settings}

Now, we are ready to formally introduce the market settings that are analyzed in this work.
These settings differ in the possible alterations the clearinghouse can make to the packages it receives and in the restrictions on the pricing system.
The first four settings were first defined in \cite{BikhchandaniOstroy:ThePackageAssignmentModel}, whereas the fifth setting was introduced in \cite{BichlerWaldherr:CorePricingCombExchange}.
While the former states all settings in terms of inequalities (for the purpose of LP duality), we directly state the equality versions of the admissibility conditions.

\subsubsection*{First-Order Setting}
This setting gives the most power to the clearinghouse, which can both repackage and reassign the bundles, while any package labels are completely disregarded.
The admissibility of an allocation $(\vecc{S}_I, \vecc{Z}_J)$ therefore reduces to every individual item being offered and requested at the same count, i.e.,
\begin{equation*}
  \sum_{\vecc{\omega} \leq \vecc{\Omega}} \left(\sum_{\{(i,j) \, : \, \vecc{s}_{ij} = \vecc{\omega}\}} \vecc{s}_{ij}\right) = \sum_{\vecc{\omega} \leq \vecc{\Omega}} \left(\sum_{\{(i,j) \, : \, \vecc{z}_{ij} = \vecc{\omega}\}} \vecc{z}_{ij}\right).
\end{equation*}
Note that this is a \emph{system} of equalities with a scalar equality for every distinct item, as the bundles are given as vectors of item counts. It can alternatively be stated in terms of cumulative bundles as $$\sum_{i \in I} \vecc{s}_{i} = \sum_{j \in J} \vecc{z}_{j}.$$

A first-order pricing system $\vecc{p}$ is based on common item prices, independently of how they are bundled.
The price for every bundle is determined as a linear combination of the bundle with a shared price vector $\vecc{p} \in \R_{\geq 0}^{\card{A}}$, where $p_a$ is the price of one unit of item $a$, i.e., $$p_{ij}(\vecc{\omega}) = \vecc{p} \cdot \vecc{\omega} = \sum_{a\in A} p_a \omega_a$$ for every $\vecc{\omega} \leq \vecc{\Omega}$, $i \in I$ and $j \in J$.

\subsubsection*{Second-Order Setting}
Here, the clearinghouse can reassign entire bundles by disregarding the package labels of both sides, but not open any package to create new bundles.
An allocation $(\vecc{S}_I, \vecc{Z}_J)$ is admissible in this setting if and only if all the bundles requested by the buyers are offered by the sellers, i.e.,
\begin{equation*}
  \sum_{\{(i,j) \, : \, \vecc{s}_{ij} = \vecc{\omega}\}} \vecc{s}_{ij} = \sum_{\{(i,j) \, : \, \vecc{z}_{ij} = \vecc{\omega}\}} \vecc{z}_{ij}, \quad \forall \vecc{\omega} \leq \vecc{\Omega}.
\end{equation*}

The second-order pricing system is bundle-specific, i.e., $p_{ij}(\vecc{\omega}) = p(\vecc{\omega})$ for every $\vecc{\omega} \leq \vecc{\Omega}$, $i \in I$ and $j \in J$, and all agents share the same price for offering or requesting any distinct bundle.

\subsubsection*{Third-Order Setting}
In this setting, the labels on the packages are taken into consideration for only one side of the market.
There are, therefore, two variants of this setting, but, like \cite{BikhchandaniOstroy:ThePackageAssignmentModel}, we restrict ourselves to considering the \emph{buyers'} labels, i.e., every bundle has a buyer's label on it and can only be sent to that specific buyer.
The clearinghouse can, however, put a different seller's label on a package. If a buyer requests a specific bundle from seller $j$, but it is only offered from seller $j'$, the clearinghouse can adapt the label on this package to $j$ and thus fulfill the buyer's request.

An allocation $(\vecc{S}_I, \vecc{Z}_J)$ is admissible in this setting if all the bundles requested from buyers are offered \emph{to them} by some seller, i.e.,
\begin{equation*}
  \sum_{\{j \, : \, \vecc{s}_{ij} = \vecc{\omega}\}} \vecc{s}_{ij} = \sum_{\{j \, : \, \vecc{z}_{ij} = \vecc{\omega}\}} \vecc{z}_{ij}, \quad \forall \vecc{\omega} \leq \vecc{\Omega}, \, i \in I.
\end{equation*}

The corresponding third-order pricing system is bundle- and buyer-specific, i.e., $p_{ij}(\vecc{\omega}) = p_i(\vecc{\omega})$ for every $\vecc{\omega} \leq \vecc{\Omega}$, $i \in I$ and $j \in J$.
Therefore, the same bundle can only sell for a different price if it is directed to another buyer.

\subsubsection*{Fourth-Order Setting}
In the fourth and last setting from \cite{BikhchandaniOstroy:ThePackageAssignmentModel}, both the sellers' and buyers' labels apply, and the clearinghouse does not repackage or relabel any bundles.
An allocation $(\vecc{S}_I, \vecc{Z}_J)$ is admissible if and only if both sides of the allocation correspond directly, i.e., the sellers offer the bundles specifically to those buyers who also request those specific bundles from them:
\begin{equation*}
  \vecc{s}_{ij} = \vecc{z}_{ij}, \quad \forall i \in I, \, j \in J.
\end{equation*}
The fourth-order pricing system is fully personalized with bundle-, buyer- and seller-specific prices $p_{ij}(\vecc{\omega})$ for every $\vecc{\omega} \leq \vecc{\Omega}$, $i \in I$ and $j \in J$.

\subsubsection*{Fifth-Order Setting}
This setting was introduced in \cite{BichlerWaldherr:CorePricingCombExchange} and is different from the previous four in that the agents only trade their cumulative bundle with the clearinghouse ($\vecc{s}_i$ for the buyers and $\vecc{z}_j$ for the sellers), without any indication of labels.
The clearinghouse can repackage the submitted bundles at the item level, similar to the first-order setting.
Therefore, an allocation $(\vecc{S}_I, \vecc{Z}_J) = (\vecc{s}_I, \vecc{z}_J)$ is admissible if and only if all the items requested by the buyers are provided by the sellers, i.e.,
\begin{equation*}
  \sum_{i \in I} \vecc{s}_{i} = \sum_{j \in J} \vecc{z}_{j}.
\end{equation*}
The fifth-order pricing system, however, is given as fully personalized bundle- and agent-specific prices, i.e., $p_k(\vecc{\omega})$ for every $\vecc{\omega} \leq \vecc{\Omega}$ and $k \in N$.
Every agent only sends or receives a single payment to the clearinghouse, so the abuse of notation with the cumulative prices $p_k$ from above is conveniently consistent with this structurally different setting.

\section{Summary of Equivalence Results}\label{secapp:tables}
The equivalence results across the different settings are summarized in Tables~\ref{table:to1} to \ref{table:to5}. The symbol $\Pi$ indicates that every assignment in the market setting corresponding to a given row admits a payoff-equivalent assignment in the market setting corresponding to the respective column. Similarly, the letters P and M denote price equivalence and market equivalence, respectively.

\begin{table}[ht]
  \centering
  \small
  \begin{tabular}{|cc|ccccc|}
    \hline
    & To
    & \multicolumn{5}{c|}{{${1^\mathrm{st}}$}} \\
    \cline{3-7}

    From
    &
    & \multicolumn{1}{c|}{{Full}}
    & \multicolumn{1}{c|}{{Separate}}
    & \multicolumn{1}{c|}{{Buyers}}
    & \multicolumn{1}{c|}{{Sellers}}
    & \multicolumn{1}{c|}{{No}} \\
    \hline

    \multirow{5}{*}{{${1^\mathrm{st}}$ }}
    & \multicolumn{1}{|c|}{{Full}}
    & \multicolumn{1}{c|}{}
    & \multicolumn{1}{c|}{$\Pi$}
    &
    & \multirow{3}{*}{}
    & \\
    \cline{2-2}\cline{4-4}
    & \multicolumn{1}{|c|}{{Separate}}
    & \multicolumn{2}{c|}{\multirow{4}{*}{M}}
    &
    &
    & \\
    \cline{2-2}\cline{5-5}
    & \multicolumn{1}{|c|}{{Buyers}}
    &
    &
    & \multicolumn{1}{c|}{}
    &
    & \\
    \cline{2-2}\cline{5-6}
    & \multicolumn{1}{|c|}{{Sellers}}
    &
    & \multicolumn{1}{c|}{}
    & \multicolumn{1}{c|}{}
    & \multicolumn{1}{c|}{}
    & \\
    \cline{2-2}\cline{5-5}\cline{7-7}
    & \multicolumn{1}{|c|}{{No}}
    &
    &
    &
    &
    & \\
    \hline

    {\makecell{${2^\mathrm{nd}}$  \\ {${3^\mathrm{rd}}$ } \\ {${4^\mathrm{th}}$ } \\ {${5^\mathrm{th}}$ }}}
    & \multicolumn{1}{|c|}{{All Side Payments}}
    & \multicolumn{2}{c|}{$\Pi$}
    &
    &
    & \\
    \hline
  \end{tabular}

  \caption{Equivalence to first-order admissible assignments under different side-payment restrictions.}
  \label{table:to1}
\end{table}

\begin{table}[ht]
  \centering
  \small
  \begin{tabular}{|cc|ccccc|}
    \hline
    & To
    & \multicolumn{5}{c|}{{${2^\mathrm{nd}}$ }} \\
    \cline{3-7}

    From
    &
    & \multicolumn{1}{c|}{{Full}}
    & \multicolumn{1}{c|}{{Separate}}
    & \multicolumn{1}{c|}{{Buyers}}
    & \multicolumn{1}{c|}{{Sellers}}
    & \multicolumn{1}{c|}{{No}} \\
    \hline

    \multirow{5}{*}{\makecell{{${1^\mathrm{st}}$ } \\ {${2^\mathrm{nd}}$ }}}
    & \multicolumn{1}{|c|}{{Full}}
    & \multicolumn{1}{c|}{}
    & \multicolumn{1}{c|}{$\Pi$}
    &
    & \multirow{3}{*}{}
    & \\
    \cline{2-2}\cline{4-4}
    & \multicolumn{1}{|c|}{{Separate}}
    & \multicolumn{2}{c|}{\multirow{4}{*}{M}}
    &
    &
    & \\
    \cline{2-2}\cline{5-5}
    & \multicolumn{1}{|c|}{{Buyers}}
    &
    &
    & \multicolumn{1}{c|}{}
    &
    & \\
    \cline{2-2}\cline{5-6}
    & \multicolumn{1}{|c|}{{Sellers}}
    &
    & \multicolumn{1}{c|}{}
    & \multicolumn{1}{c|}{}
    & \multicolumn{1}{c|}{}
    & \\
    \cline{2-2}\cline{5-5}\cline{7-7}
    & \multicolumn{1}{|c|}{{No}}
    &
    &
    &
    &
    & \\
    \hline

    {\makecell{{${3^\mathrm{rd}}$ } \\ {${4^\mathrm{th}}$ } \\ {${5^\mathrm{th}}$ }}}
    & \multicolumn{1}{|c|}{{All Side Payments}}
    & \multicolumn{2}{c|}{$\Pi$}
    &
    &
    & \\
    \hline
  \end{tabular}

  \caption{Equivalence to second-order admissible assignments under different side-payment restrictions.}
  \label{table:to2}
\end{table}

\begin{table}[H]
  \centering
  \small
  \begin{tabular}{|cc|ccccc|}
    \hline
    & To
    & \multicolumn{5}{c|}{{${3^\mathrm{rd}}$ }} \\
    \cline{3-7}

    From
    &
    & \multicolumn{1}{c|}{{Full}}
    & \multicolumn{1}{c|}{{Separate}}
    & \multicolumn{1}{c|}{{Buyers}}
    & \multicolumn{1}{c|}{{Sellers}}
    & \multicolumn{1}{c|}{{No}} \\
    \hline

    \multirow{5}{*}{\makecell{{${1^\mathrm{st}}$ } \\ {${3^\mathrm{rd}}$ }}}
    & \multicolumn{1}{|c|}{{Full}}
    & \multicolumn{1}{c|}{}
    & \multicolumn{1}{c|}{$\Pi$}
    &
    & \multirow{3}{*}{}
    & \\
    \cline{2-2}\cline{4-4}
    & \multicolumn{1}{|c|}{{Separate}}
    & \multicolumn{2}{c|}{\multirow{4}{*}{M}}
    &
    &
    & \\
    \cline{2-2}\cline{5-5}
    & \multicolumn{1}{|c|}{{Buyers}}
    &
    &
    & \multicolumn{1}{c|}{}
    &
    & \\
    \cline{2-2}\cline{5-6}
    & \multicolumn{1}{|c|}{{Sellers}}
    &
    & \multicolumn{1}{c|}{}
    & \multicolumn{1}{c|}{}
    & \multicolumn{1}{c|}{}
    & \\
    \cline{2-2}\cline{5-5}\cline{7-7}
    & \multicolumn{1}{|c|}{{No}}
    &
    &
    &
    &
    & \\
    \hline

    \multirow{5}{*}{\makecell{{${2^\mathrm{nd}}$ } \\ {${4^\mathrm{th}}$ } \\ {${5^\mathrm{th}}$ }}}
    & \multicolumn{1}{|c|}{{Full}}
    & \multicolumn{2}{c|}{\multirow{5}{*}{$\Pi$}}
    &
    &
    & \\
    \cline{2-2}
    & \multicolumn{1}{|c|}{{Separate}}
    &
    & \multicolumn{1}{c|}{}
    &
    &
    & \\
    \cline{2-2}
    & \multicolumn{1}{|c|}{{Buyers}}
    &
    & \multicolumn{1}{c|}{}
    &
    &
    & \\
    \cline{2-2}\cline{6-6}
    & \multicolumn{1}{|c|}{{Sellers}}
    &
    & \multicolumn{1}{c|}{}
    & \multicolumn{1}{c|}{}
    & \multicolumn{1}{c|}{\multirow{2}{*}{$\Pi$}}
    & \\
    \cline{2-2}
    & \multicolumn{1}{|c|}{{No}}
    & \multicolumn{1}{c}{}
    & \multicolumn{1}{c|}{}
    & \multicolumn{1}{c|}{}
    & \multicolumn{1}{c|}{}
    & \\
    \hline
  \end{tabular}

  \caption{Equivalence to third-order admissible assignments under different side-payment restrictions.}
  \label{table:to3}
\end{table}

\begin{table}[H]
  \centering
  \small
  \begin{tabular}{|cc|ccccc|}
    \hline
    & To
    & \multicolumn{5}{c|}{{${4^\mathrm{th}}$ }} \\
    \cline{3-7}

    From
    &
    & \multicolumn{1}{c|}{{Full}}
    & \multicolumn{1}{c|}{{Separate}}
    & \multicolumn{1}{c|}{{Buyers}}
    & \multicolumn{1}{c|}{{Sellers}}
    & \multicolumn{1}{c|}{{No}} \\
    \hline

    \multirow{5}{*}{\makecell{{${1^\mathrm{st}}$ } \\ {${3^\mathrm{rd}}$ } \\ {${4^\mathrm{th}}$ }}}
    & \multicolumn{1}{|c|}{{Full}}
    & \multicolumn{1}{c|}{}
    & \multicolumn{1}{c|}{$\Pi$}
    &
    & \multirow{3}{*}{}
    & \\
    \cline{2-2}\cline{4-4}
    & \multicolumn{1}{|c|}{{Separate}}
    & \multicolumn{2}{c|}{\multirow{4}{*}{M}}
    &
    &
    & \\
    \cline{2-2}\cline{5-5}
    & \multicolumn{1}{|c|}{{Buyers}}
    &
    &
    & \multicolumn{1}{c|}{}
    &
    & \\
    \cline{2-2}\cline{5-6}
    & \multicolumn{1}{|c|}{{Sellers}}
    &
    & \multicolumn{1}{c|}{}
    & \multicolumn{1}{c|}{}
    & \multicolumn{1}{c|}{}
    & \\
    \cline{2-2}\cline{5-5}\cline{7-7}
    & \multicolumn{1}{|c|}{{No}}
    &
    &
    &
    &
    & \\
    \hline

    \multirow{5}{*}{{${2^\mathrm{nd}}$ }}
    & \multicolumn{1}{|c|}{{Full}}
    & \multicolumn{1}{c|}{}
    & \multicolumn{1}{c|}{$\Pi$}
    &
    & \multirow{3}{*}{}
    & \\
    \cline{2-2}\cline{4-4}
    & \multicolumn{1}{|c|}{{Separate}}
    & \multicolumn{2}{c|}{\multirow{4}{*}{P}}
    &
    &
    & \\
    \cline{2-2}\cline{5-5}
    & \multicolumn{1}{|c|}{{Buyers}}
    &
    &
    & \multicolumn{1}{c|}{}
    &
    & \\
    \cline{2-2}\cline{5-6}
    & \multicolumn{1}{|c|}{{Sellers}}
    &
    & \multicolumn{1}{c|}{}
    & \multicolumn{1}{c|}{}
    & \multicolumn{1}{c|}{}
    & \\
    \cline{2-2}\cline{5-5}\cline{7-7}
    & \multicolumn{1}{|c|}{{No}}
    &
    &
    &
    &
    & \\
    \hline

    \multirow{5}{*}{{${5^\mathrm{th}}$ }}
    & \multicolumn{1}{|c|}{{Full}}
    & \multicolumn{2}{c|}{\multirow{5}{*}{$\Pi$}}
    &
    &
    & \\
    \cline{2-2}
    & \multicolumn{1}{|c|}{{Separate}}
    &
    & \multicolumn{1}{c|}{}
    &
    &
    & \\
    \cline{2-2}\cline{5-5}
    & \multicolumn{1}{|c|}{{Buyers}}
    &
    &
    & \multicolumn{1}{c|}{}
    &
    & \\
    \cline{2-2}\cline{5-6}
    & \multicolumn{1}{|c|}{{Sellers}}
    &
    & \multicolumn{1}{c|}{}
    & \multicolumn{1}{c|}{}
    & \multicolumn{1}{c|}{}
    & \\
    \cline{2-2}\cline{5-5}
    & \multicolumn{1}{|c|}{{No}}
    &
    &
    &
    & \multicolumn{1}{c|}{}
    & \\
    \hline
  \end{tabular}

  \caption{Equivalence to fourth-order admissible assignments under different side-payment restrictions.}
  \label{table:to4}
\end{table}

\begin{table}[H]
  \centering
  \begin{tabular}{|cc|ccccc|}
    \hline
    & To
    & \multicolumn{5}{c|}{{${5^\mathrm{th}}$ }} \\
    \cline{3-7}

    From
    &
    & \multicolumn{1}{c|}{{Full}}
    & \multicolumn{1}{c|}{{Separate}}
    & \multicolumn{1}{c|}{{Buyers}}
    & \multicolumn{1}{c|}{{Sellers}}
    & \multicolumn{1}{c|}{{No}} \\
    \hline

    \multirow{5}{*}{\makecell{{${1^\mathrm{st}}$ } \\ {${2^\mathrm{nd}}$ } \\ {${3^\mathrm{rd}}$ } \\ {${4^\mathrm{th}}$ }}}
    & \multicolumn{1}{|c|}{{Full}}
    & \multicolumn{1}{c|}{}
    & \multicolumn{1}{c|}{$\Pi$}
    &
    & \multirow{3}{*}{}
    & \\
    \cline{2-2}\cline{4-4}
    & \multicolumn{1}{|c|}{{Separate}}
    & \multicolumn{2}{c|}{\multirow{4}{*}{P}}
    &
    &
    & \\
    \cline{2-2}\cline{5-5}
    & \multicolumn{1}{|c|}{{Buyers}}
    &
    &
    & \multicolumn{1}{c|}{}
    &
    & \\
    \cline{2-2}\cline{5-6}
    & \multicolumn{1}{|c|}{{Sellers}}
    &
    & \multicolumn{1}{c|}{}
    & \multicolumn{1}{c|}{}
    & \multicolumn{1}{c|}{}
    & \\
    \cline{2-2}\cline{5-5}\cline{7-7}
    & \multicolumn{1}{|c|}{{No}}
    &
    &
    &
    &
    & \\
    \hline

    \multirow{5}{*}{{${5^\mathrm{th}}$ }}
    & \multicolumn{1}{|c|}{{Full}}
    & \multicolumn{1}{c|}{}
    & \multicolumn{1}{c|}{$\Pi$}
    &
    & \multirow{3}{*}{}
    & \\
    \cline{2-2}\cline{4-4}
    & \multicolumn{1}{|c|}{{Separate}}
    & \multicolumn{2}{c|}{\multirow{4}{*}{M}}
    &
    &
    & \\
    \cline{2-2}\cline{5-5}
    & \multicolumn{1}{|c|}{{Buyers}}
    &
    &
    & \multicolumn{1}{c|}{}
    &
    & \\
    \cline{2-2}\cline{5-6}
    & \multicolumn{1}{|c|}{{Sellers}}
    &
    & \multicolumn{1}{c|}{}
    & \multicolumn{1}{c|}{}
    & \multicolumn{1}{c|}{}
    & \\
    \cline{2-2}\cline{5-5}\cline{7-7}
    & \multicolumn{1}{|c|}{{No}}
    &
    &
    &
    &
    & \\
    \hline
  \end{tabular}

  \caption{Equivalence to fifth-order admissible assignments under different side-payment restrictions.}
  \label{table:to5}
\end{table}

\section{Examples}

In this section, we provide various examples of markets that serve as counterexamples and justification for the stated non-equivalences between different settings.

\subsection{Equivalence Counterexamples}
The following is an outline of the counterexamples illustrated in this section, which cover possible equivalence implications that \emph{fail} to hold across the different market settings. We proceed by considering the settings in order of increasing levels of resale restrictions, and conclude with examples that account for the remaining cases.
\begin{enumerate}
  \item Example~\ref{example:2345noto1(I)SingNo} shows that, in some economies $\mathcal{E}$, there exist feasible $h^{\mathrm{th}}$-order admissible assignments for $h \in \{2, \dots, 5\}$ and any type of side payments such that
    \begin{enumerate}
      \item they are not price equivalent to any feasible first-order admissible assignment with a full or separate side-payment scheme, and
      \item they are not payoff equivalent to any feasible first-order admissible assignment with a (buyers' or sellers') single or no side-payment scheme.
    \end{enumerate}

  \item Example~\ref{example:345noto2(IPPiM)SingNo} shows that, in some economies $\mathcal{E}$, there exist feasible $h^{\mathrm{th}}$-order admissible assignments for $h \in \{3, 4, 5\}$ and any type of side payments such that
    \begin{enumerate}
      \item they are not price equivalent to any feasible second-order admissible assignment with a full or separate side-payment scheme, and
      \item they are not payoff equivalent to any feasible second-order admissible assignment with a (buyers' or sellers') single or no side-payment scheme.
    \end{enumerate}

  \item Example~\ref{example:34noto2(M)} shows that, in some economies $\mathcal{E}$, there exist feasible second-order admissible assignments with any type of side payments such that
    \begin{enumerate}
      \item they are not price equivalent to any feasible third-order admissible assignment with a full, separate, or sellers' single side payment scheme,
      \item they are not payoff equivalent to any feasible third-order admissible assignment with a buyers' single or no side-payment scheme, and
      \item they are not market equivalent to any feasible fourth-order admissible assignment with any type of side-payment scheme.
    \end{enumerate}

  \item Example~\ref{example:45noto3(IPPiM)BuyNo} shows that, in some economies $\mathcal{E}$, there exist feasible $h^{\mathrm{th}}$-order admissible assignments for $h \in \{4, 5\}$ and any type of side payments such that
    \begin{enumerate}
      \item they are not price equivalent to any feasible third-order admissible assignment with a full, separate, or sellers' single side payment scheme,
      \item they are not payoff equivalent to any feasible third-order admissible assignment with a buyers' single or no side-payment scheme.
    \end{enumerate}

  \item Example~\ref{example:5noto4(IPi)No} shows that, in some economies $\mathcal{E}$, there exist feasible fifth-order admissible assignments with any type of side payments such that
    \begin{enumerate}
      \item they are not price equivalent to any feasible fourth-order admissible assignment with a full, separate, or (buyers' or sellers') single side payment scheme, and
      \item they are not payoff equivalent to any feasible fourth-order admissible assignment with a no-side payment scheme.
    \end{enumerate}

  \item Example~\ref{example:FullnottoSep(P)} shows that, in some economies $\mathcal{E}$, there exist feasible $h^{\mathrm{th}}$-order admissible assignments with full side payments that are not price equivalent to any feasible $\tilde h^{\mathrm{th}}$-order admissible assignment with (at most) separate side payments, for every $h, \tilde h \in \{1, \dots, 5\}$.

  \item Example~\ref{example:1BuySepFullnotto...} shows that, in some economies $\mathcal{E}$, there exists no feasible assignment that is admissible in any of the following settings:
    \begin{enumerate}
      \item first- and second-order settings with sellers' single or no side payments,
      \item third- and fourth-order settings with no side payments,
      \item third- and fourth-order settings with sellers' single side payments, and
      \item fifth-order settings with sellers' single or no side payments,
    \end{enumerate}
    and that is payoff equivalent to any feasible $h^{\mathrm{th}}$-order admissible assignment with $h \in \{1, \dots, 5\}$ and a full, separate, or buyers' single side-payment scheme.

  \item Example~\ref{example:1SellSepFullnotto...} shows that, in some economies $\mathcal{E}$, there exists no feasible assignment that is admissible in any of the following settings:
    \begin{enumerate}
      \item first-, second-, and third-order settings with buyers' single or no side payments,
      \item fourth-order setting with no side payments,
      \item fourth-order setting with buyers' single side payments, and
      \item fifth-order settings with buyers' single or no side payments,
    \end{enumerate}
    and that is payoff equivalent to any feasible $h^{\mathrm{th}}$-order admissible assignment with $h \in \{1, \dots, 5\}$ and a full, separate, or sellers' single side-payment scheme.
\end{enumerate}

\subsubsection{To the First-Order Settings}
\begin{example}\label{example:2345noto1(I)SingNo}
  Consider an economy $\mathcal{E}$ with two buyers $I = \{i_1, i_2\}$ and two sellers $J = \{j_1, j_2\}$. The sellers' initial endowments are $\vecc\Omega_{j_1} \equiv \{a\}$ and $\vecc\Omega_{j_2} \equiv \{a, a\}$, so that the total endowment of the economy is $\vecc\Omega \equiv \{a, a, a\}$. The agents' valuations are reported in Table~\ref{table:2345noto1(I)SingNo_valuations}.

  \begin{table}[ht]
    \centering
    \begin{tabular}{|c|c|c|c|}
      \hline
      $\vecc{\omega}$ & $\{a\}$ & $\{a, a\}$ & $\{a, a, a\}$ \\
      \hline
      $v_{i_1}(\vecc{\omega})$ & $6$ & $6$ & $6$ \\
      \hline
      $v_{i_2}(\vecc{\omega})$ & $0$ & $17$ & $17$ \\
      \hline
      $v_{j_1}(\vecc{\omega})$ & $5$ & $-$ & $-$ \\
      \hline
      $v_{j_2}(\vecc{\omega})$ & $16$ & $16$ & $-$ \\
      \hline
    \end{tabular}

    \caption{Valuations $v_k(\vecc{\omega})$ for each agent $k \in N$ and each bundle $\vecc{\omega} \leq \vecc{\Omega}$ in Example~\ref{example:2345noto1(I)SingNo}.}
    \label{table:2345noto1(I)SingNo_valuations}
  \end{table}

  We first define a feasible assignment that is admissible in every market setting of $\mathcal{E}$ with at least second-order resale restrictions. Let $(\vecc{S}_I, \vecc{Z}_J)$ be the allocation in which agents $i_1$ and $j_1$ trade the bundle $\vecc{s}_{i_1j_1} = \vecc{z}_{i_1j_1} \equiv \{a\}$, while agents $i_2$ and $j_2$ trade the bundle $\vecc{s}_{i_2j_2} = \vecc{z}_{i_2j_2} \equiv \{a, a\}$. The corresponding pricing system $\vecc{p}$ is reported in Table~\ref{table:2345noto1(I)SingNo_pricing}, where $M \gg 0$.

  \begin{table}[ht]
    \centering
    \begin{tabular}{|c|c|c|c|}
      \hline
      $\vecc{\omega}$ & $\{a\}$ & $\{a, a\}$ & $\{a, a, a\}$ \\
      \hline
      $p(\vecc{\omega})$ & $5.5$ & $16.5$ & $M$ \\
      \hline
    \end{tabular}

    \caption{Second-order admissible pricing system $\vecc p$ in Example~\ref{example:2345noto1(I)SingNo}.}
    \label{table:2345noto1(I)SingNo_pricing}
  \end{table}

  Since the allocation is fourth-order admissible and the pricing system is second-order admissible, the assignment $((\vecc{S}_I, \vecc{Z}_J), \vecc p)$ is second-, third-, and fourth-order admissible (and fifth-order admissible if we consider the cumulative equivalent assignment $((\vecc{s}_I, \vecc{z}_J), \vecc p)$).
  It is also admissible in all side-payment settings, since the corresponding $\vecc{Q}_N = \vecc 0$.
  The payoff is non-negative for each agent; in particular, $\pi_k = 0.5$ for every $k \in N$. Therefore, the assignment is feasible in all of the settings mentioned above. A representation of the payment graph is provided in Figure~\ref{figure:2345noto1(I)SingNo_pricing}.

  \begin{figure}[ht]
    \centering
    \begin{tikzpicture}[>=Stealth, thick, node/.style={draw, circle, fill=white, minimum size=0.7cm}]
      \begin{scope}[local bounding box=diagram]
        \node[node] (i1) at (0,1.5) {$i_1$};
        \node[left=0.15cm of i1, font=\small] (text-i1) {
          \begin{tabular}{r|l} $v_{i_1}(\{a\}) = 6$ & \textcolor{TUMBlue}{$\pi_{i_1} = 0.5$}
        \end{tabular}};

        \node[node] (i2) at (0,0) {$i_2$};
        \node[left=0.15cm of i2, font=\small] (text-i2) {
          \begin{tabular}{r|l} $v_{i_2}(\{a, a\}) = 17$ & \textcolor{TUMBlue}{$\pi_{i_2} = 0.5$}
        \end{tabular}};

        \node[node] (j1) at (3.5,1.5) {$j_1$};
        \node[right=0.15cm of j1, font=\small] (text-j1) {
          \begin{tabular}{r|l} \textcolor{TUMBlue}{$\pi_{j_1} = 0.5$} & $v_{j_1}(\{a\}) = 5$
        \end{tabular}};

        \node[node] (j2) at (3.5,0) {$j_2$};
        \node[right=0.15cm of j2, font=\small] (text-j2) {
          \begin{tabular}{r|l} \textcolor{TUMBlue}{$\pi_{j_2} = 0.5$} & $v_{j_2}(\{a\}) = 16$ \\ & $v_{j_2}(\{a, a\}) = 16$
        \end{tabular}};

        \begin{scope}[on background layer]
          \foreach \n/\t in {i1/text-i1, i2/text-i2, j1/text-j1, j2/text-j2}
          \node[fill=gray!10, rounded corners=3mm, inner sep=1.2mm, fit=(\n) (\t)] {};
        \end{scope}

        \draw[->, TUMBlue] (i1) -- node[midway, above, font=\scriptsize, text=TUMBlue] {$p_{i_1j_1} = 5.5$} (j1);
        \draw[->, TUMBlue] (i2) -- node[midway, above, font=\scriptsize, text=TUMBlue] {$p_{i_2j_2} = 16.5$} (j2);

        \begin{scope}[on background layer]
          \node[draw=black, line width=0.5pt, rounded corners=5mm, inner sep=10pt, fit=(diagram)] {};
        \end{scope}
      \end{scope}
    \end{tikzpicture}

    \caption{Graph representation of the assignment $((\vecc{S}_I, \vecc{Z}_J), \vecc{p})$ in Example~\ref{example:2345noto1(I)SingNo}.}
    \label{figure:2345noto1(I)SingNo_pricing}
  \end{figure}

  We consider next a generic first-order admissible assignment $((\vecc{\tilde S}_I, \vecc{\tilde Z}_J), \vecc{\tilde p}, \vecc{\tilde Q}_N) \in \mathcal{F}(\mathcal{E}_1^r)$ with $r \in \{\mathrm{full}, \, \mathrm{sep}, \, \mathrm{buy}, \, \mathrm{sell}, \, \mathrm{no}\}$.
  The first observation is that, for this assignment to be feasible, the agents' cumulative bundles $(\vecc{\tilde s}_I, \vecc{\tilde z}_J)$ must either coincide with the corresponding $\vecc{s}_i$ or $\vecc{z}_j$, or be empty. To see this, suppose by contradiction that buyer $i_2$ requests only the bundle $\{a\}$ from one of the sellers (a similar argument applies if seller $j_2$ offers only a single unit of item $a$). We distinguish two cases:
  \begin{enumerate}
    \item If $i_1$ requests the empty bundle, then admissibility implies that only one of the sellers needs to offer the bundle $\{a\}$. In this case, the total valuation of the buyers is $v_{i_1}(\emptyset) + v_{i_2}(\{a\}) = 0$, whereas the total valuation of the sellers is at least $\min\{v_{j_1}(\{a\}), v_{j_2}(\{a\})\} = 5$ (which corresponds to seller $j_1$ offering bundle $\{a\}$).
    \item If $i_1$ requests the bundle $\{a\}$, then the sellers must offer two units of item $a$. The total valuation of the buyers is now $v_{i_1}(\{a\}) + v_{i_2}(\{a\}) = 6$, while the total valuation of the sellers is at least $\min\{v_{j_1}(\{a\}) + v_{j_2}(\{a\}), v_{j_2}(\{a, a\})\} = 16$ (which corresponds to seller $j_2$ offering two copies of bundle $\{a\}$).
  \end{enumerate}
  In both cases, the social welfare of the allocation is negative, meaning that at least one agent gets a negative payoff. Therefore, in any feasible assignment, $i_2$'s cumulative bundle must be either the empty bundle or $\{a, a\}$ (and analogously for seller $j_2$).
  Following this observation, and since the price of the empty bundle is normalized to zero, a necessary condition for this assignment to yield the same cumulative price or payoff for each agent as $((\vecc{S}_I, \vecc{Z}_J), \vecc p)$ is that $(\vecc{\tilde s}_I, \vecc{\tilde z}_J) = (\vecc{s}_I, \vecc{z}_J)$, which we will always assume in the following.
  Regarding the admissibility of the pricing system, $\vecc{\tilde p}$ must assign a unique price to each item. Hence, if an agent $k \in N$ trades $m$ units of item $a$ in total, their cumulative price is $\tilde{p}_{k} = m\tilde{p}_a$.

  We now analyze the different side-payment restrictions for $((\vecc{\tilde S}_I, \vecc{\tilde Z}_J), \vecc{\tilde p}, \vecc{\tilde Q}_N)$ and establish for which of them there exists no such assignment that is price or payoff equivalent to $((\vecc{S}_I, \vecc{Z}_J), \vecc p)$.

  \paragraph{Full and separate side-payment settings} We have already proved in Lemma~\ref{lemma:12345to1(Pi)FullSep} that, for any assignment in any market setting, there exists a payoff-equivalent first-order admissible assignment with separate (and hence also full) side payments. For price equivalence to hold between $((\vecc{S}_I, \vecc{Z}_J), \vecc p)$ and an assignment $((\vecc{\tilde S}_I, \vecc{\tilde Z}_J), \vecc{\tilde p}, \vecc{\tilde Q}_N) \in \mathcal{F}(\mathcal{E}_1^r)$ with $r \in \{\mathrm{full}, \, \mathrm{sep}\}$, we must have, on the one hand,
  $$\tilde p_a = \tilde p_{i_1} = \tilde p_{j_1} = p_{i_1} = p_{j_1} = 5.5,$$
  and, on the other hand,
  $$2\tilde p_a = \tilde p_{i_2} = \tilde p_{j_2} = p_{i_2} = p_{j_2} = 16.5.$$
  However, there is no price $\tilde p_a \geq 0$ that satisfies both conditions. Therefore, no feasible first-order admissible assignment with full or separate side payments is price-equivalent to $((\vecc{S}_I, \vecc{Z}_J), \vecc p)$.

  \paragraph{Buyers' single side-payment setting} In the buyers' single side-payment setting, the price of item $a$ must be sufficient to pay every seller at least the valuation of their cumulative bundle, since they cannot receive money through side payments. In our case, this implies
  $$\tilde p_a \geq \max\left\{v_{j_1}(\vecc{\tilde z}_{j_1}), \frac{v_{j_2}(\vecc{\tilde z}_{j_2})}{2}\right\} = 8,$$
  which in turn implies that the buyers must pay in total
  $$\tilde p_{i_1} + \tilde p_{i_2} = 3\tilde p_a \geq 24.$$
  However, the total valuation of the buyers is only
  $$v_{i_1}(\vecc{\tilde s}_{i_1}) + v_{i_2}(\vecc{\tilde s}_{i_2}) = 23 < 3\tilde p_a,$$
  so there exists no first-order pricing system that makes $((\vecc{\tilde S}_I, \vecc{\tilde Z}_J), \vecc{\tilde p}, \vecc{\tilde Q}_N)$ a feasible assignment in $\mathcal{E}_1^{\mathrm{buy}}$. Therefore there is no feasible assignment in $\mathcal{E}_1^{\mathrm{buy}}$ that is even item equivalent to $((\vecc{S}_I, \vecc{Z}_J), \vecc p)$.\footnote{In every feasible assignment, if side payments among sellers are allowed, the minimum amount of money they must receive is their total valuation $\sum_{j \in J} v_j(\{\vecc{z}_j\})$, since they can then redistribute the money so that each seller gets a non-negative payoff. If side payments are not allowed, each seller must receive a cumulative price of at least $v_j(\{\vecc{z}_j\})$ to ensure feasibility. The same reasoning applies symmetrically to the buyers.}

  \paragraph{Sellers' single side-payment setting} In the sellers' single side-payment setting, since sellers can redistribute payments among themselves, the total amount of money they receive must be at least
  $$\tilde p_{j_1} + \tilde p_{j_2} = 3\tilde{p}_a \geq v_{j_1}(\vecc{\tilde z}_{j_1}) + v_{j_2}(\vecc{\tilde z}_{j_2}) = 21,$$
  which implies that $\tilde p_a \geq 7$. However, buyer $i_1$ can only pay at most $$\tilde{p}_a \leq v_{i_1}(\vecc{\tilde s}_{i_1}) = 6$$ for one unit of item $a$. Therefore, also in this setting, no first-order admissible pricing that makes $((\vecc{\tilde S}_I, \vecc{\tilde Z}_J), \vecc{\tilde p}, \vecc{\tilde Q}_N)$ a feasible assignment in $\mathcal{E}_1^{\mathrm{sell}}$ exists. Hence, there is no feasible assignment in $\mathcal{E}_1^{\mathrm{sell}}$ that is even item equivalent to $((\vecc{S}_I, \vecc{Z}_J), \vecc p)$.

  \paragraph{No side-payment setting} By combining the price conditions from the previous two settings, the same conclusions apply in the no side-payment setting.
\end{example}

\subsubsection{To the Second-Order Settings}
\begin{example}\label{example:345noto2(IPPiM)SingNo}
  Consider an economy $\mathcal{E}$ with two buyers $I = \{i_1, i_2\}$ and two sellers $J = \{j_1, j_2\}$. The sellers' initial endowments are $\vecc\Omega_{j_1} = \vecc\Omega_{j_2} \equiv \{a\}$, so that the total endowment of the economy is $\vecc\Omega \equiv \{a, a\}$. The agents' valuations are reported in Table~\ref{table:345noto2(IPPiM)SingNo_valuations}.

  \begin{table}[ht]
    \centering
    \begin{tabular}{|c|c|c|}
      \hline
      $\vecc{\omega}$ & $\{a\}$ & $\{a, a\}$ \\
      \hline
      $v_{i_1}(\vecc{\omega})$ & $6$ & $6$ \\
      \hline
      $v_{i_2}(\vecc{\omega})$ & $11$ & $11$ \\
      \hline
      $v_{j_1}(\vecc{\omega})$ & $5$ & $-$\\
      \hline
      $v_{j_2}(\vecc{\omega})$ & $10$ & $-$ \\
      \hline
    \end{tabular}

    \caption{Valuations $v_k(\vecc{\omega})$ for each agent $k \in N$ and each bundle $\vecc{\omega} \leq \vecc{\Omega}$ in Example~\ref{example:345noto2(IPPiM)SingNo}.}
    \label{table:345noto2(IPPiM)SingNo_valuations}
  \end{table}

  We first define a feasible assignment that is admissible in every market setting of $\mathcal{E}$ with at least third-order resale restrictions. Let $(\vecc{S}_I, \vecc{Z}_J)$ be the allocation in which agents $i_1$ and $j_1$ trade the bundle $\vecc{s}_{i_1j_1} = \vecc{z}_{i_1j_1} \equiv \{a\}$, while agents $i_2$ and $j_2$ trade the bundle $\vecc{s}_{i_2j_2} = \vecc{z}_{i_2j_2} \equiv \{a\}$. The corresponding pricing system $\vecc{p}$ is reported in Table~\ref{table:345noto2(IPPiM)SingNo_pricing}, where $M \gg 0$.

  \begin{table}[ht]
    \centering
    \begin{tabular}{|c|c|c|c|}
      \hline
      $\vecc{\omega}$ & $\{a\}$ & $\{a, a\}$ \\
      \hline
      $p_{i_1}(\vecc{\omega})$ & $5.5$ & $M$\\
      \hline
      $p_{i_2}(\vecc{\omega})$ & $10.5$ & $M$\\
      \hline
    \end{tabular}

    \caption{Third-order admissible pricing system $\vecc p$ in Example~\ref{example:345noto2(IPPiM)SingNo}.}
    \label{table:345noto2(IPPiM)SingNo_pricing}
  \end{table}

  Since the allocation is fourth-order admissible and the pricing system is third-order admissible, the assignment $((\vecc{S}_I, \vecc{Z}_J), \vecc p)$ is third- and fourth-order admissible (and fifth-order admissible if we consider the cumulative equivalent $((\vecc{s}_I, \vecc{z}_J), \vecc p)$). It is also admissible in all side-payment settings, since the corresponding $\vecc{Q}_N = \vecc 0$.
  The payoff is non-negative for each agent; in particular, $\pi_k = 0.5$ for every $k \in N$. Therefore, the assignment is feasible in all of the settings mentioned above. A representation of the payment graph is provided in Figure~\ref{figure:345noto2(IPPiM)SingNo_pricing}.

  \begin{figure}[ht]
    \centering
    \begin{tikzpicture}[>=Stealth, thick, node/.style={draw, circle, fill=white, minimum size=0.7cm}]
      \begin{scope}[local bounding box=diagram]
        \node[node] (i1) at (0,1.5) {$i_1$};
        \node[left=0.15cm of i1, font=\small] (text-i1) {
          \begin{tabular}{r|l} $v_{i_1}(\{a\}) = 6$ & \textcolor{TUMBlue}{$\pi_{i_1} = 0.5$}
        \end{tabular}};

        \node[node] (i2) at (0,0) {$i_2$};
        \node[left=0.15cm of i2, font=\small] (text-i2) {
          \begin{tabular}{r|l} $v_{i_2}(\{a\}) = 11$ & \textcolor{TUMBlue}{$\pi_{i_2} = 0.5$}
        \end{tabular}};

        \node[node] (j1) at (3.5,1.5) {$j_1$};
        \node[right=0.15cm of j1, font=\small] (text-j1) {
          \begin{tabular}{r|l} \textcolor{TUMBlue}{$\pi_{j_1} = 0.5$} & $v_{j_1}(\{a\}) = 5$
        \end{tabular}};

        \node[node] (j2) at (3.5,0) {$j_2$};
        \node[right=0.15cm of j2, font=\small] (text-j2) {
          \begin{tabular}{r|l} \textcolor{TUMBlue}{$\pi_{j_2} = 0.5$} & $v_{j_2}(\{a\}) = 10$
        \end{tabular}};

        \begin{scope}[on background layer]
          \foreach \n/\t in {i1/text-i1, i2/text-i2, j1/text-j1, j2/text-j2}
          \node[fill=gray!10, rounded corners=3mm, inner sep=1.2mm, fit=(\n) (\t)] {};
        \end{scope}

        \draw[->, TUMBlue] (i1) -- node[midway, above, font=\scriptsize, text=TUMBlue] {$p_{i_1j_1} = 5.5$} (j1);
        \draw[->, TUMBlue] (i2) -- node[midway, above, font=\scriptsize, text=TUMBlue] {$p_{i_2j_2} = 10.5$} (j2);

        \begin{scope}[on background layer]
          \node[draw=black, line width=0.5pt, rounded corners=5mm, inner sep=10pt, fit=(diagram)] {};
        \end{scope}
      \end{scope}
    \end{tikzpicture}

    \caption{Graph representation of the assignment $((\vecc{S}_I, \vecc{Z}_J), \vecc p)$ in Example~\ref{example:345noto2(IPPiM)SingNo}.}
    \label{figure:345noto2(IPPiM)SingNo_pricing}
  \end{figure}

  We consider next a generic second-order admissible assignment $((\vecc{\tilde S}_I, \vecc{\tilde Z}_J), \vecc{\tilde p}, \vecc{\tilde Q}_N) \in \mathcal{F}(\mathcal{E}_2^r)$ with $r \in \{\mathrm{full}, \, \mathrm{sep}, \, \mathrm{buy}, \, \mathrm{sell}, \, \mathrm{no}\}$.
  The only second-order admissible allocation $(\vecc{\tilde S}_I, \vecc{\tilde Z}_J)$ of interest is $(\vecc{S}_I, \vecc{Z}_J)$ itself; otherwise, at least one agent would receive an empty cumulative bundle, and their valuation and cumulative price would be zero.
  For the corresponding pricing system $\vecc{\tilde p}$ to be second-order admissible, all agents $k \in N$ must pay or receive the same cumulative price $\tilde p_k = \tilde p(\{a\})$, since they all trade the same bundle.

  We now analyze the different side-payment restrictions for $((\vecc{S}_I, \vecc{Z}_J), \vecc{\tilde p}, \vecc{\tilde Q}_N)$ and establish for which of them there exists no such assignment that is price or payoff equivalent to $((\vecc{S}_I, \vecc{Z}_J), \vecc p)$.

  \paragraph{Full and separate side-payment settings} We have already proved in Lemma~\ref{lemma:12345to1(Pi)FullSep} that, for any assignment in any market setting, there exists a payoff-equivalent second-order admissible assignment with separate (and hence also full) side payments. For price equivalence to hold between $((\vecc{S}_I, \vecc{Z}_J), \vecc p)$ and an assignment $((\vecc{S}_I, \vecc{Z}_J), \vecc{\tilde p}, \vecc{\tilde Q}_N) \in \mathcal{F}(\mathcal{E}_2^r)$ with $r \in \{\mathrm{full}, \, \mathrm{sep}\}$, we must have, on the one hand,
  $$\tilde p(\{a\}) = \tilde p_{i_1} = \tilde p_{j_1} = p_{i_1} = p_{j_1} = 5.5$$
  and, on the other hand,
  $$\tilde p(\{a\}) = \tilde p_{i_2} = \tilde p_{j_2} = p_{i_2} = p_{j_2} = 10.5.$$
  However, there is no price $\tilde p(\{a\}) \geq 0$ that satisfies both conditions. Therefore, no feasible second-order admissible assignment with full or separate side payments is price-equivalent to $((\vecc{S}_I, \vecc{Z}_J), \vecc p)$.

  \paragraph{Buyers' single side-payment setting} In the buyers' single side-payment setting, both sellers need to receive the same amount of money since they are offering the same bundle. For the allocation $(\vecc{\tilde S}_I, \vecc{\tilde Z}_J)$ to be priceable, the cumulative price of each seller must then be at least
  $$\tilde p_{j_1} = \tilde p_{j_2} = \tilde p(\{a\}) \geq \max\{v_{j_1}(\vecc{\tilde z}_{j_1}), v_{j_2}(\vecc{\tilde z}_{j_2})\} = 10.$$
  But the total valuation of the buyers is only
  $$v_{i_1}(\vecc{\tilde s}_{i_1}) + v_{i_2}(\vecc{\tilde s}_{i_2}) = 17 < 2\tilde p(\{a\}) = 20,$$
  so there is no feasible assignment in $\mathcal{E}_2^{\mathrm{buy}}$ that is even item equivalent to $((\vecc{S}_I, \vecc{Z}_J), \vecc p)$.

  \paragraph{Sellers' single side-payment setting} If we allow for side payments only on the sellers' side, the minimum total price the buyers must pay is given by the sum of the sellers' valuations
  $$\tilde p_{i_1} + \tilde p_{i_2} = \tilde p_{j_1} + \tilde p_{j_2} \geq v_{j_1}(\vecc{\tilde z}_{j_1}) + v_{j_2}(\vecc{\tilde z}_{j_2}) = 15,$$
  which means that
  $$\tilde p_{i_1} = \tilde p_{i_2} = \tilde p(\{a\}) \geq 7.5.$$
  However, buyer $i_1$'s valuation for bundle $\{a\}$ is only $v_{i_1}(\vecc{\tilde s}_{i_1}) = 6 < 7.5$. Hence, there is no feasible assignment in $\mathcal{E}_2^{\mathrm{sell}}$ that is item equivalent to $((\vecc{S}_I, \vecc{Z}_J), \vecc p)$.

  \paragraph{No side-payment setting} Finally, with no side payments allowed, combining the constraints on $\tilde p(\{a\})$ of the two previous settings, we can reach the same conclusions.
\end{example}

\subsubsection{From the Second to the Third- and Fourth-Order Settings}
\begin{example}\label{example:34noto2(M)}
  Consider an economy $\mathcal{E}$ with two buyers $I = \{i_1, i_2\}$ and four sellers $J = \{j_1, \dots, j_4\}$, where $i_2$, $j_3$, and $j_4$ are agents that only request or offer empty bundles to all the other agents.\footnote{This is not a structural feature that is used to establish this example, but just a simplification to emphasize the essential structure of this market.} The initial endowments of the sellers are $\vecc\Omega_{j_1} = \vecc\Omega_{j_2} \equiv \{a, b, c\}$ and $\vecc\Omega_{j_3} = \vecc\Omega_{j_4} \equiv \emptyset$, so that the total endowment of the economy is $\vecc\Omega \equiv \{a, a, b, b, c, c\}$. The agents' valuations are reported in Table~\ref{table:2noto3(IPPi)BuyNo_valuations}.

  \begin{table}[ht]
    \centering
    \begin{tabular}{|c|c|c|c|c|c|c|c|c|c|}
      \hline
      $\vecc{\omega}$ & $\{a\}$ & $\{b\}$ & $\{c\}$ & $\{a, b\}$ & $\{a, c\}$ & $\{b, c\}$ & $\{a, b, c\}$ & Others & $\vecc{\Omega}$ \\
      \hline
      $v_{i_1}(\vecc{\omega})$ & $0$ & $0$ & $0$ & $0$ & $0$ & $0$ & $0$ & $0$ & $27$ \\
      \hline
      $v_{i_2}(\vecc{\omega})$ & $0$ & $0$ & $0$ & $0$ & $0$ & $0$ & $0$ & $0$ & $0$ \\
      \hline
      $v_{j_1}(\vecc{\omega})$ & $5$ & $5$ & $5$ & $10$ & $10$ & $5$ & $10$ & $-$ & $-$ \\
      \hline
      $v_{j_2}(\vecc{\omega})$ & $5$ & $5$ & $5$ & $6$ & $10$ & $10$ & $14$ & $-$ & $-$ \\
      \hline
      $v_{j_3}(\vecc{\omega})$ & $-$ & $-$ & $-$ & $-$ & $-$ & $-$ & $-$ & $-$ & $-$ \\
      \hline
      $v_{j_4}(\vecc{\omega})$ & $-$ & $-$ & $-$ & $-$ & $-$ & $-$ & $-$ & $-$ & $-$ \\
      \hline
    \end{tabular}

    \caption{Valuations $v_k(\vecc{\omega})$ for each agent $k \in N$ and each bundle $\vecc{\omega} \leq \vecc{\Omega}$ in Example~\ref{example:34noto2(M)}.}
    \label{table:2noto3(IPPi)BuyNo_valuations}
  \end{table}

  Consider the allocation $(\vecc{S}_I, \vecc{Z}_J)$ such that:
  \begin{enumerate}
    \item buyer $i_1$ requests bundles $\vecc{s}_{i_1j_1} \equiv \{a\}$, $\vecc{s}_{i_1j_2} \equiv \{a, b\}$, $\vecc{s}_{i_1j_3} \equiv \{b, c\}$ and $\vecc{s}_{i_1j_4} \equiv \{c\}$,
    \item seller $j_1$ offers bundles $\vecc{z}_{i_1j_1} \equiv \{a\}$ and $\vecc{z}_{i_2j_1} \equiv \{b, c\}$, and
    \item seller $j_2$ offers bundles $\vecc{z}_{i_1j_2} \equiv \{a, b\}$ and $\vecc{z}_{i_2j_2} \equiv \{c\}$.
  \end{enumerate}
  After the relabeling of the clearinghouse, bundle $\vecc{z}_{i_2j_1}$ becomes $\vecc{s}_{i_1j_3}$ and bundle $\vecc{z}_{i_2j_2}$ becomes $\vecc{s}_{i_1j_4}$, so that $i_1$ is effectively trading two distinct individual bundles with $j_1$ and two distinct individual bundles with $j_2$. The corresponding pricing system $\vecc{p}$ is reported in Table~\ref{table:2noto3(IPPi)BuyNo_pricing}, where $M \gg 0$.

  \begin{table}[ht]
    \centering
    \begin{tabular}{|c|c|c|c|c|c|c|c|c|c|}
      \hline
      $\vecc{\omega}$ & $\{a\}$ & $\{b\}$ & $\{c\}$ & $\{a, b\}$ & $\{a, c\}$ & $\{b, c\}$ & $\{a, b, c\}$ & Others \\
      \hline
      $p(\vecc{\omega})$ & $5$ & $M$ & $7$ & $8$ & $M$ & $6$ & $M$ & $M$ \\
      \hline
    \end{tabular}

    \caption{Second-order admissible pricing system $\vecc p$ in Example~\ref{example:34noto2(M)}.}
    \label{table:2noto3(IPPi)BuyNo_pricing}
  \end{table}

  The assignment $((\vecc{S}_I, \vecc{Z}_J), \vecc p)$ is second-order admissible, and it is admissible in all the side-payment settings since $\vecc{Q}_N = \vecc 0$. The payoff is $\pi_k = 1$ for $k \in \{i_1, j_1, j_2\}$ and $\pi_k = 0$ for $k \in \{i_2, j_3, j_4\}$, so that the assignment is feasible in any of the settings considered above. We provide a representation of the payment graph in Figure~\ref{figure:2noto3(IPPi)BuyNo}.

  \begin{figure}[ht]
    \centering
    \begin{tikzpicture}[>=Stealth, thick, node/.style={draw, circle, fill=white, minimum size=0.7cm}]
      \begin{scope}[local bounding box=diagram]
        \node[node] (i1) at (0,1.5) {$i_1$};
        \node[left=0.15cm of i1, font=\small] (text-i1) {
          \begin{tabular}{r|l} $v_{i_1}(\vecc{\Omega}) = 27$ & \textcolor{TUMBlue}{$\pi_{i_1} = 1$}
        \end{tabular}};

        \node[node] (i2) at (0,-1.35) {$i_2$};
        \node[left=0.15cm of i2, font=\small] (text-i2) {
          \begin{tabular}{l} \textcolor{TUMBlue}{$\pi_{i_2} = 0$}
        \end{tabular}};

        \node[node] (j1) at (5,2.75) {$j_1$};
        \node[right=0.15cm of j1, font=\small] (text-j1) {
          \begin{tabular}{r|l} \textcolor{TUMBlue}{$\pi_{j_1} = 1$} & $v_{j_1}(\{a\}) = 5$ \\ & $v_{j_1}(\{b, c\}) = 5$ \\ & $v_{j_1}(\{a, b, c\}) = 10$
        \end{tabular}};

        \node[node] (j2) at (5,0.5) {$j_2$};
        \node[right=0.15cm of j2, font=\small] (text-j2) {
          \begin{tabular}{r|l} \textcolor{TUMBlue}{$\pi_{j_2} = 1$} & $v_{j_2}(\{c\}) = 5$ \\ & $v_{j_2}(\{a, b\}) = 6$ \\ & $v_{j_2}(\{a, b, c\}) = 14$
        \end{tabular}};

        \node[node] (j3) at (5,-1.35) {$j_3$};
        \node[right=0.15cm of j3, font=\small] (text-j3) {
          \begin{tabular}{l} \textcolor{TUMBlue}{$\pi_{j_3} = 0$}
        \end{tabular}};

        \node[node] (j4) at (5,-2.85) {$j_4$};
        \node[right=0.15cm of j4, font=\small] (text-j4) {
          \begin{tabular}{l} \textcolor{TUMBlue}{$\pi_{j_4} = 0$}
        \end{tabular}};

        \begin{scope}[on background layer]
          \foreach \n/\t in {i1/text-i1, i2/text-i2, j1/text-j1, j2/text-j2, j3/text-j3, j4/text-j4}
          \node[fill=gray!10, rounded corners=3mm, inner sep=1.2mm, fit=(\n) (\t)] {};
        \end{scope}

        \draw[->, TUMBlue] (i1) -- node[sloped, pos=0.5, above, font=\scriptsize, text=TUMBlue] {$p_{i_1j_1} = 5$} (j1);
        \draw[->, TUMBlue] (i1) -- node[sloped, pos=0.4, above, font=\scriptsize, text=TUMBlue] {$p_{i_1j_2} = 8$} (j2);
        \draw[->, draw=gray!70] (i1) -- node[sloped, pos=0.8, below, font=\scriptsize, text=gray] {$p_{i_1j_3} = 6$} (j3);
        \draw[->, draw=gray!70] (i1) -- node[sloped, pos=0.75, below, font=\scriptsize, text=gray] {$p_{i_1j_4} = 7$} (j4);

        \draw[->, draw=gray!70] (i2) -- node[sloped, pos=0.17, above, font=\scriptsize, text=gray] {$p_{i_2j_1} = 6$} (j1);
        \draw[->, draw=gray!70] (i2) -- node[sloped, pos=0.25, below, font=\scriptsize, text=gray] {$p_{i_2j_2} = 7$} (j2);

        \begin{scope}[on background layer]
          \node[draw=black, line width=0.5pt, rounded corners=5mm, inner sep=10pt, fit=(diagram)] {};
        \end{scope}
      \end{scope}
    \end{tikzpicture}

    \caption{Graph representation of the assignment $((\vecc{S}_I, \vecc{Z}_J), \vecc p)$ in Example~\ref{example:34noto2(M)}. The gray arrows indicate the payment relations from the agents' perspective for bundles relabeled by the clearinghouse, but that do not correspond to a real exchange of goods.}
    \label{figure:2noto3(IPPi)BuyNo}
  \end{figure}

  In the third- and fourth-order settings, every seller must offer each individual bundle to the buyer who requests it. This means that, to define a third- or fourth-order admissible allocation $(\vecc{\tilde S}_I, \vecc{\tilde Z}_J)$ that is item equivalent to $(\vecc{S}_I, \vecc{Z}_J)$, the first two sellers must offer all their items to buyer $i_1$ as a single bundle, i.e., $\vecc{\tilde z}_{i_1j_1} \equiv \{a, b, c\}$ and $\vecc{\tilde z}_{i_1j_2} \equiv \{a, b, c\}$, whereas $\vecc{\tilde z}_{i_2j_1} = \vecc{\tilde z}_{i_2j_2} \equiv \emptyset$. Since no repackaging is allowed, buyer $i_1$ will then have to request two copies of bundle $\{a, b, c\}$ and two copies of the empty bundle.

  As the price of the empty bundle is always normalized to zero, in order to maintain the same cumulative price $p_k$ for each agent $k \in N$, the first two sellers need to receive respectively $\tilde p_{j_1} = \tilde p_{i_1j_1}(\{a, b, c\}) = 11$ and $\tilde p_{j_2} = \tilde p_{i_1j_2}(\{a, b, c\}) = 15$ from buyer $i_1$.
  However, we can already observe that there exists at least one case for which the prices assigned by $\vecc{p}$ and $\vecc{\tilde p}$ don't coincide, for example, $\tilde p_{i_1j_1}(\{a, b, c\}) \neq p_{i_1j_1}(\{a, b, c\}) = p(\{a, b, c\}) = M$. Therefore, we can rule out the existence of any feasible third- or fourth-order admissible assignment that is market equivalent to $((\vecc{S}_I, \vecc{Z}_J), \vecc p)$.

  Let's now analyze the different side-payment settings when $((\vecc{\tilde S}_I, \vecc{\tilde Z}_J), \vecc{\tilde p}, \vecc{\tilde Q}_N)$ is a third-order admissible assignment.

  \paragraph{Full, separate, and sellers' single side-payment settings} We have already proved in Lemma~\ref{lemma:12345to1(Pi)FullSep} and \ref{lemma:245to3(Pi)Sel} that payoff equivalence holds for these settings. However, the cumulative prices $\tilde p_{j_1}$ and $\tilde p_{j_2}$ defined above are not compatible with third-order pricing systems, since buyer $i_1$ would have to pay two different prices for the same bundle, i.e., $\tilde p_{i_1j_1}(\{a, b, c\}) \neq \tilde p_{i_1j_2}(\{a, b, c\})$. Hence, we can't find any third-order admissible assignment that is price equivalent to $((\vecc{S}_I, \vecc{Z}_J), \vecc p)$.

  \paragraph{Buyers' single and no side-payment settings} In the buyers' single and no side-payment settings, to fulfill each seller's valuation, the pricing system has to satisfy the condition
  $$\tilde p_{i_1}(\{a, b, c\}) \geq \max\{v_{j_1}(\vecc{\tilde z}_{j_1}), v_{j_2}(\vecc{\tilde z}_{j_2})\} = 14.$$
  However, buyer $i_1$ is not able to pay this amount to two sellers since
  $$ \tilde p_{i_1} = 2 \tilde p_{i_1}(\{a, b, c\}) \geq 28 > v_{i_1}(\vecc{\tilde s}_{i_1}) = 27.$$
  Therefore, we conclude that there is no feasible assignment in $\mathcal{E}_3^r$ with $r \in \{\mathrm{buy}, \, \mathrm{no}\}$ that is even item equivalent to $((\vecc{S}_I, \vecc{Z}_J), \vecc p)$.
\end{example}

\subsubsection{To the Third-Order Settings}
\begin{example}\label{example:45noto3(IPPiM)BuyNo}
  Consider an economy $\mathcal{E}$ with two buyers $I = \{i_1, i_2\}$ and three sellers $J = \{j_1, j_2, j_3\}$. The initial endowments of the sellers are respectively $\vecc\Omega_{j_1} \equiv \{a\}$, $\vecc\Omega_{j_2} \equiv \{a, b\}$ and $\vecc\Omega_{j_3} \equiv \{b\}$, so that the total endowment of the economy is $\vecc\Omega \equiv \{a, a, b, b\}$. The agents' valuations are reported in Table~\ref{table:45noto3(IPPiM)BuyNo_valuations}.

  \begin{table}[ht]
    \centering
    \begin{tabular}{|c|c|c|c|c|c|c|c|c|}
      \hline
      $\vecc{\omega}$ & $\{a\}$ & $\{b\}$ & $\{a, a\}$ & $\{a, b\}$ & $\{b, b\}$ & $\{a, a, b\}$ & $\{a, b, b\}$ & $\{a, a, b, b\}$ \\
      \hline
      $v_{i_1}(\vecc{\omega})$ & $0$ & $0$ & $33$ & $0$ & $0$ & $33$ & $0$ & $33$ \\
      \hline
      $v_{i_2}(\vecc{\omega})$ & $0$ & $0$ & $0$ & $0$ & $18$ & $0$ & $18$ & $18$ \\
      \hline
      $v_{j_1}(\vecc{\omega})$ & $10$ & $-$ & $-$ & $-$ & $-$ & $-$ & $-$ & $-$ \\
      \hline
      $v_{j_2}(\vecc{\omega})$ & $20$ & $10$ & $-$ & $31$ & $-$ & $-$ & $-$ & $-$ \\
      \hline
      $v_{j_3}(\vecc{\omega})$ & $-$ & $5$ & $-$ & $-$ & $-$ & $-$ & $-$ & $-$ \\
      \hline
    \end{tabular}

    \caption{Valuations $v_k(\vecc{\omega})$ for each agent $k \in N$ and each bundle $\vecc{\omega} \leq \vecc{\Omega}$ in Example~\ref{example:45noto3(IPPiM)BuyNo}.}
    \label{table:45noto3(IPPiM)BuyNo_valuations}
  \end{table}

  Consider the allocation $(\vecc{S}_I, \vecc{Z}_J)$ where buyer $i_1$ gets two copies of bundle $\{a\} \equiv \vecc{s}_{i_1j_1} = \vecc{z}_{i_1j_1} = \vecc{s}_{i_1j_2} = \vecc{z}_{i_1j_2}$ from sellers $j_1$ and $j_2$, and buyer $i_2$ gets two copies of bundle $\{b\} \equiv \vecc{s}_{i_2j_2} = \vecc{z}_{i_2j_2} = \vecc{s}_{i_2j_3} = \vecc{z}_{i_2j_3}$ from sellers $j_2$ and $j_3$. The corresponding pricing system $\vecc{p}$ is reported in Table~\ref{table:45noto3(IPPiM)BuyNo_pricing}, where $M \gg 0$.

  \begin{table}[ht]
    \centering
    \begin{tabular}{|c|c|c|c|}
      \hline
      $\vecc{\omega}$ & $\{a\}$ & $\{b\}$ & Others \\
      \hline
      $p_{i_1j_1}(\vecc{\omega})$ & $11$ & $M$ & $M$ \\
      \hline
      $p_{i_1j_2}(\vecc{\omega})$ & $21$ & $M$ & $M$ \\
      \hline
      $p_{i_2j_1}(\vecc{\omega})$ & $M$ & $11$ & $M$ \\
      \hline
      $p_{i_2j_3}(\vecc{\omega})$ & $M$ & $6$ & $M$ \\
      \hline
      Others & $M$ & $M$ & $M$ \\
      \hline
    \end{tabular}

    \caption{Fourth-order admissible pricing system $\vecc p$ in Example~\ref{example:45noto3(IPPiM)BuyNo}.}
    \label{table:45noto3(IPPiM)BuyNo_pricing}
  \end{table}

Since the allocation and the pricing system are fourth-order admissible, the assignment $((\vecc{S}_I, \vecc{Z}_J), \vecc p)$ is fourth-order admissible, and the respective cumulative assignment $((\vecc{s}_I, \vecc{z}_J), \vecc p)$) is fifth-order admissible. It is also admissible in all the side-payment settings since the corresponding side payments are $\vecc{Q}_N = \vecc 0$.
The payoff is non-negative for each agent, in particular, $\pi_k = 1$ for every $k \in N$, so the assignment is feasible in all the settings above. We provide a representation of the payment graph in Figure~\ref{figure:45noto3(IPPiM)BuyNo}.

\begin{figure}[ht]
  \centering
  \begin{tikzpicture}[>=Stealth, thick, node/.style={draw, circle, fill=white, minimum size=0.7cm}]
    \begin{scope}[local bounding box=diagram]
      \node[node] (i1) at (0,1) {$i_1$};
      \node[left=0.15cm of i1, font=\small] (text-i1) {
        \begin{tabular}{r|l} $v_{i_1}(\{a, a\}) = 33$ & \textcolor{TUMBlue}{$\pi_{i_1} = 1$}
      \end{tabular}};

      \node[node] (i2) at (0,-1) {$i_2$};
      \node[left=0.15cm of i2, font=\small] (text-i2) {
        \begin{tabular}{r|l} $v_{i_2}(\{b, b\}) = 18$ & \textcolor{TUMBlue}{$\pi_{i_2} = 1$}
      \end{tabular}};

      \node[node] (j1) at (3.5,2) {$j_1$};
      \node[right=0.15cm of j1, font=\small] (text-j1) {
        \begin{tabular}{r|l} \textcolor{TUMBlue}{$\pi_{j_1} = 1$} & $v_{j_1}(\{a\}) = 10$
      \end{tabular}};

      \node[node] (j2) at (3.5,0) {$j_2$};
      \node[right=0.15cm of j2, font=\small] (text-j2) {
        \begin{tabular}{r|l} \textcolor{TUMBlue}{$\pi_{j_2} = 1$} & $v_{j_2}(\{a\}) = 20$ \\ & $v_{j_2}(\{b\}) = 10$ \\ & $v_{j_2}(\{a, b\}) = 31$
      \end{tabular}};

      \node[node] (j3) at (3.5,-2) {$j_3$};
      \node[right=0.15cm of j3, font=\small] (text-j3) {
        \begin{tabular}{r|l} \textcolor{TUMBlue}{$\pi_{j_3} = 1$} & $v_{j_3}(\{b\}) = 5$
      \end{tabular}};

      \begin{scope}[on background layer]
        \foreach \n/\t in {i1/text-i1, i2/text-i2, j1/text-j1, j2/text-j2, j3/text-j3}
        \node[fill=gray!10, rounded corners=3mm, inner sep=1.2mm, fit=(\n) (\t)] {};
      \end{scope}

      \draw[->, TUMBlue] (i1) -- node[pos=0.5, sloped, above, font=\scriptsize, text=TUMBlue] {$p_{i_1j_1} = 11$} (j1);
      \draw[->, TUMBlue] (i1) -- node[pos=0.5, sloped, above, font=\scriptsize, text=TUMBlue] {$p_{i_1j_2} = 21$} (j2);
      \draw[->, TUMBlue] (i2) -- node[pos=0.5, sloped, above, font=\scriptsize, text=TUMBlue] {$p_{i_2j_2} = 11$} (j2);
      \draw[->, TUMBlue] (i2) -- node[pos=0.5, sloped, above, font=\scriptsize, text=TUMBlue] {$p_{i_2j_3} = 6$} (j3);

      \begin{scope}[on background layer]
        \node[draw=black, line width=0.5pt, rounded corners=5mm, inner sep=10pt, fit=(diagram)] {};
      \end{scope}
    \end{scope}
  \end{tikzpicture}

  \caption{Graph representation of the assignment $((\vecc{S}_I, \vecc{Z}_J), \vecc p)$ in Example~\ref{example:45noto3(IPPiM)BuyNo}.}
  \label{figure:45noto3(IPPiM)BuyNo}
\end{figure}

Let $((\vecc{\tilde S}_I, \vecc{\tilde Z}_J), \vecc{\tilde p}, \vecc{\tilde Q}_N) \in \mathcal{F}(\mathcal{E}_3^r)$ with $r \in \{\mathrm{full}, \, \mathrm{sep}, \, \mathrm{buy}, \, \mathrm{sell}, \, \mathrm{no}\}$.
For the assignment to be feasible, the cumulative bundles $(\vecc{\tilde s}_I, \vecc{\tilde z}_J)$ must either coincide with the respective $\vecc{s}_i$ and $\vecc{z}_j$, or be empty. To prove this, suppose by contradiction that $\vecc{\tilde s}_{i_1} \equiv \{a\}$ (a similar argument can be made if we assume that $\vecc{\tilde s}_{i_2} \equiv \{b\}$, $\vecc{\tilde z}_{j_2} \equiv \{a\}$, or $\vecc{\tilde z}_{j_2} \equiv \{b\}$). Since $v_{i_1}(\{a\}) = 0$, we can distinguish two cases:
\begin{enumerate}
  \item If buyer $i_2$ requests either the empty bundle or bundle $\{b\}$, the total valuation of the buyers is $v_{i_1}(\{a\}) + v_{i_2}(\emptyset) = v_{i_1}(\{a\}) + v_{i_2}(\{b\}) = 0$, and the total valuation of the sellers is $\sum_{j \in J} v_j(\vecc {\tilde z}_j) > 0$.
  \item If buyer $i_2$ requests bundle $\{b,b\}$, the total valuation of the buyers is $v_{i_1}(\{a\}) + v_{i_2}(\{b, b\}) = 18$, and the minimum possible total valuation of the sellers is $\min\{v_{j_1}(\{a\}) + v_{j_2}(\{b\}), v_{j_2}(\{a, b\})\} + v_{j_3}(\{b\}) = 25$.
\end{enumerate}
In both cases, the social welfare of the admissible allocations is negative, so at least one agent would get a negative payoff. Hence, in any feasible assignment in $\mathcal{E}_3^r$, buyer $i_1$'s cumulative bundle must either be the empty bundle or bundle $\{a, a\}$ (and, similarly, buyer $i_2$'s cumulative bundle must either be the empty bundle or bundle $\{b,b\}$, and seller $j_2$'s cumulative bundle must either be the empty bundle or bundle $\{a, b\}$).

The only allocation that we can consider for the equivalence properties is then the one where no agent has a cumulative empty bundle, i.e., such that $(\vecc{\tilde s}_I, \vecc{\tilde z}_J) = (\vecc{s}_I, \vecc{z}_J)$. Furthermore, the corresponding pricing system $\vecc{\tilde p}$ has to assign the same price to a bundle that is sold to a buyer multiple times, so in our case, $\tilde p_{i_1j_1}(\{a\}) = \tilde p_{i_1j_2}(\{a\}) = \tilde p_{i_1}(\{a\})$ and $\tilde p_{i_2j_2}(\{b\}) = \tilde p_{i_2j_3}(\{b\}) = \tilde p_{i_2}(\{b\})$.

Let's now analyze which equivalence properties do not hold in the different side-payment settings for the assignment $((\vecc{\tilde S}_I, \vecc{\tilde Z}_J), \vecc{\tilde p}, \vecc{\tilde Q}_N)$.

\paragraph{Full, separate, and sellers' single side-payment settings} We have already proved in Lemmas~\ref{lemma:12345to1(Pi)FullSep} and \ref{lemma:245to3(Pi)Sel} that payoff equivalence holds for these settings. To show that price equivalence does not hold, suppose for example that $\tilde{p}_{j_1} = p_{j_1} = 11$ so that the price $\tilde{p}_{i_1}(\{a\}) = 11$ is fixed. This means that buyer $i_1$, whose cumulative bundle is $\vecc{\tilde{s}}_{i_1} \equiv \{a, a\}$, would have to pay a cumulative price of
$$\tilde p_{i_1} = 2 \tilde p_{i_1}(\{a\}) = 22 \neq p_{i_1} = 32.$$
It follows that we can't find any feasible assignment in $\mathcal{E}_3^r$ with $r \in \{\mathrm{full}, \, \mathrm{sep}, \, \mathrm{sell}\}$ that is price equivalent to $((\vecc{S}_I, \vecc{Z}_J), \vecc p)$.

\paragraph{Buyers' single and no side-payment settings} If side payments among the sellers are not allowed, seller $j_2$ must receive a cumulative price of at least
$$\tilde{p}_{j_2} = \tilde p_{i_1}(\{a\}) + \tilde p_{i_2}(\{b\}) \geq v_{j_2}(\vecc{\tilde z}_{j_2}) = 31.$$
However, the buyers can pay a total price of at most
$$\tilde p_{i_1} + \tilde p_{i_2} = 2 \tilde p_{i_1}(\{a\}) + 2 \tilde p_{i_2}(\{b\}) \leq v_{i_1}(\vecc{\tilde s}_{i_1}) + v_{i_2}(\vecc{\tilde s}_{i_2}) = 51,$$
which means that
$$\tilde p_{i_1}(\{a\}) + \tilde p_{i_2}(\{b\}) \leq 25.5.$$
Since the two conditions on $\tilde p_{i_1}(\{a\}) + \tilde p_{i_2}(\{b\})$ contradict each other, there is no third-order admissible pricing system in these two side-payment settings that make the allocation $(\vecc{\tilde S}_I, \vecc{\tilde Z}_J)$ priceable. Hence, there is also no feasible assignment in $\mathcal{E}_3^r$ with $r \in \{\mathrm{buy}, \, \mathrm{no}\}$ that is even item equivalent to $((\vecc{S}_I, \vecc{Z}_J), \vecc{p})$.
\end{example}

\subsubsection{To the Fourth-Order Settings}
\begin{example}\label{example:5noto4(IPi)No}
Consider an economy $\mathcal{E}$ with two buyers $I = \{i_1, i_2\}$ and two sellers $J = \{j_1, j_2\}$. The initial endowments of the sellers are respectively $\vecc\Omega_{j_1} \equiv \{a\}$ and $\vecc\Omega_{j_2} \equiv \{b\}$, so that the total endowment of the economy is $\vecc\Omega \equiv \{a, b\}$. The agents' valuations are reported in Table~\ref{table:5noto4(IPi)No_valuations}.

\begin{table}[ht]
  \centering
  \begin{tabular}{|c|c|c|c|}
    \hline
    $\vecc{\omega}$ & $\{a\}$ & $\{b\}$ & $\{a, b\}$ \\
    \hline
    $v_{i_1}(\vecc{\omega})$ & $16$ & $0$ & $16$ \\
    \hline
    $v_{i_2}(\vecc{\omega})$ & $0$ & $6$ & $6$ \\
    \hline
    $v_{j_1}(\vecc{\omega})$ & $9$ & $-$ & $-$ \\
    \hline
    $v_{j_2}(\vecc{\omega})$ & $-$ & $9$ & $-$ \\
    \hline
  \end{tabular}

  \caption{Valuations $v_k(\vecc{\omega})$ for each agent $k \in N$ and each bundle $\vecc{\omega} \leq \vecc{\Omega}$ in Example~\ref{example:5noto4(IPi)No}.}
  \label{table:5noto4(IPi)No_valuations}
\end{table}

Consider the fifth-order admissible allocation $(\vecc{s}_I, \vecc{z}_J)$ where agents $i_1$ and $j_1$ are assigned bundle $\vecc{s}_{i_1} = \vecc{z}_{j_1} \equiv \{a\}$ and agents $i_2$ and $j_2$ are assigned bundle $\vecc{s}_{i_2} = \vecc{z}_{j_2} \equiv \{b\}$. The corresponding pricing system $\vecc{p}$ is reported in Table~\ref{table:5noto4(IPi)No_pricing}, where $M \gg 0$.

\begin{table}[ht]
  \centering
  \begin{tabular}{|c|c|c|c|}
    \hline
    $\vecc{\omega}$ & $\{a\}$ & $\{b\}$ & $\{a, b\}$\\
    \hline
    $p_{i_1}(\vecc{\omega})$ & $15$ & $M$ & $M$ \\
    \hline
    $p_{i_2}(\vecc{\omega})$ & $M$ & $5$ & $M$ \\
    \hline
    $p_{j_1}(\vecc{\omega})$ & $10$ & $M$ & $M$ \\
    \hline
    $p_{j_2}(\vecc{\omega})$ & $M$ & $10$ & $M$ \\
    \hline
  \end{tabular}

  \caption{Fifth-order admissible pricing system $\vecc p$ in Example~\ref{example:5noto4(IPi)No}.}
  \label{table:5noto4(IPi)No_pricing}
\end{table}

Since the allocation and the pricing system are fifth-order admissible, the assignment $((\vecc{s}_I, \vecc{z}_J), \vecc p)$ is fifth-order admissible. It is also admissible in all side-payment settings since the corresponding side payments are $\vecc{Q}_N = \vecc 0$.
The payoff is non-negative for each agent, in particular, $\pi_k = 1$ for every $k \in N$, so the assignment is feasible in all the settings above. We provide a representation of the payment graph in Figure~\ref{figure:5noto4(IPi)No}.

\begin{figure}[ht]
  \centering
  \begin{tikzpicture}[>=Stealth, thick, node/.style={draw, circle, fill=white, minimum size=0.7cm}]
    \begin{scope}[local bounding box=diagram]
      \node[node] (i1) at (0,1) {$i_1$};
      \node[left=0.15cm of i1, font=\small] (text-i1) {
        \begin{tabular}{r|l} $v_{i_1}(\{a\}) = 16$ & \textcolor{TUMBlue}{$\pi_{i_1} = 1$}
      \end{tabular}};

      \node[node] (i2) at (0,-1) {$i_2$};
      \node[left=0.15cm of i2, font=\small] (text-i2) {
        \begin{tabular}{r|l} $v_{i_2}(\{b\}) = 6$ & \textcolor{TUMBlue}{$\pi_{i_2} = 1$}
      \end{tabular}};

      \node[node] (c) at (2.5,0) {$c$};

      \node[node] (j1) at (5,1) {$j_1$};
      \node[right=0.15cm of j1, font=\small] (text-j1) {
        \begin{tabular}{r|l} \textcolor{TUMBlue}{$\pi_{j_1} = 1$} & $v_{j_1}(\{a\}) = 9$
      \end{tabular}};

      \node[node] (j2) at (5,-1) {$j_2$};
      \node[right=0.15cm of j2, font=\small] (text-j2) {
        \begin{tabular}{r|l} \textcolor{TUMBlue}{$\pi_{j_2} = 1$} & $v_{j_2}(\{b\}) = 9$
      \end{tabular}};

      \begin{scope}[on background layer]
        \foreach \n/\t in {i1/text-i1, i2/text-i2, j1/text-j1, j2/text-j2}
        \node[fill=gray!10, rounded corners=3mm, inner sep=1.2mm, fit=(\n) (\t)] {};
      \end{scope}

      \draw[->, TUMBlue] (i1) -- node[pos=0.5, sloped, above, font=\scriptsize, text=TUMBlue] {$p_{i_1} = 15$} (c);
      \draw[->, TUMBlue] (i2) -- node[pos=0.5, sloped, below, font=\scriptsize, text=TUMBlue] {$p_{i_2} = 5$} (c);
      \draw[->, TUMBlue] (c)  -- node[pos=0.5, sloped, above, font=\scriptsize, text=TUMBlue] {$p_{j_1} = 10$} (j1);
      \draw[->, TUMBlue] (c)  -- node[pos=0.5, sloped, below, font=\scriptsize, text=TUMBlue] {$p_{j_2} = 10$} (j2);

      \begin{scope}[on background layer]
        \node[draw=black, line width=0.5pt, rounded corners=5mm, inner sep=10pt, fit=(diagram)] {};
      \end{scope}
    \end{scope}
  \end{tikzpicture}

  \caption{Graph representation of the assignment $((\vecc{s}_I, \vecc{z}_J), \vecc p)$ in Example~\ref{example:5noto4(IPi)No}.}
  \label{figure:5noto4(IPi)No}
\end{figure}

Let $((\vecc{\tilde S}_I, \vecc{\tilde Z}_J), \vecc{\tilde p}, \vecc{\tilde Q}_N) \in \mathcal{F}(\mathcal{E}_4^r)$ with $r \in \{\mathrm{full}, \, \mathrm{sep}, \, \mathrm{buy}, \, \mathrm{sell}, \, \mathrm{no}\}$.
The only fourth-order admissible allocation $(\vecc{\tilde S}_I, \vecc{\tilde Z}_J)$ where no agent has a cumulative empty bundle is the one where $i_1$ and $j_1$ trade bundle $\vecc{\tilde s}_{i_1j_1} = \vecc{\tilde z}_{i_1j_1} \equiv \{a\}$, and $i_2$ and $j_2$ trade bundle $\vecc{\tilde s}_{i_2j_2} = \vecc{\tilde z}_{i_2j_2} \equiv \{b\}$.
Any fourth-order admissible pricing system $\vecc{\tilde p}$ has then to set $\tilde p_{i_1j_2}(\vecc{\tilde s}_{i_1j_2}) = \tilde p_{i_2j_1}(\vecc{\tilde s}_{i_2j_1}) = 0$, since the two buyer-seller pairs $(i_1, j_2)$ and $(i_2, j_1)$ only trade individual empty bundles.

Let's now analyze which equivalence properties do not hold in the different side-payment settings for the assignment $((\vecc{\tilde S}_I, \vecc{\tilde Z}_J), \vecc{\tilde p}, \vecc{\tilde Q}_N)$.

\paragraph{Full, separate, and single side-payment settings} We have already proved in Lemmas~\ref{lemma:12345to1(Pi)FullSep} and \ref{lemma:5to4(Pi)Sing} that payoff equivalence holds for these settings. For price equivalence to hold, one of the conditions that must be satisfied is $\tilde p_{i_2} = p_{i_2} = 5$. However, since seller $j_2$ can only receive the price payment $\tilde p_{i_2j_2}(\{b\}) = \tilde p_{i_2}$ from buyer $i_2$, we would have that $\tilde p_{j_2} = 5 \neq p_{j_2} = 10$.
It follows that we can't find any feasible assignment in $\mathcal{E}_4^r$ with $r \in \{\mathrm{full}, \, \mathrm{sep}, \, \mathrm{buy}, \, \mathrm{sell}\}$ that is price equivalent to $((\vecc{s}_I, \vecc{z}_J), \vecc p)$.

\paragraph{No side-payment setting} Without side payments, the price that seller $j_2$ must receive from buyer $i_2$ is at least $\tilde p_{j_2} = \tilde{p}_{i_2j_2}(\{b\}) \geq v_{j_2}(\{b\}) = 9$. However, the valuation of $i_2$ for the same bundle is only $v_{i_2}(\{b\}) = 6 < 9$. This implies that the allocation $(\vecc{\tilde S}_I, \vecc{\tilde Z}_J)$ is not priceable in this setting and, therefore, there is no feasible assignment in $\mathcal{E}_4^{\mathrm{no}}$ that is even item equivalent to $((\vecc{s}_I, \vecc{z}_J), \vecc{p})$.
\end{example}

\subsubsection{From the full to the separate side-payment settings}
\begin{example}\label{example:FullnottoSep(P)}
Consider an economy $\mathcal{E}$ with two buyers $I = \{i_1, i_2\}$ and two sellers $J = \{j_1, j_2\}$. The initial endowments of the sellers are respectively $\vecc\Omega_{j_1} \equiv \{a\}$ and $\vecc\Omega_{j_2} \equiv \{b\}$, so that the total endowment of the economy is $\vecc\Omega \equiv \{a, b\}$. The agents' valuations are reported in Table~\ref{table:FullnottoSep(P)_valuations}.

\begin{table}[ht]
  \centering
  \begin{tabular}{|c|c|c|c|}
    \hline
    $\vecc{\omega}$ & $\{a\}$ & $\{b\}$ & $\{a, b\}$ \\
    \hline
    $v_{i_1}(\vecc{\omega})$ & $16$ & $0$ & $16$ \\
    \hline
    $v_{i_2}(\vecc{\omega})$ & $0$ & $6$ & $6$ \\
    \hline
    $v_{j_1}(\vecc{\omega})$ & $9$ & $-$ & $-$ \\
    \hline
    $v_{j_2}(\vecc{\omega})$ & $-$ & $9$ & $-$ \\
    \hline
  \end{tabular}

  \caption{Valuations $v_k(\vecc{\omega})$ for each agent $k \in N$ and each bundle $\vecc{\omega} \leq \vecc{\Omega}$ in Example~\ref{example:FullnottoSep(P)}.}
  \label{table:FullnottoSep(P)_valuations}
\end{table}

Consider the allocation $(\vecc{S}_I, \vecc{Z}_J)$ where agents $i_1$ and $j_1$ trade bundle $\vecc{s}_{i_1j_1} = \vecc{z}_{i_1j_1} \equiv \{a\}$, and agents $i_2$ and $j_2$ trade bundle $\vecc{s}_{i_2j_2} = \vecc{z}_{i_2j_2} \equiv \{b\}$. The corresponding pricing system $\vecc{p}$ is such that $p_a = 10$ and $p_b = 5$. Moreover, buyer $i_1$ side-pays seller $j_2$ an amount $q_{i_1j_2} = -q_{j_2i_1} = 5$.

Since the allocation is fourth-order admissible and the pricing system is first-order admissible, the assignment $((\vecc{S}_I, \vecc{Z}_J), \vecc p, \vecc Q_N)$ is admissible in all the order settings (including the fifth-order one if we consider the cumulative equivalent $((\vecc{s}_I, \vecc{z}_J), \vecc p, \vecc{Q}_N)$). However, it is only admissible in the corresponding full side-payment settings since $q_{ij} \neq 0$ for some $i \in I$ and $j \in J$.
The payoff is non-negative for each agent, in particular, $\pi_k = 1$ for every $k \in N$, so the assignment is also feasible. We provide a representation of the payment graph in Figure~\ref{figure:FullnottoSep(P)}.

\begin{figure}[ht]
  \centering
  \begin{tikzpicture}[>=Stealth, thick, node/.style={draw, circle, fill=white, minimum size=0.7cm}]
    \begin{scope}[local bounding box=diagram]
      \node[node] (i1) at (0,1) {$i_1$};
      \node[left=0.15cm of i1, font=\small] (text-i1) {
        \begin{tabular}{r|l} $v_{i_1}(\{a\}) = 16$ & \textcolor{TUMBlue}{$\pi_{i_1} = 1$}
      \end{tabular}};

      \node[node] (i2) at (0,-0.5) {$i_2$};
      \node[left=0.15cm of i2, font=\small] (text-i2) {
        \begin{tabular}{r|l} $v_{i_2}(\{b\}) = 6$ & \textcolor{TUMBlue}{$\pi_{i_2} = 1$}
      \end{tabular}};

      \node[node] (j1) at (3.5,1) {$j_1$};
      \node[right=0.15cm of j1, font=\small] (text-j1) {
        \begin{tabular}{r|l} \textcolor{TUMBlue}{$\pi_{j_1} = 1$} & $v_{j_1}(\{a\}) = 9$
      \end{tabular}};

      \node[node] (j2) at (3.5,-0.5) {$j_2$};
      \node[right=0.15cm of j2, font=\small] (text-j2) {
        \begin{tabular}{r|l} \textcolor{TUMBlue}{$\pi_{j_2} = 1$} & $v_{j_2}(\{b\}) = 9$
      \end{tabular}};

      \begin{scope}[on background layer]
        \foreach \n/\t in {i1/text-i1, i2/text-i2, j1/text-j1, j2/text-j2}
        \node[fill=gray!10, rounded corners=3mm, inner sep=1.2mm, fit=(\n) (\t)] {};
      \end{scope}

      \draw[->, TUMBlue] (i1) -- node[midway, above, font=\scriptsize, text=TUMBlue] {$p_{i_1j_1} = 10$} (j1);
      \draw[->, TUMBlue] (i2) -- node[midway, below, font=\scriptsize, text=TUMBlue] {$p_{i_2j_2} = 5$} (j2);
      \draw[->, dashed, TUMBlue] (i1) -- node[midway, above, sloped, font=\scriptsize, text=TUMBlue] {$q_{i_1j_2} = 5$} (j2);

      \begin{scope}[on background layer]
        \node[draw=black, line width=0.5pt, rounded corners=5mm, inner sep=10pt, fit=(diagram)] {};
      \end{scope}
    \end{scope}
  \end{tikzpicture}

  \caption{Graph representation of the assignment $((\vecc{S}_I, \vecc{Z}_J), \vecc p, \vecc{Q}_N)$ in Example~\ref{example:FullnottoSep(P)}.}
  \label{figure:FullnottoSep(P)}
\end{figure}

Let $((\vecc{\tilde S}_I, \vecc{\tilde Z}_J), \vecc{\tilde p}, \vecc{\tilde Q}_N) \in \mathcal{F}(\mathcal{E}_{\tilde h}^{\tilde r})$ with $\tilde h \in \{1, \dots, 5\}$ and $\tilde r \in \{\mathrm{sep}, \, \mathrm{buy}, \, \mathrm{sell}, \, \mathrm{no}\}$. In order for every agent $k \in N$ to maintain the same cumulative price $\tilde p_k = p_k$, their cumulative bundle must not be empty (otherwise, $\tilde p_k = 0 \neq p_k$). Therefore, the only admissible allocation we can consider is $(\vecc{\tilde S}_I, \vecc{\tilde Z}_J) = (\vecc{S}_I, \vecc{Z}_J)$.
However, since side payments from buyers to sellers are not allowed, the minimum total price that the sellers need to receive is greater than the one of the initial assignment
$$\tilde p_{j_1} + \tilde p_{j_2} \geq v_{j_1}(\{a\}) + v_{j_2}(\{b\}) = 18 > 15 = p_{j_1} + p_{j_2},$$
so at least $\tilde p_{j_1} \neq p_{j_1}$ or $\tilde p_{j_2} \neq p_{j_2}$. We can then conclude that there is no feasible assignment with at most separate side payments that is price equivalent to $((\vecc{S}_I, \vecc{Z}_J), \vecc p, \vecc Q_N)$.
\end{example}

\subsubsection{To the Buyers' Single Side-Payment Settings}
\begin{example}\label{example:1BuySepFullnotto...}
Consider an economy $\mathcal{E}$ with three buyers $I = \{i_1, i_2, i_3\}$ and two sellers $J = \{j_1, j_2\}$, where $i_3$ is a buyer whose valuation is zero for every bundle. The initial endowments of the sellers are $\vecc\Omega_{j_1} = \vecc\Omega_{j_2} \equiv \{a\}$, so that the total endowment of the economy is $\vecc\Omega \equiv \{a, a\}$. The agents' valuations are reported in Table~\ref{table:1BuySepFullnotto..._valuations}.

\begin{table}[ht]
  \centering
  \begin{tabular}{|c|c|c|c|}
    \hline
    $\vecc{\omega}$ & $\{a\}$ & $\{a, a\}$ \\
    \hline
    $v_{i_1}(\vecc{\omega})$ & $6$ & $6$ \\
    \hline
    $v_{i_2}(\vecc{\omega})$ & $17$ & $17$ \\
    \hline
    $v_{i_3}(\vecc{\omega})$ & $0$ & $0$ \\
    \hline
    $v_{j_1}(\vecc{\omega})$ & $9$ & $-$ \\
    \hline
    $v_{j_2}(\vecc{\omega})$ & $9$ & $-$ \\
    \hline
  \end{tabular}

  \caption{Valuations $v_k(\vecc{\omega})$ for each agent $k \in N$ and each bundle $\vecc{\omega} \leq \vecc{\Omega}$ in Example~\ref{example:1BuySepFullnotto...}.}
  \label{table:1BuySepFullnotto..._valuations}
\end{table}

Consider the allocation $(\vecc{S}_I, \vecc{Z}_J)$ where agents $i_1$ and $j_1$ trade bundle $\vecc{s}_{i_1j_1} = \vecc{z}_{i_1j_1} \equiv \{a\}$, and agents $i_2$ and $j_2$ trade bundle $\vecc{s}_{i_2j_2} = \vecc{z}_{i_2j_2} \equiv \{a\}$. The corresponding pricing system $\vecc{p}$ is such that $p_a = 10$. Moreover, buyer $i_2$ side pays buyer $i_1$ an amount $q_{i_2i_1} = - q_{i_1i_2} = 5$, and buyer $i_3$ an amount $q_{i_2i_3} = -q_{i_3i_2} = 1$.

Since the allocation is fourth-order admissible and the pricing system is first-order admissible, the assignment $((\vecc{S}_I, \vecc{Z}_J), \vecc p, \vecc{Q}_I)$ is admissible in all the order settings (including the fifth-order one if we consider the cumulative equivalent $((\vecc{s}_I, \vecc{z}_J), \vecc p, \vecc{Q}_I)$). However, it is only admissible in the full, separate, and buyers' single side-payment settings.
The payoff is non-negative for each agent, in particular, $\pi_k = 1$ for every $k \in N$, so that the assignment is feasible in all the settings above. We provide a representation of the payment graph in Figure~\ref{figure:1BuySepFullnotto...}.

\begin{figure}[ht]
  \centering
  \begin{tikzpicture}[>=Stealth, thick, node/.style={draw, circle, fill=white, minimum size=0.7cm}]
    \begin{scope}[local bounding box=diagram]
      \node[node] (i1) at (0,2) {$i_1$};
      \node[left=0.15cm of i1, font=\small] (text-i1) {
        \begin{tabular}{r|l} $v_{i_1}(\{a\}) = 6$ & \textcolor{TUMBlue}{$\pi_{i_1} = 1$}
      \end{tabular}};

      \node[node] (i2) at (0,0) {$i_2$};
      \node[left=0.15cm of i2, font=\small] (text-i2) {
        \begin{tabular}{r|l} $v_{i_2}(\{a\}) = 17$ & \textcolor{TUMBlue}{$\pi_{i_2} = 1$}
      \end{tabular}};

      \node[node] (i3) at (0,-2) {$i_3$};
      \node[left=0.15cm of i3, font=\small] (text-i3) {
        \begin{tabular}{l} \textcolor{TUMBlue}{$\pi_{i_3} = 1$}
      \end{tabular}};

      \node[node] (j1) at (3.5,2) {$j_1$};
      \node[right=0.15cm of j1, font=\small] (text-j1) {
        \begin{tabular}{r|l} \textcolor{TUMBlue}{$\pi_{j_1} = 1$} & $v_{j_1}(\{a\}) = 9$
      \end{tabular}};

      \node[node] (j2) at (3.5,0) {$j_2$};
      \node[right=0.15cm of j2, font=\small] (text-j2) {
        \begin{tabular}{r|l} \textcolor{TUMBlue}{$\pi_{j_2} = 1$} & $v_{j_2}(\{a\}) = 9$
      \end{tabular}};

      \begin{scope}[on background layer]
        \foreach \n/\t in {i1/text-i1, i2/text-i2, i3/text-i3, j1/text-j1, j2/text-j2}
        \node[fill=gray!10, rounded corners=3mm, inner sep=1.2mm, fit=(\n) (\t)] {};
      \end{scope}

      \draw[->, TUMBlue] (i1) -- node[midway, above, font=\scriptsize, text=TUMBlue] {$p_{i_1j_1} = 10$} (j1);
      \draw[->, TUMBlue] (i2) -- node[midway, above, font=\scriptsize, text=TUMBlue] {$p_{i_2j_2} = 10$} (j2);
      \draw[->, dashed, TUMBlue] (i2) -- node[midway, left, font=\scriptsize, text=TUMBlue] {$q_{i_2i_1} = 5$} (i1);
      \draw[->, dashed, TUMBlue] (i2) -- node[midway, left, font=\scriptsize, text=TUMBlue] {$q_{i_2i_3} = 1$} (i3);

      \begin{scope}[on background layer]
        \node[draw=black, line width=0.5pt, rounded corners=5mm, inner sep=10pt, fit=(diagram)] {};
      \end{scope}
    \end{scope}
  \end{tikzpicture}

  \caption{Graph representation of the assignment $((\vecc{S}_I, \vecc{Z}_J), \vecc p, \vecc{Q}_I)$ in Example~\ref{example:1BuySepFullnotto...}.}
  \label{figure:1BuySepFullnotto...}
\end{figure}

Let's now analyze some of the settings in which we can't find any feasible assignment that is price or payoff equivalent to $((\vecc{S}_I, \vecc{Z}_J), \vecc p, \vecc{Q}_I)$.
For simplicity, we will always refer to the allocation in the new settings as $((\vecc{\tilde S}_I, \vecc{\tilde Z}_J), \vecc{\tilde p}, \vecc{\tilde Q}_N)$.

We can already observe that, if at least one of the cumulative bundles is empty (except for $\vecc{s}_{i_3}$), not even item equivalence is possible. We will then always assume in the following that $(\vecc{\tilde s}_I, \vecc{\tilde z}_J) = (\vecc{s}_I, \vecc{z}_J)$.

\paragraph{First- and second-order settings with sellers' single or no side payments} If side payments among buyers are not allowed, buyer $i_1$ can only afford to pay for bundle $\{a\}$ a maximum price of
$$\tilde p_{i_1} = \tilde p(\{a\}) = \tilde p_a \leq v_{i_1}(\{a\}) = 6$$
to one of the sellers. However, the minimum price that each seller $j \in J$ must receive is
$$\tilde p_j = \tilde p(\{a\}) = \tilde p_a \geq \frac{v_{j_1}(\{a\}) + v_{j_2}(\{a\})}{2} = 9$$
in the seller's single side-payment setting, and
$$\tilde p_j = \tilde p(\{a\}) = \tilde p_{a} \geq \max\{v_{j_1}(\{a\}), v_{j_2}(\{a\})\} = 9$$
in the no side-payment setting. Hence, any allocation where $\vecc{\tilde{s}}_{i_1} \neq \vecc{0}$ is not priceable in these settings, and we deduce that there is no feasible assignment in the first- and second-order settings with sellers' single or no side payments that is even item equivalent to $((\vecc{S}_I, \vecc{Z}_J), \vecc p, \vecc{Q}_I)$.

\paragraph{Third- and fourth-order settings with no side payments} Since no side payments are allowed in these settings, both sellers $j \in J$ need to receive a cumulative price $\tilde p_j \geq v_{j}(\{a\}) = 9$, so the same reasoning above applies.
Therefore, there is no feasible assignment in the third- and fourth-order settings with no side payments that is even item equivalent to $((\vecc{S}_I, \vecc{Z}_J), \vecc p, \vecc{Q}_I)$.

\paragraph{Third- and fourth-order settings with sellers' single side payments} If we allow only side payments among the sellers, as we have observed above, the maximum cumulative price that buyer $i_1$ can pay is strictly less than the one they paid in the initial assignment, i.e., $\tilde p_{i_1} \leq v_{i_1}(\{a\}) = 6 < p_{i_1} = 10$.
Also, buyer $i_3$ can't receive any side payment anymore, and so their payoff must be $\tilde\pi_{i_3} = 0 < \pi_{i_3} = 1$.
Hence, there is no feasible assignment in the third- and fourth-order settings with sellers' single side payments that is payoff equivalent to $((\vecc{S}_I, \vecc{Z}_J), \vecc p, \vecc{Q}_I)$.

\paragraph{Fifth-order setting with sellers' single or no side payments} For the fifth-order setting, we can make the same observations on the prices and payoffs as in the previous point. Therefore, there is no feasible assignment in the fifth-order setting with sellers' single or no side payments that is payoff equivalent to $((\vecc{S}_I, \vecc{Z}_J), \vecc p, \vecc{Q}_I)$.
\end{example}

\subsubsection{To the Sellers' Single Side-Payment Settings}
\begin{example}\label{example:1SellSepFullnotto...}
Consider an economy $\mathcal{E}$ with two buyers $I = \{i_1, i_2\}$ and four sellers $J = \{j_1, \dots, j_4\}$, where $j_4$ is a seller who doesn't own any item. The initial endowments of the sellers are $\vecc\Omega_{j_1} = \vecc\Omega_{j_2} \equiv \{a\}$ and $\vecc\Omega_{j_3} \equiv \{b\}$, so that the total endowment of the economy is $\vecc\Omega \equiv \{a, a, b\}$. The agents' valuations are reported in Table~\ref{table:1SellSepFullnotto..._valuations}.

\begin{table}[ht]
  \centering
  \begin{tabular}{|c|c|c|c|c|c|}
    \hline
    $\vecc{\omega}$ & $\{a\}$ & $\{b\}$ & $\{a, a\}$ & $\{a, b\}$ & $\{a, a, b\}$ \\
    \hline
    $v_{i_1}(\vecc{\omega})$ & $0$ & $0$ & $21$ & $0$ & $21$ \\
    \hline
    $v_{i_2}(\vecc{\omega})$ & $0$ & $11$ & $0$ & $11$ & $11$ \\
    \hline
    $v_{j_1}(\vecc{\omega})$ & $4$ & $-$ & $-$ & $-$ & $-$ \\
    \hline
    $v_{j_2}(\vecc{\omega})$ & $20$ & $-$ & $-$ & $-$ & $-$ \\
    \hline
    $v_{j_3}(\vecc{\omega})$ & $-$ & $2$ & $-$ & $-$ & $-$ \\
    \hline
    $v_{j_4}(\vecc{\omega})$ & $-$ & $-$ & $-$ & $-$ & $-$ \\
    \hline
  \end{tabular}

  \caption{Valuations $v_k(\vecc{\omega})$ for each agent $k \in N$ and each bundle $\vecc{\omega} \leq \vecc{\Omega}$ in Example~\ref{example:1SellSepFullnotto...}.}
  \label{table:1SellSepFullnotto..._valuations}
\end{table}

Consider the allocation $(\vecc{S}_I, \vecc{Z}_J)$ where buyer $i_1$ trades with sellers $j_1$ and $j_2$ bundle $\vecc{s}_{i_1j_1} = \vecc{s}_{i_1j_2} = \vecc{z}_{i_1j_1} = \vecc{z}_{i_1j_2} \equiv \{a\}$,  and agents $i_2$ and $j_3$ trade bundle $\vecc{s}_{i_2j_3} = \vecc{z}_{i_2j_3} \equiv \{b\}$. The corresponding pricing system $\vecc{p}$ is such that $p_a = 10$ and $p_b = 10$. Moreover, seller $j_2$ receives side payments from $j_1$ and $j_3$ of respectively $q_{j_1j_2} = - q_{j_2j_1} = 5$ and $q_{j_3j_2} = - q_{j_2j_3} = 6$, and seller $j_3$ side-pays seller $j_4$ an amount $q_{j_3j_4} = -q_{j_4j_3} = 1$.

Since the allocation is fourth-order admissible and the pricing system is first-order admissible, the assignment $((\vecc{S}_I, \vecc{Z}_J), \vecc p, \vecc{Q}_J)$ is admissible in all the order settings (including the fifth-order one if we consider the cumulative equivalent $((\vecc{s}_I, \vecc{z}_J), \vecc p, \vecc{Q}_J)$). However, it is only admissible in the full, separate, and sellers' single side-payment settings.
The payoff is non-negative for each agent, in particular, $\pi_k = 1$ for every $k \in N$, so that the assignment is feasible in all the settings considered above. We provide a representation of the payment graph in Figure~\ref{figure:1SellSepFullnotto...}.

\begin{figure}[ht]
  \centering
  \begin{tikzpicture}[>=Stealth, thick, node/.style={draw, circle, fill=white, minimum size=0.7cm}]
    \begin{scope}[local bounding box=diagram]
      \node[node] (i1) at (0,2) {$i_1$};
      \node[left=0.15cm of i1, font=\small] (text-i1) {
        \begin{tabular}{r|l} $v_{i_1}(\{a, a\}) = 21$ & \textcolor{TUMBlue}{$\pi_{i_1} = 1$}
      \end{tabular}};

      \node[node] (i2) at (0,-1) {$i_2$};
      \node[left=0.15cm of i2, font=\small] (text-i2) {
        \begin{tabular}{r|l} $v_{i_2}(\{b\}) = 11$ & \textcolor{TUMBlue}{$\pi_{i_2} = 1$}
      \end{tabular}};

      \node[node] (j1) at (3.5,3) {$j_1$};
      \node[right=0.15cm of j1, font=\small] (text-j1) {
        \begin{tabular}{r|l} \textcolor{TUMBlue}{$\pi_{j_1} = 1$} & $v_{j_1}(\{a\}) = 4$
      \end{tabular}};

      \node[node] (j2) at (3.5,1) {$j_2$};
      \node[right=0.15cm of j2, font=\small] (text-j2) {
        \begin{tabular}{r|l} \textcolor{TUMBlue}{$\pi_{j_2} = 1$} & $v_{j_2}(\{a\}) = 20$
      \end{tabular}};

      \node[node] (j3) at (3.5,-1) {$j_3$};
      \node[right=0.15cm of j3, font=\small] (text-j3) {
        \begin{tabular}{r|l} \textcolor{TUMBlue}{$\pi_{j_3} = 1$} & $v_{j_3}(\{b\}) = 2$
      \end{tabular}};

      \node[node] (j4) at (3.5,-3) {$j_4$};
      \node[right=0.15cm of j4, font=\small] (text-j4) {
        \begin{tabular}{l} \textcolor{TUMBlue}{$\pi_{j_4} = 1$}
      \end{tabular}};

      \begin{scope}[on background layer]
        \foreach \n/\t in {i1/text-i1, i2/text-i2, j1/text-j1, j2/text-j2, j3/text-j3, j4/text-j4}
        \node[fill=gray!10, rounded corners=3mm, inner sep=1.2mm, fit=(\n) (\t)] {};
      \end{scope}

      \draw[->, TUMBlue] (i1) -- node[midway, above, sloped, font=\scriptsize, text=TUMBlue] {$p_{i_1j_1} = 10$} (j1);
      \draw[->, TUMBlue] (i1) -- node[midway, below, sloped, font=\scriptsize, text=TUMBlue] {$p_{i_1j_2} = 10$} (j2);
      \draw[->, TUMBlue] (i2) -- node[midway, above, sloped, font=\scriptsize, text=TUMBlue] {$p_{i_2j_3} = 10$} (j3);

      \draw[->, dashed, TUMBlue] (j1) -- node[midway, right, font=\scriptsize, text=TUMBlue] {$q_{j_1j_2} = 5$} (j2);
      \draw[->, dashed, TUMBlue] (j3) -- node[midway, right, font=\scriptsize, text=TUMBlue] {$q_{j_3j_2} = 6$} (j2);
      \draw[->, dashed, TUMBlue] (j3) -- node[midway, right, font=\scriptsize, text=TUMBlue] {$q_{j_3j_4} = 1$} (j4);

      \begin{scope}[on background layer]
        \node[draw=black, line width=0.5pt, rounded corners=5mm, inner sep=10pt, fit=(diagram)] {};
      \end{scope}
    \end{scope}
  \end{tikzpicture}

  \caption{Graph representation of the assignment $((\vecc{S}_I, \vecc{Z}_J), \vecc p, \vecc{Q}_J)$ in Example~\ref{example:1SellSepFullnotto...}.}
  \label{figure:1SellSepFullnotto...}
\end{figure}

Let's now analyze some of the settings in which we can't find any feasible assignment that is payoff equivalent to $((\vecc{S}_I, \vecc{Z}_J), \vecc p, \vecc{Q}_J)$. For simplicity, we will always indicate the new assignment with $((\vecc{\tilde S}_I, \vecc{\tilde Z}_J), \vecc{\tilde p}, \vecc{\tilde Q}_N)$.

If at least one of the cumulative bundles of the agents is empty (except for $\vecc{z}_{j_4}$) or $\vecc{\tilde s}_{i_1} = \{a\}$, it is immediate to see that the new assignment can't be even item equivalent to the initial one.
Hence, we will always assume in the following that $(\vecc{\tilde s}_I, \vecc{\tilde z}_J) = (\vecc{s}_I, \vecc{z}_J)$.

\paragraph{First-, second-, and third-order setting with buyers' single or no side payments} If side payments among sellers are not allowed, seller $j_2$ can only receive a price $\tilde p_{j_2} \geq v_{j_1}(\{a\}) = 22$ if they trade a non-empty bundle. Hence, for buyer $i_1$ to have the same cumulative bundle $\vecc{\tilde s}_{i_1} = \vecc s_{i_1}$, their price for bundle $\{a\}$ must be
$$\tilde p_{i_1}(\{a\}) \geq \max\{v_{j_1}(\{a\}), v_{j_2}(\{a\})\} = 22.$$
However, this would mean that the cumulative price would become $\tilde p_{i_1} \geq 2\tilde p_{i_1}(\{a\}) = 44$, but even the total maximum valuation for the buyers is only
$$v_{i_1}(\vecc\Omega) + v_{i_2}(\vecc\Omega) = 32 < 44 \leq \tilde p_{i_1}.$$
So there is no feasible assignment in the first-, second-, and third-order setting with buyers' or no side payments that is even item equivalent to $((\vecc{S}_I, \vecc{Z}_J), \vecc p, \vecc{Q}_J)$.

\paragraph{Fourth-order setting with no side payments} Since no side payments are allowed in this setting, buyer $i_1$ can afford to buy the cumulative bundle $\vecc{s}_{i_1} = \{a, a\}$ only if they can pay a price of
$$\tilde p_{i_1} \geq v_{j_1}(\{a\}) + v_{j_2}(\{a\}) = 26,$$
but their valuation is only $v_{i_1}(\{a, a\}) = 21 < 26$. Therefore, there is no feasible assignment in the fourth-order setting with no side payments that is even item equivalent to $((\vecc{S}_I, \vecc{Z}_J), \vecc p, \vecc{Q}_J)$.

\paragraph{Fourth-order setting with buyers' single side payments} If we allow only side payments among the buyers, as we have observed before, the cumulative price that seller $j_1$ must receive is $\tilde p_{i_1} \geq 22 > p_{j_1} = 10$. Also, seller $j_4$ can't receive any side payment anymore, and their payoff becomes $\tilde\pi_{j_4} = 0 < \pi_{j_4} = 1$.
Hence, there is no feasible fourth-order admissible assignment with buyers' side payments that is payoff equivalent to $((\vecc{S}_I, \vecc{Z}_J), \vecc p, \vecc{Q}_J)$.

\paragraph{Fifth-order setting with buyers' single or no side payments} For the fifth-order setting, we can make the same observation on the prices and payoffs as for the fourth-order setting with buyers' single side payments. Therefore, there is no feasible assignment in the fifth-order setting with buyers' or no side payments that is payoff equivalent to $((\vecc{S}_I, \vecc{Z}_J), \vecc p, \vecc{Q}_J)$.
\end{example}

\subsection{Stability Counterexamples}
In this section, we present counterexamples related to stability, focusing in particular on how the social welfare achieved by stable assignments varies across different notions of stability. The examples are the following:
\begin{enumerate}
\item Example~\ref{example:NTUDifferentPayoffs} shows that, in some economies $\mathcal{E}$ with buyers' single or no side-payment restrictions, there exist $h^{\mathrm{th}}$-order admissible assignments for $h \in \{1, 2, 3\}$ that are NTU-stable but achieve different levels of social welfare.
\item Example~\ref{example:sRTUCorePriceability} shows that, in some economies $\mathcal{E}$ with sellers' single (or no) side-payment restrictions, there exist $h^{\mathrm{th}}$-order admissible assignments for $h \in \{1, 2\}$ that are NTU-stable but achieve different levels of social welfare.
\item Example~\ref{example:RTUUniquePayoff} shows that, in some economies $\mathcal{E}$ with first- or second-order resale restrictions, the maximum attainable social welfare differs across the corresponding stability notions for TU-, bRTU-, sRTU-, and NTU-stable vectors.
\item Example~\ref{example:emptyCores} shows that, for some economies $\mathcal{E}$ with any resale $h \in \{1, \dots, 5\}$ and side-payment restrictions $r \in \{\mathrm{full}, \, \mathrm{sep}, \, \mathrm{buy}, \, \mathrm{sell}, \, \mathrm{no}\}$, the $\mathcal{T}$-core is empty for every partition $\mathcal{T}$.
\item Example~\ref{example:sidePaymentsAreCool} and Example~\ref{example:sidePaymentsAreCool2} show that there exist economies $\mathcal{E}$ with first- or second-order resale restrictions for which the TU-core is empty, whereas the NTU-core is non-empty. Moreover, allowing side payments on only one side of the market (specifically, on the buyers' side in the first example and on the sellers' side in the second) yields NTU-stable assignments whose social welfare is arbitrarily larger than that of any stable assignment in the no side-payment setting.
\end{enumerate}

\subsubsection{NTU-Stable Assignments With Different Social Welfare}
\begin{example}\label{example:NTUDifferentPayoffs}
Consider an economy $\mathcal{E}$ with two buyers $I = \{i_1, i_2\}$ and three sellers $J = \{j_1, j_2, j_3\}$. The sellers' initial endowments are $\vecc\Omega_{j_1} \equiv \{a\}$ and $\vecc\Omega_{j_2} = \vecc\Omega_{j_3} \equiv \{b\}$, so that the total endowment of the economy is $\vecc\Omega \equiv \{a, b, b\}$. The agents' valuations are reported in Table~\ref{table:NTUCorePriceability}.

\begin{table}[ht]
  \centering
  \begin{tabular}{|c|c|c|c|c|c|c|c|c|}
    \hline
    $\vecc{\omega}$ & $\{a\}$ & $\{b\}$ & $\{a, b\}$ & $\{b, b\}$ & $\{a, b, b\}$ \\
    \hline
    $v_{i_1}(\vecc{\omega})$ & $6$ & $0$ & $6$ & $0$ & $6$ \\
    \hline
    $v_{i_2}(\vecc{\omega})$ & $0$ & $0$ & $0$ & $0$ & $14$  \\
    \hline
    $v_{j_1}(\vecc{\omega})$ & $6$ & $-$ & $-$ & $-$ & $-$ \\
    \hline
    $v_{j_2}(\vecc{\omega})$ & $-$ & $4$ & $-$ & $-$ & $-$ \\
    \hline
    $v_{j_3}(\vecc{\omega})$ & $-$ & $3$ & $-$ & $-$ & $-$ \\
    \hline
  \end{tabular}

  \caption{Valuations $v_k(\vecc{\omega})$ for each agent $k \in N$ and each bundle $\vecc{\omega} \leq \vecc{\Omega}$ in Example~\ref{example:NTUDifferentPayoffs}.}
  \label{table:NTUCorePriceability}
\end{table}

\begin{figure}[ht]
  \centering
  \begin{tikzpicture}[>=Stealth, thick, node/.style={draw, circle, fill=white, minimum size=0.7cm}]
    \begin{scope}[local bounding box=diagram]
      \node[node] (i1) at (0,2) {$i_1$};
      \node[left=0.15cm of i1, font=\small] (text-i1) {
        \begin{tabular}{r|l} $v_{i_1}(\{a\}) = 6$ & \textcolor{TUMBlue}{$\pi_{i_1} = 0$} \\ & \textcolor{TUMGreen}{$\tilde\pi_{i_1} = 0$}
      \end{tabular}};

      \node[node] (i2) at (0,0) {$i_2$};
      \node[left=0.15cm of i2, font=\small] (text-i2) {
        \begin{tabular}{r|l} $v_{i_2}(\{a, b, b\}) = 14$ & \textcolor{TUMBlue}{$\pi_{i_2} = 0$} \\ & \textcolor{TUMGreen}{$\tilde\pi_{i_2} = 0$}
      \end{tabular}};

      \node[node] (j1) at (3.5,2) {$j_1$};
      \node[right=0.15cm of j1, font=\small] (text-j1) {
        \begin{tabular}{r|l} \textcolor{TUMBlue}{$\pi_{j_1} = 0$} & $v_{j_1}(\{a\}) = 6$ \\ \textcolor{TUMGreen}{$\tilde\pi_{j_1} = 0$} &
      \end{tabular}};

      \node[node] (j2) at (3.5,0) {$j_2$};
      \node[right=0.15cm of j2, font=\small] (text-j2) {
        \begin{tabular}{r|l} \textcolor{TUMBlue}{$\pi_{j_2} = 0$} & $v_{j_2}(\{b\}) = 4$ \\ \textcolor{TUMGreen}{$\tilde\pi_{j_2} = 0$} &
      \end{tabular}};

      \node[node] (j3) at (3.5,-2) {$j_3$};
      \node[right=0.15cm of j3, font=\small] (text-j3) {
        \begin{tabular}{r|l} \textcolor{TUMBlue}{$\pi_{j_3} = 1$} & $v_{j_3}(\{b\}) = 3$ \\ \textcolor{TUMGreen}{$\tilde\pi_{j_3} = 0$} &
      \end{tabular}};

      \begin{scope}[on background layer]
        \foreach \n/\t in {i1/text-i1, i2/text-i2, j1/text-j1, j2/text-j2, j3/text-j3}
        \node[fill=gray!10, rounded corners=3mm, inner sep=1.2mm, fit=(\n) (\t)] {};
      \end{scope}

      \draw[->, TUMGreen] (i1) -- node[midway, above, font=\scriptsize] {$\tilde p_{i_1j_1} = 6$} (j1);
      \draw[->, TUMBlue] (i2) -- node[midway, below, sloped, font=\scriptsize] {$p_{i_2j_1} = 6$} (j1);
      \draw[->, TUMBlue] (i2) -- node[midway, below, sloped, font=\scriptsize] {$p_{i_2j_2} = 4$} (j2);
      \draw[->, TUMBlue] (i2) -- node[midway, below, sloped, font=\scriptsize] {$p_{i_2j_3} = 4$} (j3);

      \begin{scope}[on background layer]
        \node[draw=black, line width=0.5pt, rounded corners=5mm, inner sep=10pt, fit=(diagram)] {};
      \end{scope}
    \end{scope}
  \end{tikzpicture}

  \caption{Graph representation of the stable assignments in Example~\ref{example:NTUDifferentPayoffs}, $((\vecc{S}_I, \vecc{Z}_J), \vecc p)$ with blue arrows and $((\vecc{\tilde S}_I, \vecc{\tilde Z}_J), \vecc{p})$ with green arrows.}
  \label{figure:NTUCorePriceability}
\end{figure}

First, consider an allocation $(\vecc{S}_I, \vecc{Z}_J)$ in which buyer $i_2$ requests all the items in the economy. In particular, let $\vecc{s}_{i_2j_1} = \vecc{z}_{i_2j_1} \equiv \{a\}$ and $\vecc{s}_{i_2j_2} = \vecc{z}_{i_2j_2} = \vecc{s}_{i_2j_3} = \vecc{z}_{i_2j_3} \equiv \{b\}$, while all the remaining individual bundles are empty. Define the pricing system $\vecc{p}$ by $p_a = 6$ and $p_b = 4$. Since the allocation is third-order admissible and the pricing system is first-order admissible, the assignment $((\vecc{S}_I, \vecc{Z}_J), \vecc p)$ is first-, second-, and third-order admissible. The resulting payoffs are $\pi_{j_3} = 1$ and $\pi_ k = 0$ for any other agent $k \in N \setminus \{j_3\}$, so that the total payoff is $\sum_{k \in N} \pi_k = 1$  (see Figure~\ref{figure:NTUCorePriceability}).
We can immediately show that this assignment is NTU-stable in any market setting $\mathcal{E}_h^r$, including for every $h \in \{1, 2, 3\}$ and $r \in \{\mathrm{buy},\, \mathrm{no}\}$. Observe that both buyers compete for the only unit of item $a$. However, buyer $i_1$ values $\{a\}$ at exactly $v_{i_1}(\{a\}) = 6$, and therefore cannot profitably offer seller $j_1$ a price greater than $p_{j_1} = p_a = 6$ for its purchase. Consequently, no deviating coalition exists.

Consider next the assignment $((\vecc{\tilde S}_I, \vecc{\tilde Z}_J), \vecc{p})$ in which $\vecc{\tilde s}_{i_1j_1} = \vecc{\tilde z}_{i_1j_1} \equiv \{a\}$, while all other bundles are empty. The corresponding total payoff is $\sum_{k \in N} \tilde\pi_k = 0$ (see Figure~\ref{figure:NTUCorePriceability}).
In this case, the assignment is NTU-stable only in the markets $\mathcal{E}_h^r$ with $h \in \{1, 2, 3\}$ and $r \in \{\mathrm{buy},\, \mathrm{no}\}$, since only in these settings there is no coalition in which seller $j_1$ can sell their item to $i_2$ and achieve a positive payoff. Indeed, to get a positive payoff themselves, buyer $i_2$ must acquire the entire bundle $\{a, b, b\}$ for a cumulative price strictly smaller than $v_{i_2}(\{a, b, b\})=14$. However, because of the pricing restrictions, they must pay a price of at least $v_{j_1}(\{a\}) = 6$ to seller $j_1$ and a price of at least $$\max\{v_{j_2}(\{b\}), v_{j_3}(\{b\})\} = 4$$ to both $j_2$ and $j_3$. Therefore, $i_2$'s cumulative price in the deviating coalition would need to be at least $6 + 2 \cdot 4 = 14$, leaving $i_2$ with a non-positive payoff.
\end{example}

\begin{example}\label{example:sRTUCorePriceability}
Consider an economy $\mathcal{E}$ with three buyers $I = \{i_1, i_2, i_3\}$ and two sellers $J = \{j_1, j_2\}$. The sellers' initial endowments are $\vecc\Omega_{j_1} \equiv \{a\}$ and $\vecc\Omega_{j_2} \equiv \{a, b, b\}$, so that the total endowment of the economy is $\vecc\Omega \equiv \{a, a, b, b\}$. The agents' valuations are reported in Table~\ref{table:sRTUCorePriceability}.

\begin{table}[ht]
  \centering
  \begin{tabular}{|c|c|c|c|c|c|c|c|c|c|c|}
    \hline
    $\vecc{\omega}$ & $\{a\}$ & $\{b\}$ & $\{a, a\}$ & $\{a, b\}$ & $\{b, b\}$ & $\{a, a, b\}$ & $\{a, b, b\}$ & $\{a, a, b, b\}$ \\
    \hline
    $v_{i_1}(\vecc{\omega})$ & $6$ & $0$ & $6$ & $6$ & $0$ & $6$ & $6$ & $6$ \\
    \hline
    $v_{i_2}(\vecc{\omega})$ & $0$ & $4$ & $0$ & $4$ & $4$ & $4$ & $4$ & $4$  \\
    \hline
    $v_{i_3}(\vecc{\omega})$ & $0$ & $3$ & $0$ & $3$ & $3$ & $3$ & $3$ & $3$  \\
    \hline
    $v_{j_1}(\vecc{\omega})$ & $6$ & $-$ & $-$ & $-$ & $-$ & $-$ & $-$ & $-$ \\
    \hline
    $v_{j_2}(\vecc{\omega})$ & $12$ & $12$ & $-$ & $12$ & $12$ & $-$ & $12$ & $-$ \\
    \hline
  \end{tabular}

  \caption{Valuations $v_k(\vecc{\omega})$ for each agent $k \in N$ and each bundle $\vecc{\omega} \leq \vecc{\Omega}$ in Example~\ref{example:sRTUCorePriceability}.}
  \label{table:sRTUCorePriceability}
\end{table}

\begin{figure}[ht]
  \centering
  \begin{tikzpicture}[>=Stealth, thick, node/.style={draw, circle, fill=white, minimum size=0.7cm}]
    \begin{scope}[local bounding box=diagram]
      \node[node] (i1) at (0,2) {$i_1$};
      \node[left=0.15cm of i1, font=\small] (text-i1) {
        \begin{tabular}{r|l} $v_{i_1}(\{a\}) = 6$ & \textcolor{TUMBlue}{$\pi_{i_1} = 0$} \\ & \textcolor{TUMGreen}{$\tilde\pi_{i_1} = 0$}
      \end{tabular}};

      \node[node] (i2) at (0,0) {$i_2$};
      \node[left=0.15cm of i2, font=\small] (text-i2) {
        \begin{tabular}{r|l} $v_{i_2}(\{b\}) = 4$ & \textcolor{TUMBlue}{$\pi_{i_2} = 1$} \\ & \textcolor{TUMGreen}{$\tilde\pi_{i_2} = 0$}
      \end{tabular}};

      \node[node] (i3) at (0,-2) {$i_3$};
      \node[left=0.15cm of i3, font=\small] (text-i3) {
        \begin{tabular}{r|l} $v_{i_3}(\{b\}) = 3$ & \textcolor{TUMBlue}{$\pi_{i_3} = 0$} \\ & \textcolor{TUMGreen}{$\tilde\pi_{i_3} = 0$}
      \end{tabular}};

      \node[node] (j1) at (3.5,2) {$j_1$};
      \node[right=0.15cm of j1, font=\small] (text-j1) {
        \begin{tabular}{r|l} \textcolor{TUMBlue}{$\pi_{j_1} = 0$} & $v_{j_1}(\{a\}) = 6$ \\ \textcolor{TUMGreen}{$\tilde\pi_{j_1} = 0$} &
      \end{tabular}};

      \node[node] (j2) at (3.5,0) {$j_2$};
      \node[right=0.15cm of j2, font=\small] (text-j2) {
        \begin{tabular}{r|l} \textcolor{TUMBlue}{$\pi_{j_2} = 0$} & $v_{j_2}(\{a, b, b\}) = 12$ \\ \textcolor{TUMGreen}{$\tilde\pi_{j_2} = 0$} &
      \end{tabular}};

      \begin{scope}[on background layer]
        \foreach \n/\t in {i1/text-i1, i2/text-i2, i3/text-i3, j1/text-j1, j2/text-j2}
        \node[fill=gray!10, rounded corners=3mm, inner sep=1.2mm, fit=(\n) (\t)] {};
      \end{scope}

      \draw[->, TUMGreen] (i1) -- node[midway, above, sloped, font=\scriptsize] {$\tilde p_{i_1j_1} = 6$} (j1);
      \draw[->, TUMBlue] (i1) -- node[midway, above, sloped, font=\scriptsize] {$p_{i_1j_2} = 6$} (j2);
      \draw[->, TUMBlue] (i2) -- node[midway, above, sloped, font=\scriptsize] {$p_{i_2j_2} = 3$} (j2);
      \draw[->, TUMBlue] (i3) -- node[midway, above, sloped, font=\scriptsize] {$p_{i_3j_2} = 3$} (j2);

      \begin{scope}[on background layer]
        \node[draw=black, line width=0.5pt, rounded corners=5mm, inner sep=10pt, fit=(diagram)] {};
      \end{scope}
    \end{scope}
  \end{tikzpicture}

  \caption{Graph representation of the assignments in Example~\ref{example:sRTUCorePriceability}, $((\vecc{S}_I, \vecc{Z}_J), \vecc p)$ with blue arrows and $((\vecc{\tilde S}_I, \vecc{\tilde Z}_J), \vecc{p})$ with green arrows.}
  \label{figure:sRTUCorePriceability}
\end{figure}

First, consider an allocation $(\vecc{S}_I, \vecc{Z}_J)$ in which seller $j_2$ offers all their items to the buyers. In particular, let $\vecc{s}_{i_1j_2} = \vecc{z}_{i_1j_2} \equiv \{a\}$ and $\vecc{s}_{i_2j_2} = \vecc{z}_{i_2j_2} = \vecc{z}_{i_3j_2} = \vecc{z}_{i_3j_2} \equiv \{b\}$, while all the other individual bundles are empty. Let the pricing system $\vecc{p}$ be defined by $p_a = 6$ and $p_b = 3$. Since the allocation is second-order admissible and the pricing system is first-order admissible, the assignment $((\vecc{S}_I, \vecc{Z}_J), \vecc p)$ is first- and second-order admissible. The resulting payoffs are $\pi_{i_2} = 1$ and $\pi_k = 0$ for any other agent $k \in N\setminus\{i_2\}$, so that the total payoff is $\sum_{k \in N} \pi_k = 1$ (see Figure~\ref{figure:sRTUCorePriceability}).
It is immediate to show that this assignment is NTU-stable in any market setting $\mathcal{E}_h^r$, including for every $h \in \{1, 2\}$ and $r \in \{\mathrm{sell},\, \mathrm{no}\}$. Observe that buyer $i_1$ may request one unit of item $a$ from either seller $j_1$ or seller $j_2$. However, since seller $j_1$ cannot profitably receive a price below $v_{j_1}(\{a\}) = 6$, buyer $i_1$ wouldn't improve their payoff by requesting $a$ from $j_1$ instead.

Consider next the assignment $((\vecc{\tilde S}_I, \vecc{\tilde Z}_J), \vecc{p})$ in which $\vecc{\tilde s}_{i_1j_1} = \vecc{\tilde z}_{i_1j_1} \equiv \{a\}$, while all the remaining bundles are empty. The corresponding total payoff is $\sum_{k \in N} \tilde\pi_k = 0$ (see Figure~\ref{figure:NTUCorePriceability}).
In this case, the assignment is NTU-stable only in the markets $\mathcal{E}_h^r$ with $h \in \{1, 2\}$ and $r \in \{\mathrm{sell},\, \mathrm{no}\}$, since in these settings there is no coalition in which buyer $i_1$ can request item $a$ from seller $j_2$ and achieve a positive payoff. To see this, observe that $i_1$ can afford to request $a$ from $j_2$ only if both buyers $i_2$ and $i_3$ each request one unit of item $b$ from $j_2$. This is because $j_2$ must receive a cumulative price of at least $v_{j_2}(\{a, b, b\}) = 12$ for the assignment to be feasible. However, because of the pricing restrictions, both $i_2$ and $i_3$ can only pay a maximum price of
$$\min\{v_{i_2}(\{b\}), v_{i_3}(\{b\})\} = 3.$$
Therefore, $i_1$ must pay at least $12 - 2 \cdot 3 = 6$, which leaves them with a non-positive payoff.
\end{example}

\subsubsection{Different Social Welfare Across the Different Stability Notions}
\begin{example}\label{example:RTUUniquePayoff}
Consider an economy $\mathcal{E}$ with three buyers $I = \{i_1, i_2, i_3\}$ and three sellers $J = \{j_1, j_2, j_3\}$. The sellers' initial endowments are $\vecc\Omega_{j_1} \equiv \{a, a\}$ and $\vecc\Omega_{j_2} = \vecc\Omega_{j_3} \equiv \{b\}$, so that the total endowment of the economy is $\vecc\Omega \equiv \{a, a, b, b\}$. The agents' valuations are reported in Table~\ref{table:RTUUniquePayoff}.

\begin{table}[ht]
  \centering
  \begin{tabular}{|c|c|c|c|c|c|c|c|c|}
    \hline
    $\vecc{\omega}$ & $\{a\}$ & $\{b\}$ & $\{a, a\}$ & $\{a, b\}$ & $\{b, b\}$ & $\{a, a, b\}$ & $\{a, b, b\}$ & $\{a, a, b, b\}$ \\
    \hline
    $v_{i_1}(\vecc{\omega})$ & $9$ & $0$ & $9$ & $9$ & $0$ & $9$ & $9$ & $9$ \\
    \hline
    $v_{i_2}(\vecc{\omega})$ & $4$ & $0$ & $4$ & $4$ & $0$ & $4$ & $4$ & $4$  \\
    \hline
    $v_{i_3}(\vecc{\omega})$ & $0$ & $0$ & $0$ & $0$ & $8$ & $0$ & $8$ & $8$  \\
    \hline
    $v_{j_1}(\vecc{\omega})$ & $10$ &$-$ & $10$ & $-$ & $-$ & $-$ & $-$ & $-$ \\
    \hline
    $v_{j_2}(\vecc{\omega})$ & $-$ & $5$ & $-$ & $-$ & $-$ & $-$ & $-$ & $-$ \\
    \hline
    $v_{j_3}(\vecc{\omega})$ & $-$ & $1$ & $-$ & $-$ & $-$ & $-$ & $-$ & $-$ \\
    \hline
  \end{tabular}

  \caption{Valuations $v_k(\vecc{\omega})$ for each agent $k \in N$ and each bundle $\vecc{\omega} \leq \vecc{\Omega}$ in Example~\ref{example:RTUUniquePayoff}.}
  \label{table:RTUUniquePayoff}
\end{table}

\begin{figure}[ht]
  \centering
  \adjustbox{center}{%
    \begin{tikzpicture}[>=Stealth, thick, node/.style={draw, circle, fill=white, minimum size=0.7cm}]
      \begin{scope}[local bounding box=diagram]
        \node[node] (i1) at (0,0) {$i_1$};
        \node[left=0.15cm of i1, font=\small] (text-i1) {
          \begin{tabular}{r|l} $v_{i_1}(\{a\}) = 9$ & \textcolor{TUMBlue}{$\pi_{i_1} = 3$} \\ & \textcolor{TUMGreen}{$\tilde\pi_{i_1} = 0$}
        \end{tabular}};

        \node[node] (i2) at (0,-2) {$i_2$};
        \node[left=0.15cm of i2, font=\small] (text-i2) {
          \begin{tabular}{r|l} $v_{i_2}(\{a\}) = 4$ & \textcolor{TUMBlue}{$\pi_{i_2} = 0$} \\ & \textcolor{TUMGreen}{$\tilde\pi_{i_2} = 0$}
        \end{tabular}};

        \node[node] (i3) at (0,-4) {$i_3$};
        \node[left=0.15cm of i3, font=\small] (text-i3) {
          \begin{tabular}{r|l} $v_{i_3}(\{b, b\}) = 8$ & \textcolor{TUMBlue}{$\pi_{i_3} = 0$} \\ & \textcolor{TUMGreen}{$\tilde\pi_{i_3} = 0$}
        \end{tabular}};

        \node[node] (j1) at (3.5,0) {$j_1$};
        \node[right=0.15cm of j1, font=\small] (text-j1) {
          \begin{tabular}{r|l} \textcolor{TUMBlue}{$\pi_{j_1} = 0$} & $v_{j_1}(\{a\}) = 10$ \\ \textcolor{TUMGreen}{$\tilde\pi_{j_1} = 0$} & $v_{j_1}(\{a, a\}) = 10$
        \end{tabular}};

        \node[node] (j2) at (3.5,-2) {$j_2$};
        \node[right=0.15cm of j2, font=\small] (text-j2) {
          \begin{tabular}{r|l} \textcolor{TUMBlue}{$\pi_{j_2} = 0$} & $v_{j_2}(\{b\}) = 5$ \\ \textcolor{TUMGreen}{$\tilde\pi_{j_2} = 0$} &
        \end{tabular}};

        \node[node] (j3) at (3.5,-4) {$j_3$};
        \node[right=0.15cm of j3, font=\small] (text-j3) {
          \begin{tabular}{r|l} \textcolor{TUMBlue}{$\pi_{j_3} = 0$} & $v_{j_3}(\{b\}) = 1$ \\ \textcolor{TUMGreen}{$\tilde\pi_{j_3} = 2$} &
        \end{tabular}};

        \begin{scope}[on background layer]
          \foreach \n/\t in {i1/text-i1, i2/text-i2, i3/text-i3, j1/text-j1, j2/text-j2, j3/text-j3}
          \node[fill=gray!10, rounded corners=3mm, inner sep=1.2mm, fit=(\n) (\t)] {};
        \end{scope}

        \draw[->, TUMBlue] (i1) -- node[midway, above, sloped, font=\scriptsize] {$p_{i_1j_1} = 5$} (j1);
        \draw[->, TUMBlue] (i2) -- node[midway, above, sloped, font=\scriptsize] {$p_{i_2j_1} = 5$} (j1);
        \draw[->, TUMBlue, dashed] (i1) -- node[midway, left, font=\scriptsize] {$q_{i_1i_2} = 1$} (i2);
        \draw[->, TUMGreen] (i3) -- node[midway, above, sloped, font=\scriptsize] {$\tilde p_{i_3j_2} = 4$} (j2);
        \draw[->, TUMGreen] (i3) -- node[midway, above, sloped, font=\scriptsize] {$\tilde p_{i_3j_3} = 4$} (j3);
        \draw[->, TUMGreen, dashed] (j3) -- node[midway, right, font=\scriptsize] {$\tilde q_{j_3j_2} = 1$} (j2);

        \begin{scope}[on background layer]
          \node[draw=black, line width=0.5pt, rounded corners=5mm, inner sep=10pt, fit=(diagram)] {};
        \end{scope}
      \end{scope}
    \end{tikzpicture}
  }

  \caption{Graph representation of the assignments in Example~\ref{example:RTUUniquePayoff}, $((\vecc{S}_I, \vecc{Z}_J), \vecc p, \vecc{Q}_I)$ with blue arrows and $((\vecc{\tilde S}_I, \vecc{\tilde Z}_J), \vecc{\tilde p}, \vecc{\tilde Q}_J)$ with green arrows.}
  \label{figure:RTUUniquePayoff}
\end{figure}

\paragraph{No side-payment setting} If no side payments are allowed, the only first- or second-order feasible (and therefore stable) allocation is the empty one, with a total payoff of 0.

\paragraph{Buyers' single side-payment setting} If we introduce buyers' single side payments, the only NTU-stable assignment $((\vecc{S}_I, \vecc{Z}_J), \vecc p, \vecc Q_{I}) \in \mathcal{F}(\mathcal{E}_h^{\mathrm{buy}})$ for $h \in \{1, 2\}$ is the one represented with blue arcs in Figure~\ref{figure:RTUUniquePayoff}, where the non-empty individual bundles are $\vecc{s}_{i_1j_1} = \vecc{z}_{i_1j_1} = \vecc{s}_{i_2j_1} = \vecc{z}_{i_2j_1} \equiv \{a\}$, and the pricing system $\vecc p$ is such that $p_a = 5$ and $p_b = 4$. The total payoff is $\sum_{k \in N} \pi_k = 3$.

\paragraph{Sellers' single side-payment setting} With sellers' single side payments, the only NTU-stable assignment $((\vecc{\tilde S}_I, \vecc{\tilde Z}_J), \vecc{\tilde p}, \vecc{\tilde Q}_{J}) \in \mathcal{F}(\mathcal{E}_h^{\mathrm{buy}})$ for $h \in \{1, 2\}$ is the one represented with green arrows in Figure~\ref{figure:RTUUniquePayoff}, where $\vecc{\tilde s}_{i_3j_2} = \vecc{\tilde z}_{i_3j_2} = \vecc{\tilde s}_{i_3j_3} = \vecc{\tilde z}_{i_3j_3} \equiv \{b\}$, and the pricing system $\vecc{\tilde p}$ is again defined by $\tilde p_a = 5$ and $\tilde p_b = 4$. The total payoff is $\sum_{k \in N} \tilde \pi_k = 2$.

\paragraph{Full and separate side-payment settings} Finally, with full or separate side payments, the only stable assignment is the combination of the two previous ones, with a total payoff of $\sum_{k \in N} \pi_k + \sum_{k \in N} \tilde \pi_k = 5$.
\end{example}

\subsubsection{Markets with No Stable Assignments}
\begin{example}\label{example:emptyCores}
Consider an economy $\mathcal{E}$ with two buyers $I = \{i_1, i_2\}$ and two sellers $J = \{j_1, j_2\}$. The sellers' initial endowments are $\vecc\Omega_{j_1} \equiv \{a\}$ and $\vecc\Omega_{j_2} \equiv \{b\}$, so that the total endowment of the economy is $\vecc\Omega \equiv \{a, b\}$. The agents' valuations are reported in Table~\ref{table:emptyCores}.

\begin{table}[ht]
  \centering
  \begin{tabular}{|c|c|c|c|c|c|c|c|c|}
    \hline
    $\vecc{\omega}$ & $\{a\}$ & $\{b\}$ & $\{a, b\}$ \\
    \hline
    $v_{i_1}(\vecc{\omega})$ & $5.5$ & $6.5$ & $6.5$ \\
    \hline
    $v_{i_2}(\vecc{\omega})$ & $0$ & $0$ & $11$ \\
    \hline
    $v_{j_1}(\vecc{\omega})$ & $4$ & $-$ & $-$ \\
    \hline
    $v_{j_2}(\vecc{\omega})$ & $-$ & $5$ & $-$ \\
    \hline
  \end{tabular}

  \caption{Valuations $v_k(\vecc{\omega})$ for each agent $k \in N$ and each bundle $\vecc{\omega} \leq \vecc{\Omega}$ in Example~\ref{example:emptyCores}.}
  \label{table:emptyCores}
\end{table}

\begin{figure}[ht]
  \centering
  \adjustbox{center}{%
    \begin{tikzpicture}[>=Stealth, thick, node/.style={draw, circle, fill=white, minimum size=0.7cm}]
      \begin{scope}[local bounding box=diagram]
        \node[node] (i1) at (0,1) {$i_1$};
        \node[left=0.15cm of i1, font=\small] (text-i1) {
          \begin{tabular}{r|l} $v_{i_1}(\{a\}) = 5.5$ & \textcolor{TUMBlue}{$0 \leq \pi_{i_1} \leq 1.5$} \\ $v_{i_1}(\{b\}) = 6.5$ & \textcolor{TUMGreen}{$0 \leq \tilde\pi_{i_1} \leq 1.5$} \\ & \textcolor{TUMOrange}{$\hat\pi_{i_1} = 0$}
        \end{tabular}};

        \node[node] (i2) at (0,-1.5) {$i_2$};
        \node[left=0.15cm of i2, font=\small] (text-i2) {
          \begin{tabular}{r|l} $v_{i_2}(\{a, b\}) = 11$ & \textcolor{TUMBlue}{$\pi_{i_2} = 0$} \\ & \textcolor{TUMGreen}{$\tilde\pi_{i_2} = 0$} \\ & \textcolor{TUMOrange}{$0 \leq \hat\pi_{i_2} \leq 2$}
        \end{tabular}};

        \node[node] (j1) at (5,1) {$j_1$};
        \node[right=0.15cm of j1, font=\small] (text-j1) {
          \begin{tabular}{r|l} \textcolor{TUMBlue}{$0 \leq \pi_{j_1} \leq 1.5$} & $v_{j_1}(\{a\}) = 4$ \\ \textcolor{TUMGreen}{$\tilde\pi_{j_1} = 0$} & \\ \textcolor{TUMOrange}{$0 \leq\hat\pi_{j_1} \leq 2$} &
        \end{tabular}};

        \node[node] (j2) at (5,-1.5) {$j_2$};
        \node[right=0.15cm of j2, font=\small] (text-j2) {
          \begin{tabular}{r|l} \textcolor{TUMBlue}{$\pi_{j_2} = 0$} & $v_{j_2}(\{b\}) = 5$ \\ \textcolor{TUMGreen}{$0 \leq \tilde\pi_{j_2} \leq 1.5$} & \\ \textcolor{TUMOrange}{$0 \leq \hat\pi_{j_2} \leq 2 - \hat{\pi}_{j_1}$} &
        \end{tabular}};

        \begin{scope}[on background layer]
          \foreach \n/\t in {i1/text-i1, i2/text-i2, j1/text-j1, j2/text-j2}
          \node[fill=gray!10, rounded corners=3mm, inner sep=1.2mm, fit=(\n) (\t)] {};
        \end{scope}

        \draw[->, TUMBlue] (i1) -- node[midway, above, font=\scriptsize] {$4 \leq p_{a} \leq 5.5$} (j1);
        \draw[->, TUMGreen] (i1) -- node[midway, above, sloped, pos=0.75, font=\scriptsize] {$5 \leq \tilde p_{b} \leq 6.5$} (j2);
        \draw[->, TUMOrange] (i2) -- node[midway, above, sloped, pos=0.25, font=\scriptsize] {$4 \leq \hat p_{a}\leq 6$} (j1);
        \draw[->, TUMOrange] (i2) -- node[midway, below, font=\scriptsize] {$5 \leq \hat p_{b} \leq 11 -\hat{p}_a$} (j2);

        \begin{scope}[on background layer]
          \node[draw=black, line width=0.5pt, rounded corners=5mm, inner sep=10pt, fit=(diagram)] {};
        \end{scope}
      \end{scope}
  \end{tikzpicture}}

  \caption{Graph representation of the three non-trivial feasible assignments in Example~\ref{example:emptyCores}.}
  \label{figure:emptyCores}
\end{figure}

In any market setting $\mathcal{E}_h^r$, with $h \in \{1, \dots, 5\}$ and $r \in \{\mathrm{full}, \, \mathrm{sep}, \, \mathrm{buy}, \, \mathrm{sell}, \, \mathrm{no}\}$, the only non-trivial feasible assignments are the following:
\begin{enumerate}
  \item Buyer $i_1$ purchasing bundle $\vecc{s}_{i_1j_1} = \vecc{z}_{i_1j_1} \equiv\{a\}$ from seller $j_1$ at a price $4 \leq p_a \leq 5.5$.
  \item Buyer $i_1$ purchasing bundle $\vecc{\tilde s}_{i_1j_2} = \vecc{\tilde z}_{i_1j_2} \equiv \{b\}$ from seller $j_2$ at a price $5 \leq \tilde p_b \leq 6.5$.
  \item Buyer $i_2$ purchasing bundle $\vecc{\hat s}_{i_2j_1} = \vecc{\hat z}_{i_2j_1} \equiv \{a\}$ from seller $j_1$ and bundle $\vecc{\hat s}_{i_2j_2} = \vecc{\hat z}_{i_2j_2} \equiv\{b\}$ from seller $j_2$ at a total price of $9 \leq \hat p_a + \hat p_b \leq 11$.
\end{enumerate}
We now show that none of these assignments is stable.
\begin{enumerate}
  \item In the first case, the highest price that buyer $i_1$ can pay is $$p_a = v_{i_1}(\{a\}) = 5.5.$$ But buyer $i_2$ can offer to seller $j_1$ a strictly greater amount of money, so that both agents are strictly better off. For example, this can be achieved through the allocation $(\vecc{\hat S}_I, \vecc{\hat Z}_J)$ with prices $\hat p_a = 5.6 > p_a$ and $\hat p_b = 5.1$.
  \item In the second case, the maximum price that buyer $i_1$ can offer is $\tilde p_b = v_{i_1}(\{b\}) = 6.5$. However, buyer $i_2$ can offer to seller $j_2$ a strictly greater amount of money, for example, by realizing again the allocation $(\vecc{\hat S}_I, \vecc{\hat Z}_J)$ but with prices $\tilde p_a = 4$ and $\tilde p_b = 7 > \tilde p_b$.
  \item In the third case, the maximum cumulative price for buyer $i_2$ is $\hat p_{i_2} = v_{i_2}(\{a, b\}) = 11$. This means that if he pays seller $j_1$ a price $\hat p_{a} > 5.5 = v_{i_1}(\{a\})$ to not make him deviate with buyer $i_1$, then he can pay seller $j_2$ only
    $$\hat p_{b} < v_{i_2}(\{a, b\}) - \hat p_{a} = 5.5 < v_{i_1}(\{b\}),$$
    so $j_2$ will deviate with $i_1$. Similarly, if he pays seller $j_2$ an amount $\hat p_{b} > 6.5 = v_{i_1}(\{b\})$, then he can pay seller $j_1$ only
    $$\hat p_{a} < v_{i_2}(\{a, b\}) - \hat p_{b} = 4.5 < v_{i_1}(\{a\}),$$
    so $j_1$ will deviate with $i_1$.
\end{enumerate}
\end{example}

\subsubsection{Markets with Empty TU-Core, Non-empty NTU-Core, and Arbitrarily Large Social Welfare for RTU-Core}

\begin{example}\label{example:sidePaymentsAreCool}
Consider an economy $\mathcal{E}$ with four buyers $I = \{i_1, \dots, i_4\}$ and three sellers $J = \{j_1, j_2, j_3\}$. The sellers' initial endowments are $\vecc\Omega_{j_1} = \vecc\Omega_{j_2} \equiv \{a, b\}$ and $\vecc\Omega_{j_3} \equiv \{c, c\}$, so that the total endowment of the economy is $\vecc\Omega \equiv \{a, a, b, b, c, c\}$. The agents' valuations are reported in Table~\ref{table:sidePaymentsAreCool}, where $\alpha \gg 0$.

\begin{table}[ht]
  \centering
  \begin{tabular}{|c|c|c|c|c|c|c|c|c|}
    \hline
    $\vecc{\omega}$ & $\{a\}$ & $\{b\}$ & $\{c\}$ & $\{a, a\}$ & $\{a, b\}$ & $\{c, c\}$ & Others \\
    \hline
    $v_{i_1}(\vecc{\omega})$ & $0$ & $0$ & $0$ & $10$ & $0$ & $0$ & $0$ or $10$ \\
    \hline
    $v_{i_2}(\vecc{\omega})$ & $0$ & $5$ & $0$ & $0$ & $5$ & $0$ & $0$ or $5$  \\
    \hline
    $v_{i_3}(\vecc{\omega})$ & $0$ & $0$ & $2\alpha - 1$ & $0$ & $0$ & $2\alpha - 1$ & $0$ or $2\alpha - 1$ \\
    \hline
    $v_{i_4}(\vecc{\omega})$ & $0$ & $0$ & $\alpha - 1$ & $0$ & $0$ & $\alpha - 1$ & $0$ or $\alpha - 1$ \\
    \hline
    $v_{j_1}(\vecc{\omega})$ & $3$ & $3$ & $-$ & $-$ & $20$ & $-$ & $-$ \\
    \hline
    $v_{j_2}(\vecc{\omega})$ & $6$ & $3$ & $-$ & $-$ & $20$ & $-$ & $-$ \\
    \hline
    $v_{j_3}(\vecc{\omega})$ & $-$ & $-$ & $2\alpha$ & $-$ & $-$ & $2\alpha$ & $-$ \\
    \hline
  \end{tabular}

  \caption{Valuations $v_k(\vecc{\omega})$ for each agent $k \in N$ and each bundle $\vecc{\omega} \leq \vecc{\Omega}$ in Example~\ref{example:sidePaymentsAreCool}.}
  \label{table:sidePaymentsAreCool}
\end{table}

\begin{figure}[ht]
  \centering
  \begin{tikzpicture}[>=Stealth, thick, node/.style={draw, circle, fill=white, minimum size=0.7cm}]
    \begin{scope}[local bounding box=diagram]
      \node[node] (i1) at (0,0.5) {$i_1$};
      \node[left=0.15cm of i1, font=\small] (text-i1) {
        \begin{tabular}{r|l} $v_{i_1}(\{a, a\}) = 10$ & \textcolor{TUMBlue}{$\pi_{i_1} = 0$} \\ & \textcolor{TUMGreen}{$\tilde\pi_{i_1} = 0.2$}
      \end{tabular}};

      \node[node] (i2) at (0,-2) {$i_2$};
      \node[left=0.15cm of i2, font=\small] (text-i2) {
        \begin{tabular}{r|l} $v_{i_2}(\{b\}) = 5$ & \textcolor{TUMBlue}{$\pi_{i_2} = 2$} \\ & \textcolor{TUMGreen}{$\tilde\pi_{i_2} = 0$}
      \end{tabular}};

      \node[node] (i3) at (0,-4) {$i_3$};
      \node[left=0.15cm of i3, font=\small] (text-i3) {
        \begin{tabular}{r|l} $v_{i_3}(\{c\}) = 2\alpha - 1$ & \textcolor{TUMBlue}{$\pi_{i_3} = \alpha - 2$} \\ & \textcolor{TUMGreen}{$\tilde\pi_{i_3} = 0$}
      \end{tabular}};

      \node[node] (i4) at (0,-6) {$i_4$};
      \node[left=0.15cm of i4, font=\small] (text-i4) {
        \begin{tabular}{r|l} $v_{i_4}(\{c\}) = \alpha - 1$ & \textcolor{TUMBlue}{$\pi_{i_4} = 0$} \\ & \textcolor{TUMGreen}{$\tilde\pi_{i_4} = 0$}
      \end{tabular}};

      \node[node] (j1) at (3.5,0.5) {$j_1$};
      \node[right=0.15cm of j1, font=\small] (text-j1) {
        \begin{tabular}{r|l} \textcolor{TUMBlue}{$\pi_{j_1} = 0$} & $v_{j_1}(\{a\}) = 3$ \\ \textcolor{TUMGreen}{$\tilde\pi_{j_1} = 0.6$} & $v_{j_1}(\{b\}) = 3$ \\ & $v_{j_1}(\{a, b\}) = 20$
      \end{tabular}};

      \node[node] (j2) at (3.5,-2) {$j_2$};
      \node[right=0.15cm of j2, font=\small] (text-j2) {
        \begin{tabular}{r|l} \textcolor{TUMBlue}{$\pi_{j_2} = 0$} & $v_{j_2}(\{a\}) = 6$ \\ \textcolor{TUMGreen}{$\tilde\pi_{j_2} = 0.2$} & $v_{j_2}(\{b\}) = 3$ \\ & $v_{j_2}(\{a, b\}) = 20$
      \end{tabular}};

      \node[node] (j3) at (3.5,-5) {$j_3$};
      \node[right=0.15cm of j3, font=\small] (text-j3) {
        \begin{tabular}{r|l} \textcolor{TUMBlue}{$\pi_{j_3} = 0$} & $v_{j_3}(\{c, c\}) = 2\alpha$ \\ \textcolor{TUMGreen}{$\tilde\pi_{j_3} = 0$} &
      \end{tabular}};

      \begin{scope}[on background layer]
        \foreach \n/\t in {i1/text-i1, i2/text-i2, i3/text-i3, i4/text-i4, j1/text-j1, j2/text-j2, j3/text-j3}
        \node[fill=gray!10, rounded corners=3mm, inner sep=1.2mm, fit=(\n) (\t)] {};
      \end{scope}

      \draw[->, TUMBlue] (i2) -- node[pos=0.5, sloped, below, font=\scriptsize] {$p_{i_2j_2} = 3$} (j2);
      \draw[->, TUMGreen] (i1) -- node[pos=0.5, sloped, above, font=\scriptsize] {$\tilde p_{i_1j_1} = 4.9$} (j1);
      \draw[->, TUMGreen] (i1) -- node[pos=0.5, sloped, above, font=\scriptsize] {$\tilde p_{i_1j_2} = 4.9$} (j2);
      \draw[->, dashed, TUMGreen] (j1) -- node[pos=0.5, right, font=\scriptsize] {$\tilde q_{j_1j_2} = 1.3$} (j2);
      \draw[->, TUMBlue] (i3) -- node[pos=0.5, sloped, above, font=\scriptsize] {$p_{i_3j_3} = \alpha$} (j3);
      \draw[->, TUMBlue] (i4) -- node[pos=0.5, sloped, below, font=\scriptsize] {$p_{i_4j_3} = \alpha$} (j3);
      \draw[->, dashed, TUMBlue] (i3) -- node[pos=0.5, left, font=\scriptsize] {$q_{i_3i_4} = 1$} (i4);

      \begin{scope}[on background layer]
        \node[draw=black, line width=0.5pt, rounded corners=5mm, inner sep=10pt, fit=(diagram)] {};
      \end{scope}
    \end{scope}
  \end{tikzpicture}

  \caption{Graph representation of the assignments in Example~\ref{example:sidePaymentsAreCool}. With blue arrows is represented an NTU-stable assignment in $\mathcal{E}_h^{\mathrm{buy}}$ with $h \in \{1, 2\}$, whereas with green arrows a possible deviation in $\mathcal{E}_h^r$ with $r \succeq \mathrm{sell}$.}
  \label{figure:sidePaymentsAreCool}
\end{figure}

We now analyze NTU-stability in market settings with first- and second-order resale restrictions under decreasing levels of side-payment constraints.

\paragraph{No side-payment setting} If no side payments are allowed, the only non-trivial feasible assignments $((\vecc{S}_I, \vecc{Z}_J), \vecc p)$ are those in which buyer $i_2$ requests bundle $\{b\}$ either from seller $j_1$ or seller $j_2$ at a price $3 \leq p_b \leq 5$, while all the other individual bundles are empty. To see this, observe the following:
\begin{enumerate}
  \item Buyer $i_1$ cannot profitably purchase a single unit of item $a$. Indeed, since $v_{i_1}(\{a\}) = 0$, getting a non-negative payoff would require $p_a = 0$. However, feasibility for the sellers implies that the price $i_1$ must pay satisfies
    $$p_a \geq \min\{v_{j_1}(\{a\}), v_{j_2}(\{a\})\} = 3,$$
    if $i_1$ purchases $\{a\}$ alone, and
    $$p_a \geq \min\{v_{j_1}(\{a, b\}), v_{j_2}(\{a, b\})\} - v_{i_2}(\{b\}) = 15,$$
    if $i_1$ purchases $\{a\}$ jointly with $i_2$ purchasing $\{b\}$. Therefore, this trade cannot be part of any feasible assignment.

  \item Buyer $i_1$ cannot profitably purchase two units of item $a$. In this case, $v_{i_1}(\{a, a\}) = 10$. However, the pricing restrictions imply that $i_1$ must pay a cumulative price of
    $$2p_a \geq 2\max\{v_{j_1}(\{a\}), v_{j_2}(\{a\})\} = 12 > v_{i_1}(\{a, a\}),$$
    if acquiring the cumulative bundle $\{a, a\}$ on their own, and
    $$2p_a \geq 2\max_{j \in \{j_1, j_2\}}\max\{v_{j}(\{a\}), v_{j}(\{a, b\}) - v_{i_2}(\{b\})\} = 30 > v_{i_1}(\{a, a\}),$$
    if acquiring it jointly with $i_2$ purchasing $\{b\}$.
    Thus, in this trade as well, $i_1$ obtains a negative payoff.

  \item Buyers $i_3$ and $i_4$ cannot profitably purchase a single unit of item $c$, either individually or jointly. Individually, any buyer purchasing $\{c\}$ must pay at least
    $$p_c \geq v_{j_3}(\{c\}) = 2\alpha > v_{i_3}(\{c\}) > v_{i_4}(\{c\}),$$
    while if they purchase it simultaneously, pricing restrictions imply that each must pay at least
    $$p_c \geq \frac{v_{j_3}(\{c, c\})}{2} = \alpha > v_{i_4}(\{c\}).$$
    In both cases, at least one agent obtains a negative payoff, so the corresponding assignment is not feasible.
\end{enumerate}

Among the feasible assignments, the NTU-stable ones are those in which the price of item $b$ is $p_b = 3$. Otherwise, the seller $j \in \{j_1, j_2\}$ who does not trade with $i_2$ can offer them a lower price, inducing a deviation in which both $i_2$ and $j$ strictly increase their payoffs.
Note that the total payoff of any such stable assignment is $\sum_{k \in N} \pi_k = 2$.

\paragraph{Buyers' single side-payment setting} If we introduce buyers' single side payments, all NTU-stable assignments still involve buyer $i_2$ trading bundle $\{b\}$ either with $j_1$ or $j_2$ for a price $p_b = 3$, while buyer $i_1$ is not able to afford any non-empty bundle (the analysis is analogous to the previous case, noting that now $i_1$ and $i_2$ could in principle side-pay each other).
However, buyer $i_3$ can now side-pay buyer $i_4$ an amount of money $q_{i_3i_4} \geq 1$, enabling both buyers to jointly pay seller $j_3$ a price
$$\frac{v_{j_3}(\{c, c\})}{2} = \alpha \leq p_c \leq \frac{v_{i_3}(\{c\}) + v_{i_4}(\{c\})}{2} = \frac{3}{2}\alpha - 1$$
for one unit each of item $c$.
Consequently, the total payoff of any stable assignments is now $\sum_{k \in N} \pi_k = \alpha$, which can be made arbitrarily large by the choice of $\alpha$.

\paragraph{Full, separate, and sellers' single side-payment settings} Finally, when at least sellers' single side payments are allowed, the market does not admit any NTU-stable assignment. In particular, the only feasible trades involving agents $i_1$, $i_2$, $j_1$, and $j_2$ are the following:
\begin{enumerate}
  \item Buyer $i_1$ purchasing two copies of bundle $\{a\}$, one from $j_1$ and one from $j_2$, at a price $4.5 \leq p_a \leq 5$ each, and seller $j_1$ side-paying seller $j_2$ an amount $6 - p_a \leq q_{j_1j_2} \leq p_a - 1$.
  \item Buyer $i_2$ purchasing bundle $\{b\}$ from seller $j_1$ at a price of $3 \leq p_b \leq 5$.
  \item Buyer $i_2$ purchasing bundle $\{b\}$ from seller $j_2$ at a price of $3 \leq p_b \leq 5$.
\end{enumerate}
Suppose that we are in the first case and $i_1$ pays the maximum price $p_a = 5$ to both sellers $j_1$ and $j_2$. Then, at least one seller, say $j_1$, obtains a payoff $\pi_{j_1} < 2$, since the maximum total utility the two sellers can distribute among themselves is only $$2p_a - [v_{j_1}(\{a\}) + v_{j_2}(\{a\})] = 3.$$ Therefore, buyer $i_2$ can profitably deviate with seller $j_1$ by offering a price $\tilde p_b$ satisfying $\pi_{j_1} + 3 < \tilde p_b < 5$, which makes both agents strictly better off.

Suppose now $i_2$ trades bundle $\{b\}$ with one of the first two sellers, say $j_1$, at the maximum price $p_b = 5$, so that $j_1$'s payoff is $\pi_{j_1} = 2$. Then every agent in the coalition $C = \{i_1, j_1, j_2\}$ can strictly increase their payoff by deviating to an alternative assignment in which $i_1$ purchases one unit of item $a$ from each seller at a price $4.5 < \tilde p_a < 5$, while $j_1$ side-pays $j_2$ an amount $6 - \tilde p_a < \tilde q_{j_1j_2} < \tilde p_a - 3$.

\end{example}

\begin{example}\label{example:sidePaymentsAreCool2}
Consider an economy $\mathcal{E}$ with three buyers $I = \{i_1, i_2, i_3\}$ and four sellers $J = \{j_1, \dots, j_4\}$. The sellers' initial endowments are respectively $\vecc\Omega_{j_1} \equiv \{a, a\}$, $\vecc\Omega_{j_2} \equiv \{b\}$, and $\vecc\Omega_{j_3} = \vecc\Omega_{j_4} \equiv \{c\}$, so that the total endowment of the economy is $\vecc\Omega \equiv \{a, a, b, c, c\}$. The agents' valuations are reported in Table~\ref{table:sidePaymentsAreCool2}, where $\alpha \gg 0$.

\begin{table}[ht]
  \centering
  \begin{tabular}{|c|c|c|c|c|c|c|c|c|}
    \hline
    $\vecc{\omega}$ & $\{a\}$ & $\{b\}$ & $\{c\}$ & $\{a, a\}$ & $\{a, b\}$ & $\{c, c\}$ & Others \\
    \hline
    $v_{i_1}(\vecc{\omega})$ & $4$ & $6$ & $0$ & $4$ & $6$ & $0$ & $4$ or $6$ \\
    \hline
    $v_{i_2}(\vecc{\omega})$ & $9$ & $6$ & $0$ & $9$ & $9$ & $0$ & $6$ or $9$ \\
    \hline
    $v_{i_3}(\vecc{\omega})$ & $0$ & $0$ & $0$ & $0$ & $0$ & $2\alpha - 1$ & $0$ or $2\alpha - 1$ \\
    \hline
    $v_{j_1}(\vecc{\omega})$ & $10$ & $-$ & $-$ & $10$ & $-$ & $-$ & $-$ \\
    \hline
    $v_{j_2}(\vecc{\omega})$ & $-$ & $4$ & $-$ & $-$ & $-$ & $-$ & $-$ \\
    \hline
    $v_{j_3}(\vecc{\omega})$ & $-$ & $-$ & $1$ & $-$ & $-$ & $-$ & $-$\\
    \hline
    $v_{j_4}(\vecc{\omega})$ & $-$ & $-$ & $\alpha$ & $-$ & $-$ & $-$ & $-$\\
    \hline
  \end{tabular}

  \caption{Valuations $v_k(\vecc{\omega})$ for each agent $k \in N$ and each bundle $\vecc{\omega} \leq \vecc{\Omega}$ in Example~\ref{example:sidePaymentsAreCool2}.}
  \label{table:sidePaymentsAreCool2}
\end{table}

\begin{figure}[ht]
  \centering
  \begin{tikzpicture}[>=Stealth, thick, node/.style={draw, circle, fill=white, minimum size=0.7cm}]
    \begin{scope}[local bounding box=diagram]
      \node[node] (i1) at (0,0) {$i_1$};
      \node[left=0.15cm of i1, font=\small] (text-i1) {
        \begin{tabular}{r|l} $v_{i_1}(\{a\}) = 4$ & \textcolor{TUMBlue}{$\pi_{i_1} = 0$} \\ $v_{i_1}(\{b\}) = 6$ & \textcolor{TUMGreen}{$\tilde\pi_{i_1} = 0.4$}
      \end{tabular}};

      \node[node] (i2) at (0,-2) {$i_2$};
      \node[left=0.15cm of i2, font=\small] (text-i2) {
        \begin{tabular}{r|l} $v_{i_2}(\{a\}) = 9$ & \textcolor{TUMBlue}{$\pi_{i_2} = 0$} \\ $v_{i_2}(\{b\}) = 6$ & \textcolor{TUMGreen}{$\tilde\pi_{i_2} = 2.4$}
      \end{tabular}};

      \node[node] (i3) at (0,-5) {$i_3$};
      \node[left=0.15cm of i3, font=\small] (text-i3) {
        \begin{tabular}{r|l} $v_{i_3}(\{c, c\}) = 2\alpha - 1$ & \textcolor{TUMBlue}{$\pi_{i_3} = 1$} \\ & \textcolor{TUMGreen}{$\tilde\pi_{i_3} = 0$}
      \end{tabular}};

      \node[node] (j1) at (3.5,0) {$j_1$};
      \node[right=0.15cm of j1, font=\small] (text-j1) {
        \begin{tabular}{r|l} \textcolor{TUMBlue}{$\pi_{j_1} = 0$} & $v_{j_1}(\{a, a\}) = 10$ \\ \textcolor{TUMGreen}{$\tilde\pi_{j_1} = 0.2$} &
      \end{tabular}};

      \node[node] (j2) at (3.5,-2) {$j_2$};
      \node[right=0.15cm of j2, font=\small] (text-j2) {
        \begin{tabular}{r|l} \textcolor{TUMBlue}{$\pi_{j_2} = 2$} & $v_{j_2}(\{b\}) = 4$ \\ \textcolor{TUMGreen}{$\tilde\pi_{j_2} = 0$} &
      \end{tabular}};

      \node[node] (j3) at (3.5,-4) {$j_3$};
      \node[right=0.15cm of j3, font=\small] (text-j3) {
        \begin{tabular}{r|l} \textcolor{TUMBlue}{$\pi_{j_3} = \alpha - 3$} & $v_{j_3}(\{c\}) = 1$ \\ \textcolor{TUMGreen}{$\tilde\pi_{j_3} = 0$} &
      \end{tabular}};

      \node[node] (j4) at (3.5,-6) {$j_4$};
      \node[right=0.15cm of j4, font=\small] (text-j4) {
        \begin{tabular}{r|l} \textcolor{TUMBlue}{$\pi_{j_4} = 0$} & $v_{j_4}(\{c\}) = \alpha$ \\ \textcolor{TUMGreen}{$\tilde\pi_{j_4} = 0$} &
      \end{tabular}};

      \begin{scope}[on background layer]
        \foreach \n/\t in {i1/text-i1, i2/text-i2, i3/text-i3, j1/text-j1, j2/text-j2, j3/text-j3, j4/text-j4}
        \node[fill=gray!10, rounded corners=3mm, inner sep=1.2mm, fit=(\n) (\t)] {};
      \end{scope}

      \draw[->, TUMBlue] (i2) -- node[pos=0.5, sloped, below, font=\scriptsize] {$p_{i_1j_2} = 6$} (j2);
      \draw[->, TUMGreen] (i1) -- node[pos=0.5, sloped, above, font=\scriptsize] {$\tilde p_{i_1j_1} = 5.1$} (j1);
      \draw[->, TUMGreen] (i2) -- node[pos=0.5, sloped, below, font=\scriptsize] {$\tilde p_{i_2j_1} = 5.1$} (j1);
      \draw[->, dashed, TUMGreen] (i2) -- node[pos=0.5, left, font=\scriptsize] {$q_{i_2i_1} = 1.5$} (i1);
      \draw[->, TUMBlue] (i3) -- node[pos=0.5, sloped, above, font=\scriptsize] {$p_{i_3j_3} = \alpha - 1$} (j3);
      \draw[->, TUMBlue] (i3) -- node[pos=0.5, sloped, below, font=\scriptsize] {$p_{i_3j_4} = \alpha - 1$} (j4);
      \draw[->, dashed, TUMBlue] (j3) -- node[pos=0.5, right, font=\scriptsize] {$q_{j_3j_4} = 1$} (j4);

      \begin{scope}[on background layer]
        \node[draw=black, line width=0.5pt, rounded corners=5mm, inner sep=10pt, fit=(diagram)] {};
      \end{scope}
    \end{scope}
  \end{tikzpicture}

  \caption{Graph representation of the assignments in Example~\ref{example:sidePaymentsAreCool2}. With blue arrows is represented an NTU-stable assignment in $\mathcal{E}_h^{\mathrm{sell}}$ with $h \in \{1, 2\}$, whereas with green arrows a possible deviation in $\mathcal{E}_h^r$ with $r \succeq \mathrm{buy}$.}
  \label{figure:sidePaymentsAreCool2}
\end{figure}

We now analyze NTU-stability in the corresponding markets with first- and second-order resale restrictions and under decreasing levels of side-payment restrictions.

\paragraph{No side-payment setting} If no side payments are allowed, the only non-trivial feasible assignments $((\vecc S_I, \vecc Z_J), \vecc p)$ are those in which either buyer $i_1$ or buyer $i_2$ requests bundle $\{b\}$ from seller $j_2$ at a price $4 \leq p_b \leq 6$, while all the other individual bundles are empty. To see this, observe the following:
\begin{enumerate}
  \item Buyers $i_1$ and $i_2$ cannot profitably purchase a single unit of item $a$, either individually or jointly. Individually, any buyer purchasing $\{a\}$ must pay at least $$p_a \geq v_{j_1}(\{a\}) = 10 > v_{i_2}(\{a\}) > v_{i_1}(\{a\}),$$
    while if they purchase it simultaneously, pricing restrictions imply that each must pay at least $$p_a \geq \frac{v_{j_1}(\{a, a\})}{2} = 5 > v_{i_1}(\{a\}).$$
    In both cases, at least one agent obtains a negative payoff, so the corresponding assignment is not feasible.

  \item Buyers $i_1$ and $i_2$ cannot profitably purchase bundle $\{a, b\}$. If either buyer purchases $\{a, b\}$ individually, they must pay at least
    $$v_{j_1}(\{a\}) + v_{j_2}(\{b\}) = 14 > v_{i_2}(\{a, b\}) > v_{i_1}(\{a, b\}).$$
    If buyer $i_1$ requests bundle $\{a, b\}$ while buyer $i_2$ requests simultaneously bundle $\{a\}$, pricing restrictions imply that $i_1$ must pay at least
    $$\frac{v_{j_1}(\{a, a\})}{2} + v_{j_2}(\{b\}) = 9 > v_{i_1}(\{a,b\}).$$
    Vice versa, if $i_1$ requests bundle $\{a\}$ while $i_2$ requests bundle $\{a, b\}$, $i_1$ must pay at least
    $$\frac{v_{j_1}(\{a, a\})}{2} = 5 > v_{i_1}(\{a\}).$$
    Thus, also in these trades, at least one agent cannot obtain a non-negative payoff.

  \item Buyer $i_3$ cannot profitably purchase either one or two units of item $c$. Indeed, the pricing restrictions imply that purchasing a single unit requires paying at least
    $$p_c \geq \min\{v_{j_3}(\{c\}), v_{j_4}(\{c\})\} = 1 > v_{i_3}(\{c\}),$$
    while purchasing two units requires a total payment satisfying
    $$2p_c \geq 2\max\{v_{j_3}(\{c\}), v_{j_4}(\{c\})\} = 2\alpha > v_{i_3}(\{c, c\}).$$
    Hence, $i_3$ cannot obtain a non-negative payoff in either case.
\end{enumerate}
Among the feasible assignments, the NTU-stable ones are those in which the price of item $b$ is $p_b = 6$. Otherwise, the buyer $i \in \{i_1, i_2\}$ who does not trade with $j_2$ can offer them a higher price, inducing a deviation in which both $i$ and $j_2$ strictly increase their payoffs. Note that the total payoff of any such stable assignment is $\sum_{k \in N} \pi_k = 2$.

\paragraph{Sellers' single side-payment setting} With sellers' single side payments, NTU-stable assignments still involve seller $j_2$ trading bundle $\{b\}$ with one of the first two buyers for a price $p_b = 6$, while seller $j_1$ is still not able to sell any of their items (the arguments from the previous setting extend here with minor modifications). However, buyer $i_3$ can now purchase one unit of item $c$ from each of sellers $j_3$ and $j_4$ at a price satisfying
$$\frac{v_{j_3}(\{c\}) + v_{j_4}(\{c\})}{2} = \frac{\alpha + 1}{2} \leq p_c \leq v_{i_3}(\{c, c\}) = 2\alpha - 1,$$
since $j_3$ can side-pay $j_4$ an amount $$\alpha - p_c \leq q_{j_3j_4} \leq p_c - 1.$$
Therefore, the payoff of any stable assignment is $\sum_{k \in N} \pi_k = \alpha$, which can be made arbitrarily large by the choice of $\alpha$.

\paragraph{Full, separate, and buyers' single side-payment settings} Finally, if at least buyers' single side payments are allowed, the market does not admit any NTU-stable assignment. In particular, the only feasible trades involving agents $i_1$, $i_2$, $j_1$, and $j_2$ are the following:
\begin{enumerate}
  \item Buyers $i_1$ and $i_2$ each purchasing one unit of item $a$ from $j_1$ at a price $5 \leq p_a \leq 6.5$, and $i_2$ side-paying $i_1$ an amount $p_a - 4 \leq q_{i_2i_1} \leq 9 - p_a$.
  \item Buyer $i_1$ purchasing bundle $\{b\}$ from seller $j_2$ at a price of $4 \leq p_b \leq 6$.
  \item Buyer $i_2$ purchasing bundle $\{b\}$ from seller $j_2$ at a price of $4 \leq p_b \leq 6$.
  \item Buyer $i_1$ purchasing bundle $\{a\}$ from seller $j_1$ and bundle $\{b\}$ from seller $j_2$, while buyer $i_2$ purchases bundle $\{a\}$ from seller $j_1$. The corresponding prices are $5 \leq p_a \leq 5.5$ and $4 \leq p_b \leq 15 - 2p_a$, with $i_2$ side-paying $i_1$ and amount $p_a + p_b - 6 \leq q_{i_2i_1} \leq 9 - p_a$.
\end{enumerate}
In the first case, suppose that each buyer pays the minimum price $p_a = 5$ to seller $j_1$. Then the maximum total utility that the buyers $i_1$ and $i_2$ can jointly obtain is
$$v_{i_1}(\{a\}) + v_{i_2}(\{a\}) - 2p_a \leq 3.$$
Therefore, at least one of the buyers, say $i_1$, must obtain a payoff $\pi_{i_1} < 2$. Buyer $i_1$ can then profitably deviate with seller $j_2$ by paying a price $4 < \tilde p_b < 6 - \pi_{i_1}$, which makes both agents strictly better off.

Suppose now that $i_1$ trades bundle $\{b\}$ with $j_2$ at the minimum price $p_b = 4$, so that $i_1$'s payoff is $\pi_{i_1} = 2$. In this case, $i_1$ and $i_2$ can jointly trade bundle $\{a\}$ with $j_1$ at a price $5 < \tilde p_a < 5.5$ together with a side payment $\tilde p_a - 2 < \tilde q_{i_2i_1} < 9 - \tilde p_a$, so that every deviating agent strictly increases their payoff. The same argument holds for the assignments in which buyer $i_2$ trades bundle $\{b\}$ with $j_2$.

Finally, in the fourth case, the total utility that the agents can distribute among themselves is:
$$ v_{i_1}(\{a, b\}) + v_{i_2}(\{a\}) - v_{j_1}(\{a, a\}) - v_{j_2}(\{b\}) = 1.$$
Therefore, any of the previously described assignments is a potential deviation.
\end{example}

\end{document}